\definecolor{ForestGreen}{rgb}{0.1333,0.5451,0.1333}
\definecolor{DarkRed}{rgb}{0.8,0,0}
\definecolor{Red}{rgb}{1,0,0}
\newtheorem{theorem}{Theorem}[section]
\newtheorem{informaltheorem}[theorem]{Informal Theorem}
\newtheorem{corollary}[theorem]{Corollary}
\newtheorem{lemma}[theorem]{Lemma}
\newtheorem{claim}[theorem]{Claim}
\newtheorem{definition}[theorem]{Definition}
\newtheorem{remark}[theorem]{Remark}
\newtheorem*{theorem*}{Theorem}
\newtheorem*{corollary*}{Corollary}
\newtheorem*{conjecture*}{Conjecture}
\newtheorem*{lemma*}{Lemma}
\newtheorem*{thm*}{Theorem}
\newtheorem*{prop*}{Proposition}
\newtheorem*{obs*}{Observation}
\newtheorem*{definition*}{Definition}
\newtheorem*{remark*}{Remark}
\newtheorem*{rec*}{Recommendation}
\newenvironment{fminipage}%
  {\begin{Sbox}\begin{minipage}}%
  {\end{minipage}\end{Sbox}\fbox{\TheSbox}}
\def\defeq{\stackrel{\mathrm{def}}{=}}
\def\ceil#1{\left\lceil #1 \right\rceil}
\def\abs#1{\left|#1  \right|}
\def\norm#1{\left\| #1 \right\|}
\newcommand\grad{\boldsymbol{\nabla}}
\DeclareMathOperator{\mincut}{mincut}
\DeclareMathOperator{\vol}{vol}
\newcommand\DDelta{\boldsymbol{\mathit{\Delta}}}
\newcommand\pphi{\boldsymbol{\mathit{\phi}}}
\def\aa{\pmb{\mathit{a}}}
\newcommand\bb{\boldsymbol{\mathit{b}}}
\newcommand\cc{\boldsymbol{\mathit{c}}}
\newcommand\dd{\boldsymbol{\mathit{d}}}
\newcommand\ff{\boldsymbol{\mathit{f}}}
\renewcommand\gg{\boldsymbol{\mathit{g}}}
\newcommand\rr{\boldsymbol{\mathit{r}}}
\renewcommand\ss{\boldsymbol{\mathit{s}}}
\newcommand\uu{\boldsymbol{\mathit{u}}}
\newcommand\yy{\boldsymbol{\mathit{y}}}
\newcommand\zz{\boldsymbol{\mathit{z}}}
\newcommand\xx{\boldsymbol{\mathit{x}}}
\newcommand\veczero{\boldsymbol{0}}
\newcommand\vecone{\boldsymbol{1}}
\renewcommand{\deg}{\operatorname{deg}}
\renewcommand\AA{\boldsymbol{\mathit{A}}}
\newcommand\BB{\boldsymbol{\mathit{B}}}
\newcommand\UU{\boldsymbol{\mathit{U}}}
\renewcommand\O{\widetilde{O}}
\newcommand\R{\mathbb{R}}
\newcommand{\expct}[2]{{}\mathop{\mathbb{E}}_{#1}\left[#2\right]}
\DeclareMathOperator*{\val}{val}
\DeclareMathOperator*{\diag}{diag}
\newcommand{\polylog}{\text{ polylog}}
\newcommand\Z{\mathbb{Z}}
\newcommand{\eps}{\epsilon}
\renewcommand{\O}{\widetilde{O}}
\renewcommand{\l}{\langle}
\renewcommand{\r}{\rangle}
\renewcommand{\forall}{\mathrm{\text{ for all }}}
\newcommand{\g}{\nabla}
\newcommand{\bd}{\boldsymbol{d}}
\newcommand{\bg}{\boldsymbol{g}}
\newcommand{\by}{\boldsymbol{y}}
\newcommand{\bDelta}{\boldsymbol{\Delta}}
\renewcommand{\hat}{\widehat}
\renewcommand{\tilde}{\widetilde}
\DeclareFontFamily{U}{mathb}{\hyphenchar\font45}
\DeclareFontShape{U}{mathb}{m}{n}{<5> <6> <7> <8> <9> <10> gen * mathb
<10.95> mathb10 <12> <14.4> <17.28> <20.74> <24.88> mathb12}{}
\DeclareSymbolFont{mathb}{U}{mathb}{m}{n}
\DeclareMathSymbol{\rcirclearrow}{\mathbin}{mathb}{'367}
\newcommand{\wt}{\widetilde}
\renewcommand{\bar}{\overline}
\newif\ifrandom
\renewcommand{\l}{\langle}
\renewcommand{\r}{\rangle}
\newcommand{\todolater}[1]{}
\newcommand{\cA}{\mathcal{A}}
\newcommand{\cD}{\mathcal{D}}
\renewcommand{\paragraph}[1]{
\medskip \noindent \textbf{#1}}
\title{Almost-Linear Time Algorithms for Decremental Graphs:\\ Min-Cost Flow and More via Duality}
\date{}
\newcommand*\samethanks[1][\value{footnote}]{\footnotemark[#1]}
\author{
}
\author{Jan van den Brand\thanks{Jan van den Brand was supported by NSF Award CCF-2338816.} \\ Georgia Tech \\ vdbrand@gatech.edu
\and Li Chen\thanks{Li Chen was supported by NSF Grant CCF-2330255.} \\ Carnegie Mellon University \\ lichenntu@gmail.com
\and Rasmus Kyng\thanks{The research leading to these results has received funding from grant no. 200021 204787 of the Swiss National Science Foundation.} \\ ETH Zurich \\ kyng@inf.ethz.ch \and Yang P. Liu\thanks{This work is partially supported by
 NSF DMS-1926686.}\\ Institute for Advanced Study \\ yangpliu@ias.edu \and Simon Meierhans\samethanks[3] \\ ETH Zurich \\ mesimon@inf.ethz.ch \and Maximilian Probst Gutenberg\samethanks[3] \\
ETH Zurich \\ maximilian.probst@inf.ethz.ch
\and Sushant Sachdeva\thanks{Sushant Sachdeva’s research is supported by an Natural Sciences and Engineering Research Council of Canada (NSERC) Discovery Grant RGPIN-2018-06398, an Ontario Early Researcher Award (ERA) ER21-16-283, and a Sloan Research Fellowship} \\ University of Toronto \\ sachdeva@cs.toronto.edu}
\begin{document}

\maketitle

\begin{abstract}

We give the first almost-linear total time algorithm for deciding if a flow of cost at most $F$ still exists in a directed graph, with edge costs and capacities, undergoing decremental updates, i.e., edge deletions, capacity decreases, and cost increases.
This implies almost-linear time algorithms for approximating the minimum-cost flow value and $s$-$t$ distance on such \emph{decremental} graphs. 
Our framework additionally allows us to maintain decremental strongly connected components in almost-linear time deterministically. 
These algorithms also improve over the current best known runtimes for \emph{statically} computing minimum-cost flow, in both the randomized and deterministic settings. 

We obtain our algorithms by taking the dual perspective, which yields cut-based algorithms.
More precisely, our algorithm computes the flow via a sequence of $m^{1+o(1)}$ dynamic \emph{min-ratio cut} problems, the dual analog of the dynamic min-ratio cycle problem that underlies recent fast algorithms for minimum-cost flow. Our main technical contribution is a new data structure that returns an approximately optimal min-ratio cut in amortized $m^{o(1)}$ time by maintaining a \emph{tree-cut sparsifier}. This is achieved by devising a new algorithm to maintain the dynamic expander hierarchy of [Goranci-R\"{a}cke-Saranurak-Tan, SODA 2021] that also works in \emph{capacitated} graphs. All our algorithms are deterministc, though they can be sped up further using randomized techniques while still working against an adaptive adversary. 
\end{abstract}

\pagenumbering{gobble}

\pagebreak

\tableofcontents

\pagebreak

\pagenumbering{arabic}

\pagebreak



\pagebreak
\section{Introduction}
\label{sec:intro}

The study of \emph{dynamic graph algorithms} involves designing efficient algorithms for graphs  undergoing edge updates.
In this paper, we focus on solving the challenging \emph{minimum-cost flow} problem on directed graphs in the \emph{decremental} setting, where the graph undergoes updates that guarantee that the optimal cost is non-decreasing. Henceforth, decremental updates consist of edge deletions, cost increases, and capacity decreases.
The minimum-cost flow problem generalizes the $s$-$t$ shortest path and the more general single-source shortest-path (SSSP) problem that have received significant attention in the decremental setting \cite{HenzingerKN14, HenzingerKN15, HenzingerKN18, BernsteinGW20, BernsteinPW19, bernstein2022deterministic}, which are all not known to admit almost-linear-time algorithms, even against oblivious adversaries. In this paper, we give almost-linear-time algorithms for several problems in decremental graphs, which are primarily derived by solving the more general problem of \emph{decremental thresholded min-cost flow}.
\begin{definition}
\label{def:threshold}
The thresholded min-cost flow problem is defined on a directed graph $G = (V, E)$ with capacities $\uu$ and costs $\cc$, undergoing decremental updates (edge deletions, edge capacity decreases, and cost increases) along with a threshold $F$ and demands $\dd \in \R^V$. 
A dynamic algorithm solves the problem if, after each update, the algorithm outputs whether there is a feasible flow $\ff$ routing demand $\dd$ with cost $\cc^\top \ff$ at most $F$, or answers that no such flow exists.
\end{definition}
The thresholded min-cost flow problem in \emph{incremental} graphs (undergoing edge insertions) was recently shown to have an almost-linear-time algorithm in \cite{chen2023almost}.
In this paper, we show that the decremental version can also be solved in almost-linear-time (see \Cref{thm:main} for a formal statement).
\begin{informaltheorem}
\label{ithm:main}
There is a deterministic algorithm that solves the decremental thresholded min-cost flow problem on graphs with $m$ edges initially, undergoing $Q$ updates in total time $(m+Q)m^{o(1)}$, provided that costs, capacities, and demands are integral and polynomially bounded in $m$.
\end{informaltheorem}
This result and its extensions give almost-linear-time deterministic algorithms for decremental approximate min-cost flow value, single-source reachability, strongly connected component maintenance, and $s$-$t$ distance. 

Towards proving this result, let us recall the approach of \cite{chen2023almost}, which builds on the almost-linear-time min-cost flow algorithm of \cite{chen2022maximum}. 
The algorithm of \cite{chen2022maximum} used an $\ell_1$-based interior point method (IPM) to solve min-cost flow via a sequence of dynamic \emph{min-ratio cycle} problems, with approximation quality $\alpha = m^{o(1)}$. Later, \cite{vdBrand23incr} showed that giving an algorithm with amortized $m^{o(1)}$ update time for approximate dynamic min-ratio cycle against adaptive adversaries (which was not achieved in \cite{chen2022maximum}) suffices for incremental thresholded min-cost flow. Such a data structure for dynamic min-ratio cycle was developed in \cite{chen2023almost}.

There is one key difference between the incremental and decremental settings: a feasible flow $\ff$ continues to be feasible under edge insertions, but not under edge deletions. To handle this, we instead work with a dual version of the min-cost flow problem. More precisely, we first give a standard reduction in \Cref{sec:reduceToTrans} between min-cost flow and transshipment: $\min_{\BB^\top\ff=\dd,\ff \ge 0} \cc^\top\ff$, where $\BB$ is the edge-vertex incidence matrix of the underlying graph $G$. The dual of this problem, computed via strong duality (see \Cref{lem:transDual}), is 
\begin{align}
    \max_{\cc-\BB\yy\ge0} \dd^\top\yy.
    \label{eq:dual}
\end{align} 
Note that in this dual formulation, if a solution $\yy$ is feasible, i.e., $\cc-\BB\yy \ge 0$, then it continues to be feasible after an edge deletion or cost increase. It turns out that $\l \dd, \yy\r$ is also monotone increasing in the transshipment instance under capacity decreases (see \Cref{lemma:dec}). Thus, it is natural to work with the dual problem in the decremental setting.

To give an almost-linear-time algorithm for solving \eqref{eq:dual}, we broadly follow the approach set forth by \cite{chen2022maximum}. We first design an $\ell_1$-IPM which solves \eqref{eq:dual} via a sequence of $m^{1+o(1)}$ dynamic \emph{min-ratio cut} problems (see \Cref{sec:IPM}), defined below.
\begin{definition}[$\alpha$-approximate dynamic min-ratio cut]
\label{def:minratiocut}
The dynamic min-ratio cut problem is defined on an undirected graph $G$ with capacities $\uu \in \R^E_{\ge0}$, and vertex gradient $\gg \in \R^V$. At each time step, the gradient of a single vertex or the length of a single edge may be updated, in a fully-dynamic manner.
We let $\BB$ be the edge-vertex incidence matrix of $G$ after assigning an arbitrary orientation to each edge.

A dynamic algorithm solves the problem, if after the $i$-th update, it identifies a cut $\zz \in \{0, 1\}^V$, such that
\[ \frac{\l \gg, \zz\r}{\|\UU\BB\zz\|_1} \le \frac{1}{\alpha} \min_{\pphi \neq 0} \frac{\l \gg, \pphi\r}{\|\UU\BB\pphi\|_1}. \]
\end{definition}
It is worth pointing out that $\min_{\pphi \neq 0} \frac{\l \gg, \pphi\r}{\|\UU\BB\pphi\|_1}$ is a non-positive quantity by symmetry. 
We also stress that the min-ratio cut problem does not depend on the orientations chosen for edges when defining the edge-vertex incidence matrix $\BB$.
Our main technical contribution is an algorithm that solves dynamic min-ratio cut in amortized $m^{o(1)}$ time with $\alpha = m^{o(1)}$ approximation, under fully-dynamic updates against adaptive adversaries.

To solve the min-ratio cut problem approximately, we fully-dynamically maintain an $\ell_{\infty}$-oblivious routing for $G$ which is realized by a single tree $T$, often referred to as a \emph{tree cut sparsifier}.
We then show that on this tree, we can solve the min-ratio cut problem, and these cuts are good approximate solutions to the min-ratio cut problem on $G$.
We give a formal definition of these tree cut sparsifiers, since they are crucial to our result. 

%
\begin{definition}[Tree Cut Sparsifier]
\label{def:tree_cut_sparsifer_intro}
Given graph $G = (V,E, \uu)$, a \emph{tree cut sparsifier} $T = (V', E', \uu')$ of quality $q$ is a tree graph with $V \subseteq V'$ such that for every pair of disjoint sets $A, B \subseteq V$, we have that $\mincut_{G}(A, B) \leq \mincut_{T}(A, B) \leq q \cdot \mincut_{G}(A, B)$.
\end{definition}

Tree cut sparsifiers, in turn, are associated with dynamic expander hierarchies as introduced in \cite{goranci2021expander}. Loosely speaking, an expander hierarchy computes an expander decomposition of an underlying graph, contracts each expander piece to a single vertex, and recursively computes more expander decompositions, contractions, etc. This naturally induces a tree structure, which \cite{goranci2021expander} proves is a tree cut-sparsifier of quality $q = m^{o(1)}$ in unit capacity graphs. We give the first non-trivial algorithm for maintaining dynamic expander hierarchies and thus tree cut sparsifiers in \emph{capacitated} graphs. In fact, our algorithm is optimal up to subpolynomial factors.

\begin{informaltheorem}\label{infthm:treecutSparsifier}
Given an $m$-edge graph $G=(V, E, \uu)$ with polynomially bounded capacities that undergoes $\tilde{O}(m)$ edge insertions/deletions, then there is a deterministic algorithm that maintains a tree cut sparsifier $T$ of quality $m^{o(1)}$ in total update time $m^{1+o(1)}$.
\end{informaltheorem}

\subsection{Comparison to Earlier Minimum-Cost Flow Algorithms}

Our results build on 
the $\ell_1$-interior point method introduced in the first almost-linear time minimum-cost flow algorithm \cite{chen2022maximum}.
The primal $\ell_1$-IPM of \cite{chen2022maximum} and later works \cite{vdBrand23incr,brand2023incremental, chen2023almost} requires  solving a dynamic min-ratio cycle problem.
This problem is solved using data structures that fundamentally center around distance approximation in undirected graphs.
Our dual $\ell_1$-IPM requires us to solve a dynamic min-ratio cut problem, which instead requires cut approximation in undirected graphs.
The dual perspective turns out to be very natural in retrospect, and has two striking consequences:
Firstly, our dual approach enables us to solve decremental graph problems, similar to how incremental graph problems were solved in \cite{vdBrand23incr,brand2023incremental,chen2023almost} using primal algorithms, essentially  because dual solutions stay feasible under edge deletions while primal solutions stay feasible under edge insertions.
Secondly, the dual approach yields methods based on cut geometry instead of distance geometry, motivating us to develop fully-dynamic tree cut sparsifiers for weighted graphs, a powerful data structure for answering cut queries.
Notably, our cut data structures are substantially simpler than earlier approaches.
We expand on this comparison in Section~\ref{subsec:app} below.

In seeking to develop our dynamic cut approximation data structures, we encounter challenges that are morally similar to those of \cite{chen2022maximum, BrandCPKLPSS23, chen2023almost} which developed extensive new machinery for maintaining fully-dynamic low-stretch trees and $\ell_1$-oblivious routings in weighted graphs.
While fast dynamic algorithms to maintain LSSTs for unit capacity graphs existed previously \cite{forster2019dynamic, chechik2020dynamic, forster2021dynamic}, a central technical contribution in each of \cite{chen2022maximum, BrandCPKLPSS23, chen2023almost} is a dynamic algorithm to maintain LSSTs or
$\ell_1$-oblivious routings respectively in capacitated graphs.
This turns out to require a very different set of tools, and the resulting algorithms deviate heavily from algorithms designed for unit capacity graphs,
and instead build on ideas from \cite{M10,S13,KLOS14,rozhon2022undirected}.

We are faced with a similar challenge in constructing fully-dynamic tree cut sparsifiers.
A striking prototype data structure for the unit capacity case was built via the expander hierarchy in \cite{goranci2021expander},
but their methods face major obstacles in extending to the weighted case.
Our construction is motivated by their result but takes as its starting point a later generation of expander decomposition methods \cite{hua2023maintaining, sulser2024}.
These methods yield particularly clean expander hierarchies in unweighted decremental graphs, and we show how to extend these methods to weighted graphs using a new reduction from weighted to unweighted graphs in this setting.
Using \emph{core graph} techniques motivated by \cite{Madry10, chen2022maximum, chen2023almost}, we finally reduce the fully-dynamic tree cut sparsifier problem to the decremental case.

\subsection{Applications}
\label{subsec:app}

\paragraph{Application \#1: Faster static min-cost flow.} Somewhat surprisingly, the expander hierarchy data structure has fewer recursive levels than those for min-ratio cycle. 
This results in a faster runtime for static min-cost flow for both randomized and deterministic algorithms.
In particular, we give a randomized algorithm that statically solves exact min-cost flow on graphs with polynomially bounded costs and capacities in time $m \cdot e^{O((\log m)^{3/4} \log \log m)}$, and a deterministic version that runs in time $m \cdot e^{O((\log m)^{5/6} \log \log m)}$. This should be compared to randomized $m \cdot e^{O((\log m)^{7/8} \log \log m)}$ time \cite{chen2022maximum}, and deterministic $m \cdot e^{O((\log m)^{17/18} \log \log m)}$ time \cite{BrandCPKLPSS23} respectively.

Our algorithm initially solves the dual \eqref{eq:dual} and is able to use the final IPM dual solution to extract an optimal flow (see \Cref{sec:flow}).
Our approach is arguably the simplest almost-linear time algorithm for computing minimum-cost flows yet:
Our data structure only needs one main component, namely a fully-dynamic tree cut sparsifier, obtained from our dynamic expander hierarchy. In the randomized setting, the tools required to implement this expander hierarchy primarily involve a direct reduction from capacitated expander decomposition to the unit capacity setting.
Finally, on top of this, we build a data structure for detecting the best tree cut, using standard techniques.
%
In contrast, the data structure for solving min-ratio cycle in the first  almost-linear time min-cost flow algorithm of \cite{chen2022maximum} is quite involved.
It relies on complex fully-dynamic spanners, core graphs, all-pairs shortest path data structures, and a delicate restarting procedure to manage the interaction between a non-fully-adaptive data structure and its `adversary' coming from the interior point method.
This need for reasoning about the interaction between a data structure and an adversary was removed in \cite{chen2023almost}, which gave a (deterministic) fully-adaptive data structure for solving min-ratio cycle problems, but introduced other complexities by using extensive machinery to maintain fully-dynamic $\ell_1$-oblivious routings using dynamic terminal spanners and low-diameter trees \cite{KMP23}.

\paragraph{Application \#2: Decremental min-cost flow.} By designing an $\ell_1$-IPM for \eqref{eq:dual}, and an efficient data structure for min-ratio cuts (\Cref{def:minratiocut}), we show the following. 
\begin{restatable}{theorem}{mainTheorem}
\label{thm:main}
There is a randomized algorithm that given a decremental graph $G = (V, E, \uu, \cc)$ with integer capacities $\uu$ in $[1, U]$ and integer costs $\cc$ in $[-C, C]$, with $U, C \le m^{O(1)}$, where $m$ is the initial number of edges in $G$, a demand $\dd \in \Z^V$, and parameter $F \in \R$ reports after each edge deletion, capacity decrease, or cost increase, whether there is a feasible flow $\ff$ with cost $\cc^\top \ff$ at most $F$. Over $Q$ updates, the algorithm runs in total time $(m+Q) \cdot e^{O((\log m)^{3/4} \log \log m)}$, and can be made deterministic with time $(m+Q) \cdot e^{O((\log m)^{5/6} \log \log m)}$.
\end{restatable}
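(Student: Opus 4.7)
The plan is to assemble three ingredients developed earlier in the paper and instantiate them carefully in the decremental setting. First, apply the reduction from thresholded min-cost flow to the dual formulation \eqref{eq:dual} via \Cref{sec:reduceToTrans} and \Cref{lem:transDual}; by \Cref{lemma:dec}, the optimal value of $\max_{\cc - \BB\yy \ge 0} \dd^\top \yy$ is monotone non-decreasing under all three types of decremental updates (edge deletion, capacity decrease, cost increase), and any feasible $\yy$ at one time remains feasible at the next. The algorithm therefore maintains a single dual vector $\yy$ across the entire update sequence, answering INFEASIBLE (permanently) once $\l \dd, \yy \r > F$ and FEASIBLE once the IPM certifies a near-optimal dual whose value is still at most $F$ (with a primal flow of cost $\le F$ recoverable by \Cref{sec:flow}).

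Next, run the dual $\ell_1$-IPM of \Cref{sec:IPM} to drive $\yy$ toward optimality. Each IPM step reduces to one instance of approximate min-ratio cut on a graph whose capacities and vertex gradient are simple functions of the current slacks $\cc - \BB\yy$, and the IPM converges in $m^{1+o(1)}$ total iterations. We interleave the IPM steps with the adversarial updates to $G$: after each decremental update we execute $m^{o(1)}$ IPM iterations so that the maintained dual is always $m^{o(1)}$-close to optimal for the current graph. Because a single adversarial update and a single IPM step each change only $O(1)$ or $m^{o(1)}$ coordinates of the underlying capacity/gradient vectors, the total number of queries and updates fed to the dynamic min-ratio cut data structure over $Q$ adversarial updates is $(m+Q) \cdot m^{o(1)}$.

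Finally, instantiate the $m^{o(1)}$-approximate dynamic min-ratio cut data structure on top of the fully-dynamic tree cut sparsifier of \Cref{infthm:treecutSparsifier}. We maintain a tree $T$ of quality $m^{o(1)}$ for the current graph $G$, so an optimum min-ratio cut on $T$ is automatically an $m^{o(1)}$-approximate min-ratio cut on $G$ — the quality exactly matches what the IPM of \Cref{sec:IPM} tolerates. On $T$ itself, the min-ratio cut can be found by a standard top-tree dynamic program supporting edge/vertex reweightings and subtree aggregates in $\tilde{O}(1)$ per operation; chaining this with the $m^{o(1)}$ amortized update time of \Cref{infthm:treecutSparsifier} gives amortized $m^{o(1)}$ per IPM query. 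Combining $m^{1+o(1)}$ IPM iterations with $m^{o(1)}$ cost per iteration, and accounting for the extra $Q$ adversarial updates which each trigger $m^{o(1)}$ reconfiguration work, yields total time $(m+Q) \cdot m^{o(1)}$, which becomes $(m+Q) \cdot e^{O((\log m)^{3/4}\log\log m)}$ in the randomized case and $(m+Q) \cdot e^{O((\log m)^{5/6}\log\log m)}$ deterministically by invoking the corresponding randomized vs.\ deterministic expander-decomposition primitives inside the expander-hierarchy construction underlying \Cref{infthm:treecutSparsifier}.

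The main obstacle is the adaptive interaction between the IPM and the tree cut sparsifier: the IPM's next min-ratio cut query is a deterministic function of the cut the sparsifier just returned, so the sparsifier must be robust against an adaptive adversary. This is the reason we insist on the deterministic tree cut sparsifier of \Cref{infthm:treecutSparsifier} (and the deterministic dynamic expander hierarchy underneath it), and the reason the randomized speedup is comparatively modest. A secondary subtlety is bounding the number of coordinates on the min-ratio cut instance that change per IPM step — the standard $\ell_1$-IPM accounting (carried out in \Cref{sec:IPM}) controls this via a potential argument, and we must also verify that decremental updates to $G$ translate into only $O(1)$ coordinate changes on the cut instance so that the amortized bookkeeping above goes through cleanly.
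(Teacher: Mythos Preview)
Your high-level assembly is the paper's: reduce to transshipment and work on the dual, run the $\ell_1$-IPM of \Cref{sec:IPM}, and feed it with the fully-dynamic min-ratio cut data structure built on top of the tree cut sparsifier. Two points, however, are not right as stated and would need to be fixed before the argument goes through.

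First, the IPM does \emph{not} maintain $\yy$ ``$m^{o(1)}$-close to optimal'' nor does it execute $m^{o(1)}$ iterations per decremental update. The potential $\Phi$ is designed around the threshold $F$, not around the optimum: the algorithm reports YES whenever the min-ratio cut value exceeds $-\tilde O(1/\kappa)$ (which, via \Cref{lem:existSmallRatioCut}, certifies the optimal dual value is below $F$), and it reports NO once $\Phi$ has dropped far enough that $\l\dd,\yy\r\ge F-(mC)^{-10}$. A single decremental update can trigger anywhere from zero to $\tilde O(m\kappa^2)$ IPM steps; what is bounded is the \emph{total} number of IPM steps across all updates, because $\Phi$ is monotone (each IPM step decreases it by $\Omega(\kappa^{-2})$ and decremental updates never increase it, \Cref{lemma:dec}). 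Your final time bound survives, but the per-update accounting you describe does not.

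Second, a top-tree dynamic program on $T$ suffices to \emph{find} a good tree cut, but it does not solve the other half of the interface the IPM needs: after $\textsc{ToggleCut}$, the data structure must \emph{return the set of edges} whose accumulated potential difference $\uu(e)\cdot|(\BB\yy)(e)|$ has moved by roughly $\eps$ since they were last touched, so that the IPM can refresh $\wt\ss$, $\wt\uu$, $\wt\gg$ on exactly those edges. You acknowledge the need to bound how many such edges there are, but you never say how to \emph{identify} them. The paper handles this with an additional structure---the directed layer graph of \Cref{def:directed_layer_graph} and the detection algorithm of \Cref{sec:detection}---maintained alongside the tree cut sparsifier; this is a nontrivial component and is what makes the amortized bookkeeping go through.
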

A corollary of \Cref{thm:main} is an algorithm for approximately maintaining the flow value.
Because \Cref{thm:main} solves a thresholded problem, it succeeds against adaptive adversaries. Because \Cref{thm:approx} reduces to the thresholded problem, it also succeeds against adaptive adversaries.
\begin{theorem}
\label{thm:approx}
There is a randomized algorithm that given a decremental graph $G = (V, E, \uu, \cc)$ with integer capacities $\uu$ in $[1, U]$ and integer costs $\cc$ in $[1, C]$, with $U, C \le m^{O(1)}$, where $m$ is the initial number of edges in $G$, a demand $\dd \in \Z^V$, maintains a $(1+\eps)$-approximation to the cost of the current min-cost flow. Over $Q$ updates, the algorithm runs in total time $\eps^{-1}(m+Q) \cdot e^{O((\log m)^{3/4} \log \log m)}$, and can be made deterministic in time $\eps^{-1}(m+Q) \cdot e^{O((\log m)^{5/6} \log \log m)}$.
\end{theorem}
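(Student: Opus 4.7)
The plan is to reduce \Cref{thm:approx} to the thresholded version in \Cref{thm:main} by running a logarithmic number of parallel thresholded instances with geometrically-spaced thresholds, then reading off the smallest threshold whose instance currently reports feasibility. Since all costs are integers in $[1, C]$ and capacities are integers in $[1, U]$ with $C, U \le m^{O(1)}$, the finite values of the optimum $\mathrm{OPT}$ lie in $\{0\} \cup [1, mCU]$ where $mCU = m^{O(1)}$. I set $K = \lceil \log_{1+\eps}(mCU) \rceil + 1 = O(\eps^{-1} \log m)$ and instantiate $K+1$ copies of the algorithm of \Cref{thm:main}, all sharing the same graph $G$, capacities $\uu$, costs $\cc$, and demand $\dd$, with thresholds $F_0 = 0$ and $F_i = (1+\eps)^{i-1}$ for $1 \le i \le K$. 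Every decremental update on $G$ is forwarded to all $K+1$ instances.

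After each update I output $F_{i^*}$, where $i^*$ is the smallest index whose instance currently reports ``feasible''; if no such $i^*$ exists I declare infeasibility. Correctness is immediate: instance $i^*$ certifies a feasible flow of cost at most $F_{i^*}$ so $\mathrm{OPT} \le F_{i^*}$, while instance $i^*-1$ reports infeasibility, giving $F_{i^*-1} < \mathrm{OPT}$, and therefore $\mathrm{OPT} \le F_{i^*} \le (1+\eps)F_{i^*-1} < (1+\eps)\mathrm{OPT}$. To avoid rescanning all $K+1$ instances after every update, I exploit that $\mathrm{OPT}$ is monotone non-decreasing in the decremental setting (as noted in the discussion following \Cref{thm:main}), hence $i^*$ is itself monotone non-decreasing across updates; I maintain $i^*$ as a pointer and advance it lazily whenever the instance at index $i^*$ first reports infeasible, incurring at most $K$ total increments over the whole update sequence.

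The total running time is dominated by the $K+1$ parallel runs of \Cref{thm:main}, namely $O(\eps^{-1}\log m)\cdot(m+Q)\cdot e^{O((\log m)^{3/4}\log\log m)}$ in the randomized case. Because $\log m = e^{\log\log m} \le e^{(\log m)^{3/4}\log\log m}$, the extra $\log m$ factor is absorbed into the subpolynomial term, recovering the claimed $\eps^{-1}(m+Q)\cdot e^{O((\log m)^{3/4}\log\log m)}$ bound; the deterministic variant is identical, using the deterministic runtime of \Cref{thm:main}. Robustness against adaptive adversaries is preserved because each thresholded instance is adaptive-adversary-robust by \Cref{thm:main}, and the reduction reveals only a single feasibility bit per instance, which the adversary could already compute from the update transcript and the fixed thresholds; no extra randomness leaks. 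I do not anticipate a genuine obstacle here: the only minor care point is including the $F_0 = 0$ level so the $\mathrm{OPT} = 0$ case (e.g., $\dd = \veczero$) is handled correctly, and all real technical work is internal to \Cref{thm:main}.
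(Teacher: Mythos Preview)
Your proposal is correct and follows essentially the same approach as the paper: run $O(\eps^{-1}\log(mCU))$ parallel instances of \Cref{thm:main} with geometrically-spaced thresholds $(1+\eps)^i$ and report the smallest feasible one, using monotonicity of the optimum under decremental updates. Your write-up adds a few helpful details the paper omits (the $F_0=0$ level for the $\mathrm{OPT}=0$ case, the lazy pointer for $i^*$, and the explicit absorption of the $\log m$ factor), but the underlying reduction is identical.
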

\begin{proof}
We run a thresholded decremental min-cost flow algorithm (\Cref{thm:main}) for thresholds $F = (1+\eps)^i$. Note that the cost the min-cost flow is monotonically increasing because the graph is decremental. The cost is lower bounded by $1$, and upper bounded by $mCU$, so it suffices to set $i = 0, 1, \dots, O(\eps^{-1} \log(mCU))$. The result thus follows from \Cref{thm:main}.
\end{proof}
This also implies approximation and threshold algorithms for maintaining the value of the decremental maximum flow and the size of a weighted bipartite matching. It is worth noting that the dependence on $\eps$ is optimal under the online matrix-vector ($\mathsf{OMv}$) conjecture \cite{HenzingerKNS15}. Indeed, exact decremental matching in unweighted graphs requires time at least $mn^{1-o(1)}$ under $\mathsf{OMv}$. Our result should be compared to previous algorithms \cite{bernstein2022deterministic,BKS23,JambulapatiJST22}, with runtimes\footnote{We use $\tilde{O}(\cdot)$ to hide $\polylog(m)$ factors.} $\O(m\eps^{-4})$, $\O(m\eps^{-3})$, and $m^{1+o(1)}\eps^{-2}$ respectively.

\paragraph{Application \#3: Deterministic decremental single-source reachability and strongly connected components.} A long line of work resulted in near-linear time algorithms for decremental single-source reachability (SSR) and strongly connected components (SCC) \cite{HenzingerKN14,HenzingerKN15,ChechikHILP16,ItalianoKLS17,BernsteinPW19}. All these algorithms are randomized. They technically work against adaptive adversaries because the SCC decomposition or reachability structure does not reveal any randomness. However, they do not work against ``non-oblivious'' adversaries that can see all internal randomness of the algorithm.
In general, non-oblivious or even deterministic algorithms are often more desirable so that they can be used within an optimization framework (such as IPMs or multiplicative weights).
To the best of our knowledge, the current fastest deterministic algorithms for SSR and SCC require time $mn^{1/2+o(1)}$, achieved by \cite{bernstein2020deterministic}.\footnote{\cite{bernstein2020deterministic} claims a runtime of $mn^{2/3+o(1)}.$ Using the almost-linear-time deterministic max-flow algorithm from \cite{BrandCPKLPSS23} to speed up a routine to embed directed expanders in \cite{bernstein2020deterministic}, improves the runtime to $mn^{1/2+o(1)}$.} We improve this runtime to $m^{1+o(1)}$.

\begin{restatable}{theorem}{thmSCC}
\label{thm:scc}
There is a deterministic algorithm that given a directed graph $G = (V, E)$ undergoing edge deletions, explicitly maintains the strongly connected components of $G$ in total time $m \cdot e^{O((\log m)^{5/6} \log \log m)}$.
\end{restatable}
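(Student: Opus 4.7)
The plan is to derive Theorem \ref{thm:scc} from the deterministic version of Theorem \ref{thm:main} via a two-step reduction: decremental SCC maintenance $\to$ decremental single-source reachability (SSR) $\to$ decremental thresholded min-cost flow.

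For the first reduction, I would use the classical pivot-and-recurse framework. Within each current SCC $C$, pick a pivot $v_C \in C$ and decrementally maintain both the set of vertices of $C$ reachable from $v_C$ and the set that can reach $v_C$; their intersection is exactly $\mathrm{SCC}(v_C)$. When this intersection shrinks, the vertices that left have split off from the SCC; recurse on each new piece by instantiating fresh SSR data structures on its induced subgraph. Employing ``small-piece recursion'' --- always keeping the data structure for the largest fragment after a split and rebuilding only on smaller pieces --- each vertex participates in at most $O(\log n)$ SSR instances over the entire run, giving polylogarithmic overhead. This scheme is deterministic once the underlying SSR algorithm is.

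For the second reduction, to maintain the set $R$ of vertices reachable from a source $s$ in a decremental graph, I would augment $G$ with an auxiliary vertex $A$, edges $(A, v)$ of capacity $1$ and cost $1$ for each $v \neq s, A$, and an edge $(s, A)$ of capacity $n$ and cost $0$; all original edges get capacity $n$ and cost $0$. Set demands $\dd_s = -(n-1)$, $\dd_A = 0$, and $\dd_v = +1$ for each $v \neq s, A$. The minimum-cost feasible flow in this augmented graph has cost equal to the current number of vertices unreachable from $s$ in $G$, because unreachable vertices must be served via the cost-$1$ aux edge while reachable ones are served by cost-$0$ paths in $G$. I would run $O(\log n)$ parallel thresholded MCF instances with thresholds $F = 0, 1, 2, 4, \dots, 2^{\lceil \log n \rceil}$ so that the minimal-feasible $F$ reveals the unreachable count up to a factor of two. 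To identify which specific vertices are newly unreachable, I would read off the dual potential $\yy$ from an instance near the feasibility boundary (available since the IPM underlying Theorem \ref{thm:main} is itself dual-based, cf.~Section \ref{sec:flow}); the unreachable vertices are exactly those with $\yy_v > \yy_s$, since the dual constraints force $\yy_v - \yy_s \leq 0$ for reachable $v$ and $\leq 1$ for unreachable $v$.

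Putting it together, the total runtime is dominated by $O(\log^2 n)$ thresholded MCF instances (across all SSR instances at all SCC-recursion levels), each taking $m \cdot e^{O((\log m)^{5/6} \log \log m)}$ time by the deterministic version of Theorem \ref{thm:main}, giving the claimed bound since $\log^2 n$ is absorbed into the subpolynomial factor. The main obstacles I anticipate are (i) implementing the SCC-to-SSR reduction deterministically with only polylog overhead, which requires careful rebuilds upon SCC splits and reuse of the largest existing SSR data structure, and (ii) exposing the dual potentials of the thresholded MCF oracle at query time --- plausible given the dual-based IPM of the excerpt, but it must be plumbed carefully through the min-ratio cut data structure of \Cref{def:minratiocut} and its tree-cut sparsifier implementation.
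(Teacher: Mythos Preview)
The paper takes a completely different route: it does not reduce SCC to SSR at all (in fact Corollary~\ref{cor:ssr} derives SSR \emph{from} SCC). Instead it runs the dual IPM directly on a bespoke instance with costs $\cc = \eps\vecone$ ($\eps = m^{-3}$), demand $\dd = \sum_{e=(u,v)}(\vecone_u - \vecone_v)$, and threshold $F = 2m^2$. Feasible $\yy$ then have $|\yy_a - \yy_b| \le \eps m$ whenever $a,b$ lie in the same SCC, while if $G$ is not strongly connected some feasible $\yy$ achieves $\langle \dd,\yy\rangle \ge F$, so the IPM keeps decreasing $\Phi$ until $|\yy_a - \yy_b|\ge 1$ for some pair. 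At that point the paper reads the \emph{explicitly maintained} approximate slacks $\tilde s$ to define $\hat E = \{e : \tilde s(e)\le 2\eps m\}$, runs parallel BFS on $\hat E$ from $a$ and $b$ to find the smaller side in time proportional to its volume, splits it off, and recurses there.

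Your plan has two concrete gaps. First, building explicit SSR from thresholded MCF: even with $\textsc{Potential}(v)$ exposed, identifying the newly unreachable vertices after a deletion by testing $\yy_v > \yy_s + \tfrac12$ requires scanning all $n$ vertices, giving $\Omega(nm)$ over all deletions; the min-ratio cut data structure detects large \emph{edge} slack changes (for stability of $\tilde s$), not vertex-level threshold crossings, so this is not free. Second, the deterministic pivot-and-recurse reduction from SCC to SSR does not give $O(\log n)$ depth as stated: when the pivot lands in a small fragment after a split, the large fragment needs a fresh pivot and fresh SSR instances on its induced subgraph, so you cannot always rebuild only on small sides. This is precisely why prior work uses random pivots here. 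The paper's direct approach avoids both issues by never constructing an SSR subroutine and by reading splits off the IPM's own slack state.
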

There is a simple reduction from SSR to SCCs.
\begin{corollary}
\label{cor:ssr}
There is a deterministic algorithm that, given a directed graph $G = (V, E)$ undergoing edge deletions and vertex $s \in V$, explicitly maintains the set of vertices reachable from $s$ in $G$ in total time $m \cdot e^{O((\log m)^{5/6} \log \log m)}$.
\end{corollary}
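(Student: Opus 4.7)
The plan is to reduce decremental single-source reachability to decremental SCC maintenance and invoke \Cref{thm:scc} as a black box. Given the input directed graph $G = (V, E)$ and source $s \in V$, I would construct the augmented graph $G' = (V, E')$ where $E' = E \cup \{(v, s) : v \in V \setminus \{s\}\}$, i.e., add a back-edge from every non-source vertex to $s$. Since $|E'| \le |E| + (n-1) = O(m+n)$, this blow-up is absorbed into the claimed runtime (and we may assume $m \ge n - 1$ without loss of generality for SSR). I then run the deterministic decremental SCC algorithm of \Cref{thm:scc} on $G'$, forwarding each deletion in $G$ as the identical deletion in $G'$. The back-edges $(v, s)$ are never deleted, so the instance remains purely decremental.

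Correctness reduces to the equivalence: for any $v \neq s$, $v$ is reachable from $s$ in $G$ if and only if $v$ lies in the same strongly connected component as $s$ in $G'$. One direction: if $s$ reaches $v$ in $G$ via a path $P$, then concatenating $P$ with the back-edge $(v, s) \in E'$ yields a directed cycle in $G'$ through $s$ and $v$, so they share an SCC of $G'$. Conversely, if $v$ and $s$ share an SCC of $G'$, there is a walk from $s$ to $v$ in $G'$; whenever this walk traverses a back-edge $(u, s)$, we can truncate the prefix and restart at $s$, so after repeated truncation we obtain a walk from $s$ to $v$ that uses only edges of $E$. This shows $v$ is reachable from $s$ in $G$.

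To produce the output, after each update I read off the SCC containing $s$ from the data structure of \Cref{thm:scc} and report it as the reachable set. Since \Cref{thm:scc} explicitly maintains the SCC partition, changes to the SCC of $s$ can be reported in time proportional to the total number of vertex-departures from that component, which is bounded by $n$ over the entire update sequence and thus charged to the SCC data structure's runtime. Overall, the algorithm runs in time $O(m + n) \cdot e^{O((\log m)^{5/6} \log \log m)} = m \cdot e^{O((\log m)^{5/6} \log \log m)}$, as claimed.

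There is essentially no technical obstacle here beyond invoking \Cref{thm:scc}: the reduction is linear-time and purely combinatorial, and the only subtlety worth verifying is that the augmented back-edges are static (never deleted), so maintaining SCCs of $G'$ remains a valid decremental instance on a graph of size $O(m)$ with $O(m)$ deletions.
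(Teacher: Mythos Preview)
Your proposal is correct and follows essentially the same reduction as the paper: add back-edges $(v,s)$ from every vertex to $s$, observe that the SCC of $s$ in the augmented graph is exactly the set of vertices reachable from $s$ in $G$, and invoke \Cref{thm:scc}. Your write-up is in fact more detailed than the paper's, which simply asserts the equivalence without the truncation argument.
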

\begin{proof}
For a graph $G = (V, E)$ and vertex $s \in V$, consider the graph $\hat{G}$ which contains the edges in $E$, plus $(t, s)$ for $t \in V$. The SCC containing $s$ in $\hat{G}$ is exactly the set of vertices reachable from $s$. If $G$ is decremental, then so is $\hat{G}$. Thus, the result follows from \Cref{thm:scc}.
\end{proof}
It is worth mentioning that the previous deterministic result in \cite{bernstein2020deterministic} allows for querying paths between vertices in the same SCC, in time proportional to the length of the returned path. We do not currently see how to use our methods to achieve this.

\paragraph{Application \#4: Decremental $s$-$t$ distance.} The $s$-$t$ shortest path problem is of particular interest, partly because a classic algorithm of Garg and K\"{o}neman \cite{GargK07} shows that a data structure which solves approximate $s$-$t$ shortest path in decremental directed graphs (against an adaptive adversary, with path reporting) can be used to design a high-accuracy maximum flow algorithm with nearly the same runtime. Their algorithm is based on multiplicative weights, and hence initially only achieves a constant factor approximation. However, it works in directed graphs, so one can take the residual graph and repeat the argument to boost to high accuracy. Versions of this MWU framework have been instantiated in several settings \cite{Madry10,bernstein2022deterministic}. Even though we now know an almost-linear-time maximum flow algorithm, such an approach may provide an alternate algorithm not based on IPMs. Recently Chuzhoy and Khanna achieved a $n^{2+o(1)}$ runtime for bipartite matching via this approach \cite{ChuzhoyK2024,ChuzhoyK2024b}, by leveraging specific properties of the residual graphs encountered in a bipartite matching algorithm. Curiously, \Cref{thm:main,thm:approx} indeed give an almost-linear-time algorithm for reporting the distance of the decremental $s$-$t$ shortest path, though not a witness approximate shortest path itself. However, the algorithm uses an IPM, so even having access to a witness path would not lead to a more ``combinatorial'' maximum flow algorithm based only on MWU.

Before our result, the previous best-known runtimes against oblivious adversaries were $\O(n^2)$ in dense graphs \cite{BernsteinGW20} and $\O(mn^{3/4})$ in general \cite{GutenbergW20a}, and deterministically/adaptively only runtimes of $n^{2+2/3+o(1)}$ and $O(mn)$ are known \cite{bernstein2020deterministic, shiloach1981line}
Our result is deterministic and hence works against adaptive adversaries. It should be noted that these prior works solve the more general problem of single-source shortest path, i.e., approximate shortest path lengths from a source $s$ to every other vertex. They also support reporting approximate shortest paths. 
We do not know how to achieve either currently, for reasons similar to the ones discussed above regarding why we cannot report paths for SSR and SCC. However, we are mildly optimistic that it may be achievable with additional insights.

\paragraph{Application \#5: Dynamic flow algorithms.} Our data structure to maintain the expander hierarchy and tree cut sparsifier runs in graphs undergoing edge insertions and deletions with polynomially bounded edge capacities with randomized amortized update time and approximation quality $e^{O(\log^{3/4}\log\log m)}$. We further give a deterministic algorithm that achieves amortized update time and approximation quality $e^{O(\log^{5/6}\log\log m)}$. These bounds match the respective runtimes claimed in \cite{goranci2021expander} but extend their result also to capacitated graphs.

By the same reductions as in \cite{goranci2021expander}, we obtain the first algorithm with subpolynomial update time and approximation ratio for various important flow and cut problems.
\begin{theorem}
There is a deterministic algorithm on a capacitated $m$-edge graph undergoing edge insertions and deletions with amortized update time $m^{o(1)}$ that can return an $m^{o(1)}$-approximation to queries for the following properties:
\begin{itemize}
    \item $s$-$t$ maximum flow, $s$-$t$ minimum cut for any input pair $(s,t) \in V^2$;
    \item lowest conductance cut, sparsest cut; and
    \item multi-commodity flow, multi-cut, multi-way cut, and vertex cut sparsifiers.
\end{itemize}
The former two queries are answered in worst-case time $\tilde{O}(1)$, the last type of queries are answered in time $ \tilde{O}(k)$ where $k$ is the number of multi-commodity flow pairs; $k$ is the number of required sets in the multi-cut; $k$ is the number of terminals in the multi-way cut; or $k$ is the number of terminal vertices over which the vertex sparsifier is required.
\end{theorem}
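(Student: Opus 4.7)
The plan is to reduce every query to a corresponding query on the fully-dynamic tree cut sparsifier $T$ of quality $q = m^{o(1)}$ guaranteed by \Cref{infthm:treecutSparsifier}, which is maintained deterministically in amortized update time $m^{o(1)}$. Since $T$ preserves every cut of $G$ up to the multiplicative factor $q$, any $c$-approximate solution computed on $T$ lifts to a $(c \cdot q) = m^{o(1)}$-approximate solution in $G$, so the claimed approximation quality will follow once each problem is solved (approximately) on $T$ itself.

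For $s$-$t$ minimum cut and maximum flow queries, I would observe that the minimum $s$-$t$ cut in a tree equals the minimum-capacity edge on the unique $s$-$t$ path in $T$. Attaching a top-tree (or link-cut-tree) data structure to $T$ that supports path-minimum queries answers such a query in $\tilde{O}(1)$ worst-case time, and by max-flow/min-cut duality on $T$ the same value certifies an $m^{o(1)}$-approximation to the $s$-$t$ maximum flow value in $G$.

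For sparsest cut and lowest-conductance cut, I would exploit that every edge of $T$ induces a bipartition $(A,B)$ of $V$ (restricting each side to $V \subseteq V'$), and that the best ratio over the $|V'|-1$ candidate cuts is an $O(\log n)$-approximation to the tree optimum, hence $m^{o(1)}$-approximate on $G$. Subtree aggregates (volumes, vertex counts) maintained on the top tree, together with a heap over tree edges keyed by the current ratio, then yield the best cut in $\tilde{O}(1)$ worst-case query time. For multi-commodity flow, multi-cut, multi-way cut, and vertex cut sparsifiers, I would invoke the known exact or high-accuracy tree algorithms (for instance Garg--Vazirani--Yannakakis for tree multi-cut, and the standard restriction of a tree cut sparsifier to terminals for R\"{a}cke-style vertex sparsifiers), each of which runs in $\tilde{O}(k)$ time using LCA and subtree-sum queries on $T$.

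The main obstacle will be bridging the amortized $m^{o(1)}$ update budget of \Cref{infthm:treecutSparsifier} with the worst-case $\tilde{O}(1)$ query time we target. Every edit to $T$ produced by the sparsifier maintenance must be propagated to the path-min, subtree-aggregate, and ratio-heap structures, and this propagation must cost only $\tilde{O}(1)$ per edit. Since \Cref{infthm:treecutSparsifier} performs $m^{1+o(1)}$ edits in total, each triggering $\tilde{O}(1)$ top-tree operations plus a constant number of heap updates, the auxiliary cost stays within $m^{1+o(1)}$ and is absorbed into the amortized update budget. Queries are then answered with no deferred work, purely by reading the current top tree or running a linear-in-$k$ tree algorithm against the LCA/aggregate structure, which yields the claimed worst-case query times.
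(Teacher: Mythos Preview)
Your proposal is correct and follows the same approach as the paper, which proves the theorem in a single sentence by invoking ``the same reductions as in \cite{goranci2021expander}'' on top of the new deterministic fully-dynamic tree cut sparsifier. Your sketch of those reductions (path-minimum for $s$-$t$ cut, single-edge tree cuts for sparsity/conductance, and the standard tree algorithms for multi-cut/multi-way cut/vertex sparsifiers) is a faithful elaboration of what the paper leaves to that citation.
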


Previously, similar results were obtained in unit capacity graphs by \cite{goranci2021expander}. \cite{CGHPS20} gave algorithms  for the first problem that achieve nearly-logarithmic quality while achieving sub-linear update time $\tilde{O}(n^{2/3})$ in an $n$-vertex graph against an oblivious adversary and $\tilde{O}(m^{3/4})$ time against an adaptive adversary. In \cite{incrTreeCutSparsifier}, a deterministic algorithm with $m^{o(1)}$-approximation quality and update time was given which works in capacitated graphs undergoing edge insertions only.

\subsection{Related Work}

\paragraph{Dual-based flow algorithms.} The work \cite{HenzingerJPW23} provided a cut toggling alternative to the cycle toggling Laplacian solver of \cite{KelnerOSZ13}. For the problem of decremental approximate bipartite matching, \cite{BKS23} provided an MWU algorithm on dual vectors that run in nearly-linear time. Additionally, \cite{zuzic2021simple} (see also \cite{Li20}) gave a framework for undirected transshipment that was partially based on adjusting dual variables.
The dual perspective is also crucial for the communication complexity of transshipment \cite{BlikstadBEMN22}.

\paragraph{Previous min-cost flow algorithms.} Following a long line of work \cite{ChristianoKMST11,Madry13,S13,KLOS14,LeeS14,Madry16,Peng16,van2020bipartite,van2021minimum,GaoLP21,bernstein2022deterministic, BrandGJLLPS22}, the work \cite{chen2022maximum} gave an almost-linear time algorithm for solving minimum-cost flow exactly in graphs with polynomially bounded integral costs and capacities. A series of works since then \cite{vdBrand23incr, BrandCPKLPSS23, brand2023incremental, chen2023almost} has made the algorithm deterministic and has given an algorithm for maintaining a minimum-cost flow in incremental graphs \cite{chen2023almost}. Earlier works primarily used electrical flows to make progress, and recent works use approximate minimum-ratio cycles. Our work provides an alternative approach that instead solves minimum-ratio cut problems. Our algorithm has fewer recursive layers and results in a faster runtime for exact minimum-cost flow in both the randomized and deterministic settings.

\paragraph{Decremental graph algorithms.}
One of the first decremental graph algorithms was given in the 80's, when Even and Shiloach gave an algorithm to maintain decremental BFS trees~\cite{shiloach1981line}.
Since then, there has been significant work on maintaining fundamental properties of decremental graphs.
Decremental $s$-$t$ or single source shortest path (SSSP) is particularly important problems that have applications such as efficient implementation of numerical methods on graphs~\cite{GargK07, Madry10, bernstein2022deterministic, chen2023simple}.
On undirected decremental graphs, it is known how to achieve a $(1+\eps)$-approximation ratio in $m^{1+o(1)}$ total time deterministically~\cite{bernstein2016deterministic, bernstein2017deterministic, HenzingerKN18, CK19, gutenberg2020deterministic,ChuzhoyS21, bernstein2022deterministic, KMP23}.
For directed graphs, achieving an almost-linear runtime remains open and the state of the art is either $n^{2+2/3+o(1)}$ deterministically, $\O(m^{3/4}n^{5/4})$ assuming adaptive adversaries, or $\O(m^{2/3}n^{4/3})$ assuming oblivious adversaries~\cite{bernstein2013maintaining, HenzingerKN14, HenzingerKN15, BernsteinGW20, bernstein2020deterministic, GutenbergW20a}.
With a large $m^{o(1)}$ approximation factor, deterministic almost-linear total time is achievable for even fully-dynamic all-pair shortest path (APSP) on undirected graphs~\cite{chechik2018near, ChuzhoyS21, Chu21, bernstein2022deterministic}.

Matchings are another graph property that have attracted significant attention from the dynamic graph algorithms community.
In decremental graphs, it is possible to maintain $(1+\eps)$-approximate maximum (weighted or cardinality) matching on either bipartite~\cite{bernstein2020deterministic, JambulapatiJST22} or general graphs~\cite{assadi2022decremental, dudeja2023decremental, chen2023entropy}.

Most of the aforementioned results are purely combinatorial and are focused on maintaining discrete structures such as graph decompositions, neighborhood coverings, and search trees.
On the other hand, our work maintains the solution through the lens of feasibility and optimality of a continuous optimization problem.
Such idea also appears in some previous works such as incremental matchings~\cite{Gupta14}, decremental matchings~\cite{JambulapatiJST22,chen2023entropy}, partially dynamic LPs~\cite{bhattacharya2023dynamic}, and decremental max eigenvectors~\cite{adil2024decremental}.









\paragraph{Dynamic flow algorithms.}
As a direct implication of the Ford-Fulkerson's maxflow algorithm, one can maintain exact max flow on fully dynamic unweighted graphs with $O(m)$ update time (see \cite{gupta2018simple} for a discussion on the matter).
On planar graphs, an improved update time of $\O(n^{2/3})$ can be achieved~\cite{italiano2010improved, karczmarz2021fully}.
However, under the strongly exponential time hypothesis (SETH), there is no sublinear update time algorithm for partially dynamic general graphs~\cite{dahlgaard:LIPIcs.ICALP.2016.48}.
As a result, much attention is devoted to maintaining approximate solutions.

In the fully dynamic case, $m^{o(1)}$-approximation ratio with $m^{o(1)}$ update time can be achieved via expander hierarchy on unit capacity, undirected graphs~\cite{goranci2021expander}. 
\cite{CGHPS20} shows how to maintain a $\O(1)$-approximation on capacitated graphs.
Our dynamic tree cut sparsifier improves these results to $m^{o(1)}$-approximation in $m^{o(1)}$-amortized time on capacitated graphs.
In the incremental case, $(1+\eps)$-approximate solutions can be maintained with $m^{1+o(1)}\eps^{-1}$ total time for undirected $p$-norm flows as well as directed min-cost flows~\cite{brand2023incremental, chen2023almost, vdBrand23incr}.
For unweighted graphs, a runtime of $m^{3/2+o(1)}\eps^{-1/2}$ total time was previously achieved by~\cite{GH22}.
The algorithm of \cite{GH22} can also maintain exact max flows on incremental unweighted graphs in a $n^{5/2+o(1)}$ total time, which corresponds to a sublinear update time when the graph is sufficiently dense.

\paragraph{Lower bounds. }
In this paragraph, we give a more detailed discussion of related lower bounds. Since the current lower bounds do not distinguish between the incremental and decremental settings, we refer the reader to \cite{chen2023almost} for an analogous and slightly more expansive discussion in the incremental setting.  

\begin{itemize}
    \item \underline{Flows and Matchings:} Under the online matrix-vector (OMv) conjecture \cite{HenzingerKNS15}, there are bipartite graphs where performing $\Theta(n^2)$ deletions and $\Theta(n)$ size queries requires $\Omega(n^{3-\delta})$ total time for any fixed constant $\delta > 0$ to maintain exact matching size \cite{dahlgaard:LIPIcs.ICALP.2016.48}. Therefore, $\Omega(n^{2 - \delta})$ amortized update time is necessary for $\Theta(n)$ updates and one size query under OMv. Thus, our dependence on $\epsilon$ is optimal for algorithms with sub-polynomial update time because a $(1 - \frac{1}{n + 1})$-approximate matching is a maximum cardinality matching.  

    Furthermore, under the strongly exponential time hypothesis (SETH), every decremental algorithm for the weighted and directed exact maximum flow value problem on a sparse graph with $n$ vertices requires $O(n^{1 - \delta})$ amortized update time  \cite{dahlgaard:LIPIcs.ICALP.2016.48}. 
    
    \item \underline{SCCs:} We discuss the hardness of deciding if a directed graph contains a cycle in the fully-dynamic and worst-case decremental settings. 
    In the fully-dynamic setting $\Theta(n^2)$ updates and $\Theta(n)$ cycle detection queries take $\Omega(n^{3-\delta})$ time for an arbitrary constant $\delta > 0$ under OMv \cite{HenzingerKNS15}. 
    
    By a straightforward reduction to deciding if the $s$-$t$ shortest path has length $3$ or $5$, there are graphs for which $\Theta(n)$ edge deletions and a single cycle detection query take total time $\Omega(n^{2-\delta})$ under OMv \cite{HenzingerKNS15}. This rules out sub-linear worst-case update time for decremental algorithms.
    \item \underline{Decremental $s$-$t$ Shortest Path:} Under the OMv conjecture the exact decremental $s$-$t$ shortest distance problem requires amortized update time $m^{0.5 - \delta}$ on an unweighted graph with $n$ vertices and $m = O(n^2)$ edges for any fixed $\delta > 0$ \cite{HenzingerKNS15}. We remark that our dependence on $\epsilon$ is optimal for algorithms with sub-polynomial update time because a $(1 + \frac{1}{n})$-approximate shortest distance on a unweighted graph is an exact shortest distance. Furthermore, algorithms with sub-polynomial worst-case update time are ruled out for obtaining a $3/5 - \delta$ approximation under OMv, again via distinguishing $s$-$t$ distances $3$ and $5$ \cite{HenzingerKNS15}. 
\end{itemize}

\paragraph{Paper Organization.}
In \Cref{sec:overview}, we give an overview of our algorithm. Then, we describe our algorithm for maintaining a tree cut sparsifier in \Cref{sec:tree_cut_spars}. In \Cref{sec:min_ratio}, we show that tree cut sparsifers can be used to detect min-ratio cuts, and we describe that approximate edge potential differences can be maintained efficiently via standard techniques in \Cref{sec:detection}. Finally, in \Cref{sec:IPM}, we show that such a min-ratio cut data-structure suffices to solve decremental threshold min-cost flow. 
\section{Overview}
\label{sec:overview}

To convey the workings of our algorithm, it is natural to present the sections in
a top-down manner to better highlight and motivate why we need to solve certain subproblems.
The later main text will give the formal proof in the bottom-up order, as our proofs build on the precise properties and guarantees of the subroutines derived before.

\subsection{Min-Cost Flow, Transshipment and its Dual}
\label{sec:transship_flow_dual}

Our algorithm for min-cost flow first reduces the min-cost flow problem to transshipment on a sparse bipartite graph $G = (V, E, \cc)$ with some vertex demands $\dd$ (\Cref{sec:reduceToTrans}). The transshipment problem 
\begin{equation}
    \label{eq:transship}
    \min_{\BB^\top \ff = \dd, \ff \geq 0} \cc^\top \ff
\end{equation}
is a special case of min-cost flow where all the capacities are unbounded. Because our ultimate goal is to handle edge deletions, the reduction to this form does not address the central issue that arises for algorithms in flow space: Deleting an edge causes the current flow to no-longer route the demands. Therefore, we take the dual of \eqref{eq:transship} which translates the problem to voltage space (i.e. vertex potentials)
\begin{equation}
    \label{eq:transship_dual}
    \max_{\cc - \BB \yy \geq \veczero} \dd^\top \yy.
\end{equation}
This form is more amenable to edge deletions, since the vertex potentials $\yy$ remain feasible under edge deletions. Finally, we consider the thresholded version of \eqref{eq:transship_dual} and simply aim to decide if $\max_{\cc - \BB \yy \geq \veczero} \dd^\top \yy \geq F$ instead of maximizing the dual. 

\subsection{Solving the Thresholded Transshipment Dual via Min-Ratio Cuts}

In this section, we outline how to solve the transshipment dual problem by repeatedly solving the min-ratio cut problem on a fully dynamic graph $G = (V, E, \uu, \gg)$, where $\uu \in \R_{\geq 0}^E$ are best interpreted as edge capacities (different from the capacities in the original min-cost flow instance) and $\gg \in \R^V \perp \vecone$ are referred to  as vertex gradients. We denote $\UU = \diag(\uu)$. Then, the min-ratio cut problem is given by
\begin{equation}
    \label{eq:min_ratio_cut}
    \min_{\bDelta \in \R^V} \frac{\l\gg, \bDelta\r}{\norm{\UU \BB \bDelta}_1}. 
\end{equation}
Notice, that the solution of \eqref{eq:min_ratio_cut} is always negative and that it therefore maximizes the absolute value of the ratio.
We show that there is always a optimal solution $\bDelta = \pm \vecone_{C}$, i.e. there exist optimal $\bDelta$ which indicate cuts in the graph.
Furthermore, despite being used to solve the transshipment dual on a directed graph, this problem is undirected in that only the signs of the gradients depend on the side of the cut.

To show that \eqref{eq:min_ratio_cut} can be used to solve the transshipment dual problem \eqref{eq:transship_dual}, we closely follow the $\ell_1$-IPM framework introduced by \cite{chen2022maximum} for the first almost-linear time algorithm for minimum-cost flow, adapted to dual space, and apply it to the transshipment dual. Following \cite{chen2022maximum} we introduce a potential $\Phi : \R^V \to \R$
\begin{equation}
    \Phi(\yy) \defeq 100m\log(F - \l \dd, \yy \r) + \sum_{e = (u,v) \in G} (\cc(e) - (\BB\yy)(e))^{-\alpha}
\end{equation}
for $\alpha \approx 1/\log(mC)$ where $m$ denotes the initial number of edges in $G$ and all costs are integers in the interval $[-C, C]$. If a solution of cost $F$ exists, then the potential $\Phi(\yy)$ is unbounded and goes to $-\infty$ as $\l \dd, \yy \r$ approaches $F$.

The barrier $(\cdot)^{-\alpha}$ can be thought of as the more standard $\log(\cdot)$ barrier to ensure that $\yy$ remains feasible, but it penalizes approaching the boundary more harshly and thus ensures that $(\cc(e) - (\BB\yy)(e)) \geq 1/n^{\tilde{O}(1)}$ as long as $(\cc(e) - (\BB\yy)(e))^{-\alpha} \leq \tilde{O}(m)$. This ensures that the bit-complexity remains bounded by $\tilde{O}(1)$, which directly follows from the following description of vertex gradients and edge capacities respectively.  
We let
\begin{equation*}
    \gg \defeq \grad \Phi(\yy) = \frac{-100m}{F - \l \dd, \yy \r} \dd + \alpha \BB^{\top} (\cc - (\BB\yy))^{-1 - \alpha}
\end{equation*}
and $\uu(e) \defeq (\cc(e) - (\BB\yy)(e))^{-1-\alpha}$ where the $-1 - \alpha$ exponent is applied to every element in the vectors separately. The Taylor-expansion 
\begin{equation*}
    \Phi(\yy + \bDelta) \approx \Phi(\yy) + \l \gg, \bDelta \r + \norm{\UU\BB \bDelta}_2^2 \le \Phi(\yy) + \l \gg, \bDelta \r + \norm{\UU\BB \bDelta}_1^2
\end{equation*}
implies that solving the min-ratio cut problem to $1/\kappa$ accuracy yields an update reducing the potential by approximately $1/\kappa^2$ if there is a solution to \eqref{eq:transship_dual} with cost $F$ because the optimum ratio is then $\approx 1$.

This can be turned into an algorithm for decremental transshipment with the following observations. First, if there is a feasible $\yy$ with $\l \dd, \yy \r \ge F$, then there exists a solution to the min-ratio cut problem that decreases the potentials by at least $m^{-o(1)}$ (\Cref{lem:existSmallRatioCut}). Thus if our min-ratio cut algorithm cannot find a good solution, we conclude that $\max_{\cc-\BB\yy\ge0} \l \dd,\yy \r < F$, and continue.

It is not difficult to initialize $\yy$ so that the potential is initially $\O(m)$, and when $\Phi(\yy) \le \O(m)$, one can show that $\l \dd, \yy \r \ge F - m^{-O(1)}$ (\Cref{lem:smallPot}). Finally, edge deletions cannot increase the potential, and each edge deletion only causes $O(1)$ updates to the gradients and capacities. Overall, the algorithm only makes $m^{1+o(1)}$ calls to the dynamic min-ratio cut data structure. We refer the reader to \Cref{sec:IPM} for a detailed description of the interior point method.

\subsection{Min-Ratio Cuts on Trees}

Despite its description involving an arbitrary update vector $\bDelta$ in \eqref{eq:min_ratio_cut}, the min-ratio cut problem always has a solution that updates along a single cut, i.e., we have
\begin{equation*}
    \min_{C \subseteq V} \frac{\l\gg, \vecone_C\r}{\norm{\UU \BB \vecone_{C}}_1} = \min_{\bDelta \in \R^V} \frac{\l\gg, \bDelta\r}{\norm{\UU \BB \bDelta}_1}
\end{equation*}
which explains the nomenclature and allows us to focus our efforts on cuts from here on out. We refer the reader to \Cref{lem:cut_suffices} for a short proof of this fact.

To describe how we repeatedly solve the min-ratio cut problem approximately on a fully dynamic graph $G$, we assume that the problem is posed on a dynamic tree $T$ instead. We will later reduce to this case using dynamic tree-cut sparsifiers, the main data structure we develop in this paper.

In an analogue to the cycle decomposition of circulations in flow-space, we next show by induction that it suffices to consider cuts induced by a single tree edge. Consider a min-ratio tree cut $C$ that cuts $k > 1$ tree edges. We show that there is a cut with at most $k - 1$ edges achieving at least as good quality. Because the graph is a tree there is at least one connected component $C'$ (after removing the cut edges) that is only incident to a single cut edge. Since shifting the $\bDelta = \vecone_{C}$ vector does not change its ratio, we may assume that this connected component receives value $1$, i.e., $\bDelta(v) = 1$ for $v \in C'$. Now notice that $\vecone_{C} = \vecone_{C'} + \vecone_{C \setminus C'}$ where $C'$ is a cut induced by removing a single edge, and $C \setminus C'$ is a cut induced by removing $k - 1$ edges. But then we obtain
\begin{equation}
    \label{eq:one_less_edge}
    \min\left(\frac{\l\gg, \vecone_{C'}\r}{\norm{\UU \BB \vecone_{C'}}}, \frac{\l\gg, \vecone_{C \setminus C'}\r}{\norm{\UU \BB _{C \setminus C'}}} \right) \leq  \frac{\l\gg, \vecone_{C'} + \vecone_{C \setminus C'}\r}{\norm{\UU \BB \vecone_{C'}} + \norm{\UU \BB \vecone_{C \setminus C'}}} = \frac{\l\gg, \vecone_{C}\r}{\norm{\UU \BB \vecone_{C}}}
\end{equation}
where the inequality follows from the well known fact that $\min(a/b, c/d) \leq \frac{a + c}{b + d}$ given $b, d > 0$. Iterating \eqref{eq:one_less_edge} directly yields that it suffices to consider cuts induced by single tree edges. 

Given this insight, it suffices to maintain the ratio achieved by every tree edge under updates to the tree, edge capacities, and gradients, where we are guaranteed that $\gg \perp \vecone$ at all times. It turns out that the tree-cut sparsifiers we maintain have hop diameter bounded by $\O(1)$\footnote{The hop-diameter of a graph is the diameter of its uncapacitated version.}. This allows us to maintain the quality of each single edge cut in $\O(1)$ time: whenever vertices $u$ and $v$ undergo an update in the form of an edge insertion, deletion, or gradient change, only edges in the path $T[u, v]$ connecting $u, v$ have their qualities change.

We refer the reader to \Cref{sec:min_ratio} for a detailed description of our min-ratio cut data structure on trees. This section also contains an additional component necessary to our overall algorithm. We must maintain approximations to the true gradient and capacities to know which edges to update in the dynamic min-ratio cut data structure, and this involves detecting edges which have accumulated large potential differences across the cuts we have returned. We build a standard data structure for this problem in \Cref{sec:detection}.

\subsection{Reducing to Trees via Tree-Cut Sparsifiers}

In this section, we explain our construction of tree-cut sparsifiers $T$ for a dynamic graph $G = (V, E, \uu)$. These are trees on a larger vertex set that capture every cut up to some multiplicative factor $q$. This allows us to approximate the min-ratio cut in $G$ with a tree cut up to a multiplicative loss $q$.

\begin{restatable}[(Tree/Forest) Cut Sparsifier]{definition}{TreeCutSparsifer}
\label{def:tree_cut_sparsifer}
Given graph $G = (V,E, \uu)$, a \emph{cut sparsifier} $G' = (V', E', \uu')$ of quality $q$ is a graph with $V \subseteq V'$ such that for every pair of disjoint sets $A, B \subseteq V$, we have that $\mincut_{G}(A, B) \leq \mincut_{G'}(A, B) \leq q \cdot \mincut_{G}(A, B)$. We say that $G'$ is a \emph{forest cut sparsifier} if $G'$ is a cut sparsifier and a forest graph; and we say $G'$ is a \emph{tree cut sparsifier} if $G'$ is a cut sparsifier and a tree graph. 
\end{restatable}

At a high level, our algorithm wishes to maintain an \emph{expander hierarchy} on a dynamic capacitated graph $G$, introduced by \cite{goranci2021expander}. Broadly, an expander hierarchy is constructed by first finding an expander decomposition of $G$. In fact, a stronger notion called \emph{boundary-linkedness} (\Cref{def:boundaryLinkedExp}) is necessary, but this generally follows from most expander decomposition constructions. Then each expander piece is contracted, and the algorithm then finds an expander decomposition on the contracted graph, and repeats. Note that this naturally leads to a tree structure. \cite{goranci2021expander} proves that this tree is a tree-cut sparsifier of quality $\O(1)^k/\phi$, where $k = O(\log_{1/\phi} m)$ is the number of layers in the expander hierarchy. This is $\O(1)^k/\phi \le m^{o(1)}$, for $\phi = 2^{-\sqrt{\log m}}$.

The work of \cite{goranci2021expander} showed how to maintain an expander hierarchy in unit capacity graphs. Our first goal is to extend this to capacitated graphs that only undergo edge \emph{deletions}. Later, we show how to construct a tree-cut sparsifier on fully dynamic graphs by using the \emph{core-graph} technique and batching. We start by discussing how to maintain an expander decomposition on a capacitated graph undergoing edge deletions.

\paragraph{Capacitated Decremental Expander Decomposition.}
An expander decomposition is a partition $\mathcal{X}$ of the vertices in $G = (V, E, \uu)$, such that for every $X \in \mathcal{X}$ the induced subgraph $G[X]$ is a $\phi/\tilde{O}(1)$ expander with respect to conductance, i.e., $\uu_G(S, X \setminus S)/\min(\vol_{G}(S), \vol_G(X \setminus S) \geq \phi$ for all $S \subseteq X$. Futhermore, the total capacity of the crossing edges is bounded with $\tilde{O}(\phi \cdot U^{\text{total}})$, where we denote the total capacity of all the edges in $G$ with $U^{\text{total}}$. 

While this problem has been studied before using more involved techniques \cite{li2021deterministic}, we give the simplest imaginable reduction to the uncapacitated setting. This is important for us because we require an additional property of the expander decomposition we maintain: the vertex sets of the expanders refine over time, and the total volume of all edges that are ever cut is bounded by $\tilde{O}(\phi \cdot U^{\text{total}})$ over all edge deletions.

We first fix a value $U^{\text{cutoff}} = \phi \cdot \frac{U^{\text{total}}}{m}$, and let $G'$ be the sub-graph of $G$ that only contains edges with capacity at least $U^{\text{cutoff}}$, and additionally contains self loops of capacity $\vol_G(v)$ for every vertex $v$. Notice that computing a weighted expander decomposition for this graph $G'$ suffices, since the same decomposition has at most $\phi U^{\text{total}}$ extra crossing edge capacity in $G$, and we have $\vol_{G}(S) \leq \vol_{G'}(S)$ for every set $S \subseteq V$ due to the additional self loops. 

We now exploit that all the non-self loop edges in $G'$ have high capacity to replace $G'$ by an unweighted multi-graph $G''$. We simply replace every edge with  $\ceil{\uu(e)/U^{\text{cutoff}}}$ uncapacitated multi-edges. Notice that the capacity of every cut in $G''$ 2-approximates the capacity of the cut in $G'$ (after scaling with $U^{\text{cutoff}}$), and that the volume of $G''$ is lower bounded by the volume of $G'$. Furthermore $G''$ only contains $O(m/\phi)$ edges. 

An uncapacitated decremental expander decomposition that refines over time under edge deletions can then be computed using recent works on expander decompositions, specifically \cite{sulser2024} adapted using ideas from \cite{hua2023maintaining} to enable vertex splits and self-loop insertions. This \emph{refining property} of the expander decomposition then ensures that the total amount of capacity on all edges cut at any point in time is $\tilde{O}(\phi \cdot U^{total})$.

Overall, we have given an algorithm to maintain an expander hierarchy, and thus a tree-cut sparsifier of quality $2^{O(\sqrt{\log m} \log \log m)}$ in $m^{1+o(1)}$ time in decremental capacitated graphs.

\paragraph{Fully Dynamic Tree-Cut Sparsifiers. }
Finally, we reduce from the fully dynamic case to the decremental case using batching. To describe the main ideas used in our batching scheme, we consider a current tree-cut sparsifier $T$ of some graph $G$ that receives a batch of insertions $I$. We show that we can compute a new tree-cut sparsifier of $G \cup I$ in time proportional to $|I|$ without losing too much quality. Batching the updates appropriately then turns such an algorithm into a fully dynamic tree-cut sparsifier data structure. 

To compute a tree-cut sparsifier of $G \cup I$, we instead consider the graph $T \cup I$. It is not surprising that $T \cup I$ is a cut-sparsifier of $G \cup I$ given that $T$ is a tree-cut sparsifier of $G$. Then, we instantiate a set of terminals $B \subseteq V(T)$. Initially, every endpoint of an edge in $I$ is added to $B$. Then, $B$ is extended to a branch-free set, i.e. a set such that the set of paths $\mathcal{P}$ containing all tree paths $T[a, b]$ for $a, b \in B$ such that it does not intersect any other terminal is edge-disjoint. This extension can be achieved by doubling the size of the terminal set.

Then, we remove the minimum capacity edge from each such path and refer to the trees in the leftover forest as cores. Thereafter, the algorithm contracts the cores (forest pieces) and computes a tree-cut sparsifier on the contracted graph merely containing the identified min-capacity edges and the inserted edges in $I$. This graph contains approximately $|I|$ edges, and therefore computing the tree-cut sparsifier takes time roughly proportional to $|I|$. Then, this tree is mapped back to a tree on the whole graph via un-contracting the cores. 

Since the procedure described above loses an $q$ factor in quality every time it is applied, we make sure that the sequential depth $k$ of this operation in the final batching scheme handling insertions is very low, i.e., $q^k = \hat{O}(1)$. 

See \Cref{sec:tree_cut_spars} for a full description of our tree-cut sparsifier data structure. 
\section{Preliminaries}

\paragraph{Linear Algebra. } We denote vectors as lower case bold letters $\aa$, and matrices as upper case bold letters $\AA$. Given a vector $\aa \in \R^X$ and a subset $Y \subseteq X$ we let $\aa[X]$ denote the vector $\aa$ restricted to the coordinates in $X$, and we let $\aa(X) = \sum_{x \in X} \aa(x)$. For a vector $\uu \in \R^n$, we let $\diag(\uu) \in \R^{n \times n}$ denote the diagonal matrix with entries of $\uu$ on the diagonal.

\paragraph{Graphs.} We work with a capacitated input graph $G =(V,E,\uu)$ where $\uu$ is the function that assigns each edge $e \in E$ a capacity $\uu(e) \geq 1$. We define $\vol_G(v)$ for every vertex $v \in V$ as the weighted degree, i.e. $\vol_G(v) = \sum_{e \in E, v \in e} \uu(e)$ and denote by $\deg_G(v)$ the combinatorial degree, i.e. $\deg_G(v) =  \sum_{e \in E : v \in e} 1$. We extend this notion to sets where $X \subseteq V$, $\vol_G(X) = \sum_{v \in X} \vol_G(v)$ and $\deg_G(X) = \sum_{v \in X} \deg_G(v)$. For uncapacitated graphs, note that degrees and volumes coincide.

For directed graphs, we let the in-degree of vertex $v$ be equal to the number of edges $(w, v)$ whose head is $v$, and we let the out-degree of $v$ be the number of edges $(v, w)$ whose tail is $v$. 

We say that a graph $G$ is a $\phi$-expander if for every $S \subseteq V$ with $\vol_G(S) \leq \vol_G(V)/2$, we have $\uu(E(S, V \setminus S)) \geq \phi \cdot \vol_G(S)$.

Given a tree $T$, we denote with $T[u, v]$ the unique tree path from vertex $u$ to vertex $v$. 

Finally, we define the mincut between two sets of vertices in a graph $G$.
\begin{definition}
Given a graph $G=(V,E, \uu)$ and two disjoint sets $A, B \subseteq V$, we denote by $\mincut_G(A,B)$ the minimum value $\uu(E_G(A', V \setminus A'))$ achieved by any set $A'$ with $A \subseteq A' \subseteq V \setminus B$.
\end{definition}
\section{Fully-Dynamic Tree Cut Sparsifiers}
\label{sec:tree_cut_spars}

The main graph-theoretic object in this paper is the notion of a tree cut sparsifier. 

\TreeCutSparsifer*

In this section, we show that tree cut sparsifiers can be maintained efficiently in a fully-dynamic graph. Previously, this result was only known for uncapacitated graphs \cite{goranci2021expander}. Our main result is summarized in \Cref{thm:mainTreeSparsifier}. 

\begin{restatable}{theorem}{mainTreeCutSparisifer}\label{thm:mainTreeSparsifier}
Given an $m$-edge graph $G=(V, E, \uu)$ where $\uu \in [1, U = m^{O(1)}]^E$. Let $G$ be undergoing up to $\tilde{O}(m)$ edge deletions/edge insertions and vertex splits. Then, there is a randomized algorithm that maintains a tree  $T = (V', E', c')$ undergoing insertions and deletions of edges and isolated vertices, such that $T$ is a tree cut sparsifier of quality $\gamma_q = 2^{O(\log^{3/4}(m) \log\log(m))}$ with total update time $m \cdot 2^{O(\log^{3/4}(m) \log\log(m))}$. The algorithm succeeds w.h.p.
\end{restatable}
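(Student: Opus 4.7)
The plan is to build the tree cut sparsifier from a dynamic expander hierarchy in the sense of \cite{goranci2021expander}: recursively compute a boundary-linked $\phi$-expander decomposition, contract each expander piece, and repeat for $k$ levels, producing a tree $T$ whose internal structure reflects the recursive cuts. In unit-capacity graphs this is known to yield a tree cut sparsifier of quality $\widetilde{O}(1)^k/\phi$, and since $k = O(\log_{1/\phi} m)$, choosing $\phi \approx 2^{-\sqrt{\log m}}$ gives quality $2^{O(\sqrt{\log m}\log\log m)}$ in a decremental-only capacitated setting. The final quality $2^{O(\log^{3/4}(m)\log\log m)}$ claimed by the theorem then arises from a batched-rebuild step that lifts the decremental algorithm to a fully-dynamic one, paying an additional factor of roughly $2^{O(\log^{1/4}(m))}$ in the exponent of the quality and runtime.

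To handle capacitated decremental graphs I would reduce directly to the unit-capacity case. Set a cutoff $U^{\text{cutoff}} = \phi \cdot U^{\text{total}}/m$ where $U^{\text{total}} = \sum_e \uu(e)$, discard every edge of capacity below $U^{\text{cutoff}}$ (their total capacity is at most $\phi U^{\text{total}}$, absorbable into the expander-decomposition cut budget), and add a self-loop of capacity $\vol_G(v)$ at every vertex so that volumes in the reduced graph upper-bound those in $G$. Replacing each remaining edge $e$ by $\lceil \uu(e)/U^{\text{cutoff}}\rceil$ parallel unit-capacity copies yields an unweighted multigraph $G''$ with $\widetilde{O}(m/\phi)$ edges whose cuts $2$-approximate those of the weighted instance after rescaling. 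On $G''$, I would run the decremental refining expander decomposition of \cite{sulser2024}, combined with the vertex-splitting techniques of \cite{hua2023maintaining} so that levels above can undergo vertex splits and self-loop insertions; this produces a decomposition whose vertex sets only refine over time and whose total cut capacity across the entire update sequence is $\widetilde{O}(\phi U^{\text{total}})$. Stacking $k$ levels of this subroutine (each deletion or split at level $i$ becomes a vertex-split update at level $i+1$) produces the decremental expander hierarchy. The tree-cut-sparsifier property then follows from the boundary-linkedness at each level (\Cref{def:boundaryLinkedExp}) and a telescoping argument as in \cite{goranci2021expander}, and the total work is $\widetilde{O}(m/\phi)$ per level times $k$ levels, which optimizes to $m \cdot 2^{O(\sqrt{\log m}\log\log m)}$.

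Finally, to support insertions (and input-side vertex splits) I would use logarithmic batching combined with a core-graph contraction in the spirit of \cite{Madry10, chen2022maximum, chen2023almost}. Insertions accumulate into batches of geometrically increasing sizes; when a batch $I$ is processed, I form $T \cup I$ (still a cut sparsifier of $G \cup I$ since $T$ was one of $G$), mark the endpoints of edges in $I$ as terminals, and extend this to a branch-free terminal set of at most twice the size so that the induced tree paths between consecutive terminals are edge-disjoint. Removing the minimum-capacity edge along each such path decomposes $T$ into a forest of \emph{cores}; contracting the cores produces a graph of size $\widetilde{O}(|I|)$ on which I recursively invoke the decremental hierarchy algorithm to build a new tree cut sparsifier, and un-contracting recovers a tree cut sparsifier of $G \cup I$. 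By keeping the sequential depth of batched rebuilds at $O(\log^{1/4} m)$, the composed quality is $\bigl(2^{O(\sqrt{\log m}\log\log m)}\bigr)^{O(\log^{1/4} m)} = 2^{O(\log^{3/4}(m)\log\log m)}$, matching the theorem, and a similar geometric sum bounds total work. The main obstacle, I expect, is verifying that the core-graph contraction composes cleanly with the hierarchy-based tree sparsifier --- in particular, that the branch-free extension, min-capacity-edge removal, and core contraction each blow up cut quality by only a constant factor, and that vertex splits generated at higher levels of the hierarchy remain compatible with the decremental expander-decomposition subroutine across many rebuilds; the remaining ingredients are then essentially accounting for total work across batches.
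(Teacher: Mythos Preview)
Your proposal is correct and follows essentially the same approach as the paper: the capacitated-to-uncapacitated reduction via the $\phi U^{\text{total}}/m$ cutoff, self-loops, and multi-edge replacement (Section~4.1), the decremental expander hierarchy with $\phi = 2^{-\sqrt{\log m}}$ (Section~4.2), and the fully-dynamic lift via $L_{\max} = \lceil \log^{1/4} m\rceil$ levels of batched rebuilds using branch-free terminal sets and induced core graphs (Section~4.3) are all exactly what the paper does. The obstacle you flag---that the core-graph contraction composes with only constant-factor quality loss---is indeed the crux of the paper's correctness claim in Section~4.3, handled there by a two-case analysis on whether the single cut tree-edge lies in $F(T_{i-1}, B_i)$ or in $\hat{T}_i$.
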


We further augment the above theorem to maintain a dependency graph $H$ that allows us to track approximately which edges are in the cut induced by each tree edge of $T$. This graph $H$ is crucial in our final min-cost flow algorithm as it allows us to implicitly maintain flow and potentials in the IPM. 

\begin{restatable}{definition}{DirectedLayerGraphDef}
\label{def:directed_layer_graph}
    Given a tree cut sparsifier $T$ of quality $q$, a directed layer graph $H = (V_0 \cup V_1 \cup \cdots \cup V_k, E_H)$ has $k$ layers where $V_0$ has a vertex for each edge $e \in E$, and all edges $e_H \in E_H$ have their tail in $V_{i+1}$ and head in $V_i$ for some $0 \leq i < k$, such that every vertex $v \in V(H)$ has in-degree $d = O(\log^{c'} m)$ for some constant $c' > 0$.

    For every edge $e_T \in T$, let $E_{e_T}$ be the set of edges in $G$ that cross the cut induced by $T \setminus \{e_T\}$, i.e. let $A, B$ be the connected components of $T \setminus \{e_T\}$, then $E_{e_T} = E_G(A \cap V, B \cap V)$. Let $E'_{e_T}$ be the set of edges in $G$ whose corresponding vertices in $V_0$ are reached by the vertex $v_{e_T}$ that represents the edge $e_T$ in the graph $H$. Then, we have at any time that $E_{e_T} \subseteq E'_{e_T}$ and $\uu_G(E'_{e_T}) \leq q \cdot \uu_G(E_{e_T})$. 
\end{restatable}




\begin{restatable}{lemma}{DirectedLayerGraphRand}
\label{rem:directed_layer_graph}
The algorithm in \Cref{thm:mainTreeSparsifier} can be extended to explicitly maintain a directed layer graph $H = (V_0 \cup V_1 \cup \cdots \cup V_k, E_H)$ where $k = O(\log^{1/4}(m) \log\log (m))$. 

The additional total runtime for maintaining the graph $H$ is again $m \cdot 2^{O(\log^{3/4}(m) \log\log(m))}$. The total number of updates to $H$ consisting of insertions/deletions of edges and isolated vertices is bounded by $m \cdot 2^{O(\log^{3/4}(m) \log\log m)}$.
\end{restatable}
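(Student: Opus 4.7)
The plan is to piggyback the construction of the directed layer graph $H$ on top of the expander hierarchy that already underlies the tree cut sparsifier of \Cref{thm:mainTreeSparsifier}. Recall that the tree $T$ is produced by a hierarchy of depth $k = O(\log^{1/4}(m)\log\log m)$, where at each level $i$ the current (capacitated, possibly contracted) graph is decomposed into boundary-linked expanders and then contracted; tree edges of $T$ arise as the edges incident to contracted cluster nodes, and the cut induced by removing a tree edge at level $i$ is exactly the set of level-$i$ boundary edges between the cluster and the rest of its parent. The layer graph $H$ will mirror this: $V_0$ has one vertex per edge of $G$, while for each $i \ge 1$, the layer $V_i$ contains one vertex $v_{e_T}$ for every tree edge $e_T$ introduced at level $i$ of the hierarchy (together with a few vertices used as internal routers by the core-graph / batching reduction). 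We place a directed edge from $v \in V_{i+1}$ to $u \in V_i$ whenever the level-$i$ object represented by $u$ is contained in the level-$(i+1)$ cut represented by $v$; i.e., $H$ records the refinement relationship "this boundary edge at level $i+1$ is, after uncontracting one level, this multiset of boundary edges at level $i$."

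My first step is to verify the containment $E_{e_T} \subseteq E'_{e_T}$ for every tree edge $e_T$. I would do this by downward induction on the level $i$ at which $e_T$ lives. At level $i$, the cut of $e_T$ in the level-$i$ (contracted) graph equals the union of the cuts of its level-$(i-1)$ children (the successors of $v_{e_T}$ in $H$). Iterating this down to $V_0$ shows that every original edge crossing the cluster boundary is reachable from $v_{e_T}$. For the quality bound $\uu_G(E'_{e_T}) \le q \cdot \uu_G(E_{e_T})$, the potential overcounting comes from the fact that the tree cut sparsifier only preserves cuts up to the per-level quality factor, so an edge might be reachable from $v_{e_T}$ through more than one downward path during intermediate rewirings. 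Since each of the $k$ levels contributes a multiplicative blow-up of $\gamma_q^{1/k}$ (the per-level tree cut sparsifier quality, which is subpolynomial because $k$ is small), the overall overhead stays bounded by $q = \gamma_q = 2^{O(\log^{3/4}(m) \log\log m)}$, matching \Cref{thm:mainTreeSparsifier}.

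The second step is the in-degree bound. A vertex $u \in V_i$ corresponding to a level-$i$ boundary edge is a child in $H$ only of those level-$(i+1)$ cuts whose contracted image contains $u$. By boundary-linkedness of the expander decomposition at level $i+1$ (\Cref{def:boundaryLinkedExp}), any fixed level-$i$ edge participates in only $O(\log^{c'} m)$ such level-$(i+1)$ cluster boundaries; analogously for the $V_0 \to V_1$ step the in-degree is controlled by how many level-$1$ cluster boundaries a single edge of $G$ can lie on, again $O(\log^{c'} m)$ by the same linkedness property. The reduction from capacitated expander decomposition to the unit-capacity setting used in \Cref{sec:tree_cut_spars} does not harm this count, since each edge is replaced by $\lceil \uu(e)/U^{\mathrm{cutoff}}\rceil$ parallel multi-edges that all live in the same cluster.

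The third step is dynamic maintenance. Every modification that \Cref{thm:mainTreeSparsifier} makes to $T$ either inserts or deletes an isolated vertex/edge at exactly one level of the hierarchy, so the corresponding update to $H$ consists of inserting/deleting the associated $V_i$-vertex together with its $O(d) = O(\log^{c'} m)$ incoming edges from $V_{i-1}$ and outgoing edge to $V_{i+1}$. Since \Cref{thm:mainTreeSparsifier} already performs $m \cdot 2^{O(\log^{3/4}(m)\log\log m)}$ work and generates that many tree updates, the additional work and the total number of updates to $H$ are both dominated by a $\polylog(m)$ factor on top of that, which is absorbed into $2^{O(\log^{3/4}(m)\log\log m)}$. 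The hard part, and the step I would spend the most care on, is the bookkeeping during the batched insertion / core-graph rebuilds used in the fully-dynamic reduction: when a batch of insertions triggers a local reconstruction of a cluster and its ancestors, I need to ensure that the stale $V_i$-vertices of $H$ representing old tree edges are cleanly torn down and replaced, without letting a single edge of $G$ accumulate more than $O(d)$ incoming edges across the level boundary, and while keeping the total update volume charged properly to the recourse bound of the tree itself.
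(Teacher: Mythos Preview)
Your proposal has the right spirit---piggyback $H$ on the hierarchy already built for $T$---but it conflates the two distinct hierarchies in the construction, and this leads to a wrong in-degree argument.

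The depth $k = O(\log^{1/4}(m)\log\log m)$ in the statement is the number of levels $L_{\max}$ of the \emph{batching/core-graph} hierarchy used in the fully-dynamic reduction (\Cref{subsec:fullydyntoDecrTreeSpars}), not the depth of any expander hierarchy. The expander hierarchy sits \emph{inside} each decremental tree $\hat{T}_i$ and has depth $O(\sqrt{\log m}\log\log m)$; it never appears directly as a layer of $H$. In the paper, level $i$ of $H$ has one vertex per edge of the core graph $\hat{G}_i$, and for each such edge $e=(u,v)\in E(\hat{G}_i)$ one adds an $H$-edge from every tree edge $e'\in\hat{T}_i[u,v]$ (which lives in $V_{i+1}$) down to $v_e\in V_i$. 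The in-degree of $v_e$ is thus the hop-length of the path $\hat{T}_i[u,v]$, which is $O(\log m)$ by the hop-diameter bound on the decremental sparsifier (\Cref{thm:decrTreeCutSparsifier}). That is the entire in-degree argument.

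Your argument via boundary-linkedness does not work: a single edge has two endpoints and therefore lies on at most \emph{one} cluster boundary in any partition, regardless of linkedness; boundary-linkedness controls the ratio of boundary capacity to volume, not the number of boundaries a fixed edge touches. Likewise, your description of $V_i$ as ``tree edges introduced at level $i$ of the (expander) hierarchy'' and of $H$-edges as recording ``this level-$(i{+}1)$ boundary edge uncontracts to these level-$i$ boundary edges'' does not match the construction and would not give the stated depth. Once you switch to the core-graph hierarchy and the tree-path embedding, the dynamic maintenance you sketch in your third step becomes straightforward: an inserted edge $e\in\hat{G}_i$ walks to the root of $\hat{T}_i$ from both endpoints, and an un-contraction in $\hat{T}_i$ copies the embedding set from the adjacent edge closer to the root, each in time proportional to the number of $H$-edges touched.
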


Finally, we discuss how to derandomize the above result at the cost of obtaining a slightly worse approximation guarantee and runtime.

\begin{theorem}
\label{thm:det_mainTreeSparsifier}
Given an $m$-edge graph $G=(V, E, \uu)$ where $\uu \in [1, U = m^{O(1)}]^E$. Let $G$ be undergoing up to $\tilde{O}(m)$ edge deletions/edge insertions and vertex splits. Then, there is a \textbf{deterministic} algorithm that maintains a tree cut sparsifier $T = (V', E', c')$ of quality $\gamma_q = 2^{O(\log^{5/6}(m) \log\log(m))}$ with total update time $m \cdot 2^{O(\log^{5/6}(m) \log\log(m))}$.
\end{theorem}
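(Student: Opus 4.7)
The plan is to re-run the entire construction behind \Cref{thm:mainTreeSparsifier}, but replace every randomized ingredient by a deterministic counterpart and re-tune the quantitative trade-offs. All of the reductions assembled in \Cref{sec:tree_cut_spars} are themselves deterministic: the reduction from capacitated to uncapacitated graphs via a capacity cutoff and edge multiplication, the construction of the expander hierarchy by iterated decomposition/contraction (which is a tree cut sparsifier of quality $(\tilde O(1)/\phi)^k$ with $k$ layers, following \cite{goranci2021expander}), and the batching/core-contraction scheme that upgrades a decremental algorithm into a fully-dynamic one. So the only place randomness enters is the decremental expander decomposition sitting at the bottom of the hierarchy.

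The first step is therefore to replace the randomized decremental expander decomposition built on \cite{sulser2024,hua2023maintaining} by a deterministic one. This is done by feeding in a deterministic static expander decomposition (obtained, for example, from the machinery behind the deterministic max-flow of \cite{BrandCPKLPSS23}) into the same decremental-to-static template. The resulting black box maintains, under the same refinement guarantee (vertex sets only split and the aggregate boundary volume over all deletions stays $\tilde O(\phi \cdot U^{\mathrm{total}})$), an expander decomposition of conductance $\phi$ with the worse quality $1/\phi = 2^{O(\log^{2/3} m \, \log\log m)}$ instead of $1/\phi = 2^{O(\log^{1/2} m \,\log\log m)}$. Once this refinement and boundary-linkedness (\Cref{def:boundaryLinkedExp}) are in hand, the capacitated-to-uncapacitated reduction described in the overview goes through verbatim because it is purely a preprocessing step that inserts self-loops and splits high-capacity edges into unit multi-edges.

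The second step is to rebalance the expander-hierarchy parameters. In the randomized setting one takes $1/\phi = 2^{O(\log^{1/2} m \log \log m)}$ together with $k = O(\log^{1/4} m)$ layers, giving total quality $(1/\phi)^k = 2^{O(\log^{3/4} m \log \log m)}$. With the deterministic exponent $\log^{2/3}$ in $1/\phi$ we instead balance by choosing the effective conductance $1/\phi = 2^{O(\log^{5/6} m)}$ and the number of layers $k = O(\log^{1/6} m)$, so that the product $(1/\phi)^k = 2^{O(\log^{5/6} m \log \log m)}$ hits the target. Plugging this new quality into the batching/core-contraction analysis of \Cref{sec:tree_cut_spars} propagates the same factor into the amortized update time and the total runtime, yielding the claimed bound $m \cdot 2^{O(\log^{5/6} m \log \log m)}$.

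The main obstacle is bookkeeping rather than new ideas: one must verify that the deterministic static expander decomposition, when wrapped inside the decremental framework, really does preserve the two properties on which the hierarchy construction crucially relies, namely the refinement of the partition under deletions and the boundary-linked form needed to make contraction sound. Once those properties are checked, the output of \Cref{rem:directed_layer_graph} (the directed layer graph $H$) is also obtained deterministically by the same construction, with the layer count $k$ adjusted from $O(\log^{1/4} m \log \log m)$ to $O(\log^{1/6} m \log\log m)$, and every other line of the analysis carries over unchanged.
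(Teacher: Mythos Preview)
Your high-level plan is exactly the paper's: observe that all the reductions (capacitated-to-uncapacitated, the expander hierarchy, the batching/core-graph scheme) are already deterministic, derandomize the one randomized primitive at the bottom, and re-tune parameters. Two points of divergence are worth flagging.

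First, the paper's derandomization is more specific than ``feed a static deterministic expander decomposition from \cite{BrandCPKLPSS23} into a decremental template.'' It notes that the \cite{sulser2024} decremental expander decomposition is itself deterministic except for $\tilde O(1/\phi)$ calls to the static cut-matching game of \cite{khandekar2009graph}, and replaces that subroutine by the deterministic cut-matching game of \cite{chuzhoy2020deterministic}, which yields $c_0,c_1=e^{O(\log^{1/3} m\log\log m)}$ and runtime $m\cdot e^{O(\log^{2/3} m\log\log m)}/\phi^{O(1)}$. This is what propagates the $\log^{2/3}$ exponent you cite; the refinement and boundary-linkedness properties come for free because the surrounding framework is untouched.

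Second, your parameter rebalancing conflates two separate hierarchies. There is the expander hierarchy inside the \emph{decremental} tree cut sparsifier (\Cref{thm:decrTreeCutSparsifier}), where one changes $\phi$ from $2^{-\sqrt{\log m}}$ to $2^{-\log^{2/3} m}$, yielding decremental quality $\gamma_{quality}=2^{O(\log^{2/3} m\log\log m)}$; and there is the batching hierarchy for the fully-dynamic reduction, where $L_{\max}$ goes from $\lceil\log^{1/4}\tilde m\rceil$ to $\lceil\log^{1/6}\tilde m\rceil$, giving final quality $(2\gamma_{quality})^{L_{\max}+1}=2^{O(\log^{5/6} m\log\log m)}$. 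As you wrote it, ``$1/\phi=2^{O(\log^{5/6} m)}$ and $k=O(\log^{1/6} m)$'' would give $(1/\phi)^k=2^{O(\log m)}=m^{O(1)}$, which is not what you want; the intended computation is $2^{O(\log^{2/3} m\log\log m)}$ raised to the $O(\log^{1/6} m)$ power.
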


\begin{restatable}{lemma}{DirectedLayerGraphDet}
\label{rem:directed_layer_graph_det}
The deterministic algorithm in \Cref{thm:det_mainTreeSparsifier} can be extended to explicitly maintain a directed layer graph $H = (V_0 \cup V_1 \cup \cdots \cup V_k, E_H)$ where $k = O(\log^{1/6}(m) \log\log (m))$. 

The additional total runtime for maintaining the graph $H$ is again $m \cdot 2^{O(\log^{5/6}(m) \log\log(m))}$. The total number of updates to $H$ consisting of insertions/deletions of edges and isolated vertices is bounded by $m \cdot 2^{O(\log^{5/6}(m) \log\log(m))}$.
\end{restatable}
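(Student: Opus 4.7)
The plan is to follow the same construction as in the randomized analogue \Cref{rem:directed_layer_graph} and track the expander-hierarchy / batched core-graph structure used to prove \Cref{thm:det_mainTreeSparsifier}, but with the parameters adjusted for the deterministic regime. At a high level, the tree cut sparsifier $T$ produced by \Cref{thm:det_mainTreeSparsifier} is built as a chain of recursive reductions: at each level $i$ we perform a (boundary-linked) expander decomposition, contract each expander piece, form a core graph on the contracted terminals, and recurse. The layer graph $H$ is maintained in lock-step with this recursion. We put one layer $V_i$ of $H$ per recursion level, with $V_0$ containing one node per edge of the current $G$. A vertex $v \in V_{i+1}$ representing a set $S$ of the level-$(i{+}1)$ partition has an incoming edge in $H$ from each level-$i$ vertex whose corresponding piece is a child of $S$ in the hierarchy; the level-$0$ in-neighbors of a tree edge $e_T \in T$ are precisely the $G$-edges in the super-piece that the cut $T\setminus\{e_T\}$ separates.

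The first main step is to verify that this $H$ satisfies the containment and quality bound $E_{e_T} \subseteq E'_{e_T}$ and $\uu_G(E'_{e_T}) \leq q\cdot \uu_G(E_{e_T})$. Containment is immediate since every edge crossing the $T\setminus \{e_T\}$ cut lies in the super-piece that $e_T$ represents, and is thus reachable in $H$. The factor-$q$ inflation is precisely the quality loss of the expander hierarchy cut sparsifier: every crossing edge of a genuine cut in $G$ is paid for by some boundary-linked expander, and the total capacity of all edges reachable from $v_{e_T}$ in $H$ is within the same $\gamma_q = 2^{O(\log^{5/6}(m)\log\log m)}$ factor proved in \Cref{thm:det_mainTreeSparsifier}.

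The second step is to bound the number of layers and the in-degree. In the deterministic construction, the recursion depth is $k = O(\log^{1/6}(m)\log\log m)$: this is the factor that, multiplied by the per-level work $2^{O(\log^{2/3}(m)\log\log m)}$, yields the total time $2^{O(\log^{5/6}(m)\log\log m)}$ used by the sparsifier itself. The in-degree $d = O(\log^{c'} m)$ follows from the bound on the number of child pieces that a contracted super-node can have, which in turn comes from the refinement guarantee of the boundary-linked expander decomposition used in \Cref{thm:det_mainTreeSparsifier}.

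The third step is dynamic maintenance. Whenever the deterministic algorithm performs a local change to the expander hierarchy (an edge deletion cascading through the refinement, a vertex split at some level, or a batched insertion triggering a partial rebuild of some sub-hierarchy), we propagate the change to $H$: insert/delete the affected vertices of the appropriate layer $V_i$, and insert/delete incident edges to their parents in $V_{i+1}$ and children in $V_{i-1}$. Since each modification at level $i$ produces only $\mathrm{poly}\log m$ local updates to $H$, and the total amortized number of level-$i$ operations across the hierarchy is already accounted for by \Cref{thm:det_mainTreeSparsifier}, both the extra runtime and the total number of updates to $H$ are bounded by $k$ times the runtime of \Cref{thm:det_mainTreeSparsifier}, which absorbs into $m\cdot 2^{O(\log^{5/6}(m)\log\log m)}$. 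The main obstacle I expect is bookkeeping during the batched reconstruction step of the fully-dynamic scheme: when a core graph at some level is rebuilt, several layers of $H$ below it must be coherently re-linked without double-counting edges, and one must verify that the containment property is preserved under the intermediate states; this is handled by rebuilding the corresponding sub-DAG of $H$ whenever its sub-hierarchy is rebuilt, so the amortized cost charges back to the sparsifier's own rebuild budget.
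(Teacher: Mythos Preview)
Your high-level plan—reuse the construction of \Cref{rem:directed_layer_graph} with the deterministic parameters—is exactly right, and in the paper this lemma is indeed a one-line corollary of that construction after re-parametrizing $L_{\max} = \lceil \log^{1/6}(\tilde m)\rceil$. However, the concrete description you give of $H$ is not the paper's construction and would not satisfy \Cref{def:directed_layer_graph}.

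You describe the vertices in $V_{i+1}$ as ``sets $S$ of the level-$(i{+}1)$ partition'' with $H$-edges to their children in the expander hierarchy. That conflates two different hierarchies. The layers of $H$ correspond to the \emph{batching} levels $0,\dots,L_{\max}$ of \Cref{subsec:fullydyntoDecrTreeSpars}, not to the levels of the expander hierarchy inside each decremental sparsifier. Concretely, $V_i$ is in one-to-one correspondence with the \emph{edges} of the core graph $\hat G_i$ (with $\hat G_0 = G$), and for each edge $e=(u,v)\in E(\hat G_i)$ the $H$-edges into $v_e\in V_i$ come from the vertices $v_{e'}\in V_{i+1}$ for every tree edge $e'$ on the path $\hat T_i[u,v]$. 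This is what makes $E'_{e_T}$ a superset of the crossing edges with capacity blown up by at most the sparsifier quality; your partition-tree construction would instead reach \emph{all} edges on one side of the cut, violating the $\uu_G(E'_{e_T}) \le q\cdot \uu_G(E_{e_T})$ bound.

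Two bounds also need a different justification. The in-degree bound $d=O(\log m)$ is not a ``number of child pieces'' bound; it is the hop diameter of each $\hat T_i$, which is $O(\log m)$ by \Cref{thm:decrTreeCutSparsifier}. And $k = L_{\max}+1 = O(\log^{1/6} m)$ is simply the number of batching levels chosen in the deterministic setting so that the quality $(2\gamma_{\mathrm{quality}})^{L_{\max}+1}$ stays at $2^{O(\log^{5/6}m\log\log m)}$; it is not obtained by dividing the runtime exponent by the per-level work as you suggest. Once you use the correct construction, the dynamic maintenance and recourse bounds follow by the same argument as in the randomized case: each edge of $\hat G_i$ embeds into $O(\log m)$ tree edges of $\hat T_i$, embeddings are updated only on insertions to $\hat G_i$ or un-contractions in $\hat T_i$, and both are already charged to the $m\cdot 2^{O(\log^{5/6}m\log\log m)}$ budget.
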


\begin{remark}
    \label{rem:DiamTreeSparsifier}
    The tree cut sparsifers maintained by \Cref{thm:mainTreeSparsifier} and \Cref{thm:det_mainTreeSparsifier} have hop diamter $\tilde{O}(1)$.
\end{remark}

For the rest of the section, we implicitly assume that all (dynamic) graphs $G$ under consideration are connected (at all times). We obtain our main result summarized in \Cref{thm:mainTreeSparsifier} in three steps: first, in \Cref{subsec:capTounCapExpander}, we give a reduction that allows us to maintain a decremental expander decomposition of \textbf{capacitated} graphs by using existing techniques to maintain an expander decomposition of a decremental, \textbf{un-capacitated} graph. We then show that we can maintain a tree cut sparsifier of a decremental graph via expander decompositions in \Cref{subsec:decrTreeSpars}. Finally, we reduce the problem of maintaining a tree cut sparsifier on a dynamic graph to a decremental graph problem in \Cref{subsec:fullydyntoDecrTreeSpars}. We then discuss how to derandomize our result to obtain \Cref{thm:det_mainTreeSparsifier} in \Cref{subsec:detTreeCutSparsifiers}.

\subsection{Decremental Expander Decompositions for Capacitated Graphs}
\label{subsec:capTounCapExpander}

In this section, we generalize a recent result about the maintenance of expander decompositions to graphs with capacities. We summarize our result in \Cref{thm:weightedExpDecomp} below. We point out that our proof techniques in this section can be used to obtain expander decompositions of directed, capacitated graphs, however, here we focus only on undirected graphs.

\begin{theorem}[Capacitated Expander Decomposition]\label{thm:weightedExpDecomp}
Given a parameter $0 <\phi \leq 1$ and a capacitated $m$-edge graph $G = (V,E, \uu)$, where $\uu \in [1, U]^E$ and $U$ being any positive number, undergoing a sequence of $\tilde{O}(m)$ updates consisting of edge deletions, vertex splits and self-loop insertions.

There is a randomized algorithm that explicitly maintains tuple $(\mathcal{X}, E^{\text{cut}})$ where $\mathcal{X}$ is a partition of the vertex set of $G$ that refines over time and $E^{\text{cut}}$ is a monotonically increasing set of intercluster edges with $E^{\text{cut}} \subseteq E$ such that: 
\begin{enumerate}
    \item at any stage, for every \emph{cluster} $X \in \mathcal{X}$, we have that the current graph $G[X]$ is a $(\phi/c_0)$-expander for some fixed $c_0 = \tilde{O}(1)$, and
    \item at any stage, for every edge $e$ in the current graph $G$, we have that if its endpoints are not in the same cluster $X \in \mathcal{X}$, then the edge is intercluster and therefore in $E^{\text{cut}}$, and at any time $\uu(E^{\text{cut}}) \leq c_1 \cdot \phi \cdot U^{\text{total}}$ where $U^{\text{total}}$ is the total capacity of all edges present in $G$ at any point in time and $c_1 = \tilde{O}(1)$. 
\end{enumerate}
The algorithm takes total time $\tilde{O}(m / \phi^3)$ and succeeds w.h.p.
\end{theorem}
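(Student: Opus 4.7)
The plan is to follow the reduction sketched in Section~\ref{sec:transship_flow_dual} and reduce the capacitated decremental expander decomposition problem to its uncapacitated counterpart, for which we can invoke a black-box algorithm (the decremental expander decomposition of \cite{sulser2024} augmented with the vertex-splitting / self-loop-insertion machinery of \cite{hua2023maintaining}). Let $U^{\text{total}}$ denote the (initial) total capacity in $G$ and fix a cutoff
\[
U^{\text{cutoff}} \defeq \frac{\phi \cdot U^{\text{total}}}{m}.
\]
We split the edges of $G$ into \emph{light} edges (capacity $< U^{\text{cutoff}}$) and \emph{heavy} edges (capacity $\geq U^{\text{cutoff}}$). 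The first step is to immediately throw every light edge into $E^{\text{cut}}$; this contributes total capacity at most $m \cdot U^{\text{cutoff}} = \phi \cdot U^{\text{total}}$, which is within the budget.

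Next, I build an auxiliary unweighted multigraph $G''$ on $V$ as follows: for each heavy edge $e$, insert $\lceil \uu(e)/U^{\text{cutoff}}\rceil$ parallel unit-capacity copies, and for each $v \in V$ insert $\lceil \vol_G(v)/U^{\text{cutoff}}\rceil$ unit-capacity self-loops so that $\vol_{G''}(v) \geq \vol_G(v)/U^{\text{cutoff}}$. Two elementary properties hold: (i) $|E(G'')| = O(m/\phi)$, since every heavy edge contributes at most $1 + \uu(e)/U^{\text{cutoff}}$ copies and the self-loops are bounded by the total volume divided by $U^{\text{cutoff}}$; and (ii) for any $S\subseteq X\subseteq V$, cuts and volumes transfer cleanly, i.e.\ $\uu_G(E(S, X\setminus S))/U^{\text{cutoff}} \leq \uu_{G''}(E(S, X\setminus S)) \leq 2\uu_G(E(S, X\setminus S))/U^{\text{cutoff}} + 1$ up to the $+1$ rounding loss per edge (which is absorbed into the heavy-edge count), while $\vol_G(S) \leq U^{\text{cutoff}} \cdot \vol_{G''}(S)$ by the self-loop construction. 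Consequently, a $\phi'$-expander decomposition of $G''$ pulls back to a $\Theta(\phi')$-expander decomposition of $G$, and the intercluster capacity in $G$ is at most $U^{\text{cutoff}}$ times the intercluster edge count in $G''$.

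The bulk of the work is translating the dynamic updates. An edge deletion of a heavy edge $e$ in $G$ becomes the deletion of its $\lceil \uu(e)/U^{\text{cutoff}}\rceil$ copies in $G''$; a capacity decrease is handled identically (delete copies to match the new rounded count); a deletion of a light edge requires no change to $G''$ (the edge is already in $E^{\text{cut}}$); vertex splits in $G$ are forwarded to $G''$ together with the natural redistribution of self-loops, using exactly the vertex-split interface of the underlying uncapacitated routine. We then invoke the uncapacitated decremental expander decomposition on $G''$ with parameter $\phi' = \phi/c_0$ to obtain the cluster partition $\mathcal{X}$ and a growing set $E''^{\text{cut}}$ of cut edges, whose total count is bounded by $\tilde{O}(\phi' \cdot \vol(G''))= \tilde{O}(\phi \cdot U^{\text{total}}/U^{\text{cutoff}})$. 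Pulling back, the $G$-capacity of the heavy contribution to $E^{\text{cut}}$ is at most $U^{\text{cutoff}}$ times this, i.e.\ $\tilde{O}(\phi \cdot U^{\text{total}})$; combined with the light-edge contribution we obtain the required bound $c_1 \phi U^{\text{total}}$. Refinement of $\mathcal{X}$ over time and monotonicity of $E^{\text{cut}}$ are inherited from the uncapacitated routine and the one-way nature of our light-edge rule.

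The main technical obstacle I anticipate is two-sided. First, one must be careful that the ratio $U^{\text{total}}$ used to define $U^{\text{cutoff}}$ is stable: since $U^{\text{total}}$ only decreases under deletions and capacity drops, fixing $U^{\text{cutoff}}$ once at initialization is sound, and the bound $c_1 \phi U^{\text{total}}$ is automatically maintained because the light-edge budget was computed against the initial $U^{\text{total}}$ which dominates. Second, one must verify that vertex splits and self-loop insertions arising during the run of the uncapacitated algorithm on $G''$ can be reflected back in $G$ without breaking the bijection between heavy edges and their multi-edge copies; this is handled by tagging each $G''$-copy with its parent $G$-edge and using the standard vertex-split interface to relabel endpoints on both sides consistently. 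The runtime is $\tilde{O}(|E(G'')|/\phi^2) = \tilde{O}(m/\phi^3)$, matching the claim. High-probability correctness follows directly from the underlying uncapacitated algorithm.
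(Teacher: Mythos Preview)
Your proposal is correct and follows essentially the same reduction as the paper: cut off light edges below $U^{\text{cutoff}}=\phi\,U^{\text{total}}/m$, replace each heavy edge by $\lceil \uu(e)/U^{\text{cutoff}}\rceil$ unit multi-edges, pad volumes with self-loops, and invoke the uncapacitated decremental expander decomposition (\cite{sulser2024} with the vertex-split extension of \cite{hua2023maintaining}) on the resulting $O(m/\phi)$-edge multigraph. One small caveat: your claim that ``$U^{\text{total}}$ only decreases'' overlooks the self-loop insertions in the update model, which can raise the total capacity; the paper handles this by restarting once the inserted capacity matches the initial total (and noting that each edge participates in only $O(\log m)$ restarts), but this is a minor bookkeeping point and does not affect the core argument.
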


To prove \Cref{thm:weightedExpDecomp}, we give a reduction to the uncapacitated setting and then use the following result. We point out that the theorem below generalizes the theorem in \cite{sulser2024} as it also allows for vertex splits and self-loop insertions. This generalization can be obtained straightforwardly by combining the framework from \cite{sulser2024} with standard techniques from \cite{hua2023maintaining} to deal with vertex splits and self-loop insertions.

\begin{theorem}[Expander Decomposition \cite{sulser2024}]\label{thm:uncapExpDecomp}
Given a parameter $0 <\phi \leq 1$ and an \textbf{un-capacitated} $m$-edge \mbox{(multi-)graph} $G = (V,E)$ undergoing a sequence of $\tilde{O}(m)$ updates consisting of edge deletions, vertex splits and self-loop insertions.

There is a randomized algorithm that explicitly maintains tuple $(\mathcal{X}, E^{\text{cut}})$ where $\mathcal{X}$ is a partition of the vertex set of $G$ that refines over time and $E^{\text{cut}}$ is a monotonically increasing set of intercluster edges with $E^{\text{cut}} \subseteq E$ such that:
\begin{enumerate}
    \item at any stage, for every \emph{cluster} $X \in \mathcal{X}$, we have that the current graph $G[X]$ is a $(\phi/c_0)$-expander for $c_0 = \tilde{O}(1)$, and
    \item at any stage, for every edge $e$ in the current graph $G$, we have that if its endpoints are not in the same cluster $X \in \mathcal{X}$, then the edge is intercluster and therefore in $E^{\text{cut}}$, and at any time $|E^{\text{cut}}| \leq c_1 \cdot \phi m$ for $c_1 = \tilde{O}(1)$.
\end{enumerate}
The algorithm takes total time $\tilde{O}(m/ \phi^2)$ and succeeds  w.h.p.
\end{theorem}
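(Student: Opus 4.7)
The plan is to obtain \Cref{thm:uncapExpDecomp} as a direct combination of the decremental expander decomposition framework of \cite{sulser2024} with the reductions from \cite{hua2023maintaining} that convert vertex splits and self-loop insertions into updates handled by a purely decremental algorithm. The \cite{sulser2024} framework already establishes the desired guarantees ($(\phi/c_0)$-expander clusters that refine over time, $|E^{\text{cut}}|\le c_1\phi m$, total time $\tilde O(m/\phi^2)$, w.h.p.\ correctness) for unweighted multigraphs undergoing only edge deletions. Our task is to extend this to also support self-loop insertions and vertex splits while preserving all the stated bounds.

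First, I would deal with self-loop insertions. Self-loops never cross any cut, so they contribute nothing to $|E_G(S, X\setminus S)|$ but do increase $\vol_G(S)$. The idea, following \cite{hua2023maintaining}, is to pre-seed each cluster's expander witness with enough slack that moderate volume growth still certifies a $(\phi/c_0)$-expander, and to trigger a local repair (re-expansion and potential cluster split) on the affected cluster whenever the cumulative volume added by self-loops doubles. Because every doubling of volume can be charged to $\Theta(\vol)$ newly inserted volume units, the amortized cost is already paid for by the $\tilde O(m)$ update budget, and the new intercluster edges contributed at such a repair are absorbed into the $c_1\phi m$ budget since this budget is measured against the total $m$, i.e., initial plus all inserted multiplicities.

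Second, I would handle vertex splits via the standard reduction of \cite{hua2023maintaining}: a split $v\to(v_1,v_2)$ is simulated by creating an isolated copy $v_2$, treating the edges to be re-assigned as if they had always been incident to $v_2$ rather than $v$, and then ``deleting'' a virtual bond between $v_1$ and $v_2$. In implementation, we place both $v_1$ and $v_2$ in the original cluster of $v$, hand the operation to the \cite{sulser2024} repair procedure as a local update around $v$, and let its standard pruning/boundary-linkedness repair either maintain expansion or cut a small piece out. Crucially, \cite{sulser2024}'s potential function bounds the work and new cut edges introduced per repair by the volume of the modified region, so the amortization carries through.

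Finally, I would verify the two aggregate bounds. For $|E^{\text{cut}}|\le c_1\phi m$: both extensions only add cut edges via repair operations whose output is bounded by $\phi$ times the volume of the region being repaired, and every unit of such volume is charged against distinct inserted/split/deleted units in the $\tilde O(m)$ update sequence; summing over all updates gives the desired $\tilde O(\phi m)$ bound. For the runtime $\tilde O(m/\phi^2)$: each vertex split or self-loop insertion triggers at most a constant number of calls to the \cite{sulser2024} local-repair routine on a region whose work is already accounted for in its $\tilde O(m/\phi^2)$ amortized budget. I expect the main technical obstacle to be precisely this amortization argument for vertex splits: one must show that the sulser2024 potential does not increase under a split by more than what the existing charging scheme already absorbs, so that splits behave ``no worse than'' the edge deletions the original framework was designed for. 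This is exactly where the hua2023maintaining reduction does the work, by exhibiting the virtual-bond view under which a split is literally an edge deletion in an augmented graph whose parameters match those of the original analysis.
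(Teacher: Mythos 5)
Your proposal takes essentially the same route as the paper: the paper gives no detailed proof of this theorem, asserting only that the generalization ``can be obtained straightforwardly by combining the framework from \cite{sulser2024} with standard techniques from \cite{hua2023maintaining} to deal with vertex splits and self-loop insertions,'' which is precisely the combination you describe (with more of the amortization and repair details sketched out than the paper provides).
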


\paragraph{The Algorithm.} For the proof of \Cref{thm:weightedExpDecomp}, we first assume that the total capacity of all edges inserted since the start of the algorithm is at most equal to the total capacity $U^{\text{total}}$ of the initial graph. This is w.l.o.g. as otherwise the algorithm can be restarted with edges in the set $E^{\text{cut}}$ removed from the graph and added to the new set of intercluster edges produced.\footnote{Because capacities are not polynomially-bounded, the number of restarts could be large, however, using the techniques introduced below, an edge can effectively be ignored if its capacity is below $\phi \cdot U^{\text{total}}/m$ and thus any edge is only considered by the algorithm during $O(\log m)$ restarts.}

Then, consider the dynamic graph $G'$ obtained from the graph $G$ by deleting/not inserting all edges with capacity less than $\phi \cdot \frac{U^{\text{total}}}{m}$. Throughout, let $G''$ be the uncapacitated dynamic graph obtained from graph $G'$ by replacing each edge $e$ of capacity $\uu(e)$ by $\lceil \frac{m \cdot \uu(e)}{U^{\text{total}}\phi} \rceil$ multi-edges between the same endpoints and by additionally having $\lceil \frac{m \cdot \vol_G(v)}{U^{\text{total}} \phi} \rceil$ self-loops at each vertex $v \in V$.

Finally, maintain the tuple $(\mathcal{X}, E'')$ by running the algorithm from \Cref{thm:uncapExpDecomp} on graph $G''$. Maintain the output tuple $(\mathcal{X}, E^{\text{cut}})$ to have the same partition and let $E^{\text{cut}}$ be the union of all edges that appear at any time in $G \setminus G''$ and all edges in $E$ such that a corresponding multi-edge is in $E''.$

\paragraph{Analysis.} We prove the two main properties claimed in \Cref{thm:weightedExpDecomp} and then analyze the remaining properties claimed.

\begin{claim}\label{clm:sizeOfGPrimePrime}
The total number of edges to ever appear in $G''$ is at most $\tilde{O}(m/\phi)$. Thus, the total capacity of all edges in $G$ that become intercluster for $\mathcal{X}$ is at most $\tilde{O}(\phi \cdot U^{\text{total}})$.

\end{claim}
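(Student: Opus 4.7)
The plan is to prove the two assertions of the claim sequentially: first bound the total number of edges ever present in $G''$, then deduce the intercluster capacity bound in $G$ by applying \Cref{thm:uncapExpDecomp} to $G''$. The key budgeting tool throughout is the standing assumption that the total capacity ever appearing in $G$ is at most $2U^{\text{total}}$, combined with the fact that \Cref{thm:weightedExpDecomp} allows only edge deletions, vertex splits, and self-loop insertions (in particular, no new non-self-loop edges).

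For the first bound, I would split the count in $G''$ into non-self-loop multi-edges and self-loops. Every non-self-loop edge $e\in G'$ satisfies $\uu(e)\geq \phi U^{\text{total}}/m$ and is replaced by $\lceil m\uu(e)/(U^{\text{total}}\phi)\rceil \leq 1+m\uu(e)/(U^{\text{total}}\phi)$ multi-edges. Since there are no non-self-loop insertions, the distinct non-self-loop edges ever in $G'$ number at most $m$ and have total capacity at most $U^{\text{total}}$, yielding $O(m/\phi)$ multi-edges in total. For self-loops, the target count $\lceil m\vol_G(v)/(U^{\text{total}}\phi)\rceil$ can only grow via the initial configuration, via self-loop insertions in $G$, or via $\tilde{O}(m)$ ceiling contributions from vertex splits. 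The initial contribution is $O(m/\phi)$ using $\sum_v\vol_{G_0}(v)\leq 2U^{\text{total}}$; the insertion contribution telescopes to $\tilde{O}(m) + mU^{\text{total}}/(U^{\text{total}}\phi) = \tilde{O}(m/\phi)$ using the budget on total inserted self-loop capacity; and the vertex-split contribution is at most $\tilde{O}(m)$. Summing gives the claimed $\tilde{O}(m/\phi)$ edges ever in $G''$.

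For the second bound, I would invoke \Cref{thm:uncapExpDecomp} on $G''$ with the edge-count bound just established, giving $|E''|\leq c_1\phi\cdot \tilde{O}(m/\phi) = \tilde{O}(m)$, where $E''$ denotes the intercluster set produced for $G''$. The set $E^{\text{cut}}$ splits into (a) edges of $G$ never present in $G'$, i.e., of capacity below $\phi U^{\text{total}}/m$, whose total capacity is at most $m\cdot\phi U^{\text{total}}/m = \phi U^{\text{total}}$; and (b) edges $e\in G'$ with at least one parallel multi-edge in $E''$. For (b), any cut of $G''$ separating the endpoints of $e$ simultaneously cuts all of its $\geq m\uu(e)/(U^{\text{total}}\phi)$ parallel copies, so $\sum_{e\in G'\cap E^{\text{cut}}} m\uu(e)/(U^{\text{total}}\phi) \leq |E''|$, which rearranges to $\sum_{e\in G'\cap E^{\text{cut}}}\uu(e) \leq \tilde{O}(\phi U^{\text{total}})$. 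Combining (a) and (b) yields the desired bound.

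The main bookkeeping challenge, in my view, is the self-loop accounting. Since volumes in $G$ fluctuate under deletions and splits, the target self-loop counts in $G''$ move non-monotonically, yet \Cref{thm:uncapExpDecomp} primarily supports self-loop insertions. Any over-approximation of the target count is harmless for correctness because it only preserves the lower bound $\vol_{G''}\geq \vol_{G'}$ used elsewhere in the reduction, but the slack must not violate the $\tilde{O}(m/\phi)$ edge-count budget. This reduces to charging each self-loop ever inserted into $G''$ to either the initial volume, a self-loop insertion event in $G$, or a vertex-split event, each of which sums to $\tilde{O}(m/\phi)$ by the accounting above.
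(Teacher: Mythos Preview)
Your proof is correct and follows essentially the same approach as the paper: bound non-self-loop multi-edges via the total capacity budget, bound self-loops via the total volume budget, then convert the $\tilde{O}(\phi)$-fraction intercluster bound on $G''$ back to capacity in $G$ using that all parallel copies of an edge become intercluster together. Your treatment of the self-loop accounting is in fact more careful than the paper's, which simply appeals to the snapshot bound $\sum_v \vol_G(v)\le 4U^{\text{total}}$; you explicitly charge new self-loops in $G''$ to the initial volume, to self-loop insertions in $G$, and to the $+1$ ceiling slack per vertex split, which is the cleaner way to handle the ``ever to appear'' quantifier.
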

\begin{proof}
The total capacity of all edges that ever appear in $G$ is by assumption at most $2 \cdot U^{\text{total}}$. Since we replace each edge of capacity $\uu(e)$ by $\lceil \frac{m \cdot \uu(e)}{U^{\text{total}}\phi} \rceil$ multi-edges, we can thus upper bound the number of such multi-edges by $\frac{m \cdot U^{\text{total}}}{U^{\text{total}}\phi} + \tilde{O}(m) = \tilde{O}(m/\phi)$ since we can charge each edge $e$ its capacity $\uu(e)$ and where the second term $\tilde{O}(m)$ stems from the fact that we are rounding up $\tilde{O}(m)$ terms.

Let us next bound the number of self-loops added to $G''$. We have that the total volume at all vertices is at most $4 \cdot U^{\text{total}}$ at any time by assumption, and we have that there are at most $\tilde{O}(m)$ vertices. Thus, there are again at most $\tilde{O}( \frac{m \cdot U^{\text{total}}}{U^{\text{total}} \cdot \phi}) + \tilde{O}(m) = \tilde{O}(m/\phi)$ self-loops added this way, as desired.

Finally, it suffices to observe that at most a $\tilde{O}(\phi)$-fraction of the edges in $G''$ ever become intercluster for the partition $\mathcal{X}$ by \Cref{thm:uncapExpDecomp}. But for each edge $e$ in the graph $G'$, we add $\lceil \frac{m \cdot \uu(e)}{U^{\text{total}}\phi} \rceil$ corresponding multi-edges to $G''$. Thus, the total capacity of all edges in $G'$ that becomes intercluster for $\mathcal{X}$ is at most $\tilde{O}(\phi \cdot U^{\text{total}})$. Further, the capacity of all edges in $G$ that do not appear in $G'$ is at most $\tilde{O}(m) \cdot \phi \cdot \frac{U^{\text{total}}}{m} = \tilde{O}(\phi \cdot U^{\text{total}})$ by our construction of $G'$.
\end{proof}

\begin{claim}
The partition $\mathcal{X}$ is such that at any time, for any $X \in \mathcal{X}$, we have that $G[X]$ is a $\tilde{\Omega}(\phi)$-expander.
\end{claim}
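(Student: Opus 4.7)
The plan is to transfer the expansion guarantee for $G''[X]$ given by \Cref{thm:uncapExpDecomp} back to $G[X]$, via two simple multiplicative comparisons between $G$ and $G''$. For brevity, set $\alpha := m/(U^{\text{total}}\phi)$, so that every $e \in G'$ is replaced in $G''$ by $\lceil \alpha \uu(e)\rceil$ parallel copies and every vertex $v$ receives $\lceil \alpha \vol_G(v)\rceil$ self-loops.

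First I would record the two comparisons. On the cut side, every edge $e \in G'$ satisfies $\uu(e) \geq \phi U^{\text{total}}/m = 1/\alpha$, so $\alpha \uu(e) \geq 1$ and hence $\lceil \alpha \uu(e)\rceil \leq 2\alpha \uu(e)$. Summing over any cut of $G''[X]$, using that self-loops do not cross cuts and that $G'$ drops only edges relative to $G$, this yields
\[
|E_{G''}(S, X\setminus S)| \;\leq\; 2\alpha\,\uu_G(E_G(S, X\setminus S))
\]
for every $S \subseteq X$. On the volume side, the self-loops alone give $\vol_{G''}(v) \geq 2\lceil \alpha \vol_G(v)\rceil \geq 2\alpha \vol_G(v)$, hence $\vol_{G''}(T) \geq 2\alpha \vol_G(T)$ for every $T \subseteq V$.

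Next, fix any $S \subseteq X$ with $\vol_G(S) \leq \vol_G(X)/2$ and apply the expansion of $G''[X]$ to whichever $T \in \{S, X\setminus S\}$ has the smaller $G''$-volume, yielding $|E_{G''}(S, X\setminus S)| \geq (\phi/c_0)\,\vol_{G''}(T)$. The main subtle point --- essentially the only place the argument needs care --- is that the smaller side in $G''$ may not be the smaller side in $G$. A short case analysis handles this: if $T = S$ then $\vol_{G''}(T) \geq 2\alpha \vol_G(S)$ immediately, while if $T = X\setminus S$ then $\vol_{G''}(T) \geq 2\alpha \vol_G(X\setminus S) \geq 2\alpha \vol_G(S)$, where the last inequality uses $\vol_G(S) \leq \vol_G(X)/2 \leq \vol_G(X\setminus S)$. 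Either way, $|E_{G''}(S, X\setminus S)| \geq 2\alpha(\phi/c_0)\vol_G(S)$.

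Finally, chaining the two bounds gives $2\alpha\,\uu_G(E_G(S, X\setminus S)) \geq |E_{G''}(S, X\setminus S)| \geq 2\alpha(\phi/c_0)\vol_G(S)$, and cancelling the factor $2\alpha$ produces $\uu_G(E_G(S, X\setminus S)) \geq (\phi/c_0)\vol_G(S)$, so $G[X]$ is a $\tilde{\Omega}(\phi)$-expander as claimed. Beyond the side-asymmetry, the only thing to watch is that the two constant-factor slacks in the ceiling bounds are genuinely constant (which relies precisely on the cutoff $U^{\text{cutoff}} = \phi U^{\text{total}}/m$ used to define $G'$ and on the self-loops being added with respect to $\vol_G$ rather than $\vol_{G'}$).
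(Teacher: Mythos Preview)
Your proposal is correct and follows essentially the same approach as the paper: both transfer the $(\phi/c_0)$-expansion of $G''[X]$ back to $G[X]$ via the two multiplicative comparisons you record (cut multiplicities versus capacities using $\lceil \alpha\uu(e)\rceil \le 2\alpha\uu(e)$, and volumes via the self-loops giving $\vol_{G''[X]}(T)\ge \alpha\,\vol_G(T)$). The only cosmetic difference is the direction of the argument: the paper fixes $S$ to be the smaller side in $\vol_{G''[X]}$ and derives $\uu(E_{G[X]}(S,X\setminus S))\ge \tilde\Omega(\phi)\vol_G(S)$ directly, leaving the side-asymmetry implicit, whereas you fix the smaller side in $\vol_G$ and handle the possible mismatch with an explicit two-case split. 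One small notational point: your $\vol_{G''}(T)$ should really be $\vol_{G''[X]}(T)$ (the expansion hypothesis is for the induced subgraph), but since the self-loops at vertices of $X$ lie in $G''[X]$, your volume lower bound holds verbatim there and the argument goes through unchanged.
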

\begin{proof}
Consider at any time, any cluster $X \in \mathcal{X}$. Let $S \subseteq X$ such that $\vol_{G''[X]}(S) \leq \vol_{G''[X]}(X)/2$. Then, we have from \Cref{thm:uncapExpDecomp}, that $|E_{G''[X]}(S, X \setminus S)| = \tilde{\Omega}(\phi) \cdot \vol_{G''[X]}(S)$. 

Since we have a one-to-one correspondence between non-self-loop multi-edges $e'$ of multiplicity $a$ in $G''$ and edges $e$ in $G'$ such that $\lceil \frac{m \cdot \uu(e)}{U^{\text{total}}\phi} \rceil = a$ and since all edges in $G'$ have capacity at least $\phi \cdot \frac{U^{\text{total}}}{m}$, we have that $\lceil \frac{m \cdot \uu(e)}{U^{\text{total}}\phi} \rceil = a \leq 2\frac{m \cdot \uu(e)}{U^{\text{total}}\phi}$. We further have that 
\begin{align*}
|E_{G''[X]}(S, X \setminus S)| &= \sum_{e \in E_{G'[X]}(S, X \setminus S)} \left\lceil \frac{m \cdot \uu(e)}{U^{\text{total}}\phi} \right\rceil\\
&\leq 2 \uu(E_{G'[X]}(S, X \setminus S)) \cdot \frac{m}{U^{\text{total}}\phi}
\\
&\leq 2 \uu(E_{G[X]}(S, X \setminus S)) \cdot \frac{m}{U^{\text{total}}\phi}
\end{align*}
where we use that $G' \subseteq G$ in the last inequality. Thus, we obtain
\[
\uu(E_{G[X]}(S, X \setminus S)) \geq |E_{G''[X]}(S, X \setminus S)| \cdot \frac{U^{\text{total}}\phi}{2m} = \tilde{\Omega}(\phi) \cdot \vol_{G''[X]}(S) \cdot \frac{U^{\text{total}}\phi}{2m} = \tilde{\Omega}(\phi) \cdot \vol_G(S)
\]
where we use in the last inequality that since each vertex $v \in V$ has at least degree $\lceil \frac{m \cdot \vol(v)}{U^{\text{total}} \phi} \rceil$ in $G''$ and since we add self-loops, we also have that $\vol_{G''[X]}(S) \geq \frac{m \cdot \vol_G(S)}{U^{\text{total}} \phi}$.
\end{proof}

Given the two claims above, it suffices for a proof of \Cref{thm:weightedExpDecomp} to observe that $\mathcal{X}$ is refining by \Cref{thm:uncapExpDecomp}, that $E^{\text{cut}}$ is monotonically increasing by adding only edges that become intercluster for $\mathcal{X}$ and that the time to maintain $\mathcal{X}$ on $G''$ is at most $\tilde{O}(m/\phi^3)$ by the size upper bound from \Cref{clm:sizeOfGPrimePrime} on $G''$ and again by \Cref{thm:uncapExpDecomp}, and that all additional operations take time at most $\tilde{O}(m/\phi^3)$. 

\subsection{Decremental Tree Cut Sparsifiers}
\label{subsec:decrTreeSpars}

In this section, we prove the following result that was previously obtained in \cite{goranci2021expander} for uncapacitated graphs. Our proof follows a similar high-level strategy, however, we require more refined building blocks and arguments to obtain our result.

\begin{theorem}\label{thm:decrTreeCutSparsifier}
Given an $m$-edge graph $G=(V,E,\uu)$ undergoing up to $\tilde{O}(m)$ edge deletions, vertex splits and self-loop insertions where $\uu \in [1, U = m^{O(1)}]^E$. 

Then, there is a randomized algorithm that maintains a tree cut sparsifier $T = (V', E', \uu')$ of $G$ of quality $\gamma_{quality} = 2^{O(\sqrt{\log m} \log\log m)}$ such that $T$ is a graph consisting of at most $\tilde{O}(m)$ vertices and undergoing at most $\tilde{O}(m)$ edge weight decreases and edge un-contractions where the latter is an update that splits a vertex $v$ into vertices $v'$ and $v''$ and inserts an edge $(v', v'')$. The algorithm takes total time $m \cdot 2^{O(\sqrt{\log m})}$. The algorithm succeeds w.h.p. 

Furthermore, the hop diameter of $T$ is at most $O(\log m)$ throughout. 
\end{theorem}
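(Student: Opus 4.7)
The plan is to implement an expander hierarchy along the lines of \cite{goranci2021expander}, but instantiated on capacitated graphs using the decremental expander decomposition from \Cref{thm:weightedExpDecomp}. Fix the conductance parameter $\phi = 2^{-\Theta(\sqrt{\log m})}$ so that $L = O(\log_{1/\phi} m) = O(\sqrt{\log m})$ levels suffice. I would maintain a sequence of graphs $G_0 = G, G_1, \ldots, G_L$, where at each level $i$ we run an instance of \Cref{thm:weightedExpDecomp} on $G_i$ with parameter $\phi$ to obtain a partition $\mathcal{X}_i$ and a cut set $E^{\text{cut}}_i$; the next graph $G_{i+1}$ is then obtained from $G_i$ by contracting each cluster $X \in \mathcal{X}_i$ to a single vertex $c_X$ (keeping $E^{\text{cut}}_i$ as its edge set, with any internal contracted mass appearing as self-loops). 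The tree $T$ is built in the standard way: for every cluster $X \in \mathcal{X}_i$ at every level $i$, add a star in $T$ connecting each $v \in X$ (or the representative $c_{v}$ at level $i$) to the parent vertex $c_X$, with tree-edge capacities chosen proportional to $\vol_{G_i}(v)/\phi$ (boosted by a $\polylog$ factor as dictated by the quality of \Cref{thm:weightedExpDecomp}). After $L$ levels all remaining mass is absorbed into a single root, which yields a tree $T$ with $V \subseteq V(T)$.

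For the quality analysis, standard arguments for expander hierarchies show that each level contributes a multiplicative quality factor of $\tilde{O}(1)$, except that one of the levels contributes an extra $1/\phi$ because the expander decomposition is only $\phi/\tilde{O}(1)$-expanding (rather than boundary-linked with a constant); this gives total quality $\tilde{O}(1)^L/\phi = 2^{O(\sqrt{\log m}\,\log\log m)}$, matching the theorem. The hop diameter is $O(L) = O(\sqrt{\log m}) = O(\log m)$ since every vertex of $V$ is at depth $L$ from the root along a chain of stars. For the runtime, \Cref{thm:weightedExpDecomp} on $G_i$ runs in $\tilde{O}(|E(G_i)|/\phi^3)$ time; since the total capacity entering level $i+1$ as intercluster edges is at most $\tilde{O}(\phi \cdot U_i^{\text{total}})$, both edge counts and total capacities at successive levels shrink geometrically (we can sparsify or collapse self-loops as needed to keep $|E(G_i)|$ bounded), so summing yields total time $\tilde{O}(m/\phi^3) \cdot L = m \cdot 2^{O(\sqrt{\log m})}$.

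The key implementation issue is propagating decremental updates through the hierarchy consistently. An edge deletion, vertex split, or self-loop insertion at level $i$ is fed to the instance of \Cref{thm:weightedExpDecomp} at level $i$; whenever that instance moves an edge into $E^{\text{cut}}_i$ or refines a cluster $X$ into $X_1, X_2$, the corresponding effect on $G_{i+1}$ is a vertex split of $c_X$ into $c_{X_1}, c_{X_2}$, the insertion of a self-loop at $c_X$ capturing any fully-internal capacity change, and possibly the deletion of a $G_{i+1}$-edge whose capacity has now dropped to zero. Crucially, the refining property of \Cref{thm:weightedExpDecomp} guarantees that over the whole execution the total number of such higher-level updates is $\tilde{O}(m)$ per level, which is exactly what the upper-level instance needs as its update budget. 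On the tree side, a cluster refinement corresponds exactly to an edge un-contraction (splitting $c_X$ and inserting the edge $(c_{X_1}, c_{X_2})$), and volume decreases propagate as edge-weight decreases, matching the update model promised for $T$.

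The main obstacle I expect is making the quality analysis go through cleanly in the decremental, capacitated setting: naively combining $\tilde{O}(1/\phi)$ per level gives quality $2^{\Theta(\log m)}$, which is useless, so I need to exploit boundary-linkedness (\Cref{def:boundaryLinkedExp}) to argue that only one level pays the $1/\phi$ factor while the rest pay only $\polylog(m)$. Ensuring that the capacitated expander decomposition of \Cref{thm:weightedExpDecomp} satisfies the stronger boundary-linked guarantee, and that this guarantee survives the vertex-splits and self-loop insertions induced by lower levels, is the main technical point beyond the black-box use of \Cref{thm:weightedExpDecomp}. Once that is in hand, the rest of the argument is bookkeeping: checking that update budgets telescope, that capacities at level $i+1$ are bounded by $\tilde{O}(\phi)$ times the capacity at level $i$, and that the final tree approximates every min-cut in $G$ up to the claimed factor.
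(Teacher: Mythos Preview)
Your high-level plan matches the paper's proof almost exactly: same choice of $\phi$, same hierarchy of contracted graphs with a decremental expander decomposition at each level, same tree construction \`a la \cite{goranci2021expander} (\Cref{thm:GRSTcorrectness}), and the same quality bound $\tilde{O}(1)^{k}/\phi$. You also correctly identify the crux: without boundary-linkedness each level would cost a $1/\phi$ factor and the bound would be useless.

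The concrete gap is \emph{how} to obtain boundary-linkedness from the black box of \Cref{thm:weightedExpDecomp}. The paper does not try to prove that \Cref{thm:weightedExpDecomp} is already boundary-linked (it isn't, as stated). Instead, at each level $i$ it runs \Cref{thm:weightedExpDecomp} not on $G_i$ but on the augmented graph $\tilde G_i$ obtained from $G_i$ by adding, at each endpoint of every edge $e\in E^{\mathrm{cut}}_{\mathcal{X}_i}$, a self-loop of capacity $\tfrac{1}{s\beta\phi}\,\uu(e)$. A $\phi$-expander decomposition of $\tilde G_i$ is then automatically a $(\beta,\phi,s)$ boundary-linked decomposition of $G_i$ in the sense of \Cref{def:boundaryLinkedExp}, because the extra self-loops inflate the volume of boundary vertices by exactly the right factor. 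There is an apparent circularity (the self-loops depend on the cut set the decomposition itself is producing), but since $E^{\mathrm{cut}}$ is monotonically increasing and self-loop insertions are a permitted update type in \Cref{thm:weightedExpDecomp}, one can add the self-loops on the fly; a two-line fixed-point argument then shows $U^{\text{total}}_{\tilde G_i}\le 2\,U^{\text{total}}_{G_i}$, so the cut-capacity bound of \Cref{thm:weightedExpDecomp} does not blow up and the capacity decays by a factor $\tilde O(\phi)$ per level as you need.

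Two smaller corrections. First, the tree-edge capacities in \Cref{thm:GRSTcorrectness} are $\vol_{G_i}(X)$, not $\vol_{G_i}(v)/\phi$; the single $1/\phi$ in the quality bound comes out of the analysis of \Cref{thm:GRSTcorrectness}, not from the edge weights. Second, edge \emph{counts} do not shrink geometrically across levels (only total capacity does); each $G_i$ may still have $\tilde O(m)$ edges and undergo $\tilde O(m)$ updates. The runtime still works because the number of levels is $k=\tilde O(1)$, so summing $\tilde O(m/\phi^3)$ over $k$ levels gives $m\cdot 2^{O(\sqrt{\log m})}$.
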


To obtain the above result, we maintain a decremental boundary-linked expander hierarchy as defined in \cite{goranci2021expander}.

\begin{definition}[Dynamic Boundary-Linked Expander Decomposition]\label{def:boundaryLinkedExp}
  Given a dynamic graph $G$ and parameters $\phi \in
  (0,1], \beta > 0, s \geq 1$, we say that a partition $\mathcal{X}$ of the vertex set of $G$ is an $(\beta,
  \phi, s)$ boundary-linked expander decomposition
  of $G$ if 
  \begin{enumerate}
        \item at any stage, for every edge $e$ in the current graph $G$, we have that if its endpoints are not in the same cluster $X \in \mathcal{X}$, then the edge is intercluster and therefore in $E^{\text{cut}}$, and at any time $\uu(E^{\text{cut}}) \leq \beta \cdot \phi \cdot U^{\text{total}}$ where $U^{\text{total}}$ is the total capacity of all edges present in $G$ at any point in time.
      \item at any time, for any $X \in \mathcal{X}$, we have that the graph $G^{1/(s\beta\phi)}_{\mathcal{X}}[X]$ is $\phi$-expander where we have a one-to-one correspondence between edges $e = (u,v) \in E^{cut}$ and self-loops at $u$ and $v$ of capacity $\frac{1}{s\beta\phi} \uu(e)$. Here, $G^{1/(s\beta\phi)}_{\mathcal{X}}$ is the graph $G$ plus self-loops of total capacity $\frac{1}{s\beta\phi} \cdot \uu(E_G(v, V) \cap E^{cut})$ at each vertex $v \in V$.
      \label{prop:boundaryLinkedNumSelfLoops}
  \end{enumerate}
\end{definition}

We next define the crucial concept of expander hierarchies.

\begin{definition}[Dynamic Expander Hierarchy]\label{def:exphierarchy}
Given a dynamic graph $G$ and parameters $\phi \in
  (0,1], \beta > 0$, $s \geq 1$, we define an $(\beta, \phi, s)$-expander hierarchy recursively to consist of levels $0 \leq i \leq k$ where for each level $i$, we maintain a dynamic graph $G_i$ and an $(\beta,
  \phi, s)$ boundary-linked expander decomposition $\mathcal{X}_i$ of $G_i$. We let $G_0 = G$, and for $i \geq 0$, we define $G_{i+1}$ to be the dynamic graph obtained from $G_i$ after contracting all vertices in the same partition set in $\mathcal{X}_i$ into a single super node and removing all self-loops. We let $k$ be the first index such that $G_k$ consists of only a single vertex.
\end{definition}
\begin{remark}
We point out that the partitions $\mathcal{X}_0, \mathcal{X}_1, \ldots, \mathcal{X}_k$ can be extended to partitions of $V$ and it is straightforward to see that the extension of $\mathcal{X}_{i}$ refines the extension of $\mathcal{X}_{i+1}$ and that $G_{i+1}$ can be obtained from contracting $\mathcal{X}_i$ in $G_i$ or from contracting the extension of $\mathcal{X}_i$ in $G$. We use these partitions and their extensions interchangeably when the context is clear. Further, again when the context is clear, we refer to the sets $X \in \mathcal{X}_{i}$ as the vertices of $G_{i+1}$. 
\end{remark}

In our algorithm, for $\phi =  1/2^{\sqrt{\log m}}$, we maintain a $(2c_1, \phi/c_0, 2)$ expander hierarchy for our decremental input graph $G$ as described in \Cref{def:exphierarchy} (the values $c_0, c_1$ are defined in \Cref{thm:weightedExpDecomp}). To maintain the boundary-linked expander decomposition $\mathcal{X}_i$ for each graph $G_i$, we simply run the algorithm from \Cref{thm:weightedExpDecomp} on the graph $\tilde{G}_i = (G_i)^{1/(4c_1 \cdot \beta\phi)}_{\mathcal{X}_{i}}$ where $\beta = 2 c_i$ with parameter $\phi$.\footnote{Note that technically, \Cref{thm:weightedExpDecomp} requires capacities to be at least $1$ while some of the self-loops might be smaller. However, since correctness is not affected by scaling all capacities and all capacities in $\tilde{G}_i$ are polynomially lower bounded in $m$, we can simply scale up all capacities by a large polynomial factor to increase them to be at least of size $1$.} We denote by $E^{cut}_{\mathcal{X}_i}$ the set of cut edges maintained by the algorithm in \Cref{thm:weightedExpDecomp} that is run on $\tilde{G}_i$.

To obtain a tree cut sparsifier $T$ from our dynamic expander hierarchy, we finally appeal to the following theorem. We note that the theorem below from \cite{goranci2021expander} was proven only in the uncapacitated setting, however, their proof extends seamlessly.

\begin{theorem}[see Theorem 5.2 in \cite{goranci2021expander}]\label{thm:GRSTcorrectness}
Given a (dynamic) $(\beta, \phi, s)$-expander hierarchy $\mathcal{H} = \{ (G_0, \mathcal{X}_0), (G_1, \mathcal{X}_k), \ldots, (G_k, \mathcal{X}_k)\}$, and let $\mathcal{X}_{-1}$ denote the partition of the vertex set of $G$ into singleton sets. Let $T_{\mathcal{H}}$ denote the tree that has a node for each set $X$ in any of the partitions $\mathcal{X}_{i}$ and if $i < k$ then the node in $T_{\mathcal{H}}$ associated with $X$ is a child of the node $Y \in \mathcal{X}_{i+1}$ where $X \subseteq Y$ where the capacity of the edge $(X,Y)$ in $T_{\mathcal{H}}$ is $\vol_{G_{i}}(X)$. 

Then, $T_{\mathcal{H}}$ is a tree cut sparsifier of $G$ with quality $O((s\beta)^{O(k)}/\phi)$.
\end{theorem}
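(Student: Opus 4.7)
The plan is to verify both inequalities in the tree cut sparsifier definition directly from the structure of $T_{\mathcal{H}}$. The first observation I would establish is a key identity: for any $X$ appearing as a tree node at level $i$ (viewed as a subset of $V$ via the nested cluster structure), non-self-loop edges of $G_i$ correspond to $G$-edges, so the tree edge capacity satisfies $\vol_{G_i}(X) \ge \uu(E_G(X, V \setminus X))$, with the surplus accounted for by $G$-edges internal to $X$ that cross some $\mathcal{X}_{i-1}$-boundary.

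For the easy direction $\mincut_G(A,B) \le \mincut_{T_{\mathcal{H}}}(A,B)$, I take any $T_{\mathcal{H}}$-cut separating $A$ from $B$, consisting of tree edges $\{(X_j, \text{parent}(X_j))\}_j$ of total capacity $\sum_j \vol_{G_{i_j}}(X_j)$. This cut induces a partition of $V$ in which two original vertices are separated iff their leaf-to-leaf path contains a cut edge. Any $G$-edge $(u,v)$ crossing this partition must have $u \in X_j, v \notin X_j$ for some $j$, so it contributes to $\uu(E_G(X_j, V \setminus X_j)) \le \vol_{G_{i_j}}(X_j)$. Since $A$ and $B$ lie in distinct tree components, this set of crossing $G$-edges is a valid $A$-$B$ separator, proving the inequality.

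The hard direction $\mincut_{T_{\mathcal{H}}}(A,B) \le O((s\beta)^{O(k)}/\phi) \cdot \mincut_G(A,B)$ is the main content. Given any $G$-cut $(S, V\setminus S)$ separating $A$ and $B$ with capacity $C$, I would construct the natural tree cut $\mathcal{M}$: the set of tree nodes $X$ whose cluster is contained entirely in $S$ or entirely in $V \setminus S$ but whose parent is split. To bound $\sum_{X \in \mathcal{M}} \vol_{G_{i(X)}}(X)$, I carry out a level-by-level charging argument. At each level $i$, for each split cluster $Y \in \mathcal{X}_i$, the boundary-linked property from \Cref{def:boundaryLinkedExp} guarantees that $(G_i)^{1/(s\beta\phi)}_{\mathcal{X}_i}[Y]$ is a $\phi$-expander. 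Applied to the subset $S \cap Y$, this implies that the $G_i[Y]$-capacity of edges crossing $S$ within $Y$ (which corresponds to $G$-cut edges) is at least $\phi/(s\beta)$ times the volume of the minority side among the $\mathcal{M}$-maximal sub-clusters, because the self-loops representing lower-level boundary edges inflate volume by at most a factor of $s\beta$. Summing across all split clusters at level $i$ bounds the level-$i$ contribution of $\sum_{X \in \mathcal{M}} \vol_{G_{i(X)}}(X)$ by an $O(s\beta/\phi)$-amplification of cut edges cut within $G_i$-clusters, and iterating across $k$ levels produces the $(s\beta)^{O(k)}/\phi$ quality.

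The main obstacle is the careful double-counting across levels: a single $G$-cut edge can be charged against tree edges at multiple levels, and a priori this could blow up quadratically or worse. The boundary-linked self-loop augmentation is precisely the mechanism that contains this: at each level, the cut edges below are absorbed into the self-loop volume at a $1/(s\beta\phi)$-scaled rate, so the expander guarantee survives induction and each level only costs a multiplicative $O(s\beta/\phi)$ factor. This mirrors the argument of Theorem 5.2 in \cite{goranci2021expander}; the only modification needed for capacitated graphs is to interpret all volumes, conductances, and expander conditions in their weighted form, which is consistent with \Cref{def:boundaryLinkedExp} and the capacitated expander decomposition of \Cref{thm:weightedExpDecomp}. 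Finally, I would observe that the hop diameter bound $O(\log m)$ follows because the cluster volumes strictly decrease by a $(1 + \phi)$-factor per level under the boundary-linked property, forcing $k = O(\log_{1+\phi} m)$.
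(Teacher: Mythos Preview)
The paper does not prove this theorem; it cites \cite{goranci2021expander} and remarks that the argument extends verbatim to the capacitated setting. Your outline follows the same structural approach as that cited proof: the lower bound via tree-cut-to-$G$-cut projection, and the upper bound by extending an $A$--$B$ cut in $G$ to the natural tree cut $\mathcal{M}$ of maximal non-split clusters and charging level by level via the boundary-linked expander property.

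However, your per-level accounting contains a genuine gap. You write that ``each level only costs a multiplicative $O(s\beta/\phi)$ factor'' and then conclude the final quality is $(s\beta)^{O(k)}/\phi$. These two statements are inconsistent: iterating a multiplicative $s\beta/\phi$ factor across $k$ levels yields $(s\beta/\phi)^{k}$, which for the paper's parameters ($\phi = 2^{-\sqrt{\log m}}$, $k = \Theta(\sqrt{\log m}\log\log m)$) is $2^{\Theta(\log m \log\log m)}$, far worse than the claimed $2^{O(\sqrt{\log m}\log\log m)}$. The whole point of the boundary-linked definition is that the $\phi$ does \emph{not} accumulate: the self-loops in \Cref{def:boundaryLinkedExp} carry capacity $\tfrac{1}{s\beta\phi}\cdot \uu(E^{\mathrm{cut}})$, so when you apply $\phi$-expansion to bound the minority-side volume, the $\phi$ from conductance cancels against the $\phi$ in the self-loop scaling, leaving only an $O(s\beta)$ factor per level in the recursion. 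The single $1/\phi$ appears once, at the base, when you relate ordinary $G_i$-volume (rather than self-loop-augmented volume) to cut capacity. Your sketch identifies boundary-linkedness as the mechanism but misses this cancellation, which is its entire purpose.

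Separately, your final remark on hop diameter is both incorrect and extraneous. A $(1+\phi)$-factor volume decrease per level would give $k = O(\phi^{-1}\log m)$, not $O(\log m)$; the actual bound on $k$ comes from the $\tilde O(\phi)$-fraction cut-capacity decrease established in Property~\ref{prop:totalCapExp} of the proof of \Cref{thm:decrTreeCutSparsifier}, and in any case hop diameter is not part of the present theorem.
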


From the definitions, maintenance of tree $\mathcal{T}_{\mathcal{H}}$ is straightforward. We note that to reduce the number of updates to the tree cut sparsifier $T$ that we output, we let $T$ be a version of $\mathcal{T}_{\mathcal{H}}$ where all edge capacities are rounded up to the nearest power of two, and enforce that all edge capacities in $T$ are monotonically decreasing by using the smallest capacity value of an edge in $\mathcal{T}_{\mathcal{H}}$ that has been observed so far. Proving that $T$ is still a correct tree cut sparsifier (only worse in quality by a constant factor) is trivial since by the decremental nature of $G$ any fixed cut has monotonically decreasing capacity.

\begin{proof}[Proof of \Cref{thm:decrTreeCutSparsifier}.] We prove by induction on $i$ that
\begin{enumerate}
    \item \label{prop:validExp} it is valid to invoke \Cref{thm:weightedExpDecomp} on the graph $\tilde{G}_i$, i.e. $\tilde{G}_i$ is undergoing only edge deletions, vertex splits and self-loop insertions, and
    \item \label{prop:totalCapExp} the total capacity on all edges in $E_{\mathcal{X}_i}^{cut}$ is at most $(2c_1\phi)^{i+1} U_G^{total}$, and
    \item \label{prop:updateNumExp} the number of updates to $\tilde{G}_i$ is $\tilde{O}(m)$.
\end{enumerate}

\underline{Property \ref{prop:validExp}:} Since for every $i$, $\tilde{G}_i$ is obtained from $G_i$ by self-loop insertions/ deletions, we can conclude that Property \ref{prop:validExp} holds, if it holds for each graph $G_i$. For $i = 0$, it is vacuously true since $G_0 = G$ which is a decremental graph by assumption. For $i > 0$, we use that $\mathcal{X}_{i-1}$ is a refining partition which implies that $G_i$ which is obtained from contracting partition sets of $\mathcal{X}_{i-1}$ in $G$ and removing self-loops, only undergoes the deletions that $G$ undergoes if the corresponding edge in $G_i$, and vertex splits for when $\mathcal{X}_{i-1}$ refines, possibly preceded by insertions of the removed self-loop at the vertex that is split in the current update. 

\underline{Property \ref{prop:totalCapExp}:} By \Cref{def:boundaryLinkedExp}, we have that the total volume of all self-loops added to $\tilde{G}_i$ that are not in $G_i$ already is at most $\frac{1}{4 c_1 \phi} \cdot 2\uu(E^{cut})$ (since each edge $e \in E^{cut}$ adds a self-loop of capacity $\frac{1}{4 c_1 \phi} \uu(e)$ to both $u$ and $v$). Thus, $U^{\text{total}}_{\tilde{G}_i} \leq U^{\text{total}}_{G_i} + \frac{1}{2 c_1 \phi} \cdot \uu(E^{cut})$ (see \Cref{prop:boundaryLinkedNumSelfLoops} in \Cref{def:boundaryLinkedExp}).

On the other hand, since \Cref{thm:weightedExpDecomp} maintains $\mathcal{X}_i$ to be an expander decomposition of $\tilde{G}_i$ with parameter $\phi$, we have that the capacity of all cut edges $E^{cut}_{\mathcal{X}_i}$ is bounded by $c_1 \phi \cdot U^{\text{total}}_{\tilde{G}_i}$.

Combining these inequalities, we obtain
\begin{align*}
    U^{\text{total}}_{\tilde{G}_i} 
    &\leq U^{\text{total}}_{G_i} + \frac{1}{2 c_1 \phi} \cdot c_1 \phi \cdot U^{\text{total}}_{\tilde{G}_i} = U^{\text{total}}_{G_i} + \frac{1}{2} \cdot U^{\text{total}}_{\tilde{G}_i}.
\end{align*}
Subtracting $\frac{1}{2} \cdot U^{\text{total}}_{\tilde{G}_i}$ from both sides on the inequality thus yields $U^{\text{total}}_{\tilde{G}_i} \leq 2 \cdot U^{\text{total}}_{G_i}$. 

Finally, for $i = 0$, we have that $G_0 = G$ has total capacity $U_G^{total}$  by definition, and thus the capacity of all cut edges $E_{\mathcal{X}_0}^{cut}$ is at most $c_1\phi \cdot 2 U_G^{total}$. For $i > 0$, we have that $G_i$ can only obtain edges in $E^{cut}_{\mathcal{X}_{i-1}}$ as can be seen from \Cref{def:exphierarchy}. Thus, we have $U_{G_i}^{total} \leq \uu(E^{cut}_{\mathcal{X}_{i-1}}) \leq (2c_1\phi)^{i} U_G^{total}$ where we used the induction hypothesis for the last inequality. This yields by \Cref{thm:weightedExpDecomp} and our bound $U^{\text{total}}_{\tilde{G}_i} \leq 2 \cdot U^{\text{total}}_{G_i}$ that $\uu(E_{\mathcal{X}_i}^{cut}) \leq 2c_1\phi \cdot U^{\text{total}}_{G_i} \leq (2c_1\phi)^{i+1} \cdot U^{\text{total}}_{G}$, as desired.

\underline{Property \ref{prop:updateNumExp}:} For $G_0 = G$, we have at most $\tilde{O}(m)$ updates. For $i > 0$, we have that $G_i$ undergoes at most $\tilde{O}(m)$ updates since $\mathcal{X}_{i-1}$ is refining and thus, for $G^{final}$ being the final graph $G$, causes at most $|V(G^{final})|-1$ vertex splits and $|V(G^{final})|-1$ self-loop insertions (to compensate for earlier removals of self-loops that go between two vertices in the graph $G_i$ after the vertex splits) and additionally, undergoes the sequence of updates that $G$ is undergoing if the corresponding edges are present in $G_i$. Thus, $G_i$ undergoes $\tilde{O}(m)$ updates. 

Finally, we have that $\tilde{G}_i$ undergoes $2$ self-loop insertions whenever an edge is added to the set $E^{cut}_{\mathcal{X}_i}$. But since $E^{cut}_{\mathcal{X}_i}$ is monotonically increasing (see \Cref{thm:weightedExpDecomp}), this can cause at most twice as many updates as there are edges in $G_i$. Thus, $\tilde{G}_i$ undergoes at most $\tilde{O}(m)$ updates.

\underline{Putting it All Together:} From Property \ref{prop:totalCapExp}, we can conclude that the number of levels of the hierarchy is $O(\log_{1/(2c_1\phi)}(U^{\text{total}})) = O(\sqrt{\log m} \log\log m)$ by choice of $\phi$ and the fact that capacities are polynomially-bounded. 

Correctness of our algorithm thus follows immediately from \Cref{thm:weightedExpDecomp}.

Combining the bound on the number of levels of the expander hierarchy with the runtime bounds obtained by \Cref{thm:weightedExpDecomp} and the bound on the number of updates to each graph $\tilde{G}_i$ by Property \ref{prop:updateNumExp}, we obtain that the expander hierarchy can be maintained in time $\tilde{O}(m/\phi^3)$. From \Cref{thm:GRSTcorrectness}, it can be observed that maintenance of tree $\mathcal{T}_{\mathcal{H}}$ and also of our modified tree $T$ is straightforward and can be done in time $\tilde{O}(m/\phi^3)$ as only $\tilde{O}(1)$ operations suffice to update both trees after any update to the dynamic expander hierarchy. This yields the total runtime of our algorithm.

Finally, to bound the hop diameter of $T$ follows immediately from the fact that $T_{\mathcal{H}}$ and thus $T$ is a tree with $k$ levels where $k = O(\log m)$.
\end{proof}

\subsection{Fully Dynamic Tree Cut Sparsifiers}
\label{subsec:fullydyntoDecrTreeSpars}

Finally, we present an algorithm to maintain a tree cut sparsifier as described in \Cref{thm:mainTreeSparsifier} by giving a reduction to the decremental setting.

\mainTreeCutSparisifer*
\DirectedLayerGraphDef*
\DirectedLayerGraphRand*



\paragraph{Core Graphs.} Before we describe our reduction, let us introduce the concept of core graphs which have been crucial in the design of recent dynamic graph algorithms. 

\begin{definition}[Core graph]
  \label{def:coregraph}
 Given a graph $G=(V, E, \uu)$, a rooted forest $F$ (i.e. each component of $F$ has a dedicated root vertex) with $V(F) \supseteq V$. We define the \emph{core graph} $\mathcal{C}(G, F)$ to be the graph obtained from graph $G$ by contracting the vertices of every connected component in $F$ into a super-vertex that is then identified with the root vertex of the corresponding tree in $F$, i.e. the vertex set of $\mathcal{C}(G, F)$ is the set of roots of $F$. We let the capacities of edges in $\mathcal{C}(G, F)$ be equal to their capacities in $G$. 
\end{definition}

In our algorithm, we use induced core graphs. For the definition, we also need to define the notion of a branch-free set. 

\begin{definition}[Branch-Free Set]
Given a tree $T = (V,E,\uu)$, we say that $B \subseteq V$ is a branch-free set for $T$ if we have that $\mathcal{P}_{T, B}$, the collection of all paths $T[a,b]$ for $a,b \in B$ that contain no internal vertex in $B$, consists of pairwise edge-disjoint paths.
\end{definition}

\begin{definition}[Induced Core Graph]\label{def:inducedCoreGraph}
Given a graph $G=(V, E, \uu)$, a tree $T$ with $V \subseteq V(T)$, and a set of roots $B \subseteq V$ such that $B$ is a branch-free set for $T$. We let $F(T, B)$ denote the rooted forest obtained by removing from the tree $T$ the lexicographically-first edge $e_P$ of minimum capacity from each path $P \in \mathcal{P}_{T, B}$. Note, that this yields a forest $F(T,B)$ where each connected component contains exactly one node in $B$. We let the corresponding vertex in $B$ be the root of its component to make $F(T,B)$ a rooted forest. We define the \emph{induced core graph} $\mathcal{C}(G, T, B)$ to be the core graph $\mathcal{C}(G, F(T,B))$.
\end{definition}

Finally, we state the following algorithmic result that extends any set $R$ to a branch-free set $B$ that is not much larger. To unambiguously define the result, here, we require the notion of a monotonically increasing vertex set for a graph undergoing vertex splits.

\begin{definition}[Monotonically Increasing Set in Graph Undergoing Vertex Splits]
Given a graph $G=(V,E, \uu)$ undergoing a sequence of vertex splits. Whenever a vertex $v$ is split into vertices $v'$ and $v''$, we say that $v'$ and $v''$ descend from $v$ and we further extend this notion to be transitively closed, i.e. if $v'$ is further split into $v'''$ and $v''''$, then $v'''$ and $v''''$ also descend from $v$, and so on. 
Then, we say that a set $X \subseteq V$ is monotonically increasing if for any two time steps $t' < t$, every $v$ in $X$ at time $t'$ has a descendent in $X$ at time $t$.
\end{definition}

The following standard result is then obtained via link-cut trees \cite{ST83} (See e.g. \cite{chen2022maximum}).

\begin{theorem}\label{thm:maintainBranchFreeSet}
Given an $m$-vertex tree graph $T = (V,E,\uu)$ undergoing $\tilde{O}(m)$ edge un-contractions, i.e. updates that split a vertex $v$ into vertices $v'$ and $v''$ and add an edge $(v', v'')$, and a monotonically increasing set $R \subseteq V$. Then, there is a deterministic algorithm that maintains a monotonically increasing set $B$ such that at any time, $R \subseteq B$, $B$ is branch-free for the current tree $T$, and $|B| \leq 2|R|$. The algorithm outputs $B$ explicitly after every update to $T$ or $R$, and runs in total time $\tilde{O}(m)$. 
\end{theorem}

\begin{figure}[!ht]
    \centering
    \includegraphics[width=14cm]{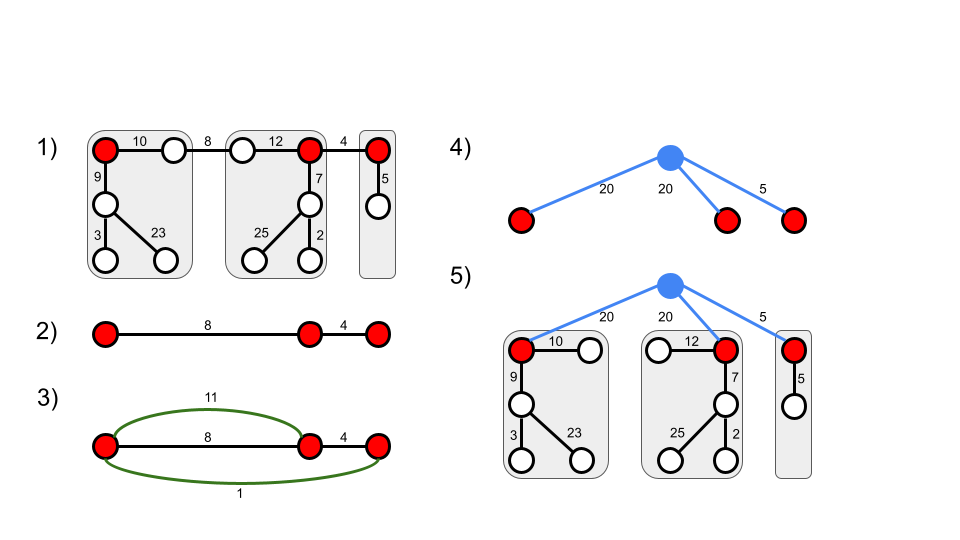}
    \caption{1) shows a tree cut sparsifier $T_{i-1}$ (for a graph $G_{i-1}$). Red vertices are the vertices in $B_i$. The grey components show the connected components of $F(T_{i-1}, B_i)$, edges crossing such components are of minimum capacity on a path in $\mathcal{P}_{T_{i-1}, B_i}$. \\
    2) shows the induced core graph $\mathcal{C}(T_{i-1}, T_{i-1}, B_i)$. \\
    3) shows the induced core graph $\hat{G}_i = \mathcal{C}(T_{i-1} \cup (I_{i-1} \setminus I_i), T_{i-1}, B_i)$, i.e., the previous graph with all edges that are in $G_i$ but not in $G_{i-1}$ (in green). \\
    4) shows a tree cut sparsifier $\hat{T}_i$ of the graph $\hat{G}_i$. \\
    5) shows the final tree cut sparsifier $ T_i$ of $G_i$ which is formed by the union of $F(T_{i-1}, B_i)$ and the tree cut sparsifier $\hat{T}_i$ of the induced core graph $\hat{G}_i$.}
    \label{fig:graphReduction}
\end{figure}

\paragraph{A Hierarchy of Tree Cut Sparsifiers.} We are now ready to describe our reduction (see also \Cref{fig:graphReduction}). Let $\tilde{m} = \tilde{O}(m)$ be a strict upper bound on the number of updates to $G$. Our algorithm maintains levels $0, 1, \ldots, L_{max} = \lceil \log^{1/4}(\tilde{m}) \rceil$. We use a simple batching approach over the update sequence where we associate with each level $i \in [0, L_{max}]$, at current time $t$, an associated time $t_i = \lfloor t / \tilde{m}^{(L_{max} - i)/L_{max}} \rfloor \cdot \tilde{m}^{(L_{max} -i)/ L_{max}}$ at which level $i$ was last re-built.\footnote{We assume here that $\tilde{m}^{(L_{max} -i)/L_{max}}$ is integer which is w.l.o.g.}

We further maintain with each level $i \in [0,L_{max}]$, a batch $I_i$ consisting of all edges in the current graph $G$ that were inserted after time $t_i$ (note in particular that edges added and deleted after time $t_i$ are not in $I_i$). We define $G_i = G \setminus I_i$ for all $0 \leq i \leq L_{max}$. We note in particular that $I_{L_{max}} = \emptyset$ since $t_{L_{max}} = t$ and therefore $G_{L_{max}} = G$. 

For each level $i \in [0,L_{max}]$, our goal is to maintain a tree cut sparsifier $T_i$ of the current graph $G_i$, thus in particular, $T_{L_{max}}$ is a tree cut sparsifier of the current graph $G$. For $i = 0$, we let $T_0$ be the tree cut sparsifier obtained by running the data structure from \Cref{thm:decrTreeCutSparsifier} on the graph $G_0 = G \setminus I_0$, that is, the initial graph where only decremental updates are applied. For $i > 0$, we let $B_i$ be the monotonically increasing set obtained by running the algorithm in \Cref{thm:maintainBranchFreeSet} on the tree $T_{i-1}$ for vertices $V(I_{i-1} \setminus I_{i})$ since time $t_{i}$. Let $\hat{T}_i$ be the tree obtained from running the data structure in \Cref{thm:decrTreeCutSparsifier} on the graph $\hat{G}_i = \mathcal{C}(T_{i-1} \cup (I_{i-1} \setminus I_{i}), T_{i-1}, B_i)$ (as defined in \Cref{def:inducedCoreGraph}) since time $t_i$. Then, we maintain $T_i = 2 \cdot (F(T_{i-1}, B_i) \cup \hat{T}_i)$. Note here in particular that we are not adding the pre-images of edges in $\hat{T}_i$ to $T_i$ but instead the real edges in $\hat{T}_i$ which are supported on $B_i$ only.

As previously mentioned, we output the tree $T_{L_{max}}$ as our tree cut sparsifier $T$ of $G$.

\paragraph{Analysis.} We first establish correctness of the algorithm.

\begin{claim}[Correctness]
The graph $T$ is a tree cut sparsifier of $G$ of quality $2^{O(\log^{3/4}(m) \log\log(m))}$ at all times.
\end{claim}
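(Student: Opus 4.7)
The plan is to proceed by induction on $i \in \{0, 1, \ldots, L_{max}\}$ on the statement: $T_i$ is a tree cut sparsifier of $G_i = G \setminus I_i$ with quality $q_i \leq (c \gamma_q)^{i+1}$ for a suitable constant $c$, where $\gamma_q = 2^{O(\sqrt{\log m}\log\log m)}$ is the quality from \Cref{thm:decrTreeCutSparsifier}. The base case $i=0$ is immediate: $G_0$ is the initial graph augmented by purely decremental updates since time $0$, so \Cref{thm:decrTreeCutSparsifier} directly produces a tree cut sparsifier $T_0$ of quality $\gamma_q$. The target bound follows at $i = L_{max} = \lceil \log^{1/4}(\tilde{m})\rceil$ by composition: $q_{L_{max}} \leq (c\gamma_q)^{L_{max}+1} = 2^{O(\log^{1/4} m \cdot \sqrt{\log m}\log\log m)} = 2^{O(\log^{3/4}(m)\log\log m)}$, as claimed.

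For the inductive step I would use the decomposition $G_i = G_{i-1} \cup J$ with $J = I_{i-1} \setminus I_i$ (the edges inserted in the window $(t_{i-1}, t_i]$ that are still alive), and introduce the intermediate graph $H_i = T_{i-1} \cup J$ obtained by appending the new edges with their original capacities to the inductive sparsifier. The first sub-step is to verify that $H_i$ is a (not-necessarily-tree) cut sparsifier of $G_i$ of quality $q_{i-1}$; this follows from the inductive hypothesis together with the fact that the edges of $J$ have both endpoints in $V$, so they contribute identically to any $V$-partition on both sides. The second sub-step, which is the crucial one, is to interpret the induced core graph $\hat{G}_i = \mathcal{C}(H_i, T_{i-1}, B_i)$ (where $B_i \supseteq V(J)$ is branch-free for $T_{i-1}$) as a cut sparsifier of $H_i$ with respect to the terminals $B_i$, losing only a constant factor overall.

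The geometric heart of this second sub-step is that since $B_i$ is branch-free for $T_{i-1}$, the collection $\mathcal{P}_{T_{i-1},B_i}$ of paths between $B_i$-vertices is edge-disjoint, and removing the lexicographically-first minimum-capacity edge on each path yields a rooted forest $F(T_{i-1}, B_i)$ whose components are each anchored at a vertex of $B_i$. Any cut separating $A,B\subseteq V$ in $H_i$ can then be decomposed as a ``cross-core'' part (edges between different components of $F(T_{i-1},B_i)$) which is faithfully preserved by the contraction into $\hat{G}_i$, plus a ``within-core'' part that is charged against an edge of $F(T_{i-1}, B_i)$: the min-capacity edge removed from each traversed path is by definition at most as expensive as any other edge on that path, so the forest edges pay for such cuts up to a factor of $2$. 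This is precisely the reason for the ``$2\cdot$'' scaling factor in the definition of $T_i$. The third sub-step is to apply \Cref{thm:decrTreeCutSparsifier} to the decremental graph $\hat{G}_i$ (decremental since time $t_i$, because both $T_{i-1}$ and $J$ only shrink after $t_i$) to obtain a tree cut sparsifier $\hat{T}_i$ on $B_i$ of quality $\gamma_q$, and then re-assemble $T_i = 2\cdot(F(T_{i-1}, B_i)\cup \hat{T}_i)$, which is a tree on $V(T_{i-1})$ since $\hat{T}_i$ spans $B_i$ and glues together the components of $F(T_{i-1},B_i)$. Tracking the quality, $q_i \leq 2\gamma_q \cdot q_{i-1}$, closing the induction.

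The main obstacle is the second sub-step: formalizing how cuts in $H_i$ translate to cuts in $\hat{G}_i$ and back, and verifying the constant-factor loss from the branch-free structure. One must carefully argue both directions of the cut inequality, showing that any cut in $\hat{G}_i$ lifts to a cut in $H_i$ of no larger value after scaling by $2$, and that any min-cut in $H_i$ induces a cut in $\hat{G}_i$ of value at most (a constant times) the original. This reduces to a general cut-preservation lemma for induced core graphs that depends critically on the branch-freeness of $B_i$ and the choice of the minimum-capacity edge on each path. The remaining book-keeping (update time, bit-complexity, that capacities in $T_i$ are monotone non-increasing after rounding to powers of two) is relatively routine and follows from the analogous properties of \Cref{thm:decrTreeCutSparsifier} and \Cref{thm:maintainBranchFreeSet}.
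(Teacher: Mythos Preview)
Your proposal is correct and follows essentially the same approach as the paper: induction on $i$ with $q_i = (2\gamma_{quality})^{i+1}$, using that $T_{i-1}\cup(I_{i-1}\setminus I_i)$ is a cut sparsifier of $G_i$, and then arguing that the branch-free structure of $B_i$ lets the forest $F(T_{i-1},B_i)$ absorb within-core cuts up to a factor of $2$ while $\hat{T}_i$ handles cross-core cuts via \Cref{thm:decrTreeCutSparsifier}. One small inaccuracy: $T_{i-1}$ does not ``only shrink'' after $t_i$ but undergoes edge un-contractions and weight decreases (per \Cref{thm:decrTreeCutSparsifier}), so $\hat{G}_i$ experiences vertex splits rather than being purely decremental---but this is exactly what \Cref{thm:decrTreeCutSparsifier} is designed to handle, so the argument still goes through.
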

\begin{proof}
We prove by induction on $i$ that $T_i$ is a tree cut sparsifier of $G_i$ of quality $q_i = (2\gamma_{quality})^{i+1}$. For $i = 0$, we have, by definition of $\tilde{m}$, that all inserted edges since the start of the algorithm are in $I_0$. Thus, the data structure from  \Cref{thm:decrTreeCutSparsifier} maintains $T_0$ correctly to be a tree cut sparsifier of $G_0$ of quality $\gamma_{quality}$. 

For $i > 0$, we have by the induction hypothesis that $T_{i-1}$ is a tree cut sparsifier of $G_{i-1}$ of quality $q_{i-1}$. It is straightforward from the definition of $G_i$ to see that $G_i = G \setminus I_i = (G \setminus I_{i-1}) \cup (I_{i-1} \setminus I_i) = G_{i-1} \cup (I_{i-1} \setminus I_i)$ since $I_{i-1} \supseteq I_i$.

Thus, it is straightforward to verify that by the induction hypothesis, we have that $T_{i-1} \cup (I_{i-1} \setminus I_i)$ is a cut sparsifier of $G_i = G_{i-1} \cup (I_{i-1} \setminus I_i)$ of quality $q_{i-1}$. Finally, consider the graph $T_i$ as maintained by the hierarchy. To see that $T_i = 2\cdot (F(T_{i-1}, B_i) \cup \hat{T}_i)$ is a tree, we use the standard fact that the union of a tree in a graph contracted along a forest and the forest itself yields a tree spanning the original graph. It remains to verify the quality of $T_i$ w.r.t. $G_i$.

\underline{$\text{min-cut}_{G_i}(A, B) \leq \text{min-cut}_{T_i}(A, B)$:} Let us consider any disjoint sets $A, B \subseteq V(G_i) = V$. By sub-modularity of graph cuts in $G_i$, it suffices to focus on the special case that the $AB$-min-cut in $T_i$ consists only of a single edge $e$. 
To show this by induction on the number of cut tree edges, we first extend the $AB$-min-cut to a realization $(A', V_{T_i} \setminus A')$ in $T_i$. Then, we assume that the claimed inequality holds for cuts involving at most $k$ tree edges. Consider a cut involving $k + 1$ tree edges. Remove one of the $k + 1$ cut edges such that the remaining $k$ edges are in the same tree component. Then, let $A'_1$ be the cut induced by the remaining $k$ edges, and $A'_2$ be the cut induced by the removed edge such that $A' \subseteq A'_1$ and $A' \subseteq A'_2$. Then, we have 
\begin{align*}
    \text{min-cut}_{G_i}(A', V \setminus A') &= \text{min-cut}_{G_i}(A'_1 \cap A'_2, V \setminus (A'_1 \cap A'_2)) \\
    &\leq \text{min-cut}_{G_i}(A'_1, V \setminus A'_1 ) +\text{min-cut}_{G_i}( A'_2, V \setminus A'_2) \\
    &\leq \text{min-cut}_{T_i}(A'_1 \cap A'_2, V \setminus (A'_1 \cap A'_2))
\end{align*}
where the first inequality is by sub-modularity of cuts, and the second follows from the induction hypothesis. 
We then distinguish by cases:
\begin{itemize}
 \item \underline{if $e \in F(T_{i-1}, B_i)$:} We now give a formal proof of this case and discuss an example of such a proof in \Cref{fig:analysisGraphReduction}.
 
 Since $F(T_{i-1}, B_i)$ is a forest where each component contains exactly one vertex on $B_i$ (see \Cref{def:inducedCoreGraph}), we have that $F(T_{i-1}, B_i) \setminus \{e\}$ contains a single connected component $A'$ that contains no vertex in $B_i$. Further note that since $T_i$ consists of $F(T_{i-1}, B_i)$ and edges supported only on $B_i$, we have that also $T_i \setminus \{e\}$ contains $A'$ as one of its connected components. Thus, by assumption either $A \subseteq A' \subseteq V(T_i) \setminus B$, or $B \subseteq A' \subseteq V(T_i) \setminus A$. Let us assume for the rest of the proof that $A' \supseteq A$ (the case where $A' \supseteq B$ is analogous). 

 Our key claim is that for any edge $e_1, e_2, \ldots, e_k$ in $E_{T_{i-1}}(A', V(T_{i-1}) \setminus A') \setminus \{e\}$ the path $P_j \in \mathcal{P}_{T_{i-1}, B_i}$ that contains edge $e_j$ also contains edge $e$. Note that this implies that $k \leq 1$, since all paths in $\mathcal{P}_{T_{i-1}, B_i}$ are edge-disjoint. To see the claim, observe that for every path $P_j \in \mathcal{P}_{T_{i-1}, B_i}$ there is only a single edge on $P_j$ removed from $T_{i-1}$ to obtain $F(T_{i-1}, B_i)$ and thus if two edges $e_j$ and $e_{\ell}$ for some $\ell \neq j$ appear on $P_j$ then one of them would still be in the cut $E_{F(T_{i-1}, B_i)}(A', V(T_{i-1}) \setminus A')$ which contradicts that the latter set consists only of the edge $e$. But since all paths $P_1, P_2, \ldots, P_k$ must be distinct, each path $P_j$ enters the component $A'$ via edge $e_j$. But all paths in $\mathcal{P}_{T_{i-1}, B_i}$ start and end in a vertex in $B_i$. Since $A' \cap B_i = \emptyset$, the path $P_j$ must therefore use the edge $e$ to reach a vertex in $B_i$ as it is the only edge to leave $A'$ that is not already on another path.

    We observe that if $E_{T_{i-1}}(A', V(T_{i-1}) \setminus A') \setminus \{e\} = \emptyset$, then trivially \[ \uu_{T_{i-1}}(e) = \uu_{T_{i-1}}(A', V(T_{i-1}) \setminus A') = \uu_{F(T_{i-1}, B_i)}(A', V(T_{i-1}) \setminus A'). \] If it contains an additional edge $e_1$, then we have that the path $P_1$, defined as above, contains the edge $e$. But we have from \Cref{def:inducedCoreGraph} that removing $e_1$ instead of $e$ from $T_{i-1}$ to obtain $F(T_{i-1}, B_i)$ implies $\uu_{T_{i-1}}(e_1) \leq \uu_{T_{i-1}}(e)$. And thus, we have in this case, $\uu_{T_{i-1}}(A', V(T_{i-1}) \setminus A') =  \uu_{T_{i-1}}(e) + \uu_{T_{i-1}}(e_1) \leq 2 \cdot \uu_{T_{i-1}}(e) = 2 \cdot \uu_{F(T_{i-1}, B_i)}(A', V(T_{i-1}) \setminus A')$. We can thus finally use the induction hypothesis on $T_{i-1}$ to obtain that $2\uu_{F(T_{i-1}, B_i)}(A', V(T_{i-1}) \setminus A') \geq \uu_{G_{i-1}}(A', V \setminus A')$ from which we can conclude 
    \begin{align*}
    \uu_{T_{i}}(A', V(T_{i-1}) \setminus A') &= 2 \cdot \uu_{F(T_{i-1}, B_i) \cup \hat{T}_i}(A', V(T_{i}) \setminus A') \\
    &\geq 2 \cdot \uu_{F(T_{i-1}, B_i)}(A', V(T_{i}) \setminus A') \\ 
    &\geq \uu_{G_{i-1}}(A, V\setminus A) \\
    &= \uu_{G_{i}}(A, V \setminus A)
    \end{align*}
    where the last equality follows since no edge from $G_i \setminus G_{i-1}$ is incident to $A$ (since $A \cap B_i \subseteq A' \cap B_i = \emptyset$). 
 
\begin{figure}[!ht]
    \centering
    \includegraphics[width=9cm]{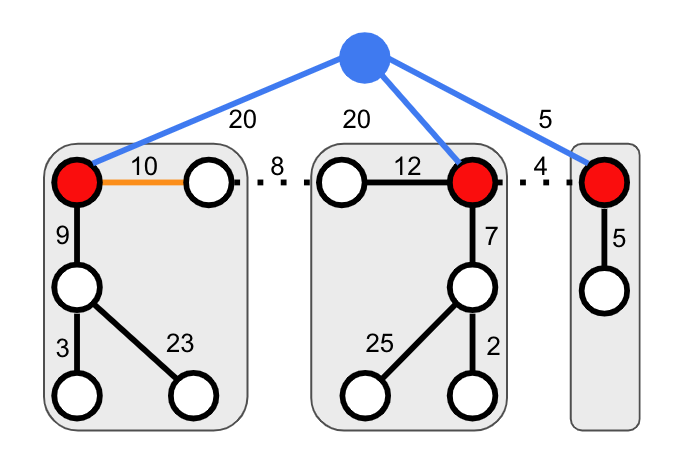}
    \caption{Consider the example from \Cref{fig:graphReduction} where edges in $T_{i-1}$ that are not in $T_i$ are dashed, red vertices are the vertices of $B_i$ and blue edges and vertices and the vertices of $B_i$ form $\hat{T}_i$. \\   Let us argue for the cut induced by the orange edge $e$ in $F(T_{i-1}, B_i)$, where $T_i \setminus \{e\}$ is the $AB$-min-cut of $T_i$. We have that $F(T_{i-1}, B_i) \setminus \{e\}$ contains a connected component $A'$ that has no vertex in $B_i$ since it contains exactly one more connected component than vertices in $B_i$. This component $A'$ is also a connected component of $T_i \setminus \{e\}$ since no edges of $\hat{T}_i$ are incident to $A'$ and $T_i = F(T_{i-1}, B_i) \cup \hat{T}_i$.\\     We prove that either, we are in a case where $T_{i-1}$ has no edge leaving $A'$ other than $e$ in which case we obtain a rather straightforward lower bound on the cut size, or, at most one such edge $e_1$ (in our case the dotted edge incident to the orange edge). But in this case, we have that $e$ and $e_1$ are on a common path $P \in \mathcal{P}_{T_{i-1}, B_i}$ and since from each such path only the edge of smallest capacity is removed, we have $\uu_{T_{i-1}}(e_1) \geq \uu_{T_{i-1}}(e)$. Thus, we can again bound the capacity of the cut $(A', V(T_i) \setminus A')$ by $2\uu(e)$.  }
    \label{fig:analysisGraphReduction}
\end{figure}

    \item \underline{otherwise:} in this case, we have $e \in \hat{T}_i$. Let $A'' \supseteq A$ and $B'' \supseteq B$ be the connected components of $T_i \setminus \{e\}$. Let $A' = A'' \cap V(T_{i-1})$ and $B' = B'' \cap V(T_{i-1})$. We clearly have that $A \subseteq A' \subseteq A''$ and $B \subseteq B' \subseteq B''$ because $V \subseteq V(T_{i-1})$ by induction on $T_{i-1}$. Observe further that $A, B$ partition $V$; $A', B'$ partition $V(T_{i-1})$; and $A'', B''$ partition $V(T_{i})$.  

    Next, let $\hat{A}, \hat{B}$ be the connected components of $\hat{T}_i \setminus \{e\}$ such that $\hat{A} \subseteq A'', \hat{B} \subseteq B''$. Then, we have by \Cref{thm:decrTreeCutSparsifier}, that $\uu_{\hat{T}_i}(\hat{A}, \hat{B}) \geq \uu_{\hat{G}_i}(\hat{A} \cap B_i, \hat{B} \cap B_i)$ where we use that $B_i = V(\hat{G_i})$. By construction, we have that $\uu_{\hat{G}_i}(\hat{A} \cap B_i, \hat{B} \cap B_i)) = \uu_{\mathcal{C}(T_{i-1} \cup (I_{i-1} \setminus I_i), T_{i-1}, B_i)}(\hat{A} \cap B_i, \hat{B} \cap B_i) = \uu_{T_{i-1} \cup (I_{i-1} \setminus I_i)}(A', B') = \uu_{T_{i-1}}(A', B') + \uu_{I_{i-1} \setminus I_i}(A', B')$. By induction, we have that $\uu_{T_{i-1}}(A', B') \geq \uu_{G_{i-1}}(A, B)$ and since $I_{i-1} \setminus I_i \subseteq G_i$, we have that $\uu_{I_{i-1} \setminus I_i}(A', B') = \uu_{I_{i-1} \setminus I_i}(A, B)$. 

    It remains to use that $T_i \supseteq 2 \cdot \hat{T}_i$ and to combine inequalities which yields $\uu_{T_i}(\hat{A}, \hat{B}) \geq 2\uu_{\hat{T}_i}(\hat{A}, \hat{B}) \geq 2\uu_{G_{i-1}}(A, B) + 2\uu_{I_{i-1} \setminus I_i}(A, B)$. 
\end{itemize}
\underline{$\text{min-cut}_{T_i}(A, B) \leq q_i \cdot \text{min-cut}_{G_i}(A, B)$:} For this claim, note that it suffices to prove for all sets $A \subseteq V$ that $\text{min-cut}_{T_i}(A, B) \leq q_i \cdot \uu_{G_i}(A,B)$ for $B = V \setminus A$. 

Let us fix such a cut $(A, B = V \setminus A)$ in $G_i$. We have for $T_i = 2 \cdot (F(T_{i-1}, B_i) \cup \hat{T}_i)$ that since $F(T_{i-1}, B_i) \subseteq T_{i-1}$ and $G_{i-1} \subseteq G_i$, we have that $\text{min-cut}_{F(T_{i-1}, B_i)}(A, B) \leq q_{i-1} \cdot \uu_{G_i}(A,B)$. 

It thus only remains to obtain an upper bound on $\text{min-cut}_{\hat{T}_i}(A, B)$. Since all edges in $\hat{T}_i$ are either incident to vertices in $B_i$ or to newly created vertices (not in $V$), we have that $\text{min-cut}_{\hat{T}_i}(A, B) = \text{min-cut}_{\hat{T}_i}(A \cap B_i, B \cap B_i)$. We have from \Cref{thm:decrTreeCutSparsifier} that $\text{min-cut}_{\hat{T}_i}(A \cap B_i, B \cap B_i) \leq \gamma_{quality} \cdot \uu_{\hat{G}_i}(A \cap B_i, B \cap B_i)$. But for every edge $e$ in $\hat{G}_i$, we either have that $e \in I_{i-1} \setminus I_i$ and thus the edge is also present in $G_i$ with the same quality. Or, there is a path $P_e \in \mathcal{P}_{T_{i-1}, B_i}$ between the two endpoints of $e$ in $T_{i-1}$ where each edge on the path has capacity at least $\uu_{\hat{G}_i}(e)$. Since all of these paths $P_e$ are edge-disjoint, we have that for every edge $e \in E_{\hat{G}_i}(A \cap B_i, B \cap B_i)$, we have at least one edge in $E_{T_{i-1}}(A \cap B_i, B \cap B_i)$ on $P_e$ that has no less capacity than $\uu_{\hat{G_i}}(e)$. Thus, $\text{min-cut}_{\hat{G}_i}(A \cap B_i, B \cap B_i) \leq \uu_{G_i}(A, B) + \uu_{T_{i-1}}(A, B)$. Combining these inequalities, we obtain that 
\begin{align*}
    \text{min-cut}_{\hat{T}_i}(A, B) &\leq \gamma_{quality} (\uu_{G_i}(A, B) + \uu_{T_{i-1}}(A, B)) \\
    &\leq \gamma_{quality} (\uu_{G_i}(A, B) + q_{i-1} \cdot \uu_{G_{i-1}}(A, B)) \\
    &\leq \gamma_{quality} \cdot (q_{i-1} + 1) \cdot \uu_{G_i}(A, B)
\end{align*}
where we used in the second inequality the induction hypothesis on $T_{i-1}$ and in the final inequality that $G_{i-1} \subseteq G_i$.

This yields that $\text{min-cut}_{T_i}(A, B) \leq  (\gamma_{quality} + 1) \cdot (q_{i-1} + 1) \cdot \uu_{G_i}(A, B)$ where we have that $(\gamma_{quality} + 1) \cdot (q_{i-1} + 1) < 2\gamma_{quality} \cdot q_{i-1} = q_i$, as required.
\end{proof}

It now only remains to bound the runtime.

\begin{claim}[Runtime]
The algorithm has amortized update time $2^{O(\log^{3/4}(m)\log\log(m))}$.
\end{claim}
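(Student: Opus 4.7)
The plan is to charge the cost of the algorithm level by level and to show that each of the $L_{\max} = \lceil \log^{1/4}(\tilde m) \rceil$ levels contributes a total of roughly $\tilde m^{1 + 1/L_{\max}} \cdot \gamma_{quality}$ time. With $\tilde m^{1/L_{\max}} = 2^{\log^{3/4}(m)}$ and $\gamma_{quality} = 2^{O(\sqrt{\log m}\log\log m)}$, multiplying by the $L_{\max}$ levels gives total time $m \cdot 2^{O(\log^{3/4}(m)\log\log(m))}$, which after dividing by the number of updates yields the claimed amortized update time.

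To carry this out, first count the number of rebuilds: by construction level $i$ is rebuilt once every $\tilde m^{(L_{\max}-i)/L_{\max}}$ updates, hence it is rebuilt $\tilde m^{i/L_{\max}}$ times in total. Next, bound the size of the graph fed into the decremental data structure of \Cref{thm:decrTreeCutSparsifier} at each rebuild of level $i$. The inclusion $I_0 \supseteq I_1 \supseteq \cdots \supseteq I_{L_{\max}} = \emptyset$ together with the scheduling implies $|I_{i-1} \setminus I_i| \le \tilde m^{(L_{\max}-i+1)/L_{\max}}$, since any edge in this difference was inserted in the window $[t_{i-1}, t_i]$. By \Cref{thm:maintainBranchFreeSet} the branch-free set $B_i$ has size $O(|I_{i-1}\setminus I_i|)$, and by \Cref{def:inducedCoreGraph} the contracted graph $\hat G_i = \mathcal{C}(T_{i-1} \cup (I_{i-1}\setminus I_i), T_{i-1}, B_i)$ has $O(|B_i|)$ vertices and $O(|B_i|)$ edges (one per path of $\mathcal{P}_{T_{i-1}, B_i}$ plus the inserted edges). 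Between two rebuilds of level $i$ the graph $\hat G_i$ additionally undergoes at most $\tilde m^{(L_{\max}-i)/L_{\max}}$ decremental updates inherited from the updates to $G$ and from refinements of $T_{i-1}$ maintained by the lower level, which is absorbed by the same bound.

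Invoking \Cref{thm:decrTreeCutSparsifier} on $\hat G_i$ therefore costs at most $\tilde m^{(L_{\max}-i+1)/L_{\max}} \cdot 2^{O(\sqrt{\log m}\log\log m)}$ per rebuild (the hop-diameter bound and \Cref{thm:maintainBranchFreeSet} give only $\tilde O(1)$ overhead). Summing across the $\tilde m^{i/L_{\max}}$ rebuilds yields $\tilde m^{1+1/L_{\max}} \cdot 2^{O(\sqrt{\log m}\log\log m)}$ per level, and summing across all $L_{\max}$ levels gives
\[
L_{\max} \cdot \tilde m^{1+1/L_{\max}} \cdot 2^{O(\sqrt{\log m}\log\log m)} \;=\; m \cdot 2^{O(\log^{3/4}(m)\log\log(m))},
\]
matching the amortized bound after dividing by the $\tilde O(m)$ updates to $G$.

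The main delicacy I expect is bookkeeping the inputs to the decremental subroutine at each level. I need to verify that the graph $\hat G_i$ really only sees decremental updates, vertex splits, and self-loop insertions between rebuilds of level $i$: edge deletions in $G$ translate to deletions in $\hat G_i$, refinements of the lower-level sparsifier $T_{i-1}$ yield edge uncontractions (covered by \Cref{thm:maintainBranchFreeSet} and propagated as vertex splits), and edge-weight decreases in $T_{i-1}$ propagate as capacity decreases on the corresponding $\hat G_i$ edges; these are exactly the operations allowed by \Cref{thm:decrTreeCutSparsifier}. A secondary point is that the capacities appearing at level $i$ are scaled by a factor $2^i \le 2^{L_{\max}}$ relative to $G$, which remains polynomial, so the $\gamma_{quality}$ factor of \Cref{thm:decrTreeCutSparsifier} stays under control. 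Once these structural invariants are checked, the arithmetic above closes the bound.
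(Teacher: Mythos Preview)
Your proposal is correct and follows essentially the same approach as the paper: bound $|\hat G_i|$ by $O(\tilde m^{(L_{\max}-i+1)/L_{\max}})$ via the size of $I_{i-1}$ and \Cref{thm:maintainBranchFreeSet}, invoke \Cref{thm:decrTreeCutSparsifier} for the per-rebuild cost, multiply by the $\tilde m^{i/L_{\max}}$ rebuilds, and observe the product is $\tilde m^{1+1/L_{\max}} \cdot 2^{O(\sqrt{\log m}\log\log m)} = m \cdot 2^{O(\log^{3/4}(m)\log\log m)}$. The paper's proof is terser and omits your final paragraph of structural bookkeeping about which update types $\hat G_i$ sees; note in particular that \Cref{thm:decrTreeCutSparsifier} does not list capacity decreases among its allowed updates, so that sentence of yours would need adjusting, but this does not affect the runtime accounting.
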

\begin{proof}
The set $I_{i-1}$ contains at most $t - t_{i-1}$ edges at any time which can be bounded by $\tilde{m}_{i - 1}$ where $\tilde{m}_j := \tilde{m}^{(L_{\max} - j)/L_{max}}$ for $j = 0, \ldots L_{\max}$ by definition of $t_{i-1}$. 

But this implies that for level $i$, the set $V(I_{i-1}\setminus I_i)$ is of size at most $2 \tilde{m}_{i - 1}$, and thus the set $B_i$ is of size at most $4 \tilde{m}_{i - 1}$ (see  \Cref{thm:maintainBranchFreeSet}). Since each graph $\hat{G}_i$ consists only of a forest supported on the vertices in $B_i$, and the images of edges in $I_{i-1}\setminus I_i$ under the contractions (see \Cref{def:inducedCoreGraph}), we can bound the number of edges in $\hat{G}_i$ at any time by $4 \tilde{m}_{i - 1} - 1 + \tilde{m}_{i - 1} \leq 5 \tilde{m}_{i - 1}$. 

Thus, the runtime of the decremental tree cut sparsifier run on the graph $\hat{G}_i$ has total update time $2^{O(\sqrt{\log \tilde{m}_{i - 1}})} \cdot \tilde{m}_{i - 1}$ in-between rebuilds by \Cref{thm:decrTreeCutSparsifier}. Since $\widehat{G}_i$ gets re-built after $\tilde{m}_i$ updates and there are $\tilde{m}$ updates in total, the total time spend by all decremental tree cut sparsifier data structures for $\widehat{G}_i$ is at most $\tilde{m}_{i-1} \cdot 2^{O(\sqrt{\log \tilde{m}_{i - 1}})}  \cdot \frac{\tilde{m}}{\tilde{m}_i} = 2^{O(\sqrt{\log \tilde{m}_{i - 1}})} \cdot \tilde{m}^{1 + 1/L_{max}} = 2^{O(\log^{3/4}\tilde{m})} \cdot \tilde{m}$ since $L_{\max} = \ceil{\log^{3/4} m}$. The time to implement the remaining operations of our algorithm is asymptotically subsumed by this bound.
\end{proof}

\paragraph{Extending \Cref{thm:mainTreeSparsifier} to Maintain Cut Edges.} Finally, we will prove \Cref{rem:directed_layer_graph} and \Cref{rem:directed_layer_graph_det}.

To this end, we maintain the layer graph $L$ with levels $i \in [0, k]$ where we choose $k = L_{max} + 1$. We define $\hat{G}_0 = G$ and $\hat{T}_0 = T_0$, and recall the definition of $\hat{G}_i$ for $i > 0$ to be $\hat{G}_i = \mathcal{C}(T_{i-1} \cup (I_{i-1}\setminus I_i), T_{i-1}, B_i)$. From the runtime analysis, we have that every graph $\hat{G}_i$ undergoes at most $m \cdot 2^{\log^{3/4}(m) \log\log(m)}$ updates over the entire course of the algorithm.

Now, we maintain for the graph $L$ the vertex set $V_i$ at level $i$ to be in one-to-one correspondence with the edges of $\hat{G}_i$. We then add for every edge $e = (u,v) \in E(\hat{G}_i)$ and edge $e' \in \hat{T}_i[u,v]$, an edge from the vertex $v_{e'} \in V_{i+1}$ (in one-to-one correspondence with $e'$) to the vertex $v_e \in V_i$ (in one-to-one correspondence with $e$) to the graph $L$. 

For the analysis, let us first observe that the edges of $\hat{G}_0$ are exactly the edges in $G$ and thus $V_0$ is in one-to-one correspondence with the edges in $G$ as required. We further use that the hop diameter of every tree $\hat{T}_i$ is at most $O(\log m)$ by \Cref{thm:decrTreeCutSparsifier}. This implies that the out-degree of every vertex in $L$ is at most $O(\log m)$. Finally, we observe that each graph $\hat{G}_i$ undergoes at most $m \cdot 2^{\log^{3/4}(m)\log\log(m)}$ updates which follows trivially from our runtime analysis and the fact that each such graph is maintained explicitly. It remains to observe that once an edge $e \in E(\hat{G}_i)$ is embedded into an edge $e' \in \hat{T}_i$, it remains embedded until the end of the algorithm or until $e$ is deleted. And since each edge $e'$ that $e$ embeds into can be detected in constant time:
\begin{itemize}
    \item if $e$ is newly inserted, then it suffices to walk towards the root of $\hat{T}_i$ (which we can root arbitrarily for this purpose such that vertices in $\hat{G}_i$) from both endpoints of $e$ to detect all edges that $e$ currently embeds into. By walking in parallel and aborting once the two explorations meet, this operation can be implemented in constant time per detected edge, or
    \item if $\hat{T}_i$ is undergoing an un-contraction (see \Cref{thm:decrTreeCutSparsifier}), then for the newly created edge $e'$, it suffices to copy the set of edges embedded into $e''$, where $e''$ is the edge that is incident to $e'$ and closer to the root of $\hat{T}_i$.
\end{itemize}
This yields both runtime and recourse bounds for the maintenance of graph $L$, as required.

\paragraph{Bounding the hop-diameter. } Since the decremental trees have depth $O(\log n)$, composing $L_{max} + 1 = O(\log^{1/4} m)$ such trees as described above yields a tree of depth $\tilde{O}(1)$. This proves \Cref{rem:DiamTreeSparsifier}. 

\subsection{A Deterministic Algorithm to Maintain Fully Dynamic Tree Cut Sparsifiers}
\label{subsec:detTreeCutSparsifiers}

We finally describe how to derandomize our result. We first note that all algorithmic reductions presented in this paper are already deterministic. Thus, the only algorithm that uses randomization in our data structure above is the algorithm from \Cref{thm:uncapExpDecomp}. We note that a deterministic version of  \Cref{thm:uncapExpDecomp} was already given in  \cite{hua2023maintaining} with only subpolynomially worse runtime and approximation guarantees. Here, however, we describe how to derandomize \Cref{thm:uncapExpDecomp} more directly to obtain sligthly better subpolynomial factors. We note that both \cite{hua2023maintaining, sulser2024} work even in the directed setting while we describe a derandomization for the undirected setting only.

The algorithm from \Cref{thm:uncapExpDecomp} given in \cite{sulser2024} in turn is also deterministic except for $\tilde{O}(1/\phi)$ invocations of the static cut-matching game algorithm from \cite{khandekar2009graph} (the algorithm from \cite{sulser2024} in fact uses a generalization of  \cite{khandekar2009graph} to directed graphs, however, since we only work with undirected graphs, using \cite{khandekar2009graph} in their algorithm is sufficent for our purposes). This algorithm obtains approximation guarantees of $\tilde{O}(1)$ and runtime $\tilde{O}(m/\phi)$ on a graph with $m$ edges. Recently, this algorithm was derandomized in \cite{chuzhoy2020deterministic}, however, the authors obtained a slightly weaker result: their approximation guarantee is $e^{O(\log^{1/3}(m)\log\log m)}$ while their runtime is $m \cdot e^{O(\log^{2/3}(m)\log\log m)} / \phi^2$ (this is implied in particular by Theorem 5.3 in \cite{chuzhoy2020deterministicArxiv}, the second ArXiv version of \cite{chuzhoy2020deterministic}). 

Using this algorithm internally in the framework from \cite{sulser2024}, we obtain a deterministic algorithm implementing \Cref{thm:uncapExpDecomp} with $c_0 = e^{O(\log^{1/3}(m)\log\log m)}$, $c_1 = e^{O(\log^{1/3}(m)\log\log m)}$ and total update time $m \cdot e^{O(\log^{2/3}(m)\log\log m)} / \phi^3$. We thus obtain a deterministic version of \Cref{thm:weightedExpDecomp} with the same values for $c_0$ and $c_1$ and total update time $m \cdot e^{O(\log^{2/3}(m)\log\log m)} / \phi^4$.

By carefully re-parameterizing the algorithm in \Cref{subsec:decrTreeSpars} to use $\phi = 1/2^{\log^{2/3}(m)}$, we obtain that the number of levels of the expander hierarchy can be bounded by $O(\log^{1/3}(m)\log\log(m))$. We thus recover a quality of the final tree cut sparsifier of $\gamma_{quality} = 2^{O(\log^{2/3}(m)\log\log(m))}$ and a runtime of $m \cdot e^{O(\log^{2/3}(m)\log\log m)}$. The bound on the hop diameter of the tree cut sparsifier $T$ is again $O(\log m)$.

Finally, we use our deterministic algorithm to maintain a tree cut sparsifier of a decremental graph in lieu of the randomized algorithm, and re-parametrizing the algorithm in \Cref{subsec:fullydyntoDecrTreeSpars} to only use $L_{max} = \lceil \log^{1/6}(\tilde{m}) \rceil$ levels. This yields \Cref{thm:det_mainTreeSparsifier}.

\section{Dynamic Min-Ratio Cut}
\label{sec:min_ratio}

In this section, we build a data structure that allows us to toggle along approximate min-ratio cuts in a fully dynamic graph $G = (V, E, \uu, \gg)$. Unlike the previous section, every vertex $v$ now has an extra associated value: a gradient $\gg(v) \in \R$. When we compute a tree-cut sparsifier on such a graph $G = (V, E, \uu, \gg)$ it simply ignores these vertex gradients.  We first define the min-ratio cut problem. 

\begin{restatable}[Min-Ratio Cut]{definition}{minRatioCut}
    \label{def:minRatioCut}
    For a graph $G = (V, E, \uu, \gg)$, we refer to $\min_{\bDelta \in \R^V} \frac{\langle \gg , \bDelta  \rangle}{\norm{\UU \BB \bDelta}_1}$ as the min-ratio cut problem. 
\end{restatable}

Then, we define the main data structure this section is concerned with. Together with the interior point method in \Cref{sec:IPM}, this data structure is what enables us to solve decremental threshold min-cost flow.

\begin{restatable}[Min-Ratio Cut Data Structure]{definition}{MinRatioCutDSdef}    
    \label{def:min_ratio_cut_data_structure}
    For a dynamic graph $G = (V, E, \uu, \gg)$ where $\uu \in \R^E$ and $\gg \in \R^V$, $\gg \perp \vecone$, such that $\uu(e) \in [1, U]$ and $\gg(v) \in [-U, U]$ for $\log U = \tilde{O}(1)$, an initial potential vector $\yy \in R^V$, and a detection threshold parameter $\epsilon$, a $\alpha$-approximate min-ratio cut data structure $\mathcal{D}$ supports the following operations.
    \begin{itemize}
        \item $\textsc{InsertEdge}(e), \textsc{DeleteEdge}(e)$: Inserts/deletes edge $e$ to/from $G$ with capacity $\uu(e)$.  
        \item $\textsc{UpdateGradient}(u, v, \delta)$: Updates $\gg(u) = \gg(u) + \delta$ and $\gg(v) = \gg(v) - \delta$. 
        \item $\textsc{InsertVertex}(v)$: Inserts isolated vertex $v$ to $G$.
        \item $\textsc{Potential}(v)$: Returns $\yy(v)$.
    \end{itemize}
    After every update the data structure $\mathcal{D}$ the data structure returns a tuple $(g, u)$ where $g \in \R_{\leq 0}$ and $u \in \R_{\geq 0}$ such that for some implicit cut $\vecone_{C}$ for $C \subseteq V$ we have $\l \gg, \vecone_{C} \r = g$ and $\norm{\UU \BB \vecone_{C}}_1 \leq u$, and 
    \begin{align*}
        \frac{g}{u} \leq \frac{1}{\alpha} \min_{\bDelta \in \R^V} \frac{\langle \gg, \bDelta \rangle}{\norm{\UU \BB \bDelta}_1}.
    \end{align*}
    The data structure additionally allows updates of the following type based on the most recently returned tuple $(g, u)$.
    \begin{itemize}
        \item $\textsc{ToggleCut}(\eta)$: Given a parameter $\eta \leq 1/u$, the data structure 
        implicitly updates $\yy$ with $\yy^{(new)}$ such that $\BB\yy^{(new)} = \BB\yy + \eta \BB \vecone_{C}$. 

        Then, the data structure returns some edge set $E'$ such that every edge $e = (u,v)$ for which $\uu(e)(\yy(u) - \yy(v))$ has changed by at least $\epsilon$ since it was inserted/last returned in $E'$.
    \end{itemize}
\end{restatable}

We then state two separate theorems showing that there is both a randomized and a deterministic algorithm implementing a min-ratio cut data structure.

\begin{restatable}{theorem}{MinRatioCutDSrand}
    \label{thm:min_ratio_cut_data_structure_rand}
    There is a randomized min-ratio cut data structure (\Cref{def:min_ratio_cut_data_structure}) given a graph $G = (V, E, \uu, \gg)$ and $\epsilon$ for $\alpha = 2^{O(\log^{3/4} m \log \log m)}$ such that every update/query is processed in amortized time $2^{O(\log^{3/4} m \log \log m)} \log U$. Furthermore, the total number of edges returned by the algorithm after $t$ calls to $\textsc{ToggleCut}$ is at most $2^{O(\log^{1/4} \log \log m)} \cdot t/\epsilon$. The algorithm works against an adaptive adversary and succeeds with high probability. 
\end{restatable}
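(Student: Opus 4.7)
The plan is to combine the fully-dynamic tree cut sparsifier of \Cref{thm:mainTreeSparsifier} and its directed layer graph $H$ from \Cref{rem:directed_layer_graph} with a tree data structure that tracks the best single-edge tree cut, and to implement \textsc{ToggleCut} using a standard push-based detection routine on $H$. The entire approximation factor $\alpha$ will come from the tree-cut-sparsifier quality $q = 2^{O(\log^{3/4}(m)\log\log m)}$; every other component only incurs polylogarithmic overhead per tree update. Specifically, I forward all edge/vertex insertions and deletions to \Cref{thm:mainTreeSparsifier}, and extend the gradient $\gg$ from $V$ to $V(T)$ by assigning $0$ to every Steiner vertex, which preserves $\gg \perp \vecone$ and ensures $\langle \gg, \vecone_C\rangle = \langle \gg, \vecone_{C\cap V}\rangle$ for every $C \subseteq V(T)$. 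Combined with \Cref{def:tree_cut_sparsifer}, this reduces min-ratio cut on $G$ (up to a factor of $q$) to min-ratio cut on $T$.

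For min-ratio cut on a tree I invoke the telescoping argument outlined in \Cref{sec:overview} via equation~\eqref{eq:one_less_edge}: iterating the inequality $\min(a/b,c/d)\le(a+c)/(b+d)$ peels a leaf component off any multi-edge cut without worsening the ratio, so the optimum is always attained by a single tree edge. Rooting $T$ arbitrarily, the ratio associated with a tree edge $e_T$ with lower endpoint $v$ is $\mathrm{subtreeSum}(\gg,v)/\uu_T(e_T)$, and a gradient update at a pair $(u,w)$ or an insertion/deletion of a tree edge perturbs these sums only along the two root paths from the affected endpoints. Because $T$ has hop-diameter $\tilde{O}(1)$ by \Cref{rem:DiamTreeSparsifier}, this amounts to refreshing $\tilde{O}(1)$ stored ratios per update, which I maintain in a heap keyed by ratio. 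Combined with the $m\cdot 2^{O(\log^{3/4}(m)\log\log m)}$ total tree updates generated by \Cref{thm:mainTreeSparsifier}, the accumulated cost matches the claimed amortized update time.

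To implement \textsc{ToggleCut}, I exploit that the cut defined by the chosen tree edge $e_{T}^{*}$ corresponds to a single vertex $v_{e_T^*}$ in $H$, and by \Cref{def:directed_layer_graph} the $V_0$-vertices reachable from $v_{e_T^*}$ form a superset of the true cut edges whose total capacity is at most $q$ times the truth. I represent $\yy$ implicitly by lazy shifts attached to the vertices of $H$ and answer \textsc{Potential}$(v)$ by summing shifts along the $\tilde{O}(1)$ many ancestors in $H$ of any $V_0$-vertex incident to $v$, exploiting the bounded in-degree of $H$. To find edges whose capacitated potential drop has shifted by at least $\epsilon$, I plug in the standard push-based detection data structure of \Cref{sec:detection}: each \textsc{ToggleCut} with step $\eta \le 1/u$ deposits $\ell_1$-mass at most $\eta u \le 1$ at $v_{e_T^*}$, which then propagates through the $k = O(\log^{1/4}(m)\log\log m)$ layers of $H$, so after $t$ toggles the total number of reported edges is $O(k\cdot t/\epsilon) = 2^{O(\log^{1/4}(m)\log\log m)} \cdot t/\epsilon$, as claimed.

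The main technical obstacle is the last step, namely coordinating the lazy potential shifts stored at various levels of $H$ with the detection scheme so that the recourse is charged against the $\ell_1$-mass $\eta u \le 1$ times the layer depth, rather than against the potentially much larger set of $V_0$-edges reachable from $v_{e_T^*}$. The remaining ingredients — the single-edge reduction on trees, heap maintenance under $\tilde{O}(1)$-hop tree updates, and the invocation of the tree cut sparsifier — are routine consequences of results already established in the paper.
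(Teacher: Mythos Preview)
Your proposal follows the same architecture as the paper --- reduce min-ratio cut to single tree edges via \eqref{eq:one_less_edge} (this is \Cref{lem:tree_cut_suffices}), maintain the best tree-edge ratio in a heap exploiting the $\tilde{O}(1)$ hop-diameter of $T$, and implement detection through the directed layer graph $H$ and the routine of \Cref{sec:detection}. There is, however, one genuine gap: you forward all updates to a \emph{single} instance of \Cref{thm:mainTreeSparsifier}, but that theorem requires $\uu \in [1, m^{O(1)}]^E$, whereas \Cref{def:min_ratio_cut_data_structure} allows $\log U = \tilde{O}(1)$ (and the IPM of \Cref{thm:dualL1IPM} does produce quasi-polynomial slacks). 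The paper closes this gap by maintaining $\log U$ parallel rounded graphs $G_i$, where at level $i$ capacities are divided by $n^i$, rounded up, and clipped at $n^{10}$; edges still exceeding $n^{10}$ receive capacity $n^{20}$ and any tree cut containing such an edge is discarded. One then argues that for the level $i$ at which the heaviest edge in an optimal cut $C$ first drops below $n^{10}$, the rounded capacities $2$-approximate $\uu_G(E(C, V\setminus C))$ after rescaling by $n^i$ (the lighter edges with $\uu(e) < n^i$ contribute at most $n^2\cdot 1$ in total and are dominated by the heaviest edge). Hence some level's tree cut sparsifier captures the min-ratio cut up to $O(q)$. This multi-level structure is precisely the source of the $\log U$ factor in the stated amortized runtime and of the per-level significance $s(e) = \epsilon/(\uu(e)\log U)$ feeding into the recourse bound.

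A secondary issue: your plan to answer $\textsc{Potential}(v)$ by summing lazy shifts along ancestors of a $V_0$-vertex in $H$ does not quite work, because the leaves of $H$ correspond to edges of $G$, so what you recover is a potential \emph{difference} $\yy(u)-\yy(w)$ rather than $\yy(v)$. The paper instead tracks $\yy$ via a link-cut tree on each $T_i$ (subtree-add on a toggle, point-query on a $\textsc{Potential}$ call). With these two corrections your outline coincides with the paper's proof.
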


\begin{restatable}{theorem}{MinRatioCutDSdet}
    \label{thm:min_ratio_cut_data_structure_det}
    There is a deterministic min-ratio cut data structure (\Cref{def:min_ratio_cut_data_structure}) given a graph $G = (V, E, \uu, \gg)$ and $\epsilon$ for $\alpha = 2^{O(\log^{5/6} m \log \log m)}$ such that every update/query is processed in amortized time $2^{O(\log^{5/6} m \log \log m)} \log U$. Furthermore, the total number of edges returned by the algorithm after $t$ calls to $\textsc{ToggleCut}()$ is at most $2^{O(\log^{1/6} \log \log m)} \cdot t/\epsilon$. 
\end{restatable}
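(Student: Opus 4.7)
The plan is to follow the proof of \Cref{thm:min_ratio_cut_data_structure_rand} essentially verbatim, but substitute the deterministic tree cut sparsifier from \Cref{thm:det_mainTreeSparsifier} (and the accompanying directed layer graph from \Cref{rem:directed_layer_graph_det}) for the randomized construction of \Cref{thm:mainTreeSparsifier,rem:directed_layer_graph}. Since every other ingredient in the construction of the min-ratio cut data structure is already deterministic (the reduction of the min-ratio cut on $G$ to a min-ratio cut on $T$, the observation that an optimal tree cut is single-edge, the detection machinery over $H$, and the tree data structures themselves), this swap is essentially all that is required; the remainder is a parameter calculation.

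First, I would maintain a tree cut sparsifier $T$ of $G$ using \Cref{thm:det_mainTreeSparsifier}, giving quality $q = 2^{O(\log^{5/6}(m)\log\log m)}$ and hop diameter $\tilde O(1)$ by \Cref{rem:DiamTreeSparsifier}. By the telescoping argument in \eqref{eq:one_less_edge}, an optimal min-ratio cut of $T$ is attained on some single tree edge $e_T$, whose removal induces a bipartition $(A, B)$ of $V \subseteq V(T)$. For each such single-edge cut the gradient part $\langle \gg, \vecone_A\rangle$ is maintained directly on $T$, while a constant-factor approximation of the capacity $\|\UU\BB\vecone_A\|_1$ is read off from the directed layer graph $H$ of \Cref{rem:directed_layer_graph_det}, which per \Cref{def:directed_layer_graph} overestimates the true $G$-cut capacity by at most $q$. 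Consequently the ratio selected is a $q^2 = 2^{O(\log^{5/6}(m)\log\log m)}$-approximation to the optimal min-ratio cut on $G$, which is exactly the claimed $\alpha$.

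Second, I would maintain these per-tree-edge ratios under $\textsc{InsertEdge}$, $\textsc{DeleteEdge}$, $\textsc{UpdateGradient}$, and $\textsc{InsertVertex}$ using standard tree data structures. Because $T$ has hop diameter $\tilde O(1)$, each such update changes the ratio of only $\tilde O(1)$ tree edges (those on $T[u,v]$ for the endpoints involved), so a simple heap indexed by tree edges suffices to return the best-ratio edge in amortized $\tilde O(1)$ time per update. Combined with the amortized $2^{O(\log^{5/6}(m)\log\log m)}$ recourse per update in $T$ and $H$ supplied by \Cref{thm:det_mainTreeSparsifier,rem:directed_layer_graph_det}, the total amortized update time is $2^{O(\log^{5/6}(m)\log\log m)}\log U$.

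Third, for $\textsc{ToggleCut}(\eta)$ I would implement the potential change lazily via pending additive updates propagated along $H$: after a toggle on tree edge $e_T$, every edge $e \in E'_{e_T}$ (reached from $v_{e_T}$ in $H$) has its potential difference changed by $\eta$ times a signed tree-path contribution. Per-edge counters track accumulated potential change, and an edge is emitted and its counter reset once the threshold $\epsilon$ is exceeded. Using a standard detection data structure built on $H$ (essentially the one from \Cref{sec:detection}), we only touch a vertex of $H$ when its subtree contains an edge above threshold. Since $H$ has $k = O(\log^{1/6}(m)\log\log m)$ layers and in-degree $\tilde O(1)$, each toggle can route $1/\epsilon$ units of ``potential work'' through at most $k$ layers before a report is forced, giving total recourse $2^{O(\log^{1/6}(m)\log\log m)}\cdot t/\epsilon$ after $t$ toggles, matching the claimed bound.

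The main obstacle, as in the randomized case, is making the lazy propagation on $H$ robust to the ongoing insertions, deletions, and vertex insertions of $H$ forced by updates to the underlying dynamic sparsifier: whenever a vertex of $H$ is removed or inserted, its pending updates must be flushed correctly so that edge counters remain accurate. This is handled by piggy-backing the flush onto the $m\cdot 2^{O(\log^{5/6}(m)\log\log m)}$ recourse budget of $H$ supplied by \Cref{rem:directed_layer_graph_det}, which does not affect the headline bounds. Everything else — the choice of $k$, the per-tree-edge ratio data structure, and the reduction of min-ratio cut to single-edge tree cuts — is identical to the randomized proof, so the claimed parameters follow by the same calculation with $q$ and $k$ replaced by their deterministic counterparts.
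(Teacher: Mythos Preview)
Your high-level plan --- swap the randomized tree cut sparsifier for the deterministic one from \Cref{thm:det_mainTreeSparsifier,rem:directed_layer_graph_det} and otherwise re-run the argument for \Cref{thm:min_ratio_cut_data_structure_rand} --- is exactly what the paper does; indeed the paper's entire proof of \Cref{thm:min_ratio_cut_data_structure_det} is a one-line remark to this effect.

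However, your sketch of the underlying construction deviates from the paper in two ways, one of which is a real gap. First, you apply \Cref{thm:det_mainTreeSparsifier} directly to $G$, but that theorem requires capacities in $[1,m^{O(1)}]$, whereas \Cref{def:min_ratio_cut_data_structure} only promises $\log U=\tilde O(1)$ (and the IPM indeed produces quasi-polynomial capacity ratios). The paper handles this by maintaining $\log U$ parallel instances on rounded graphs $G_i$ (capacities $\lceil \uu(e)/n^i\rceil$, with very large edges effectively contracted), running a separate sparsifier and detection algorithm on each, and then picking the best rescaled tree cut across levels. This is where the $\log U$ factor in the runtime comes from; your proposal states the factor but does not account for it.

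Second, you propose to read the denominator $\|\UU\BB\vecone_A\|_1$ off the directed layer graph $H$. The paper does not do this: it uses the tree edge capacity $\uu_{T_i}(e_{T_i})$ (rescaled by $n^i$) directly as the estimate $u$, relying on \Cref{lem:tree_cut_suffices} to certify that some single-edge tree cut is $q$-competitive. This is cheaper and cleaner --- the ratio of every tree edge is maintained explicitly using the $\tilde O(1)$ hop diameter, updating only the edges on $T[u,v]$ when $u,v$ are touched --- and it gives approximation $O(q)$ rather than your $q^2$ (asymptotically the same, but conceptually simpler). The directed layer graph $H$ is used solely for detection, not for estimating cut capacities.
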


\subsection{Toggling Min-Ratio Cuts on a Tree Cut Sparsifier}

Before we prove \Cref{thm:min_ratio_cut_data_structure_rand} and the deterministic version \Cref{thm:min_ratio_cut_data_structure_det}, we show that a cut is a solution to the min-ratio cut problem, which explains the nomenclature. 

\begin{lemma}
    \label{lem:cut_suffices}
    $\min_{C \subset V} \frac{\langle \gg, \vecone_C \rangle}{\norm{\UU \BB \vecone_C}_1} = \min_{\bDelta \in \R^V} \frac{\langle \gg, \bDelta \rangle}{\norm{\UU \BB \bDelta}_1}$ 
\end{lemma}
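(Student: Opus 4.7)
The inequality $\min_{C \subset V} \frac{\langle \gg, \vecone_C\rangle}{\|\UU\BB\vecone_C\|_1} \ge \min_{\bDelta \in \R^V} \frac{\langle \gg, \bDelta\rangle}{\|\UU\BB\bDelta\|_1}$ is immediate, since indicator vectors of proper subsets are a restricted family of $\bDelta$'s. The plan for the reverse direction is a standard layer-cake (coarea) argument: I will show that an arbitrary feasible $\bDelta$ can be written as a nonnegative combination of indicator vectors of \emph{nested} cuts, and that both the numerator and denominator of the ratio decompose linearly along this combination. Once that is done, the usual inequality $\min_i a_i/b_i \le (\sum_i a_i)/(\sum_i b_i)$ for $b_i \ge 0$ (or its integral analogue) gives that some level-set cut attains a ratio no larger than that of $\bDelta$.

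Concretely, fix any $\bDelta \in \R^V$ with $\|\UU\BB\bDelta\|_1 > 0$. Using the standing assumption $\gg \perp \vecone$ (which is part of the setup in Section~\ref{sec:overview} and Definition~\ref{def:min_ratio_cut_data_structure}), I may shift $\bDelta$ by a constant without changing either the numerator or the denominator; shift so that $\min_v \bDelta(v) = 0$. For $t \ge 0$ define the super-level set $S_t := \{v : \bDelta(v) > t\}$. Each $S_t$ is a (possibly empty) proper subset of $V$, and the sets are nested: $S_{t'} \subseteq S_t$ whenever $t \le t'$. The layer-cake identity gives $\bDelta(v) = \int_0^\infty \vecone_{S_t}(v)\,dt$, hence $\langle \gg, \bDelta\rangle = \int_0^\infty \langle \gg, \vecone_{S_t}\rangle\,dt$ by linearity.

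The crucial step is the same identity for the denominator. For each edge $e=(u,v)$, the function $t \mapsto \vecone_{S_t}(u) - \vecone_{S_t}(v)$ takes only the values $0, +1, -1$, and because the level sets are nested, its sign is constant in $t$ (it agrees with the sign of $\bDelta(u)-\bDelta(v)$ wherever it is nonzero). Therefore one may pull the absolute value through the integral:
\[
\uu_e|\bDelta(u)-\bDelta(v)| \;=\; \uu_e\Big|\!\int_0^\infty (\vecone_{S_t}(u)-\vecone_{S_t}(v))\,dt\Big| \;=\; \int_0^\infty \uu_e\,|\vecone_{S_t}(u)-\vecone_{S_t}(v)|\,dt.
\]
Summing over edges yields $\|\UU\BB\bDelta\|_1 = \int_0^\infty \|\UU\BB\vecone_{S_t}\|_1\,dt$. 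Thus
\[
\frac{\langle \gg, \bDelta\rangle}{\|\UU\BB\bDelta\|_1} \;=\; \frac{\int_0^\infty \langle \gg, \vecone_{S_t}\rangle\,dt}{\int_0^\infty \|\UU\BB\vecone_{S_t}\|_1\,dt} \;\ge\; \inf_{t:\,\|\UU\BB\vecone_{S_t}\|_1>0}\frac{\langle \gg, \vecone_{S_t}\rangle}{\|\UU\BB\vecone_{S_t}\|_1},
\]
by the elementary ``weighted average dominates the minimum'' inequality applied to a nonnegative measure. Choosing $t^\star$ attaining (or approaching) this infimum and setting $C = S_{t^\star}$ gives a proper subset with $\frac{\langle \gg, \vecone_C\rangle}{\|\UU\BB\vecone_C\|_1} \le \frac{\langle \gg, \bDelta\rangle}{\|\UU\BB\bDelta\|_1}$. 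Taking the infimum over $\bDelta$ on the right yields the other direction and completes the proof.

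The only subtlety I foresee is handling degenerate level sets (where $\|\UU\BB\vecone_{S_t}\|_1 = 0$, which happens on a set of $t$'s whose contribution to the denominator integral is zero) and confirming that shifting $\bDelta$ is allowed; both are routine given $\gg \perp \vecone$ and the hypothesis $\|\UU\BB\bDelta\|_1>0$. No deep ingredient is needed beyond the coarea-style rewriting.
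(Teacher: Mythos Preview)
Your proposal is correct and follows essentially the same approach as the paper: both take super-level sets $S_t=\{v:\bDelta(v)>t\}$, use the layer-cake decomposition so that the numerator and denominator split as integrals over $t$ (with the sign-constancy observation handling the absolute values edge-by-edge), and then apply the ``minimum of ratios is at most the ratio of sums'' inequality. The only cosmetic difference is that the paper phrases the layer-cake step probabilistically (pick $t$ uniformly in $[0,1]$ after scaling $\bDelta$ so its entries lie in $[0,1]$), whereas you write it as an explicit integral over $[0,\infty)$.
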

\begin{proof}
    First observe that since the right hand side is a minimum over all $\bDelta,$ thus the minimum objective value achieved must be negative. Further, considering $\bDelta = \vecone_C$ for all $C \subseteq V$ gives us that the right hand side is less than or equal to left hand side.
    To establish that left hand side is less than equal to right hand side, we consider a vector $\bDelta^\star$ minimizing the right hand side. Without loss of generality, we assume that the minimum entry of $\bDelta^\star$ is $0$ and the maximum entry is $1$ by shifting and scaling.

    Let $t$ be a random variable uniformly distributed on $[0,1].$ Let $C_t$ denote the set $\{v \in V| \bDelta^{\star}(v) > t\}.$
    Observe that $\expct{t}{\vecone_{C_t}} = \bDelta^{\star}.$ By linearity of expectations, we have $\expct{t}{\langle \gg, \vecone_{C_t} \rangle} = \langle \gg,  \bDelta^{\star} \rangle.$ 

    Moreover, $\expct{t}{\norm{\UU \BB \vecone_{C_t}}_1} = \sum_e \uu_e \expct{t}{\abs{\chi_e^{\top} \vecone_{C_t}|}}.$ Observe that for each edge $e,$ $\chi_e^{\top} \vecone_{C_t}$ has the same sign for all $t.$ Thus, $\expct{t}{\abs{\chi_e^{\top} \vecone_{C_t}|}} = \abs{\expct{t}{\chi_e^{\top} \vecone_{C_t}|}} = \abs{\chi_e^{\top}\bDelta^{\star}},$ and hence $\expct{t}{\norm{\UU \BB \vecone_{C_t}}_1} = \norm{\UU \BB \bDelta^{\star}}_1.$
    Thus, we have,
    \begin{align*}
        \frac{\langle \gg,  \bDelta^{\star} \rangle}{\norm{\UU \BB \bDelta^{\star}}_1} = \frac{\expct{t}{\langle \gg, \vecone_{C_t} \rangle}}{\expct{t}{\norm{\UU \BB \vecone_{C_t}}_1}}.
    \end{align*}
    Hence there exists a $t$ where $\norm{\UU \BB \vecone_{C_t}}_1 \neq 0$ and 
    $\frac{{\langle \gg, \vecone_{C_t} \rangle}}{{\norm{\UU \BB \vecone_{C_t}}_1}} \le \frac{\langle \gg, \bDelta^{\star} \rangle}{\norm{\UU \BB \bDelta^{\star}}_1}$ by the well known fact that $\min_{i \in [n]} \aa(i)/\bb(i) \leq \sum_{i = 1}^n \aa(i)/\sum_{i = 1}^n \bb(i)$ for $\aa \in \R^n$ and $\bb \in \R_{> 0}^n$.  
    This concludes our proof. 
\end{proof}

Next, we show that given a tree-cut sparsifier $T$ of $G$, there exists a tree cut that corresponds to an approximate min-ratio cut.

\begin{lemma}
\label{lem:tree_cut_suffices}
Given a tree cut sparsifier $T = (V(T), E(T), \uu_T)$ of quality $q$ of a graph $G = (V, E, \uu, \gg)$ there exists an edge $e_T \in E(T)$ that induces a cut $(C, V(T) \setminus C)$ such that 
\begin{align*}
    \frac{\langle \gg, \vecone_{C \cap V} \rangle}{\uu_T(e_T)} \leq \frac{1}{q}\min_{\bDelta \in \R^V} \frac{\langle \gg, \bDelta \rangle}{\norm{\UU \BB \bDelta}_1}
\end{align*}
\end{lemma}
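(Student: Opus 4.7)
The plan is to lift an optimal min-ratio cut of $G$ to a cut on the tree cut sparsifier $T$, and then collapse that tree cut down to a single tree edge. First, by \Cref{lem:cut_suffices} we may fix $C^\star \subseteq V$ with
\[
\frac{\langle \gg, \vecone_{C^\star}\rangle}{\|\UU\BB\vecone_{C^\star}\|_1} \;=\; \min_{\bDelta\in\R^V}\frac{\langle \gg,\bDelta\rangle}{\|\UU\BB\bDelta\|_1} \;=:\; \mathrm{OPT} \;\le\; 0,
\]
and note that $\|\UU\BB\vecone_{C^\star}\|_1 = \uu_G(E_G(C^\star,V\setminus C^\star))$. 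Since $(C^\star,V\setminus C^\star)$ is a feasible partition, $\mincut_G(C^\star,V\setminus C^\star) \le \uu_G(E_G(C^\star,V\setminus C^\star))$. We also extend $\gg$ to $\tilde{\gg}\in \R^{V(T)}$ by setting $\tilde{\gg}(x)=0$ for $x\in V(T)\setminus V$; this keeps $\tilde{\gg}\perp \vecone_{V(T)}$ because $\gg\perp\vecone_V$, and it preserves the numerator $\langle \tilde{\gg}, \vecone_S\rangle = \langle \gg,\vecone_{S\cap V}\rangle$ for any $S\subseteq V(T)$.

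Next, use the tree cut sparsifier property to realize the min-cut of $T$ separating $C^\star$ from $V\setminus C^\star$ by some $\hat C$ with $C^\star\subseteq \hat C\subseteq V(T)\setminus(V\setminus C^\star)$. By \Cref{def:tree_cut_sparsifer},
\[
\uu_T(E_T(\hat C,V(T)\setminus \hat C)) \;=\; \mincut_T(C^\star,V\setminus C^\star) \;\le\; q\cdot\mincut_G(C^\star,V\setminus C^\star)\;\le\; q\cdot\|\UU\BB\vecone_{C^\star}\|_1 .
\]
Since $\mathrm{OPT}\le 0$, dividing the non-positive quantity $\langle\gg,\vecone_{C^\star}\rangle = \langle\tilde{\gg},\vecone_{\hat C}\rangle$ by a smaller positive denominator only makes the ratio smaller, so
\[
\frac{\langle\tilde{\gg},\vecone_{\hat C}\rangle}{\uu_T(E_T(\hat C,V(T)\setminus \hat C))} \;\le\; \frac{\langle\gg,\vecone_{C^\star}\rangle}{q\cdot\|\UU\BB\vecone_{C^\star}\|_1} \;=\; \frac{\mathrm{OPT}}{q}.
\]

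Finally, I reduce $\vecone_{\hat C}$ to a single tree edge cut without losing ratio, using the inductive trick in \eqref{eq:one_less_edge}. Let $k$ be the number of tree edges in $E_T(\hat C,V(T)\setminus\hat C)$. If $k\ge 2$, pick a connected component $C'$ of $T\setminus E_T(\hat C,V(T)\setminus\hat C)$ that is incident to exactly one cut edge (such a component exists because contracting the components yields a tree with $k\ge 2$ edges, which has a leaf). Since $\tilde{\gg}\perp \vecone_{V(T)}$ and $\BB_T \vecone_{V(T)}=\veczero$, we may translate $\vecone_{\hat C}$ by a multiple of $\vecone_{V(T)}$ so that the indicator of $C'$ is $+1$, obtaining $\vecone_{\hat C}=\vecone_{C'}+\vecone_{\hat C\setminus C'}$ (or its negation), where $\vecone_{C'}$ corresponds to removing one edge and $\vecone_{\hat C\setminus C'}$ to removing the other $k-1$. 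The numerators add and the denominators add (since the cut edge sets are disjoint), so by the elementary inequality $\min(a/b,c/d)\le (a+c)/(b+d)$ for $b,d>0$, at least one of the two sub-cuts has ratio no larger than that of $\vecone_{\hat C}$. Iterating yields a single tree edge $e_T\in E(T)$ inducing a cut $(C,V(T)\setminus C)$ with
\[
\frac{\langle\gg,\vecone_{C\cap V}\rangle}{\uu_T(e_T)} \;=\; \frac{\langle\tilde{\gg},\vecone_C\rangle}{\uu_T(e_T)} \;\le\; \frac{\langle\tilde{\gg},\vecone_{\hat C}\rangle}{\uu_T(E_T(\hat C,V(T)\setminus \hat C))}\;\le\; \frac{\mathrm{OPT}}{q},
\]
as claimed. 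The main subtlety (rather than obstacle) is the sign bookkeeping: the optimum is non-positive, so inequalities on denominators must be combined carefully with the sign of the numerator, and we need $\gg\perp\vecone_V$ (so $\tilde{\gg}\perp\vecone_{V(T)}$) to freely translate $\vecone_{\hat C}$ in the inductive step.
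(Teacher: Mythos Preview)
Your proof is correct and follows essentially the same two-step approach as the paper: lift the optimal $G$-cut to a tree cut via the sparsifier guarantee (losing a factor $q$), then reduce to a single tree edge without further loss. The only cosmetic difference is in the second step: you use the inductive leaf-peeling argument from the overview (\eqref{eq:one_less_edge}), whereas the paper's proof writes the lifted indicator as a signed sum of subtree indicators and applies $\min_i a_i/b_i \le (\sum a_i)/(\sum b_i)$ directly; both arguments are standard and equivalent.
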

\begin{proof}
     By the well known fact that $\min_{i \in [n]} \aa(i)/\bb(i) \leq \sum_{i = 1}^n \aa(i)/\sum_{i = 1}^n \bb(i)$ for $\aa \in \R^n$ and $\bb \in \R_{> 0}^n$ it suffices to consider cuts in a single connected component. We therefore without loss of generality assume that $G$ and thus $T$ are connected. 
    
    By \Cref{lem:cut_suffices} there exists some cut $C'$ such that $\frac{\langle \gg, \vecone_{C'} \rangle}{\norm{\UU \BB \vecone_{C'}}_1} = \min_{\bDelta \in \R^V} \frac{\langle \gg, \bDelta \rangle}{\norm{\UU \BB \bDelta}_1}$. Then, by \Cref{def:tree_cut_sparsifer} the cut $\mincut_{T}(C', V \setminus C') \leq q \cdot \norm{\UU \BB \vecone_{C'}}_1$, and therefore this cut achieves the min-ratio up to a factor of $\frac{1}{q}$. We next show that the quality of this tree cut can be realized by an individual cut edge in $T$. To do so we arbitrarily root the tree at some vertex $r$, where we denote $\yy = \vecone_{C'}$ and assume $\yy(r) = 0$. Notice that such a vertex always exists. 

    For every edge $e = (u, v)$ where $u$ is the parent of $v$, we then set $\aa(e) = \yy(v) - \yy(u)$. Notice that $\aa(e) \in \{-1, 0, 1\}$ and $|\aa(e)| = 1$ if and only if the edge $e$ is in the cut found by $\yy$. Furthermore, we let $\ss_e$ denote the indicator vector of all the vertices in the sub-tree rooted at $v$.   
    We next show that $\hat{\yy} \defeq \sum_{e \in T} \aa(e) \ss_{e} = \yy$. Let $u$ be an arbitrary vertex. Then 
    \begin{align*}
        \hat{\yy}(u) &= \sum_{e \in T} \aa(e)\ss_{e}(u)\\
        &= \sum_{e \in T[u, r]} \aa(e)\ss_{e}(u) \\
        &= \sum_{e \in T[u, r]} \aa(e) = \sum_{(v, w) \in T[u, r]} \yy(v) - \yy(w) = \yy(u)
    \end{align*}
    where the first equality is by definition, the second follows from the fact that sub-trees not containing $u$ do not affect its value in $\hat{\yy}$, the third follows since $\ss_{e}(u) = 1$ for ever edge on the path $T[u, r]$, the forth follows by definition of $\aa(e)$ and the final equality follows from $\yy(r) = 0$ after cancellation. 

    Therefore we have $\l \gg, \yy \r = \sum_{e \in T} \aa(e) \l \gg, \ss_e \r$ and $\norm{\UU \BB \yy}_1 = \sum_{e \in T} |\aa(e)| \norm{\UU \BB \ss_e}_1$, because $\BB\ss_e$ are indicators of single edges. The result then again follows from the well known fact that $\min_{i \in [n]} \aa(i)/\bb(i) \leq \sum_{i = 1}^n \aa(i)/\sum_{i = 1}^n \bb(i)$ for $\aa \in \R^n$ and $\bb \in \R_{> 0}^n$.
\end{proof} 

As a final ingredient, we need a data structure that detects when potential difference may have changed significantly. 

\begin{restatable}[Detection Algorithm]{definition}{DetectionDef}
    \label{def:detection_alg}
    Given a fully dynamic tree cut sparsifier $T$ and a corresponding fully dynamic directed layer graph $H$ as in \Cref{thm:mainTreeSparsifier} and \Cref{rem:directed_layer_graph} respectively, and an edge significance function $s: E \mapsto \R_{> 0}$, a $\gamma$-approximate detection algorithm supports the following operations. 
    \begin{itemize}
        \item $\textsc{AddDelta}(e_T, \delta)$: Given an edge $e_T \in E(T)$ and a value $\delta \in \R_{> 0}$, it adds $\delta$ to the accumulated change of each edge $e$ in $E'_{e_T}$, i.e. to each edge reachable from $v_{e_T}$ in $H$. 
        Then, it reports a set $E'$ of edges such that 
        \begin{itemize}
            \item Every reported edge $e \in E'$ has accumulated a change of at least $s(e)/\gamma$.
            \item Every edge $e$ that has accumulated a change of at least $s(e)$ is in $E'$.
        \end{itemize}
        Then the accumulated change of the edges in $E'$ is re-set to $0$.
        \item $\textsc{Reset}(e)$: Resets the accumulated change of edge $e$ to $0$.
    \end{itemize}
    Furthermore, we let $D$ be the total number of detected edges throughout the course of the algorithm, $C$ be the number of updates to $H$ and $R$ be the total number of calls to $\textsc{Reset}()$.
\end{restatable}

We next state the main theorem of \Cref{sec:detection}, which describes our detection algorithm.

\begin{restatable}{theorem}{DetectionThm}
    \label{thm:detection_alg}
    There exists a $\gamma$-approximate deterministic detection algorithm (\Cref{def:detection_alg}) for $\gamma = d^k$ with total update time $\tilde{O}(d^k (D + R + C))$. Recall that $d$ is a bound on the in-degree of $H$, and $k$ is a bound on the depth of $H$ (See \Cref{def:directed_layer_graph}).
\end{restatable}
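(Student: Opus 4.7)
The approach is to maintain a lazy pending counter $\phi(v)$ at every vertex $v \in V(H)$, initialized to zero. The operation $\textsc{AddDelta}(e_T, \delta)$ simply increments $\phi(v_{e_T})$ by $\delta$. To propagate information downward toward level $0$, we maintain the quantity $\sigma(v) := \min_{e \in \mathrm{Desc}_0(v)} s(e)$, the minimum significance of level-$0$ descendants of $v$, using standard priority-queue bookkeeping with amortized polylogarithmic time per structural change to $H$. Fix thresholds $\tau(v) := \sigma(v) / d^{ck}$ for an appropriate constant $c$; whenever $\phi(v) \geq \tau(v)$ we push down by setting $\phi(u) \leftarrow \phi(u) + \phi(v)$ for each out-neighbor $u$ of $v$ (recursing as needed), then resetting $\phi(v) = 0$. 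At a level-$0$ vertex $v_e$, once $\phi(v_e) \geq s(e)$ we report $e$ and reset its counter.

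For correctness, at equilibrium the invariant $\phi(v) < \tau(v)$ holds for every vertex. Since the number of paths from any ancestor $v$ to a level-$0$ vertex $v_e$ is at most $d^k$ (because $H$ has in-degree $\leq d$ and depth $\leq k$), the lazily-propagated counter satisfies $\mathrm{true}(e) \leq \phi(v_e) \leq d^k \cdot \mathrm{true}(e)$: each $\textsc{AddDelta}$ contribution at an ancestor is delivered at least once via some path, and at most once per path. Reporting at threshold $s(e)$ therefore ensures that every edge with $\mathrm{true}(e) \geq s(e)$ is caught, and every reported edge has $\mathrm{true}(e) \geq s(e)/d^k$, yielding approximation $\gamma = d^k$. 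The $\textsc{Reset}(e)$ operation simply zeroes $\phi(v_e)$; since future contributions re-propagate from scratch, the semantics are preserved.

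The main obstacle is the runtime bound of $\tilde{O}(d^k(D+R+C))$. Each push at $v$ costs $O(d_{\mathrm{out}}(v))$ local work and may cascade through the DAG. The plan is to amortize via a potential such as $\Psi := \sum_{v} \phi(v) \cdot |\mathrm{Desc}_0(v)|$, which captures the total future work contained in pending counters: each $\textsc{AddDelta}$ raises $\Psi$ by at most $\delta \cdot d^k$, while each push at $v$ decreases $\Psi$ at a rate matching the local work performed. Detections and resets at level $0$ each absorb at most $\tilde{O}(d^k)$ units of $\Psi$ (because the threshold $\tau$ forces each committed ``delivery'' to a level-$0$ vertex to carry weight at least $s(e)/d^{O(k)}$), and structural changes to $H$ alter $|\mathrm{Desc}_0(v)|$ and $\sigma(v)$ for at most $\tilde{O}(d^k)$ ancestors per update. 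A careful amortized accounting then attributes all push-down work to $D$, $R$, and $C$ with a multiplicative factor of $d^k$, giving the claimed total update time.
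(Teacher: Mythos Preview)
Your runtime argument has a genuine gap. The potential $\Psi = \sum_v \phi(v)\cdot|\mathrm{Desc}_0(v)|$ does not pay for pushes: when you push $\phi(v)$ to all out-neighbors, $\Psi$ changes by $\phi(v)\big(\sum_{u}|\mathrm{Desc}_0(u)| - |\mathrm{Desc}_0(v)|\big) \ge 0$, since in a DAG every level-$0$ descendant of $v$ is a descendant of some out-neighbor (and may be counted several times). So pushes never decrease $\Psi$, and nothing in your scheme amortizes the $O(d_{\mathrm{out}}(v))$ work per push, where out-degree is unbounded. Relatedly, your claim that an \textsc{AddDelta} raises $\Psi$ by at most $\delta\cdot d^k$ is a confusion of directions: the $d^k$ bound comes from the \emph{in}-degree and bounds the number of paths \emph{into} a fixed leaf, not the number of leaf descendants of $v_{e_T}$, which is $|E'_{e_T}|$ and can be $\Theta(m)$. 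Since the target runtime $\tilde{O}(d^k(D+R+C))$ has no term for the number of \textsc{AddDelta} calls, each such call must be near-constant amortized time, and your push-to-all-children scheme cannot achieve that as stated. The correctness sketch also glosses over the laziness: ``delivered at least once via some path'' is not literally true when contributions are still pending at ancestors; you need to bound the total pending along all ancestor paths by something strictly less than $s(e)$, which requires a more careful choice of $\tau$ and an explicit accounting.

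The paper avoids working directly on the DAG. It first reduces $H$ to a family of $d^k$ subtrees $\{H_{\dd}\}_{\dd\in[d]^k}$, where $H_{\dd}$ keeps only the $\dd(i)$-th in-edge at each level-$i$ vertex; every root-to-leaf path of $H$ lives in exactly one such tree, so running a \emph{$1$-approximate} tree detector on each $H_{\dd}$ with scaled significance $s'(e)=s(e)/d^k$ yields the $d^k$-approximation (pigeonhole over the $d^k$ trees gives completeness; each tree under-counts, giving soundness). On a tree, detection is done by maintaining along the unique root-to-leaf path a lazily propagated change $c(v)$ and a heap-maintained threshold $t(v)$, with a \textsc{FlushVertex} that only walks the single length-$\le k$ path to the root; this makes every operation $\tilde{O}(k)$ rather than proportional to out-degree. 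The tree reduction is what makes the $\tilde{O}(d^k(D+R+C))$ bound go through.
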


The proof of \Cref{thm:detection_alg} is deferred to \Cref{sec:detection}.

\paragraph{Proof of \Cref{thm:min_ratio_cut_data_structure_rand} and \Cref{thm:min_ratio_cut_data_structure_det}.} 

We first prove \Cref{thm:min_ratio_cut_data_structure_rand} using the slightly faster randomized three cut sparsifiers, and then proceed with the analogous proof of \Cref{thm:min_ratio_cut_data_structure_det} using deterministic tree cut sparsifiers. 

Since the tree cut sparsifiers require the capacities to be polynomially bounded, our algorithm internally maintains data structures for $\log(U)$ different levels. We first describe the objects the data structure maintains at level $i = 0, \ldots, \log(U) - 1$. 

\begin{itemize}
    \item \underline{Rounded Graph:} We let $G_i = (V, E, \uu_i)$ be the graph $G$ with altered capacities 
    \begin{align*}
    \uu_i(e) = 
    \begin{cases}
        \ceil{\uu(e)/n^i} & \text{if } \ceil{\uu(e)/n^i} \leq n^{10} \\
        n^{20} & \text{otherwise}
    \end{cases}
    \end{align*} 
    \item \underline{Tree Cut Sparsifer:} We maintain a tree cut sparsifer $T_i$ of $G_i$ with quality $q = 2^{O(\log^{3/4} m \log \log m)}$ and the stated update time in \Cref{thm:mainTreeSparsifier} (See \Cref{thm:det_mainTreeSparsifier} for the deterministic version). 
    \item \underline{Min-Ratio Cut:} We maintain the ratio achieved by every tree cut and keep them in a sorted list according to minimum ratio/quality, where we discard every cut that has capacity larger than $n^{20}$. Notice that this effectively contracts edges of capacity $n^{20}$, and does not affect cuts that do not contain such edges since other edges have capcity at most $n^{10}$, and therefore no cuts only involving such edges can reach capacity $n^{20}$.

    To explicitly maintain the ratio a tree cut achieves we directly have access to the capacities, and therefore only need to worry about maintaining the gradient sums. To do so, we use that the depth of $T_i$ is bounded by $\tilde{O}(1)$. We then maintain the value $\gg(v)$ at every vertex $v$, and maintain the sum of these values on either side of each edge in the tree $T_i$ (vertices that are not in $G_i$ but in $T_i$  contribute $0$). These two values are the sum of the gradients of crossing edges with opposite sign. Whenever a gradient between two vertices gets updated, only two vertices are affected in their $\gg(v)$ value and can be updated explicitly. Since the vertices change by the same amount with opposite sign, only tree cuts that place them in different components are relevant. Therefore, only the values stored on the edges on the tree path between the two endpoints need to be updated. Notice that there are only $\tilde{O}(1)$ such edges by the depth bound on the tree cut sparsifiers (See \Cref{rem:DiamTreeSparsifier}).  The value of all the edges on the path between the endpoints of the edge for which the gradient was updated can be updated accordingly. 
    
    Finally, whenever a tree edge $(u, v)$ in the tree that contains vertices from $G$ on either side is updated, we simulate it as moving the two endpoints separately from the old tree edge to the new tree edge. Then after moving a single endpoint, say $v$, only the edges on the path between the new endpoint and the old endpoint need to be updated with the sum of the values stored for the component containing $u$ after removing edge $(u, v)$. This value is readily available on edge $e$. The update is then analogous to the case where a gradient is updated. 
    
    Finally, new additional vertices and edges to them can be inserted and always store $0$ since they do not contain crossing edges.  
    \item \underline{Detection Algorithm:} Furthermore, every level $i$ initializes a detection algorithm to detect whenever the quantity $\uu(u,v)(\yy(u) - \yy(v))$ has changed by an additive $\epsilon$. It will ignore cancellations in-between calls to $\textsc{ToggleCut}()$ and track the difference $\yy(u) - \yy(v)$ while looking for changes of the size $\epsilon/\uu(e)$ instead, which is equivalent. To this avail, it initializes a detection algorithm $\mathcal{D}_i$ (\Cref{def:detection_alg} and \Cref{thm:detection_alg}) using the directed layer graph associated with $T_i$. Every level sets the significance of edge $e$ to $s(e) = \epsilon/(\uu(e) \log U)$, such that the change summed up over the levels is still bounded by $\epsilon/\uu(e)$.
\end{itemize}

After each update the algorithm goes through the best min-ratio tree cut found at each level and scales the quality of the best cut at level $i$ by dividing it by $n^{i}$. Then, it outputs the gradient sum $g$ of this cut, and the cut estimate $u = n^{i} \cdot \uu_{T_i}(e_{T_i})$ where $e_{T_i}$ is the edge that induces the cut. 

We next show that the best quality such tree cut across all levels is $\alpha$ competitive for $\alpha = 2^{\O(\log^{3/4} m \log \log m)}$. To do so, we fix a cut $C$ optimal for the min-ratio cycle problem which exists by \Cref{lem:cut_suffices}. Consider the edge $e'$ in $E(C, V \setminus C)$ with highest capacity. Let $i$ be the smallest index such that $\ceil{\uu(e)/n^{i}} \leq n^{10}$. If that $i$ is $0$, then the cut $C$ is captured by the tree-cut sparsifier $T_0$ up to a factor of $2 q$, and therefore one if its tree cuts is a $2q$ approximate min-ratio cut by \Cref{lem:tree_cut_suffices}. For higher $i$, the cut is again approximately preserved, because capacities with $\uu(e) \geq n^{i}$ are correct up to a factor $2$, and capacities $\uu(e) \leq n^{i}$ contribute less than  $\uu(e')$ altogether since there are at most $n^2$ such capacities. Therefore, the tree $T_i$ again captures the cut up to a $2 q$ factor (after re-scaling), and therefore one if its re-scaled tree cuts is $2q$ approximately min-ratio by \Cref{lem:tree_cut_suffices}. 

The potential vector $\yy$ after updates can be tracked via a link-cut tree on $T_i$ \cite{ST83} and queries can be supported by querying the potential change on each tree and adding them up.

It remains to show that we can detect changes in potential difference of the right magnitude when they happen without returning too many edges. We can update the detection thresholds of the set $E'_{e_T}$ from the directed layer graph $H$ (\Cref{def:directed_layer_graph}) via the routine $\textsc{AddDelta}(e_T, \eta)$ of the detection algorithm. Then the detection algorithm clearly reports all edges that have changed by the required margin, because the set $E'_{e_T}$ is a super-set of the edges the tree cut actually cuts and it does not factor in cancellations that happen across calls. 

We finally bound the total number of returned edges. Since the potential change of an edge can only be detected whenever it has accumulated $\epsilon \uu(e) /(\gamma \log(U))$ change in potential difference, and the total amount of change per update is $|E'| \cdot \eta \leq |E'| /\uu_T(e_T) \leq |E'|/(\sum_{e \in E'_{e_T}} \uu(e))$ we have that at most $(\gamma \cdot t \log U)/\epsilon$ edges get reported after processing $t$ updates. 

The runtime guarantee follows from \Cref{thm:detection_alg} and \Cref{thm:mainTreeSparsifier}. 

We finally remark that the completely analogous proof using deterministic tree cut sparsifiers (See \Cref{thm:det_mainTreeSparsifier}) yields \Cref{thm:min_ratio_cut_data_structure_det}.

\subsection{Detection of Large Potential Changes Across Edges}
\label{sec:detection}

\paragraph{Overview. } In this section, we describe our algorithm for detecting significant potential difference changes across edges. To do so, we ignore cancellations between updates. Concretely, if for an edge $(u, v)$ the potential $\yy(u)$ first changes by adding $+\delta$, and then the \emph{next} update changes $\yy(v)$ by adding $+\delta$, we treat this as a possible potential difference change of $2\delta$, although the real change is $0$. On the other hand, if the same update adds $\delta$ to both $u$ and $v$, we typically consider this as a change of $0$. Since our tree cut sparsifier sometimes over-reports the edges in a cut, there are exceptions where we also count potential changes to such edges. 

We first recall the definition of a directed layer graph as introduced in \Cref{def:directed_layer_graph} and the two lemmas about its maintenance in the randomized and deterministic setting. 

\DirectedLayerGraphDef*

\DirectedLayerGraphRand*

\DirectedLayerGraphDet*

We then state the definition of the detection algorithm and the corresponding theorem we prove in this section. 

\DetectionDef*

Recall that to maintain stable lengths, we set $s(e) \approx \epsilon/\uu(e)$ when using the detection algorithm as part of our min-ratio cycle data structure. In the rest of this section, we prove the previously stated main theorem about detection algorithms. 

\DetectionThm*

\subsubsection{Reduction to Trees}

We first reduce the detection of changes from a direct layer graph to a collection of $n^{o(1)}$ trees. Recall that directed layer graphs change in terms of edge insertions/deletions, and isolated vertex insertions. The directed layer trees described in this section receive the same update types. We emphasise that edges in $G$ are represented as leaf nodes in directed layer graphs $H$. We will refer to them as $G$-edges in the remainder of this section to disambiguate them from edges in the directed layer graph $H$. 

\begin{definition}[Directed Layer Tree]
    We call a directed layer graph a directed layer tree if it is a sub-graph of a directed layer graph that has in-degree at most one.
\end{definition}

\begin{lemma}[DAG to Tree]
    \label{lem:dag_to_tree}
    Assume a $1$-approximate detection algorithm (See \Cref{def:detection_alg}) for a tree-cut sparsifier $T$ with a directed layer tree $H_T$ with total runtime $\tilde{O}(R + D + C)$, then there is a $\gamma$-approximate detection algorithm for $\gamma = d^k$ with total runtime $\tilde{O}(d^k \cdot (R + D + C))$.
\end{lemma}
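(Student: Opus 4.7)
The plan is to decompose the directed layer graph $H$ into $d^k$ directed layer trees via a proper edge-coloring, run the assumed $1$-approximate tree detection algorithm on each in parallel, and use a coordinator that emits DAG detections and synchronizes tree state after each detection. I greedily color each edge of $H$ with one of $d$ colors so that the incoming edges at every vertex receive distinct colors; this is possible since $H$ has in-degree at most $d$. For each color sequence $\sigma = (\sigma_1, \ldots, \sigma_k) \in [d]^k$, define $T_\sigma$ to be the subgraph of $H$ consisting of every vertex and only those edges whose color is $\sigma_i$ at level $i$. By construction $T_\sigma$ has in-degree at most $1$, hence is a directed layer tree (a forest of in-trees). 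The key reachability claim is that any directed path $v \to u_{j-1} \to \cdots \to u_0 = \ell$ in $H$ has a unique sequence of edge colors, so it lives entirely in a unique $T_\sigma$; therefore a leaf $\ell \in V_0$ is reachable from $v$ in $H$ if and only if it is reachable from $v$ in at least one $T_\sigma$.

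I will instantiate the assumed tree detection algorithm on each $T_\sigma$ with the threshold at each leaf $\ell$ set to $s(\ell)/d^k$. DAG operations are forwarded as follows: $\textsc{AddDelta}(v_{e_T}, \delta)$ and $\textsc{Reset}(\ell)$ go to all $d^k$ trees; each structural update to $H$ (insertion or deletion) is forwarded only to those $d^{k-1}$ trees whose color matches its assigned color at the appropriate level. When any tree reports a leaf $\ell$, the coordinator emits $\ell$ in the DAG output and additionally calls $\textsc{Reset}(\ell)$ on every tree to zero out accumulated state for $\ell$ globally.

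For correctness, since $T_\sigma \subseteq H$, the accumulated change of $\ell$ in any single $T_\sigma$ never exceeds the DAG-accumulated change of $\ell$; hence a tree firing at threshold $s(\ell)/d^k$ certifies DAG change at least $s(\ell)/d^k$. Conversely, every event contributing to the DAG contributes to at least one tree by the reachability claim, so the sum of tree-accumulated changes across the $d^k$ trees is at least the DAG change; in particular the maximum tree change is at least the DAG change divided by $d^k$, so whenever the DAG change reaches $s(\ell)$, some tree's change has reached $s(\ell)/d^k$ and fires. The global sync after each DAG detection preserves this invariant across detection cycles by effectively resetting the conceptual DAG accumulator together with all tree accumulators.

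The main obstacle is bounding the total number of tree-level detections by $O(d^k D)$, which is exactly what the sync step achieves: between two consecutive DAG detections of $\ell$ each tree fires at most once on $\ell$, because after the first tree fires the sync zeros the rest. This yields $D_{\mathrm{tree}} \leq d^k D$; similarly each DAG reset and each DAG detection triggers $d^k$ tree resets, giving $R_{\mathrm{tree}} \leq d^k(R + D)$, and each structural DAG update affects $d^{k-1}$ trees, giving $C_{\mathrm{tree}} \leq d^{k-1} C$. Applying the assumed per-tree runtime $\tilde{O}(R + D + C)$ summed across all $d^k$ trees therefore yields the claimed total runtime $\tilde{O}(d^k(R + D + C))$.
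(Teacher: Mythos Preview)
Your proof is correct and takes essentially the same approach as the paper: both decompose $H$ into $d^k$ in-trees indexed by $[d]^k$ (your edge-coloring is equivalent to the paper's ``pick the $\dd(i)$-th in-edge at level $i$''), run the tree algorithm with threshold $s(e)/d^k$ on each, and sync via a global \textsc{Reset} after any detection. One minor imprecision---a path starting at level $j<k$ actually lives in $d^{k-j}$ trees $T_\sigma$, not a unique one---does not affect the argument since you only use the ``at least one'' direction, and your runtime accounting is in fact more careful than the paper's.
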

\begin{proof}
    We reduce from directed layer graphs to several directed layer trees. We first define a collection $\mathcal{C}$ of trees, and will then run one detection algorithm for each such tree.

    For every $\dd \in [d]^k$, we let $H_{\dd}$ denote the sub-tree of $H$ for which every vertex in $V_i$ only picks its $\dd(i)$-th in edge. The collection $\mathcal{C}$ consists of all such trees $H_{\dd}$. Notice that every path from a leaf to the root in $H$ is contained in one of the trees by selecting the corresponding in-edge at every level in the vector $\dd$. We then run one detection algorithm for each tree, where we use the sensitivity function $s': E \mapsto \R_{>}$ where $s'(e) := s(e)/d^k$. Whenever the $\textsc{Reset}$ operation is called, we simply call it for all of these algorithm and whenever the $\textsc{AddDelta}()$ operation is called, we also forward it to all the data structures. 

    Whenever a $G$-edge $e$ is reported by one of the tree data structures, it is reported, and we call $\textsc{Reset}(e)$ on all tree data structures. 

    We first show that whenever an edge is reported, it accumulated $s(e)/d^k = s(e)/\gamma$ change. This is evidently the case since it accumulated this amount of change in one of the trees, and  the trees are sub-graphs of $H$ and thus strictly under-estimate the amount of change. 
    
    We then show that every $G$-edge $e$ (corresponding to a leaf node in $H$) that accumulated $s(e)$ change is reported. Every time the $G$-edge $e$ is in $E_{e_T}$ for a call of $\textsc{AddDelta}(e_T, \delta)$ there is a path from $v_{e_T}$ to the edge in $H$, and therefore there is such a path is in at least one of the trees $H_{\dd}$, the tree that always picks the in edge of said path. Therefore, at least on of the tree data structures experiences the $\delta$ change. But if there are $d^k$ data strucutres and we distribute $s(e)$ total change among them, at least one of them has to experience $s(e)/d^k$ change by the pigeonhole principle. Therefore the algorithm is correct. 

    The runtime guarantee directly follows since we run $d^k$ copies of the tree data structure. 
\end{proof}

\subsubsection{Detection on Trees}

In this section, we describe the exact detection algorithm on directed layer trees experiencing edge insertions/deletions, and isolated vertex insertions. Via the reduction presented in the previous section, this suffices to arrive at a full detection algorithm. 

In this section, we prove the following theorem. We remark that we strongly believe that all the required operations can be implemented in a standard way using top-trees \cite{05alstrup_top_tree}. Indeed, detection can be solved with a data structure that can maintain a (significance) function $s: V(T) \to \R$ on the vertices of a tree $T$, that supports the following operations in deterministic $\O(1)$ time:
\begin{itemize}
    \item Change the tree $T$ via edge insertions and deletions.
    \item Update $s(v) \gets s(v) + \delta$ for all vertices $s$ in a subtree of $T$.
    \item Return $\max_{s \in V(T)} s(v)$.
    \item Update the value of a vertex $v$: $s(v) \gets C$.
\end{itemize}

However, we can give a more self-contained proof of the following statement, by leveraging that the directed layer graphs has low depth (at most $O(\log^{1/4} m)$).
\begin{theorem}
    \label{thm:tree_detection}
    Given a tree cut sparsifier $T$ with directed layer tree $H$ there is a $1$-approximate detection algorithm with total runtime $\tilde{O}(R + C + D)$.
\end{theorem}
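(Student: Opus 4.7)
The plan is to implement a lazy-propagation scheme on $H$ that crucially exploits the fact that $H$ is a tree of depth $k = \tilde{O}(1)$. Each node $v$ stores a pending value $\mathrm{acc}(v) \in \mathbb{R}$ holding contributions that were injected into the subtree of $v$ but have not yet been materialized at its leaves, together with an aggregate $B(v)$ defined by $B(e) = -\sigma(e)$ at a leaf $e$, where $\sigma(e)$ is an effective threshold (initialized to $s(e)$ and advanced on reset), and $B(v) = \max_{c \text{ child of } v} \bigl(B(c) + \mathrm{acc}(c)\bigr)$ at an internal node. Setting $L(e) := \sum_{u \text{ ancestor of or equal to } e} \mathrm{acc}(u)$ for the cumulative accumulator seen at leaf $e$, a straightforward induction on the height of $H$ gives the invariant $B(\mathrm{root}) + \mathrm{acc}(\mathrm{root}) = \max_e \bigl(L(e) - \sigma(e)\bigr)$, so detection must fire exactly when the root quantity is $\geq 0$. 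To refresh $B(v)$ cheaply when a single child changes, each internal node keeps its children in a balanced BST keyed by $B(c) + \mathrm{acc}(c)$, supporting $O(\log n)$ insertions, deletions, updates, and max queries.

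Given this data, each operation reduces to a root-ward walk of length at most $k$. $\textsc{AddDelta}(v, \delta)$ increments $\mathrm{acc}(v)$ and walks to the root refreshing each ancestor's BST entry for the on-path child and its own $B$-value; after the walk, while $B(\mathrm{root}) + \mathrm{acc}(\mathrm{root}) \geq 0$ we descend greedily through the BSTs along maximizing children, extract the triggering leaf, report it, and invoke an internal reset. $\textsc{Reset}(e)$ computes the currently measured accumulated change at $e$ by summing $\mathrm{acc}$ along the root path in $O(k)$ time, advances $\sigma(e)$ by this amount so that the measured change is zeroed out relative to future additions, and propagates the resulting $B(e)$-decrease upward through the BSTs. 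Edge insertions, edge deletions, and isolated-vertex insertions in $H$ correspond to subtree attach/detach; before detaching the subtree rooted at $v$ from a parent $p$, we add $\sum_{u \text{ strict ancestor of } v} \mathrm{acc}(u)$ into $\mathrm{acc}(v)$ so that every leaf's $L$-value in the moved subtree is preserved, then remove $v$ from $p$'s BST and propagate the $B$-change up. Attachment is symmetric, with a subtracted inherited sum.

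Correctness is immediate from the invariant $B(\mathrm{root}) + \mathrm{acc}(\mathrm{root}) = \max_e \bigl(L(e) - \sigma(e)\bigr)$, which each of the operations above is designed to preserve, together with the greedy descent, which by the $\max$-recursion always locates a leaf attaining the root maximum. For runtime, every operation touches only a single root-to-node path and does $O(\log n)$ BST work at each of at most $k$ nodes on the path, hence costs $O(k \log n) = \tilde{O}(1)$; charging each reported detection to its accompanying internal reset, the total work summed over all resets, structural updates to $H$, and detections is $\tilde{O}(R + C + D)$, as claimed.

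The step I expect to be the main obstacle is the book-keeping for attach/detach: simply rewiring a subtree in $H$ would silently drop or duplicate the pending contributions that its leaves inherited from their old ancestors, corrupting $L(e)$ for many leaves at once and breaking the invariant globally. Compensating by pushing the inherited $\mathrm{acc}$-sum into the subtree root reanchors these contributions without explicitly visiting any leaf, and it is precisely the $\tilde{O}(1)$ depth bound on $H$ that makes this push affordable in $O(k)$ time per structural change.
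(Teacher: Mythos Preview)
Your proposal is correct and takes essentially the same approach as the paper: a lazy-propagation scheme on the shallow tree $H$ with a per-node heap/BST over children, root-path updates of length $k=\tilde O(1)$, and an invariant equating the root aggregate to $\max_e(\text{accumulated change} - \text{threshold})$. The paper's bookkeeping differs cosmetically (it uses min instead of max and tracks a per-edge ``passed change'' $p(u,v)$ plus an explicit \textsc{FlushVertex} rather than your attach/detach compensation), but the algorithm, invariant, and analysis are the same.
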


Our data structure is a standard lazy heap construction. We first state a simple heap theorem that we will use in our data structure. 

\begin{theorem}[Heap]
    \label{thm:heap}
    There is a data structure $\textsc{Heap}()$ that stores an initially empty collection $\mathcal{C}$ of comparable elements from some universe and supports the following operations:
    \begin{itemize}
        \item $\textsc{Insert}(e, \val(e))$: Adds $e$ to $\mathcal{C}$.
        \item $\textsc{Delete}(e)$: Removes $e$ from $\mathcal{C}$.
        \item $\textsc{Min}()$: Returns an element $e$  such that $e \leq e'$ for all $e' \in \mathcal{C}$. 
        \item $\textsc{RemoveMin}()$: Removes and returns an element $e$ such that $e \leq e'$ for all $e' \in \mathcal{C}$.
    \end{itemize}
    All the above operations take worst-case update time $O(\log |\mathcal{C}|)$. 
\end{theorem}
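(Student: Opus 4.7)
The plan is to implement the data structure with a standard self-balancing binary search tree (for instance a Red-Black or AVL tree), storing the elements of $\mathcal{C}$ as keys ordered lexicographically by $(\val(e), e)$, breaking ties using any fixed total ordering on the element universe. All four operations then reduce directly to textbook BST routines: $\textsc{Insert}(e, \val(e))$ is the standard insertion; $\textsc{Delete}(e)$ locates the key corresponding to $e$ and removes it; $\textsc{Min}$ returns the leftmost element (either by walking left from the root, or by maintaining a cached pointer to it under updates); and $\textsc{RemoveMin}$ removes the leftmost element. A balanced BST guarantees that each of these operations runs in worst-case $O(\log |\mathcal{C}|)$ time, matching the claim.

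Alternatively, to emphasise the connection to the classical heap, one can use a binary min-heap stored as an array, together with an auxiliary dictionary (itself a balanced BST keyed on the identity of $e$) mapping each $e \in \mathcal{C}$ to its current index in the heap array. Then $\textsc{Insert}$ and $\textsc{RemoveMin}$ are the textbook sift-up/sift-down operations; $\textsc{Delete}(e)$ looks up $e$'s index in the dictionary, swaps that entry with the last entry in the array, shortens the array, and sifts the swapped-in element up or down as needed to restore the heap property; $\textsc{Min}$ reads the root in $O(1)$ time. Each heap swap is accompanied by a constant number of updates to the auxiliary dictionary, so each operation performs $O(\log |\mathcal{C}|)$ array and dictionary updates, each costing $O(\log |\mathcal{C}|)$ at worst, for an overall $O(\log |\mathcal{C}|)$ worst-case bound (the $\log^2$ factor from the dictionary can be avoided by instead using the BST-based implementation above).

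Since this is a textbook construction, there is no real obstacle; the only mild subtlety is that $\textsc{Delete}(e)$ deletes by element identity rather than by priority, so we must ensure that the storage location of $e$ can be found in $O(\log |\mathcal{C}|)$ time. This is provided intrinsically by the BST implementation, or by the auxiliary dictionary in the heap implementation. Either solution suffices to prove \Cref{thm:heap}.
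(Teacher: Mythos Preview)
Your proposal is correct and takes essentially the same approach as the paper: the paper's proof is the single sentence ``Directly follows from a self-balancing binary search tree,'' which is exactly your first implementation.
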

\begin{proof}
    Directly follows from a self-balancing binary search tree. 
\end{proof}

\paragraph{Terminology for Recording Changes. } Before we describe our algorithm, we lay out the terminology used in the remainder of this section. 
\begin{itemize}
    \item \underline{Total change:} We say that whenever $\textsc{AddDelta}(e_T, \delta)$ is called, the total change  all vertices in the directed layer tree that are currently children of the node corresponding to $e_T$ received (See \Cref{def:directed_layer_graph}) is increased by $\delta$. This quantity is therefore monotonically increasing. 
    \item \underline{Experienced change:} Since our data structure is lazy, not all the change is pushed down the tree at once. Therefore, we refer say that the change a node has experienced is given by the amount that has been pushed down to said node so far. This quantity is upper bounded by the total change of the node. We refer to the experienced change of node $v$ as $c(v)$. 
    \item \underline{Passed change:} Every edge $e = (u, v)$ records some passed change $p(e) \leq c(u)$. Intuitively this corresponds to the amount of change node $u$ pushed to node $v$, but whenever an edge $(u,v)$ is inserted $p(e)$ is initialized to the total change of $u$ encoding that this change should not be passed to $v$ since it was issued before the edge $e$ existed. We refer to $c(u) - p(u, v)$ as the amount of change accumulated at vertex $u$ with respect to child $v$.
    \item \underline{Stored change:} We say that the change stored for a node $v$ is equal to the amount of change it would experience if all the nodes on the path to its root would push all their experienced change down. Our data structure makes sure that the change stored for a node $v$ is equal to the total change it received. 
    \item \underline{Difference since last Reported:} For leaf nodes, we sometimes additionally refer to the difference of the above quantities measured against the last time the $G$-edge got reported. In particular, we let $r(l)$ store the amount of experienced change of leaf $l$ at the time it was last reported. Our algorithm will ensure that this coincides with the total change at that point in time. 
    \item \underline{Significance Threshold:} Every vertex $v \in V(H)$ stores a sensitivity threshold $t(v)$, which encodes the magnitude of stored change it would like to be notified about
    \item \underline{Significance Function:} For leaf vertices $l$ we denote with $s'(l)$ the value $s'(e)$ where $e$ is the $G$-edge associated with $l$ and $s'(e) = s(e)/d^k$ is the scaled significance function and thus an input to the algorithm.
\end{itemize}

We first define the tree path of each leaf. 

\begin{definition}
    For every node $v$, we let the path $P_v$ denote the maximum length directed path ending at $v$ in $H$.
\end{definition}

We then describe our tree detection algorithm $\textsc{TreeDetection}()$. See \Cref{alg:tree_detection_heap} for pseudocode and \Cref{fig:alg_example} for a small example of two updates. We first describe the routines separately, and start with the internal routine $\textsc{FlushVertex}(v)$ which is used internally by all other routines. 
\begin{itemize}
    \item \underline{$\textsc{FlushVertex}(v)$:} First determines the path $P_v$ from $v$ to its root. It then considers one edge $(u, u')$ at a time, going from the root down to $v$.  It then sets $c(u') \gets c(u') + c(u) - p(u, u')$ and $p(u, u') \gets c(u')$. It then traverses the path a second time, this time from $v$ to the root, and for every edge $(u, u')$ updates the significance threshold $t(u) \gets -c(u) + \min_{w:(u, w) \in E(H)} t(w) + p(u, w)$. This pushes all the updates stored above $v$ to vertex $v$ and updates the thresholds accordingly. Thereafter, $c(v)$ contains the sum of all the updates vertex $v$ has received so far and therefore the stored change is equal to the experienced change for vertex $v$ (and every vertex in $P_v$). 
    \item \underline{$\textsc{InsertEdge}(u, v)$:} Calls $\textsc{FlushVertex}(u)$ such that vertex $u$  contains all the change it has received so far, then inserts the edge $(u, v)$ and sets $p(u, v) = c(u)$ encoding that these $c(u)$ flow were inserted before and therefore not destined to go across edge $(u,v)$. Finally, calls $\textsc{FlushVertex}(v)$ to update all the affected 
    thresholds $t(\cdot)$ in the path from $v$ to the root. 
    \item \underline{$\textsc{DeleteEdge}(u, v)$:} Calls $\textsc{FlushVertex}(v)$ to propagate all experienced change on the path from $v$ to its root to vertex $v$ before removing the edge. Then removes the edge, and calls $\textsc{FlushVertex}(u)$ to update all the affected thresholds $t(\cdot)$ in the path from $u$ to the root.
    \item \underline{$\textsc{AddDelta}(v, \delta)$:} Adds $\delta$ to $c(v)$. Then it updates $t(v) = \min_{(v, w)} p(v,w) + t(w) - c(v)$, and if $t(v) < 0$ it follows the path of edges that minimize $p(u,v) + t(v)$ until it finds a leaf $l$. Then it calls $\textsc{FlushVertex}(l)$, to propagate all additional changes to that leaf, and returns the leaf and re-sets its values $r(l) \gets c(l)$ and $t(l) \gets s'(l)$. This ensures that whenever a leaf is returned, it has $0$ stored change since the last reporting thereafter. Finally, it calls $\textsc{FlushVertex}(l)$ to update significance thresholds. If there is another node $v'$ for which $t(v') < 0$, the algorithm repeats the above with $v \gets v'$ without adding $\delta$. 
\end{itemize}
This concludes the description of our algorithm. We remark that the threshold function $t(\cdot)$ is monotonically decreasing along paths from leaves to roots, and that they encode the magnitude of an update that needs to reach a node to trigger it to propagate to at least one of its children. The threshold for a node is then  given by the minimum threshold among its children after subtracting the amount of flow the vertex has yet to send to said children. In particular, this ensures that for a tree edge $(u,v)$ we have $t(u) \leq t(v) - c(u) + p(u,v)$, where we recall that $c(u) - p(u,v)$ is the change accumulated at $u$ with respect to $v$. 

\begin{algorithm}
\SetKwProg{myalg}{Procedure}{}{}
\myalg{$\textsc{Init}(d)$}{
    $V \gets \emptyset$; 
    $E \gets \emptyset$
}
\myalg{$\textsc{InsertEdge}(u, v)$}{
    \tcp{If $u$ or $v$ not in $V$, add them with $c(u) = 0$ or $c(v) = 0$ respectively. If $v$ leaf either set $t(v) = s'(v)$, or $t(v) = \infty$ if $v$ not associated with $G$-edge.}
    $\textsc{FlushVertex}(u)$ \\
    $E \gets E \cup (u, v)$; 
    $p(u, v) \gets c(u)$; \\
    $\mathcal{D}_u.\textsc{Insert}((u,v), t(v) + p(u, v))$; \\
    $\textsc{FlushVertex}(v)$
}

\myalg{$\textsc{DeleteEdge}(u, v)$}{
    $\textsc{FlushVertex}(v)$ \\
    $E \gets E \setminus (u, v)$; $\mathcal{D}_u.Delete(u, v)$; $t(u) \gets \mathcal{D}_u.\textsc{Min}() - c(u)$ \\
    $\textsc{FlushVertex}(u)$;
}
\myalg{$\textsc{AddDelta}(v, \delta)$}{
    $c(v) \gets c(v) + \delta$; 
    $t(v) \gets  \mathcal{D}_v.\textsc{Min}() - c(v)$;
    $R \gets \emptyset$ \\
    \While{There exists a vertex $u$ with $t(u) \leq 0$}{
        $w \gets u$ \\
        \While{$w$ not a leaf}{
            $(w, \cdot) \gets \mathcal{D}_w.\textsc{Min}()$
        }
        $R \gets R \cup w$ \\
        $\textsc{FlushVertex}(w)$ \\
        $r(w) \gets c(w)$; 
        $t(w) \gets s'(w) - r(w)$ \\
        $\textsc{FlushVertex}(w)$ \\
    }
    \Return{$R$}
}
\myalg{$\textsc{FlushVertex}(v)$}{
    \tcp{Initiates an exact recompute for the path $P_v$}
    For all $j$, let $\mathcal{S}^X_j$ be the subset of $\mathcal{S}_j$ that contains all vertices $Y \in \mathcal{S}_j$ such that there is a path from $Y$ to $X$. Let $E^X$ be the set of edges $(Y, Z)$ such that $Y \in {\mathcal{S}^X_j}$ and $Z \in {\mathcal{S}^X_k}$ for some $j$ and $k$. \\
    \For{$(w,w') \in P_v$ in order of the path direction}{
        $c(w') \gets c(w') + c(w) - p(w, w')$ \\
        $p(w, w') = c(w)$
    }
    \For{$(w, w') \in P_v$ in reverse order of the path direction}{
        $\mathcal{D}_w.\textsc{Delete}((w,w'))$ \\
        $\mathcal{D}_w.\textsc{Insert}((w,w'), t(w') + p(w, w'))$ \\
        $t(w) \gets \mathcal{D}_w.\textsc{Min}() - c(w)$;
    }
}
\caption{$\textsc{TreeDetection}()$}
\label{alg:tree_detection_heap}
\end{algorithm}

\begin{figure}
    \centering
    \includegraphics[width = 16.5cm]{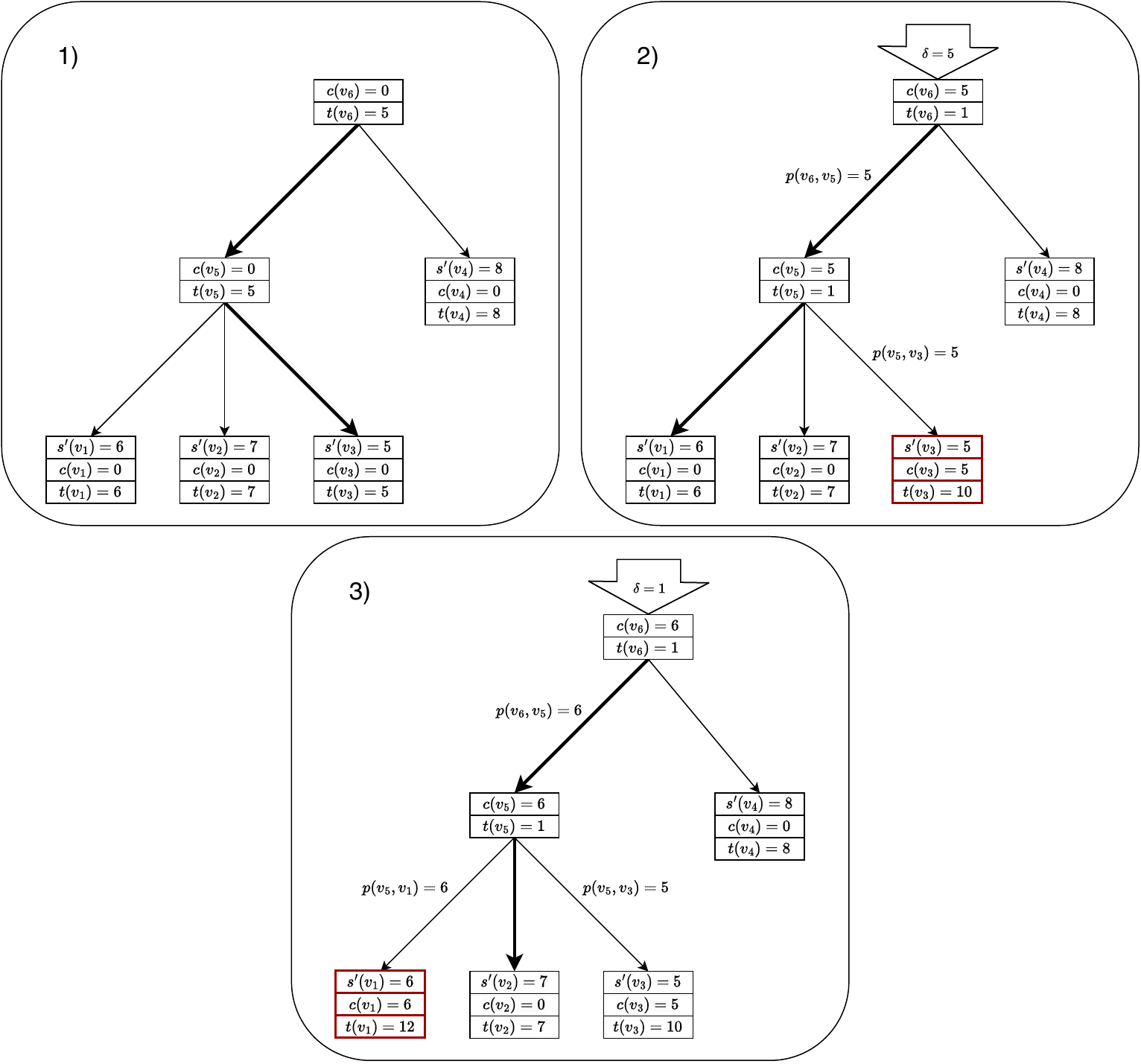}
    \caption{This figure displays a static directed layered tree experiencing two updates $\textsc{AddDelta}(v_6, \delta)$. Edges $(v_i, v_j)$ with $p(v_i, v_j) = 0$ are not labeled. Each update returns one leaf depicted in red, and causes the significance thresholds to be updated. Notice that for leaf vertices $v$ the threshold $t(v)$ is always $s'(v) + r(v) - c(v)$ where $r(v)$ is the value of $c(v)$ when $v$ was last returned. For internal nodes, $t(v) = \min_{(v, u)} t(u) + p(v, u) - c(v)$ at all times. Every internal vertex has a bold outgoing arrow, which points to the child that has the lowest value in its heap and therefore gives rise to its significance threshold. 
    } 
    \label{fig:alg_example}
\end{figure}

Our algorithm then maintains the following invariant. 

\begin{definition}[Algorithm Invariant]
    \label{def:alg_inv}
    The following are invariant in our algorithm. 
    \begin{enumerate}
        \item \underline{Path stores change:}
        For every leaf node $v$ corresponding to an edge $e$ we have that the total change since it last got reset/reported is \begin{align*}
            \left(\sum_{v' \in P_v} c(v')\right) - \left(\sum_{e \in P_v} p(e)\right) - r(v).
        \end{align*}
        \item \underline{Significance thresholds are positive:} The signifcance thresholds $t(\cdot)$ are $t(v) = s'(v) + r(v) - c(v)$  for $v$ being a leaf, and 
        \begin{align*}
            t(v) = - c(v) + \min_{(v, u) \in E(H)} t(u) + p(u).
        \end{align*}
        We have $t(v) > 0$ for all $v$.
    \end{enumerate}
\end{definition}

\begin{lemma}
    \label{lem:inv_maint}
    The tree detection algorithm $\textsc{TreeDetection}()$ (\Cref{alg:tree_detection_heap}) maintains the Algorithm Invariant (\Cref{def:alg_inv}). 
\end{lemma}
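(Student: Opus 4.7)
The plan is to proceed by induction on the number of operations processed, showing that each of $\textsc{InsertEdge}$, $\textsc{DeleteEdge}$, $\textsc{AddDelta}$, and the core subroutine $\textsc{FlushVertex}$ preserves both parts of the invariant. The base case (empty tree, no leaves to worry about) is vacuous, so the main work is in analyzing the update routines.

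I would first analyze $\textsc{FlushVertex}(v)$, since it is the common workhorse. Going forward along $P_v$, the update $c(w') \gets c(w') + c(w) - p(w,w')$ together with $p(w,w') \gets c(w)$ leaves the quantity $c(w) + c(w') - p(w,w')$ unchanged. Consequently, for any leaf $\ell$ whose path $P_\ell$ shares any suffix with $P_v$, the telescoping expression $\sum_{v' \in P_\ell} c(v') - \sum_{e \in P_\ell} p(e)$ is preserved, so part 1 of the invariant is maintained. The reverse pass then recomputes, for each $w$ on $P_v$, the value $t(w) = \mathcal{D}_w.\textsc{Min}() - c(w) = \min_{(w,w') \in E(H)}\bigl(t(w') + p(w,w')\bigr) - c(w)$; this is exactly the recursive definition of $t$ in part 2 of the invariant, and since we go bottom-up the children's thresholds are already correct when used.

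For $\textsc{InsertEdge}(u,v)$, the initial $\textsc{FlushVertex}(u)$ makes $c(u)$ capture all prior change destined for $u$'s subtree, and setting $p(u,v) \gets c(u)$ encodes that none of this pre-existing change should propagate through the new edge. Inserting $(u,v)$ into the heap $\mathcal{D}_u$ with key $t(v)+p(u,v)$ and then calling $\textsc{FlushVertex}(v)$ re-establishes part 2 along $P_v$. For $\textsc{DeleteEdge}(u,v)$, flushing $v$ first moves all accumulated change destined for $v$'s subtree into $c(v)$ so it is not lost when the edge is removed, and the subsequent flush of $u$ restores part 2 on $P_u$. In both cases, part 1 is unchanged for leaves whose path is unaffected, and for leaves whose path crosses the modified edge the flush done \emph{before} the structural change preserves the telescoping sum.

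The main obstacle is $\textsc{AddDelta}(v, \delta)$. Adding $\delta$ to $c(v)$ immediately increases the stored change at each leaf in $v$'s subtree by $\delta$ (consistent with part 1) but may push some $t(u)$ below zero, violating part 2. The while loop deals with this by descending from such $u$ along the edges attaining the minimum in each $\mathcal{D}_w$ until it reaches a leaf $w$; by construction of the heap keys, this is exactly a leaf whose stored change since its last reset is now at least $s'(w)$, so $w$ must be reported. The call $\textsc{FlushVertex}(w)$ materializes the stored change into $c(w)$, after which the reset $r(w) \gets c(w)$ and $t(w) \gets s'(w) - r(w) + r(w) = s'(w) > 0$ restores part 2 at the leaf, and the second flush propagates this upward along $P_w$. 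The nontrivial point is to argue termination and completeness of the loop: each iteration strictly decreases the number of leaves whose stored change exceeds their threshold (since only the reported leaf's counter is reset and no other leaf's stored change is increased), and once no $t(u) \le 0$ remains, the heap minimum at every internal node is strictly positive, which together with part 2 at leaves gives part 2 everywhere by a straightforward induction up the tree.
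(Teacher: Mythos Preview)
Your proposal is correct and follows essentially the same approach as the paper: induction on operations, with a separate analysis of $\textsc{FlushVertex}$, $\textsc{InsertEdge}$, $\textsc{DeleteEdge}$, and $\textsc{AddDelta}$, using the telescoping-sum observation for part 1 and the bottom-up recomputation for part 2. The paper's proof is slightly terser on termination of the while loop and slightly more explicit that $\textsc{FlushVertex}$ alone cannot make any $t(\cdot)$ negative (thresholds only decrease toward the root's value, which was positive), but these are presentational differences rather than a different argument.
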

\begin{proof}
    Initially, the invariant holds since the graph is empty. Next, we argue that each operation preserves the invariant. 
    \begin{itemize}
        \item \underline{$\textsc{FlushVertex}(v)$:} This operation does not change the sum in invariant $1$ for every vertex $v'$ by the definition of the algorithm, and therefore preserves it. The significance thresholds all become strictly smaller, but no smaller than the topmost one. Therefore they stay positive. After a call to flush vertex, it is the case that for all edges $(u, u')$ on the path to the root $P_v$ we have $p(u, u') = c(u)$, and therefore vertex $v$ has received all the previous updates and stored them in $c(v)$. Therefore, this operation pushes all the accumulated change on vertices on the path from $v$ to its root to $v$, and does not change the magnitude of the stored change. 
        \item \underline{$\textsc{InsertEdge}(u, v)$:} Since the vertex $u$ is flushed first all the accumulated change on the path is pushed to vertex $u$. Then the edge is added with $p(u,v) = c(u)$ thereafter and it correctly stores that it has not received any updates going across it yet. Finally, $\textsc{FlushVertex}(v)$ triggers a re-compute of all significance thresholds $t(\cdot)$ that might have changed. This also preserves invariant $2$, since the significance threshold only decreases to the significance threshold of the root. 
        \item \underline{$\textsc{DeleteEdge}(u, v)$:} Since vertex $v$ is flushed first, all the updates that needed to pass through edge $(u, v)$ get handed to $v$. Thereafter, no more updates to $v$ are stored on the path to its root. Therefore, the edge can safely be deleted and the call to $\textsc{FlushVertex}(u)$ ensures that significance thresholds $t(\cdot)$ are updated accordingly. This again preserves invariant $2$, since the significance threshold only decreases to the significance threshold of the root. 
        \item \underline{$\textsc{AddDelta}(v, \delta)$:} The algorithm first increases $c(v)$ by $\delta$ for vertex $v$, and it then updates $t(v) \gets \min_{(v, u)} p(v, u) + t(u) - c(v)$ for non-leaf vertices and $t(l) \gets c(l) - r(l)$ leaf vertices. This may lead to a violation of item $2$ of the algorithm invariant, i.e. $t(v) < 0$. If so, then the algorithm follows the edge that minimizes $p(u, u') + t(u')$ until it finds a leaf. Then, it flushes this vertex, which ensures that $p(u, u') = c(u)$ for all edges on said path, and therefore the path does not contain any accumulated change. Then it returns the vertex (and the corresponding $G$-edge) and sets $t(l) = 0$ where $l$ is the found leaf. These two points ensure that the leaf has no more accumulated change, and can therefore not be returned anymore. To re-compute all the significance thresholds $t(\cdot)$, the algorithm calls $\textsc{FlushVertex}(l)$ again. If there is another violation where $t(v') \leq 0$ for some vertex, it repeats the above procedure.
    \end{itemize}
\end{proof}

\begin{lemma}
    \label{lem:changed_reported}
    An algorithm that maintains the invariant in \Cref{def:alg_inv} reports every $G$-edge $e$ that has accumulated more than $s'(e)$ change. 
\end{lemma}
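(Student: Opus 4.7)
The strategy is to unfold the recursive definition of the threshold $t(\cdot)$ along the root-to-leaf path $P_l = v_0, v_1, \ldots, v_k = l$ for an arbitrary leaf $l$ corresponding to a $G$-edge $e$. Since $v_{i+1}$ is one of the children of $v_i$ in the recurrence $t(v_i) = -c(v_i) + \min_{(v_i,u)} (t(u) + p(v_i, u))$ from invariant (2), the min can only be smaller, so $t(v_i) \leq -c(v_i) + p(v_i, v_{i+1}) + t(v_{i+1})$. Telescoping this inequality from $v_0$ down to $l$ and plugging in the leaf boundary formula $t(l) = s'(l) + r(l) - c(l)$ from invariant (2), I derive the key identity
\begin{align*}
t(v_0) \;\leq\; s'(l) + r(l) - \sum_{i=0}^{k} c(v_i) + \sum_{i=0}^{k-1} p(v_i, v_{i+1}) \;=\; s'(l) - C(l),
\end{align*}
where $C(l)$ denotes the accumulated change of $l$ and the final equality applies invariant (1).

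Combined with the positivity claim $t(v_0) > 0$ of invariant (2), this forces $C(l) < s'(l)$ at every moment the invariant holds. Taking the contrapositive, the only way $C(l)$ can reach or exceed $s'(l)$ is transiently during an operation, since the invariant is restored before the operation completes. The only mechanism by which $C(l)$ decreases is a report of $l$ followed by the reset $r(l) \gets c(l)$ (making $C(l) = 0$); hence any edge whose accumulated change ever exceeds its sensitivity must have been reported. This yields the lemma.

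The main subtlety to nail down is that the telescoping identity is legitimate at the critical moment (the instant just before reporting fires in the while loop of \textsc{AddDelta}). I need to argue that, although $t(v_0) \leq 0$ may hold there in violation of invariant (2)'s positivity, the recursion of invariant (2) and the path-sum formula of invariant (1) still apply, so the derivation of $t(v_0) \leq s'(l) - C(l)$ is valid. This follows from \Cref{lem:inv_maint}, which shows that \textsc{FlushVertex} preserves both the path-sum $\sum c - \sum p - r$ and the recurrence defining $t$, only the positivity condition is ever allowed to lapse temporarily. A minor additional check is that the descent in \textsc{AddDelta} along minimum-key edges reaches some leaf $l^*$ with $C(l^*) > s'(l^*)$: the min in the recurrence is attained, so the telescoping above becomes an equality for the chosen $l^*$, giving $s'(l^*) - C(l^*) = t(v_0) \leq 0$; thus the reported leaf indeed violates its threshold, and iterating the while loop reports every remaining violator until positivity is restored.
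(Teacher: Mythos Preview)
Your proof is correct and follows essentially the same route as the paper: both telescope the threshold recurrence $t(v_i) \leq -c(v_i) + p(v_i,v_{i+1}) + t(v_{i+1})$ along the root-to-leaf path, plug in the leaf boundary condition $t(l)=s'(l)+r(l)-c(l)$, and combine with invariant~(1) and the positivity $t(v_0)>0$ to conclude $C(l)<s'(l)$; the paper writes this as an induction on the sub-path $P_v^u$, you as a direct telescoping sum. Your additional paragraph about the ``critical moment'' and the descent to a violating leaf $l^*$ is really commentary on \Cref{lem:reported_changed} rather than on the present lemma (and the equality claim there only holds along the portion of the path traversed by the while-loop descent, not necessarily all the way up to the root $v_0$), but this is tangential and does not affect the correctness of your argument for \Cref{lem:changed_reported}.
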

\begin{proof}
    We aim to show that the stored change for a leaf node $v$ is strictly bounded by $s'(v)$. This is equivalent to showing 
    \begin{align}
        \label{eq:item1}
        \left(\sum_{v' \in P_v} c(v')\right) - \left(\sum_{e \in P_v} p(e)\right) - r(v) < s'(v)
    \end{align}
    by item 1 of \Cref{def:alg_inv}. 
    
    We denote the sub-path of $P_v$ from $v$ to $u$ as $P_v^u$. We show that \begin{align*}
        t(u) \leq s'(v) - \left(\left(\sum_{v' \in P_v^u} c(v')\right) - \left(\sum_{e \in P_v^u} p(e)\right) - r(v)\right).
    \end{align*}
    For $u = v$, we have $t(v) = s'(v) - r(v)$ by definition of $t(v)$. Therefore the base case of the induction holds. 

    We then assume the claim to be shown for the path $P_v^u$, and show it for $P_v^w$ where $(w, u)$ is in $P_v$. We conclude 
    \begin{align*}
        t(w) &= - c(w) + \min_{(w, w') \in E(H)} t(w') + p(w, w') \\ &\leq -c(w) + t(u) + p(w, u) \\
        &\leq -c(w) + p(w, u) + s'(v) - \left(\left(\sum_{v' \in P_v^u} c(v')\right) - \left(\sum_{e \in P_v^u} p(e)\right) - r(v)\right) \\
        &= s'(v) - \left(\left(\sum_{v' \in P_v^w} c(v')\right) - \left(\sum_{e \in P_v^w} p(e)\right) - r(v)\right).
    \end{align*}
    We conclude that
    \begin{align*}
        0 < s'(v) - \left(\left(\sum_{v' \in P_v} c(v')\right) - \left(\sum_{e \in P_v} p(e)\right) - r(v)\right)
    \end{align*}
    by item 2 of \Cref{def:alg_inv} which establishes the lemma. 
\end{proof}

\begin{lemma}
    \label{lem:reported_changed}
    Whenever a vertex is returned by $\textsc{TreeDetection}()$ after a call to $\textsc{AddDelta}(v, \delta)$, it has accumulated a change of at least $s'(v)$. 
\end{lemma}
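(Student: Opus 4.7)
The plan is to mirror the proof of \Cref{lem:changed_reported} by traversing the descent path from the trigger vertex down to the reported leaf $w$ (rather than the full root-to-leaf path) and then arguing that the full accumulated change dominates the resulting partial sum.

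When $w$ is returned, the outer loop found some $u = u_0$ with $t(u_0) \le 0$ and the inner loop descended via the argmin edges stored in the heaps $\mathcal{D}_{\cdot}$, producing a sequence $u_0 \to u_1 \to \cdots \to u_k = w$ in which each edge $(u_i, u_{i+1})$ attains the minimum in the expression for $t(u_i)$ from item 2 of \Cref{def:alg_inv}. Iterating that identity using $t(u_i) = -c(u_i) + t(u_{i+1}) + p(u_i, u_{i+1})$ at internal vertices and $t(w) = s'(w) + r(w) - c(w)$ at the leaf, the sum telescopes to
\[
t(u_0) \;=\; s'(w) \;-\; \Biggl[\biggl(\sum_{v' \in P_w^{u_0}} c(v')\biggr) \;-\; \biggl(\sum_{e \in P_w^{u_0}} p(e)\biggr) \;-\; r(w)\Biggr].
\]
Combined with $t(u_0) \le 0$, this shows that the bracketed quantity---precisely the expression from item 1 of \Cref{def:alg_inv} restricted to the sub-path $P_w^{u_0}$---is at least $s'(w)$.

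I will then upgrade this partial sum to the full expression over $P_w$. Writing $P_w = (v_0', v_1', \ldots, v_m' = w)$ with $u_0 = v_j'$, the difference ``full $-$ partial'' equals $\sum_{i=0}^{j-1}\bigl(c(v_i') - p(v_i', v_{i+1}')\bigr)$. Each summand is non-negative at all times because $p(v_i', v_{i+1}')$ is set to the current value of $c(v_i')$ at edge insertion and then refreshed to $c(v_i')$ only by $\textsc{FlushVertex}$, whereas $c(v_i')$ is non-decreasing (it only grows via $\textsc{AddDelta}$ and via flushes pushing change downward). Combining with item 1 of \Cref{def:alg_inv}, the full accumulated change at $w$ immediately before the reset is at least $s'(w)$, as claimed.

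The step that demands the most care is justifying the telescoping as an equality, not merely an inequality: I need that the argmin edge returned by $\mathcal{D}_{u_i}$ genuinely realizes the minimum in item 2 of \Cref{def:alg_inv} at the moment of the descent. This is upheld because every reported leaf triggers an immediate call to $\textsc{FlushVertex}$ that refreshes $t(\cdot)$ and the heap keys along its ancestor path, so by the time the outer loop re-scans for violations, the heap data agrees with the current values of $t$ and $p$ on every child.
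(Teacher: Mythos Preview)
Your proof is correct and essentially fills in the details the paper's two-sentence argument leaves to the reader. The paper simply appeals to item 1 of \Cref{def:alg_inv} together with ``the definition of our algorithm''; you make this precise by telescoping the threshold identity along the descent path to get the partial sum bound, and then using $c(v') \ge p(v',v'')$ (which indeed holds at all times, as you note) to pass to the full root-to-leaf sum that item 1 identifies with the accumulated change. This is the natural elaboration, and your care in checking that the heap keys are consistent with item 2 at the moment of descent (both in the first iteration after $c(v)$ is bumped and in subsequent iterations after the two calls to \textsc{FlushVertex}) addresses exactly the point the paper's sketch glosses over.
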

\begin{proof}
   By the first item in \Cref{def:alg_inv} the path stores the change. Then, an item is only returned when the change exceeds the $s'(v)$ by the definition of our algorithm. 
\end{proof}

The proofs of \Cref{thm:tree_detection} and \Cref{thm:detection_alg} follow directly. 

\begin{proof}[Proof of \Cref{thm:tree_detection}]
    The correctness follows from \Cref{lem:inv_maint}, \Cref{lem:reported_changed} and \Cref{lem:changed_reported}. The runtime follows from \Cref{lem:inv_maint}, \Cref{thm:heap} and the description of our algorithm.  
\end{proof}

\begin{proof}[Proof of \Cref{thm:detection_alg}]
    Follows directly from \Cref{thm:tree_detection} and \Cref{lem:dag_to_tree}.
\end{proof}
\section{L1-IPM on the Dual}
\label{sec:IPM}
In this section, we develop a dual $\ell_1$-IPM for the min-cost flow problem and prove \Cref{thm:main}.

\mainTheorem*

Without loss of generality, we may assume that the problem is uncapacitated:
\begin{align}
\label{eq:transPrimal}
\min_{\BB^\top \ff = \bd, \ff \ge 0} \l \cc, \ff \r
\end{align}
Specifically, any capacitated min-cost flow problem on a graph of $n$ vertices and $m$ edges can be reduced to the uncapacitated case on $O(m)$ vertices and edges.
Notice that the reduction loses density and is therefore useless when density is key, as in~\cite{van2020bipartite,van2021minimum}.
However, the reduction works for decremental graphs, i.e., decremental min-cost flow can be reduced to decremental transshipment.
We present the reduction in \Cref{sec:reduceToTrans}, and it is summarized as \Cref{lem:reduceToTrans}.

The dual of \eqref{eq:transPrimal} is
\begin{align}
\label{eq:transDual}
\max_{\cc - \BB \yy \ge 0} \l\dd, \yy\r
\end{align}
Let $F$ be the cost-threshold for which we want to decide if $\min_{\BB^\top \ff = \bd, \ff \ge 0} \cc^\top \ff \ge F$ and let $\alpha \defeq 1/(1000\log(mC)).$
We define the potential $\Phi(\yy)$ for any $\yy$ s.t. $\cc - \BB \yy \ge 0$ as follows:
\begin{align}
\label{eq:dualPot}
\Phi(\yy) \defeq 100 m \log\left(F - \l\dd, \yy\r\right) + \sum_{e = (u, v) \in G} \left(\cc(e) - (\BB\yy)(e)\right)^{-\alpha}
\end{align}
where we notice that $(\BB\yy)(e) = \yy(u) - \yy(v)$.

Our potential reduction framework solves \eqref{eq:transDual} by finding a feasible dual solution $\yy \in \R^V$ with small potential.
Next, we show that if the potential is smaller than $-\O(m)$, then $\yy$ is a high-accuracy solution.

\begin{lemma}[Small Potential implies Small Gap]
\label{lem:smallPot}
If $\Phi(\yy) \le -1000m\log(mC)$, $\l\dd, \yy\r \ge F - (mC)^{-10}.$
\end{lemma}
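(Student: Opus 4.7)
The plan is to exploit the fact that the barrier portion of the potential is nonnegative on feasible points, so the only way for $\Phi(\yy)$ to be very negative is for the logarithmic term to be very negative, which forces $\l \dd,\yy\r$ to be close to $F$.

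First, I will observe that since $\Phi(\yy)$ is well-defined, we are in the regime where $\cc - \BB\yy \ge 0$ and $F - \l \dd,\yy\r > 0$ (the conclusion is trivial otherwise, since $\l \dd,\yy\r \ge F \ge F - (mC)^{-10}$). Feasibility gives $\cc(e) - (\BB\yy)(e) \ge 0$ for every edge $e$, so each summand in
\begin{equation*}
    \sum_{e = (u,v) \in G} (\cc(e) - (\BB\yy)(e))^{-\alpha}
\end{equation*}
is nonnegative. Consequently, the barrier term is nonnegative and
\begin{equation*}
    \Phi(\yy) \geq 100 m \log\bigl(F - \l \dd, \yy \r\bigr).
\end{equation*}

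Second, I will combine this lower bound on $\Phi(\yy)$ with the hypothesis $\Phi(\yy) \le -1000 m \log(mC)$. Rearranging gives
\begin{equation*}
    \log\bigl(F - \l \dd, \yy \r\bigr) \leq -10 \log(mC),
\end{equation*}
whence $F - \l \dd, \yy \r \leq (mC)^{-10}$, which is the desired bound $\l \dd, \yy \r \geq F - (mC)^{-10}$.

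There is no real obstacle here; the statement is a direct consequence of the sign of the barrier, and the only point worth spelling out is handling the boundary case where $\l \dd, \yy\r \ge F$ (in which the potential may fail to be defined but the conclusion holds trivially). The proof uses only the precise definition of $\Phi$ in \eqref{eq:dualPot} and the feasibility $\cc - \BB \yy \ge 0$ implicit in the statement.
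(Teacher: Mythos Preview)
Your proof is correct and follows the same approach as the paper: both use the nonnegativity of the barrier term $\sum_e (\cc(e)-(\BB\yy)(e))^{-\alpha}$ to conclude $100m\log(F-\l\dd,\yy\r)\le\Phi(\yy)\le -1000m\log(mC)$, and then exponentiate. The paper's version is terser (it omits the explicit observation that the barrier is nonnegative and the boundary-case remark), but the substance is identical.
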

\begin{proof}
By definition \eqref{eq:dualPot}, we have
\begin{align*}
100 m \log\left(F - \l\dd, \yy\r\right) \le -1000m\log(mC)
\end{align*}
This implies that $F - \l\dd, \yy\r \le (mC)^{-10}.$
\end{proof}

To find $\yy$ with small $\Phi(\yy)$, we start with an initial dual and iteratively update the solution $\yy \gets \yy + \bDelta$ by a $\bDelta$ that gives sufficient decrease in the potential, e.g., $\Phi(\yy + \bDelta) \le \Phi(\yy) - m^{-o(1)}.$
Because $\Phi(\cdot)$ is continuous, we can approximate the change in $\Phi(\yy + \bDelta)$ around a small neighborhood of $\yy$ using the 1st and 2nd order derivative information.
\begin{definition}[Slacks, Residual, Gradients and Capacities]
\label{def:potGradLen}
Given a feasible potential $\yy \in \R^V$, we define its \emph{edge slacks} as $\ss(\yy) \defeq \cc - \BB \yy \in \R^E_{\ge 0}$ and its \emph{residual} as $r(\yy) \defeq F - \l\dd, \yy\r$.

Given edge slacks $\ss \in \R^{E}_{\ge 0}$, which might be only an estimation of the true slack $\cc - \BB \yy$, approximate residual $r$, we define \emph{edge capacities} $\uu(\ss) \in \R^E_{\ge 0}$ as
\begin{align*}
\uu(\ss) \defeq \ss^{-1-\alpha} \in \R^E_{\ge 0}
\end{align*} and \emph{vertex gradients} $\gg(\ss, r) \in \R^V$ as 
\begin{align*}
\gg(\ss, r) \defeq \frac{-100m}{r} \dd + \alpha \BB^\top \ss^{-1-\alpha} \in \R^V
\end{align*}
We write $\uu(\yy)$ and $\gg(\yy)$ when they are both defined w.r.t. the exact slacks and the residual.
We sometimes ignore the parameters from $\ss$, $r$, $\uu$, and $\bg$ when it is clear from the context.
\end{definition}

At each $\yy$, we want to find a small $\bDelta$ that approximately minimizes the 2nd order Taylor approximation of $\Phi(\yy + \bDelta)$:
\begin{align*}
\Phi(\yy + \bDelta) \approx \Phi(\yy) + \l\bg(\yy), \bDelta\r + \norm{\UU(\yy) \BB\bDelta}_2^2 \le \Phi(\yy) + \l\bg(\yy), \bDelta\r + \norm{\UU(\yy) \BB\bDelta}_1^2
\end{align*}
Notice that $\gg(\yy) = \g \Phi(\yy)$.
Minimizing the right-hand side corresponds to a cut optimization problem on $G$, namely, min-ratio cuts.
\minRatioCut*

We show that $m^{1+o(1)}$ iterations are enough to find a dual solution of small potential.
Each iteration asks for an approximate min-ratio cut which will eventually be maintained efficiently using dynamic graph algorithms.
Moreover, we show that both edge gradients $\gg(\yy)$ and capacity $\uu(\yy)$ are stable over the course of the algorithm.
This is summarized as follows:
\begin{theorem}[Dual L1 IPM]
\label{thm:dualL1IPM}
Consider a decremental uncapacitated min-cost flow instance \eqref{eq:transPrimal} with an associated dual \eqref{eq:transDual}, a cost threshold $F$, and an approximation parameter $\kappa = m^{o(1)}.$
There is a potential reduction framework for the dual problem that runs in $\O(m \kappa^2)$ iterations in total.

We start with a feasible dual $\yy^{(0)}$ such that $\Phi(\yy^{(0)}) = \O(m)$ and edge slacks $\wt{\ss} \defeq \ss(\yy^{(0)})$ and residual $\wt{r} \defeq r(\yy^{(0)}).$
At each iteration, the following happens:
\begin{enumerate}
\item We receive updates in $U \subseteq E$ to $\wt{\ss}$ so that
\begin{align}
\label{eq:estAcc}
    \wt{\ss} \approx_{1 + 1 / 10\kappa} \ss(\yy^{(t)})
\end{align}
We also update $\wt{\rr}$ so that $\wt{r} \approx_{1 + 1 / 10\kappa} r(\yy^{(t)}).$
\item Compute a $\kappa$-approximate min-ratio cut $\bDelta \in \R^V$, i.e., an $\kappa$-approximate solution to the problem:
\begin{align*}
    \min_{\bDelta \in \R^V}\frac{\l\wt{\gg}, \bDelta\r}{\norm{\wt{\UU}\BB\bDelta}_1}
\end{align*}
where $\wt{\gg} = \gg(\wt{\ss}, \wt{r})$ and $\wt{\uu} = \uu(\wt{\ss}).$
If the ratio is larger than $-\O(1/\kappa)$, we certify that the optimal value to \eqref{eq:transPrimal} and \eqref{eq:transDual} is less than $F.$
\item Scale $\bDelta$ so that $\l\wt{\bg}, \bDelta\r = -1 / (100\kappa^2)$ and update $\yy^{(t+1)} \gets \yy^{(t)} + \bDelta.$
\end{enumerate}
After $\O(m \kappa^2)$ iterations, we have $\l\dd, \yy\r \ge F - (mC)^{-10}.$

Over the course of the algorithm, the slacks $\ss(\yy^{(t)})$ stay quasi-polynomially bounded. That is, $\ss(\yy^{(t)})(e) \in [2^{-O(\log^2(mC))}, (mC)^{O(1)}]$ for any edge $e$ at any iteration $t.$
\end{theorem}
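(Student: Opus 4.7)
The plan is to instantiate the standard potential reduction interior point method in the dual $\ell_1$ setting, with four main ingredients. For initialization, construct $\yy^{(0)}$ with $\Phi(\yy^{(0)}) = \tilde{O}(m)$ via a standard preconditioning reduction: add a few auxiliary edges/vertices making the problem robustly feasible so that $\yy = \veczero$ (or some scaled analog) has every slack at least $1$ and residual $F - \l\dd,\yy^{(0)}\r$ bounded by $\mathrm{poly}(mC)$. Then the first term of $\Phi$ is $\tilde{O}(m)$ and the barrier term is at most $m$. Termination follows from \Cref{lem:smallPot}: once $\Phi \le -1000 m \log(mC)$, the residual is $\le (mC)^{-10}$.

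For per-step progress, a direct second-order Taylor expansion of the barrier $x^{-\alpha}$ in a small multiplicative neighborhood (enforced by $\|\UU\BB\bDelta\|_\infty \le 1/2$), combined with the log-term expansion, gives
\begin{align*}
\Phi(\yy + \bDelta) \le \Phi(\yy) + \l \gg(\yy), \bDelta\r + O(1)\cdot \|\UU(\yy)\BB\bDelta\|_2^2 \le \Phi(\yy) + \l \gg(\yy), \bDelta\r + O(1)\cdot \|\UU(\yy)\BB\bDelta\|_1^2.
\end{align*}
A $\kappa$-approximate min-ratio cut, rescaled so $\l \wt{\gg}, \bDelta\r = -1/(100\kappa^2)$, has $\|\wt{\UU}\BB\bDelta\|_1 = O(1/\kappa)$. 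The $(1 + 1/10\kappa)$-closeness of $\wt{\ss}$ to $\ss(\yy)$ transfers to both $\wt{\gg}$ and $\wt{\UU}$ up to constant factors, so the true linear decrease is still $-\Omega(1/\kappa^2)$ and the quadratic term is $O(1/\kappa^2)/100$, yielding a net decrease of $\Omega(1/\kappa^2)$ per iteration and thus $\tilde{O}(m\kappa^2)$ total iterations to reach $\Phi \le -\tilde{\Omega}(m)$.

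For the existence of a good cut whenever $F$ is achievable, I would argue via duality: if the primal optimum is $< F$, then any feasible $\yy^\star$ with $\l\dd,\yy^\star\r \ge F$ gives $\bDelta^\star \defeq \yy^\star - \yy$ satisfying $\l \gg, \bDelta^\star\r \le -\Omega(m)$, since $\frac{-100m}{r}\l\dd,\bDelta^\star\r \le -100m$ dominates the barrier gradient contribution. Meanwhile $\|\UU\BB\bDelta^\star\|_1 = \sum_e \ss(e)^{-1-\alpha}|(\BB\bDelta^\star)(e)| \le \sum_e \ss(e)^{-\alpha} = \tilde{O}(m)$ because $\ss(\yy^\star) \ge 0$ implies $|(\BB\bDelta^\star)(e)| \le \ss(\yy)(e)$, and the barrier-term bound $\Phi \le \tilde{O}(m)$ is maintained inductively. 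The optimum ratio is then $\le -\Omega(1)$ and a $\kappa$-approximation delivers ratio $\le -\Omega(1/\kappa)$; contrapositively, whenever the DS returns ratio $> -\Omega(1/\kappa)$ we soundly certify infeasibility. For slack stability, the same invariant $\Phi \le \tilde{O}(m)$ forces $\ss(e)^{-\alpha} \le \tilde{O}(m)$, so $\ss(e) \ge 2^{-O(\log^2 mC)}$ by our choice of $\alpha$, while the upper bound $\ss(e) \le \mathrm{poly}(mC)$ comes from $C$-bounded costs and the $O(1/\kappa)$ per-step norm bound on $\BB\bDelta$. Decremental primal updates (edge deletions, cost increases) only enlarge the dual feasible set and weakly decrease $\Phi$, so they are handled by pushing the corresponding changes into $\wt{\ss}, \wt{r}$.

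The main obstacle is calibrating the interaction between the IPM and the approximate slacks. The min-ratio cut DS only notifies the IPM of an edge when its accumulated potential change exceeds the detection threshold $\epsilon$, so maintaining the multiplicative accuracy \eqref{eq:estAcc} requires choosing $\epsilon$ on the order of $\ss(e)/\kappa$ while simultaneously keeping the total reported-edge count, bounded by $\tilde{O}(1) \cdot t/\epsilon$ per \Cref{thm:min_ratio_cut_data_structure_rand}, compatible with the $\tilde{O}(m\kappa^2)$ iteration budget. I anticipate this is handled by working with slacks in $O(\log m)$ geometric scales (so $\epsilon$ can be chosen scale-wise) and arguing that the total amount of ``true'' change sent through the graph per step is $O(1/\kappa)$, so the amortized number of reports per iteration is $\tilde{O}(1)$ — delivering the claimed $\tilde{O}(m\kappa^2)$ total work budget for this step of the algorithm.
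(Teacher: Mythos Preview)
Your overall plan matches the paper's, and the progress, termination, and slack-stability arguments are essentially right. However, there is a genuine gap in your existence-of-good-cut step.

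You claim that $\ss(\yy^\star) \ge 0$ implies $|(\BB\bDelta^\star)(e)| \le \ss(\yy)(e)$ and hence $\|\UU\BB\bDelta^\star\|_1 \le \sum_e \ss(e)^{-\alpha}$. This is false. Writing $(\BB\bDelta^\star)(e) = \ss(\yy)(e) - \ss(\yy^\star)(e)$, feasibility of $\yy^\star$ only gives the one-sided bound $(\BB\bDelta^\star)(e) \le \ss(\yy)(e)$; nothing prevents $\ss(\yy^\star)(e)$ from being large, making $(\BB\bDelta^\star)(e)$ arbitrarily negative relative to $\ss(\yy)(e)$. Your one-sided bound suffices for the numerator estimate $\alpha\sum_e \ss(e)^{-1-\alpha}(\BB\bDelta^\star)(e) \le \alpha\sum_e \ss(e)^{-\alpha}$, but not for controlling $\|\UU\BB\bDelta^\star\|_1$. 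Replacing the per-edge bound by the global slack upper bound $\ss \le (mC)^{O(1)}$ does not help either, since $\ss(e)^{-1-\alpha}$ can be as large as $2^{O(\log^2 mC)}$.

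The paper sidesteps this by never bounding numerator and denominator separately. Using the identity $-\ss(\yy^\star)(e) = -|\ss(\yy)(e) + (\zz(e)-\zz^\star(e))| \le \ss(\yy)(e) - |\zz(e)-\zz^\star(e)|$, one gets
\[
\ss(e)^{-1-\alpha}(\zz^\star(e) - \zz(e)) \;\le\; 2\ss(e)^{-\alpha} - \ss(e)^{-1-\alpha}|\zz^\star(e) - \zz(e)|,
\]
which after summing and combining with the demand term yields $\l\gg(\yy), \bDelta^\star\r \le -50m - \alpha\|\UU\BB\bDelta^\star\|_1$. The ratio bound $\le -\alpha$ then follows by division, and the resulting optimal ratio is $-\tilde{\Omega}(1)$ rather than $-\Omega(1)$. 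This coupled inequality is the missing piece; once you have it, the rest of your outline goes through. (A minor point: your final paragraph about detection thresholds concerns the implementation in \Cref{thm:main}, not the present theorem, whose statement already assumes \eqref{eq:estAcc} is maintained.)
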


\begin{remark}
Note that under an edge deletion, $\yy$ remains feasible and $\Phi(\yy)$ decreases.
\end{remark}

We are now ready to prove the almost-linear time threshold min-cost flow algorithm using the Dual $\ell_1$-IPM (\Cref{thm:dualL1IPM}) and the adaptive min-ratio cut data structure (\Cref{thm:min_ratio_cut_data_structure_rand} and \Cref{thm:min_ratio_cut_data_structure_det}, which is deterministic).
The proof of \Cref{thm:dualL1IPM} is deferred to \Cref{sec:analysisIPM}.

\begin{proof}[Proof of \Cref{thm:main}]
We use \Cref{lem:reduceToTrans} to reduce to a corresponding transshipment problem, which we then handle by \Cref{thm:dualL1IPM}. We may assume that $Q \le m$, because after $m$ updates we can rebuild the whole data structure.

We first present the randomized algorithm.
Let $\kappa = 2^{O(\log^{3/4} m \log\log m)}$ be the approximation ratio of the min-ratio cut data structure from \Cref{thm:min_ratio_cut_data_structure_rand}.
Let $\yy^{(0)}$ be the dual solution guaranteed by \Cref{lem:initialDual} where $\cc - \BB \yy^{(0)} \ge C$ and, thus, $\Phi(\yy^{(0)}) = \O(m)$.
We initialize the edge slacks $\wt{\ss} \defeq \ss(\yy^{(0)})$ and the residual $\wt{r} \defeq r(\yy^{(0)}).$
In addition, we maintain and initialize the gradients $\wt{\gg} \defeq \gg(\wt{\ss}, \wt{r})$ and the capacities $\wt{\uu} \defeq \uu(\wt{\ss}).$ 
Then, we initialize the min-ratio cut data structure $\cD$ from \Cref{thm:min_ratio_cut_data_structure_rand} with $\eps \defeq 1 / (10 \kappa)$.
We rebuild everything after every $\eps m$ iterations. 

At each iteration $t$, let $U^{(t)}$ be the set of edges output by $\cD$ at the end of the previous iteration (or the empty set when $t = 0$).
For each $e = (u, v) \in U^{(t)}$, we update its slack $\wt{\ss}(e) \gets \ss(\yy^{(t)}, e)$ as well as its capacity $\wt{\uu}(e)$ and the gradients $\gg$ on vertex $u$ and $v$. Notice that this update always corresponds to adding some $\delta$ to $\gg(u)$ and subtracting it from $v$ in accordance with the definition of the min-ratio cut data structure. 
This can be done by querying $\cD$ about the current value of $\yy(u)$ and $\yy(v)$ in $2^{O(\log^{3/4} m \log\log m)}$-amortized time.
The updates are passed to $\cD$ and handled in the same amount of time.

Then, $\cD$ computes a $\kappa$-approximate min-ratio cut $\bDelta$ of ratio at least $g / u$, and reports $g$ and $u$. 
If the ratio is larger than $-\O(1/\kappa)$, we certify that the optimal transshipment value is less than $F$ and process the next deletion.  
Otherwise, we compute the scalar $\eta \defeq -1 / (100 \kappa^2 g)$ so that $\eta \l\wt{\gg}, \BB\bDelta\r = -1 / (100\kappa^2)$ and update $\yy^{(t+1)} \gets \yy^{(t)} + \eta \bDelta$.
This is handled using $\cD.\textrm{ToggleCut}(\eta)$. Edge deletions can also be implemented using the $\textsc{DeleteEdge}$ and $\textsc{UpdateGradient}$ operations of the min-ratio cycle data structure and do not increase the potential $\phi(\yy)$.

The correctness comes from \Cref{thm:dualL1IPM}, and it suffices to show that \eqref{eq:estAcc} holds in each iteration.
We first show that $\wt{r} \approx_{1 + 1/(10\kappa)} F - \l\dd, \yy^{(t)}\r$ all the time.
At the $t$-th iteration, let $\eta^{(t)} \bDelta^{(t)}$ be the cut that updates $\yy^{(t)}$, i.e. $ \eta^{(t)} \bDelta^{(t)}$ is the scaled output of $\cD$ that is used to obtain $\yy^{(t+1)} = \yy^{(t)} + \eta^{(t)} \bDelta^{(t)}.$
By \Cref{lem:resStable}, we have
\begin{align*}
    \frac{|\l\dd, \eta^{(t)} \bDelta^{(t)}\r|}{F - \l\dd, \yy^{(t)}\r} \le \frac{|\l\wt{\gg}, \eta^{(t)} \bDelta^{(t)}\r|}{50\kappa m} = \frac{1}{5000\kappa^3 m}
\end{align*}
because $\l\wt{\gg}, \eta^{(t)} \bDelta^{(t)}\r = -1/(100\kappa^2).$
Therefore, over $\eps m$ iterations, $F - \l\dd, \yy^{(t)}\r$ can change by at most a $(1 + 1 / (5000\kappa^3 m))^{\eps m} \le 1 + \eps$ factor.

Next, we show that $\wt{\ss} \approx_{1 + 1/(10\kappa)} \ss(\yy^{(t)}).$
At the $t$-th iteration, for any edge $e$,  let $p$ be the previous iteration when we update its slack estimation $\wt{\ss}(e)$.
That is, we have $\wt{\ss}(e) = \ss(\yy^{(p)}, e).$
\Cref{thm:min_ratio_cut_data_structure_rand} guarantees that
\begin{align*}
\wt{\uu}(e)\left|(\BB \yy^{(t)})(e) - (\BB \yy^{(p)})(e)\right|
= \ss(\yy^{(p)}, e)^{-1-\alpha}\left|\ss(\yy^{(t)}, e) - \ss(\yy^{(p)}, e)\right|
\le \eps = \frac{1}{100\kappa}
\end{align*}
Because $\ss(\yy^{(p)}, e)^{\alpha} \le 2$, this implies that $\ss(\yy^{(p)}, e)$ approximates $\ss(\yy^{(t)}, e)$ up to a factor of $1 + 2\eps \le 1 + 1/(10\kappa)$ and \eqref{eq:estAcc} follows.

In order to bound the total runtime, we need to bound the total size of updates $\{U^{(t)}\}$ output by the min-ratio cut data structure $\cD.$
Again, let $\bDelta^{(t)}$ be the cut output by $\cD$ during the $t$-th iteration and $\uu^{(t)}$ be the capacities at the time.
For an edge $e$ which is previously updated during iteration $p$, it is included in $U^{(t)}$ if the accumulated potential difference exceeds $\eps$, i.e., $\sum_{p \le x < t} \wt{\uu}_e |\eta^{(t)}(\BB \bDelta^{(x)})(e)| > \eps.$
Therefore, we can bound the total size by
\begin{align*}
    \sum_t |U^{(t)}| \le \sum_t \norm{\eta^{(t)}\wt{\UU}^{(t)} \BB \bDelta^{(t)}}_1 \eps^{-1} \le \O(m\kappa^2)
\end{align*}
because there are $\O(m \kappa^2)$ iterations and $\|\eta^{(t)} \wt{\UU}^{(t)} \BB \bDelta^{(t)}\|_1 = \O(1/\kappa)$ due to scaling.
This concludes the $m \cdot 2^{O(\log^{3/4} m \log\log m)}$ total runtime.


The deterministic version follows from the analogous argument after replacing the min-ratio cut data structure with the deterministic variant \Cref{thm:min_ratio_cut_data_structure_det}.
This yields a total runtime of $m \cdot 2^{O(\log^{5/6} m \log\log m)}.$
\end{proof}

\subsection{Analysis of L1-IPM on the Dual}
\label{sec:analysisIPM}

To prove \Cref{thm:dualL1IPM}, we analyze how the 1st and the 2nd order derivatives affect the potential value $\Phi(\yy)$ and extensively use the following facts regarding the 2nd order Taylor approximation.

\begin{lemma}[Taylor Expansion for $x^{-\alpha}$]
\label{lem:taylorAlphaPower}
If $|\delta| \le 0.1 x, x > 0$, we have
\begin{align*}
(x + \delta)^{-\alpha} &\le x^{-\alpha} - \alpha x^{-1 - \alpha} \delta + \alpha x^{-2 - \alpha} \delta^2, \text{and} \\
\left|(x + \delta)^{-1-\alpha} - x^{-1-\alpha}\right| &\le 2 |\delta|x^{-2-\alpha}
\end{align*}
\end{lemma}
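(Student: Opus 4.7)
My plan is to prove both inequalities by reducing to Taylor expansions of univariate functions around $0$ and then bounding the remainder terms using the hypothesis $|\delta| \le 0.1\, x$.

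For the first inequality, I would factor out $x^{-\alpha}$ and substitute $t = \delta / x$, reducing the claim to showing that for $|t| \le 0.1$,
\[
(1+t)^{-\alpha} \le 1 - \alpha t + \alpha t^2.
\]
Define $f(t) = (1+t)^{-\alpha}$, so $f'(t) = -\alpha(1+t)^{-1-\alpha}$ and $f''(t) = \alpha(1+\alpha)(1+t)^{-2-\alpha}$. Apply Taylor's theorem with the Lagrange form of the remainder at $0$: there exists $\xi$ between $0$ and $t$ with
\[
f(t) = 1 - \alpha t + \tfrac{1}{2}\,\alpha(1+\alpha)\,(1+\xi)^{-2-\alpha}\, t^2.
\]
Since $|\xi| \le 0.1$, we have $(1+\xi)^{-2-\alpha} \le (0.9)^{-2-\alpha}$; combined with $\alpha \le 1$ (which is trivially true since $\alpha = 1/(1000 \log mC)$), the coefficient $\tfrac{1}{2}(1+\alpha)(0.9)^{-2-\alpha}$ is bounded by $1$, so $f(t) \le 1 - \alpha t + \alpha t^2$. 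Multiplying through by $x^{-\alpha}$ and substituting back gives the first inequality.

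For the second inequality, I would apply the mean value theorem directly to $g(t) = (x+t)^{-1-\alpha}$. Its derivative is $g'(t) = -(1+\alpha)(x+t)^{-2-\alpha}$, so for some $\xi$ between $0$ and $\delta$,
\[
(x+\delta)^{-1-\alpha} - x^{-1-\alpha} = -(1+\alpha)(x+\xi)^{-2-\alpha}\,\delta.
\]
Since $|\xi| \le 0.1 x$, we have $(x+\xi)^{-2-\alpha} \le (0.9\,x)^{-2-\alpha} = (0.9)^{-2-\alpha}\, x^{-2-\alpha}$, so
\[
\bigl|(x+\delta)^{-1-\alpha} - x^{-1-\alpha}\bigr| \le (1+\alpha)(0.9)^{-2-\alpha}\, x^{-2-\alpha}\,|\delta|.
\]
It remains to check $(1+\alpha)(0.9)^{-2-\alpha} \le 2$, which holds comfortably for $\alpha \le 1$, completing the bound.

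There is no real obstacle here; the only thing to verify carefully is that the numerical constants in the two remainder estimates indeed satisfy $\tfrac{1}{2}(1+\alpha)(0.9)^{-2-\alpha} \le 1$ and $(1+\alpha)(0.9)^{-2-\alpha} \le 2$ in the relevant parameter range. Both hold with room to spare for the tiny $\alpha$ used in the IPM.
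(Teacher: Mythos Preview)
The paper states this lemma without proof, treating it as a standard Taylor estimate, so there is no ``paper's proof'' to compare against. Your approach via Taylor's theorem with Lagrange remainder (for the first inequality) and the mean value theorem (for the second) is exactly the natural way to establish it, and the structure of the argument is sound.

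One small correction: your claim that the constant bounds hold ``for $\alpha \le 1$'' is not quite right. At $\alpha = 1$ one has $(1+\alpha)(0.9)^{-2-\alpha} = 2\cdot(0.9)^{-3} \approx 2.74 > 2$, and likewise $\tfrac{1}{2}(1+\alpha)(0.9)^{-2-\alpha} \approx 1.37 > 1$; both constant checks actually fail once $\alpha$ exceeds roughly $0.5$. However, as you yourself note in the closing sentence, the relevant $\alpha = 1/(1000\log(mC))$ is tiny, and for such $\alpha$ the coefficients are easily bounded (e.g., $(1+\alpha)(0.9)^{-2-\alpha} \le 1.3$ when $\alpha \le 0.1$). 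So the proof is correct for the parameter regime the paper actually uses; just replace the loose ``$\alpha \le 1$'' with a tighter assumption such as $\alpha \le 0.1$ when verifying the numerical constants.
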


\begin{lemma}[Taylor Expansion for $\log x$]
\label{lem:taylorLog}
For $x > 0, \delta > -x$, we have
\begin{align*}
    \log(x + \delta) \le \log(x) + \delta/x
\end{align*}
\end{lemma}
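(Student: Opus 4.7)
The plan is to reduce this to the standard one-variable inequality $\log(1+t) \le t$ for $t > -1$, which is essentially the first-order Taylor bound coming from concavity of $\log$. Substituting $t = \delta/x$ (valid since $\delta > -x$ forces $t > -1$) gives
\[
\log(x + \delta) - \log(x) = \log\!\left(1 + \frac{\delta}{x}\right) \le \frac{\delta}{x},
\]
which rearranges to the claim. So the only real content is the one-variable inequality.

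To establish $\log(1+t) \le t$ for $t > -1$, I would use the tangent-line characterization of concavity. Since $\log$ is concave on $(0, \infty)$, the graph lies below any tangent line; taking the tangent at $1$, whose equation is $y = t$ (since $\log(1) = 0$ and $(\log)'(1) = 1$), yields $\log(1+t) \le 1 \cdot t = t$ for every $t > -1$. Alternatively, one can define $f(t) = t - \log(1+t)$, note $f(0) = 0$, and check $f'(t) = t/(1+t)$, which is nonpositive on $(-1, 0]$ and nonnegative on $[0, \infty)$, so $f$ attains its global minimum $0$ at $t = 0$ and hence $f(t) \ge 0$ throughout its domain.

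There is no real obstacle here; this is a textbook one-line consequence of concavity of the logarithm. The only thing to verify is that the hypothesis $\delta > -x$ (together with $x > 0$) is exactly what is needed to put $1 + \delta/x$ in the domain $(0, \infty)$ of $\log$, so the substitution step is valid.
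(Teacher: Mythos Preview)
Your proof is correct. The paper states this lemma without proof, treating it as a standard fact; your argument via concavity of $\log$ (equivalently, the inequality $\log(1+t)\le t$ for $t>-1$) is exactly the textbook justification one would supply.
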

We then let $\zz \defeq \BB \yy$. We give a reduction to the case where $\cc(e) - \zz(e)$ is polynomially bounded for each edge $e \in E$. 

\begin{lemma}
\label{lem:slackUB}
Given any un-capacitated min-cost flow instance \eqref{eq:transPrimal}, one can modify the instance so that any new optimal solution is also optimal in the original instance and any feasible dual solution $\by$ satisfies $0 \le \cc(e) - \zz(e) \le 3mC$ where we recall $\zz = \BB \yy$. 
\end{lemma}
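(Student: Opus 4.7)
I propose modifying the instance by adding, for each edge $e = (u,v) \in E$, a reverse edge $e^{\mathrm{rev}} = (v,u)$ with cost $2mC$; let $G'$ denote the resulting graph. Assuming the original instance is feasible, I will establish three facts: (i) any dual-feasible $\yy$ for $G'$ satisfies $\cc(e) - \zz(e) \in [0, 3mC]$ on every edge; (ii) the optimal transshipment values on $G$ and $G'$ coincide; and (iii) every primal optimum in $G'$ is also a primal optimum in $G$.

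Fact (i) is immediate: dual feasibility on $e^{\mathrm{rev}}$ gives $2mC - (\yy(v) - \yy(u)) \ge 0$, i.e., $\yy(u) - \yy(v) \ge -2mC$; combined with the original constraint $\yy(u)-\yy(v) \le \cc(e) \le C$, this yields $\cc(e) - \zz(e) \in [0, \cc(e) + 2mC] \subseteq [0, 3mC]$ for original edges, and a symmetric computation handles reverse edges.

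For fact (ii), $\mathrm{OPT}(G') \le \mathrm{OPT}(G)$ follows by extending any primal-feasible flow of $G$ to $G'$ by placing zero flow on reverse edges. For the other direction, I exhibit a dual solution feasible for $G'$ with objective $\mathrm{OPT}(G)$. WLOG the underlying undirected graph of $G$ is connected (each weakly-connected component must be demand-balanced by feasibility and is handled independently). By LP theory, $G$'s dual, being bounded and feasible, attains its optimum at a vertex $\yy^*$; after fixing $\yy^*(r) = 0$ at a root $r$, this vertex is determined by $n-1$ linearly independent tight constraints corresponding to a spanning tree $T$ of $G$. Summing signed costs along the tree path between any two vertices gives $|\yy^*(u) - \yy^*(v)| \le (n-1) C \le mC$. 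In particular, $\yy^*(v) - \yy^*(u) \le mC \le 2mC$ for every $(u,v) \in E$, so $\yy^*$ satisfies all the new reverse-edge constraints, remains dual-feasible in $G'$ with the same objective, and hence $\mathrm{OPT}(G') \ge \mathrm{OPT}(G)$.

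For fact (iii), $\yy^*$ is then also optimal for $G'$'s dual, and complementary slackness applied to any primal optimum $\ff^*$ in $G'$ gives $\ff^*(e^{\mathrm{rev}}) \cdot (2mC - (\yy^*(v) - \yy^*(u))) = 0$ on every reverse edge; since the reduced cost is at least $2mC - mC = mC > 0$, this forces $\ff^*(e^{\mathrm{rev}}) = 0$. Hence $\ff^*$ uses no reverse edges, is a flow in $G$, and has cost $\mathrm{OPT}(G') = \mathrm{OPT}(G)$, so it is primal-optimal in $G$. The main technical step is the tree-based bound on vertex-optimal dual potentials in (ii); complementary slackness in (iii) is then routine, and the only minor subtlety---disconnected or infeasible inputs---is handled either vacuously or component-by-component.
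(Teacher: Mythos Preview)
Your proof is correct but takes a different route from the paper. The paper modifies the instance by adding a single super-source $s$ with bidirectional edges of cost $mC$ to every vertex (with $\dd(s)=0$); dual feasibility on these edges immediately gives $|\yy(u)-\yy(s)|\le mC$, hence $\cc(e)-\zz(e)\le C+2mC$ by the triangle inequality, and optimality is preserved by observing that any simple cycle through $s$ in the residual graph of an optimal flow picks up cost at least $2mC$ from the two $s$-incident edges and at most $(n-2)C$ elsewhere, so no negative cycle is created. Your reverse-edge construction instead pairs each original constraint with a complementary one and reads off the slack bound directly; to show optimality is preserved you appeal to the LP fact that a vertex-optimal dual lives on a spanning tree of tight constraints (giving $|\yy^*(u)-\yy^*(v)|\le (n-1)C$), and then use complementary slackness. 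Both arguments are sound. The paper's modification is slightly cleaner for the decremental setting used later: the super-source edges are never touched, so the slack bound persists verbatim under edge deletions, whereas with your construction deleting an original edge $(u,v)$ leaves its partner $(v,u)$ with a potentially unbounded slack unless you also delete the partner---still decremental, but an extra bookkeeping step. Conversely, your argument is more self-contained LP-wise and avoids reasoning about residual-graph cycle costs.
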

\begin{proof}
We can add a super source $s$ and bidirectional edges toward every other vertex of cost $mC.$
This won't affect the optimal primal solution as the residue graph induced by any optimal flow of the original instance contains no negative cycle.
Furthermore,  this ensures that for any feasible dual solution $\yy$ and for any vertex $u$, we have $|\yy(u) - \yy(s)| \le mC.$
Thus, for any edge $e = (u, v)$, we have $\cc(e) - \zz(e) \le |\cc(e)| + |\yy(u) - \yy(s)| + |\yy(v) - \yy(s)| \le 3mC.$
\end{proof}

First, we show that a cut of a small ratio makes enough progress.

\begin{lemma}[Progress from Cuts of Small Ratio]
\label{lem:progress}
Let $\kappa \ge 1$, $\gg = \gg(\yy)$ and $\uu = \uu(\yy).$
Given $\bDelta \in \R^V$ s.t. $\l\bg, \bDelta\r / \|\UU \BB \bDelta\|_1 \le -1/\kappa$ and $\l\gg, \bDelta\r = -1 / (100\kappa^2)$, we have
\begin{align*}
    \Phi(\yy + \bDelta) \le \Phi(\yy) - \frac{1}{1000\kappa^2}
\end{align*}
\end{lemma}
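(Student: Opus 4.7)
The plan is to bound $\Phi(\yy + \bDelta)$ by its second-order Taylor approximation around $\yy$, which naturally splits into a linear term equal to $\l\gg, \bDelta\r$ and a quadratic error governed by $\|\UU\BB\bDelta\|_1^2$; the hypotheses then make these quantities line up exactly to force a decrease of order $1/\kappa^2$.

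First I would extract the key size bound. Combining $\l\gg, \bDelta\r/\|\UU\BB\bDelta\|_1 \le -1/\kappa$ with $\l\gg, \bDelta\r = -1/(100\kappa^2)$ gives
\[
\|\UU\BB\bDelta\|_1 \le \frac{1}{100\kappa}.
\]
In particular, for every edge $e$ we have $\uu(e)|(\BB\bDelta)(e)| \le 1/(100\kappa)$, i.e. $|(\BB\bDelta)(e)| \le \ss(e)^{1+\alpha}/(100\kappa)$. Using the slack upper bound $\ss(e) \le 3mC$ from \Cref{lem:slackUB} together with $\alpha = 1/(1000\log(mC))$ gives $\ss(e)^\alpha \le 2$, whence $|(\BB\bDelta)(e)| \le \ss(e)/(50\kappa) \le 0.1\ss(e)$, so \Cref{lem:taylorAlphaPower} is applicable on every edge.

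Next I would Taylor-expand both terms of $\Phi$ at $\yy$. For the barrier term, applying \Cref{lem:taylorAlphaPower} edge by edge yields
\[
\sum_e \bigl(\ss(e) - (\BB\bDelta)(e)\bigr)^{-\alpha} \le \sum_e \ss(e)^{-\alpha} + \alpha \bigl\langle \BB^\top \ss^{-1-\alpha}, \bDelta\bigr\rangle + \alpha \sum_e \ss(e)^{-2-\alpha}(\BB\bDelta)(e)^2.
\]
For the logarithmic term, \Cref{lem:taylorLog} gives
\[
100m\log\!\bigl(F - \l\dd,\yy + \bDelta\r\bigr) \le 100m\log\!\bigl(F - \l\dd,\yy\r\bigr) + \frac{-100m}{F - \l\dd,\yy\r}\l\dd,\bDelta\r.
\]
Adding the two and recalling the definition of $\gg = \gg(\yy)$, the linear parts combine into $\l\gg,\bDelta\r$, so
\[
\Phi(\yy + \bDelta) \le \Phi(\yy) + \l\gg,\bDelta\r + \alpha\sum_e \ss(e)^{-2-\alpha}(\BB\bDelta)(e)^2.
\]

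Finally I would bound the quadratic term. Writing $\ss(e)^{-2-\alpha} = \uu(e)^2 \ss(e)^{\alpha} \le 2\uu(e)^2$, and using $\|\UU\BB\bDelta\|_2^2 \le \|\UU\BB\bDelta\|_1^2$,
\[
\alpha\sum_e \ss(e)^{-2-\alpha}(\BB\bDelta)(e)^2 \le 2\alpha\|\UU\BB\bDelta\|_1^2 \le \frac{2\alpha}{(100\kappa)^2} \le \frac{1}{(100\kappa)^2}.
\]
Plugging in $\l\gg,\bDelta\r = -1/(100\kappa^2)$ yields
\[
\Phi(\yy + \bDelta) \le \Phi(\yy) - \frac{1}{100\kappa^2} + \frac{1}{10000\kappa^2} \le \Phi(\yy) - \frac{1}{1000\kappa^2},
\]
which is the desired bound. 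The only potentially subtle step is verifying the Taylor expansion hypothesis on every edge, which is precisely why the potential uses the $(\cdot)^{-\alpha}$ barrier with small $\alpha$: it makes $\ss(e)^\alpha = O(1)$ even though $\ss(e)$ can be as large as $\mathrm{poly}(mC)$, so the per-edge capacity bound can be turned into a pointwise bound on $|(\BB\bDelta)(e)|$ relative to $\ss(e)$.
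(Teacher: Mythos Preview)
Your argument follows the paper's approach almost exactly, but there is one gap: before invoking \Cref{lem:taylorLog} on the logarithmic term you must verify its hypothesis $\delta > -x$, i.e., that $F - \l\dd,\yy+\bDelta\r > 0$. This is not automatic: a priori $\l\dd,\bDelta\r$ could be large even when $\l\gg,\bDelta\r$ and $\|\UU\BB\bDelta\|_1$ are small. The paper handles it by isolating the demand part of the gradient,
\[
\frac{100m}{F-\l\dd,\yy\r}\,|\l\dd,\bDelta\r| \;=\; \bigl|\l\gg,\bDelta\r - \alpha\l\ss^{-1-\alpha},\BB\bDelta\r\bigr| \;\le\; \frac{1}{100\kappa^2} + \alpha\|\UU\BB\bDelta\|_1 \;\le\; \frac{1}{100\kappa},
\]
which shows $|\l\dd,\bDelta\r|$ is a tiny fraction of $F-\l\dd,\yy\r$, so the new residual stays positive. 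You should add this check.

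Aside from that, your treatment of the quadratic error via $\ss(e)^{-2-\alpha} = \uu(e)^2\ss(e)^\alpha \le 2\uu(e)^2$ and then $\|\UU\BB\bDelta\|_2^2 \le \|\UU\BB\bDelta\|_1^2$ is a clean alternative to the paper's route, which instead plugs the per-edge bound $|(\BB\bDelta)(e)| \le \ss(e)/(50\kappa)$ back into the quadratic sum to reduce it to $\frac{\alpha}{50\kappa}\|\UU\BB\bDelta\|_1$. Both give bounds of order $\alpha/\kappa^2$, so either works.
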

\begin{proof}
First, observe that 
\begin{align*}
\norm{\UU \BB\bDelta}_1 = \sum_e \ss(e)^{-1-\alpha} |(\BB\bDelta)(e)| \le \frac{1}{100\kappa}
\end{align*}
Therefore, for any edge $e$, we have
\begin{align}
\label{eq:changePerEdge}
|(\BB\bDelta)(e)| \le \frac{1}{100\kappa} \ss(e)^{1+\alpha} \le \frac{1}{100\kappa} \ss(e)(3mU)^{\alpha} \le \frac{1}{50\kappa} \ss(e)
\end{align}
where we use \Cref{lem:slackUB} to bound $\ss(e) \le 3mU.$ 

Recall the definition of $\Phi(\yy + \bDelta)$ and $\l\bg, \bDelta\r$:
\begin{align*}
\Phi(\yy + \bDelta)
&= 100m \log (F - \l\dd, \yy\r - \l\dd, \bDelta\r) + \sum_{e} (\ss(e) - (\BB\bDelta)(e))^{-\alpha} \\
\l\gg, \bDelta\r
&= \frac{-100m}{F - \l\dd, \yy\r} \l\dd, \bDelta\r + \alpha \sum_e \ss(e)^{-1-\alpha} (\BB\bDelta)(e)
\end{align*}
We will use \Cref{lem:taylorLog} and \Cref{lem:taylorAlphaPower} to analyze the change in both terms respectively.

To bound $\log (F - \l\dd, \yy\r - \l\dd, \bDelta\r)$ using \Cref{lem:taylorLog}, we first need to argue that $|\l\dd, \bDelta\r|$ is small so that $F - \l\dd, \yy\r - \l\dd, \bDelta\r$ stays positive:
\begin{align*}
\frac{100m}{F - \l\dd, \yy\r} |\l\dd, \bDelta\r|
&= \left|\l\gg, \bDelta\r - \alpha \sum_e \ss(e)^{-1-\alpha} (\BB\bDelta)(e)\right| \\
&\le \frac{1}{100\kappa^2} + \alpha \norm{\UU(\yy) \BB\bDelta}_1 \le \frac{1}{100\kappa^2} + \frac{\alpha}{100 \kappa} \le \frac{1}{100\kappa}
\end{align*}
Therefore, we have, by \Cref{lem:taylorLog}, 
\begin{align*}
100m \log (F - \l\dd, \yy\r - \l\dd, \bDelta\r)
&\le 100m\log (F - \l\dd, \yy\r) - \frac{100m}{F - \l\dd, \yy\r} \l\dd, \bDelta\r
\end{align*}

Next, we want to bound $\sum_e (\ss(e) - (\BB\bDelta)(e))^{-\alpha}.$
By \Cref{lem:taylorAlphaPower} and \eqref{eq:changePerEdge}, we have
\begin{align*}
\sum_e (\ss(e) - (\BB\bDelta)(e))^{-\alpha}
&\le \sum_e \ss(e)^{-\alpha} + \alpha \ss(e)^{-1-\alpha} (\BB\bDelta)(e) + \alpha \ss(e)^{-2-\alpha} (\BB\bDelta)(e)^2 \\
&\le \sum_e \ss(e)^{-\alpha} + \alpha \ss(e)^{-1-\alpha} (\BB\bDelta)(e) + \alpha \frac{1}{50\kappa} \sum_e \ss(e)^{-1-\alpha} |(\BB\bDelta)(e)| \\
&= \sum_e \ss(e)^{-\alpha} + \alpha \ss(e)^{-1-\alpha} (\BB\bDelta)(e) + \alpha \frac{1}{50\kappa} \norm{\UU \BB\bDelta}_1 \\
&\le \sum_e \ss(e)^{-\alpha} + \alpha \ss(e)^{-1-\alpha} (\BB\bDelta)(e) + \frac{\alpha }{5000\kappa^2}
\end{align*}

Combining both yields
\begin{align*}
\Phi(\yy + \bDelta)
\le \Phi(\yy) + \l\gg, \bDelta\r +  \frac{\alpha }{5000\kappa^2} 
= \Phi(\yy) - \frac{1}{100\kappa^2} +  \frac{\alpha}{5000 \kappa^2} 
\le \Phi(\yy) - \frac{1}{1000\kappa^2}
\end{align*}
\end{proof}

Next, we show that when the optimal value of \eqref{eq:transDual} is at least the given threshold $F$ and the current solution is far from it, the direction to the optimal solution $\yy^{\star} - \yy$ has a small ratio, which establishes that a small ratio update exists. 

\begin{lemma}[Existence of Cuts with Small Ratio]
\label{lem:existSmallRatioCut}
Consider $\yy \in \R^V$ s.t. $\Phi(\yy) \le 1000m \log(mC)$ and $\log(F - \l\dd, \yy\r) \ge -10\log(mC)$. 
Suppose there is some $\yy^{\star}$ s.t. $\l\dd, \yy^{\star}\r \ge F$ and $\cc - \BB\yy^{\star} \ge 0$, we have
\begin{align*}
\frac{\l\bg(\yy), \yy^{\star} - \yy\r}{\norm{\UU(\yy) \BB (\yy^{\star} - \yy)}_1} 
\le -\alpha
\end{align*}
\end{lemma}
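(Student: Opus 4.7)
The plan is to expand both the numerator and the denominator in terms of $\ss = \cc - \BB \yy$ and $\ss^\star = \cc - \BB \yy^\star$, exploit that $\ss^\star \ge 0$, and finally combine everything using the potential upper bound on $\Phi(\yy)$. The driving algebraic identity is $\BB(\yy^\star - \yy) = \ss - \ss^\star$.

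First I will expand the numerator using the definition of $\gg(\yy)$:
\[
\l \gg(\yy), \yy^\star - \yy \r
= \frac{-100m}{F - \l \dd, \yy\r} \l \dd, \yy^\star - \yy\r
+ \alpha \sum_e \ss(e)^{-\alpha}
- \alpha \sum_e \ss(e)^{-1-\alpha} \ss^\star(e).
\]
Since $\l \dd, \yy^\star\r \ge F$ and $F - \l\dd,\yy\r \ge (mC)^{-10} > 0$ by the hypothesis on $\log(F-\l\dd,\yy\r)$, the first term is at most $-100m$. The last term is $\le 0$ as $\ss, \ss^\star \ge 0$.

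Next I will bound the denominator via $|\ss(e) - \ss^\star(e)| \le \ss(e) + \ss^\star(e)$, which yields
\[
\alpha \cdot \norm{\UU(\yy)\BB(\yy^\star - \yy)}_1
\le \alpha \sum_e \ss(e)^{-\alpha}
+ \alpha \sum_e \ss(e)^{-1-\alpha} \ss^\star(e).
\]
The key step, which I expect to be the main substantive step (rather than an obstacle), is that when I add the numerator bound to $\alpha$ times the denominator bound, the two contributions of $\alpha \sum_e \ss(e)^{-1-\alpha} \ss^\star(e)$ cancel, leaving
\[
\l \gg(\yy), \yy^\star - \yy\r + \alpha \norm{\UU(\yy)\BB(\yy^\star - \yy)}_1
\le -100m + 2\alpha \sum_e \ss(e)^{-\alpha}.
\]

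Finally I will apply the potential bound. From $\Phi(\yy) \le 1000 m \log(mC)$ and $\log(F-\l\dd,\yy\r) \ge -10\log(mC)$, I get $\sum_e \ss(e)^{-\alpha} = \Phi(\yy) - 100m\log(F - \l\dd,\yy\r) \le 2000 m \log(mC)$. Plugging in $\alpha = 1/(1000 \log(mC))$ yields $2\alpha \sum_e \ss(e)^{-\alpha} \le 4m$, so the right-hand side is at most $-96m < 0$. The denominator is strictly positive since $\vecone^\top \dd = 0$ (feasibility of transshipment), so $\BB(\yy^\star - \yy) = 0$ would force $\l \dd, \yy^\star - \yy\r = 0$, contradicting $\l \dd, \yy^\star - \yy\r \ge F - \l \dd, \yy\r > 0$. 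Dividing by the positive denominator then gives the desired bound $\l \gg, \yy^\star - \yy\r / \norm{\UU\BB(\yy^\star - \yy)}_1 \le -\alpha$.
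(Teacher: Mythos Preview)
Your proof is correct and follows essentially the same approach as the paper: both expand the gradient using $\BB(\yy^\star-\yy)=\ss-\ss^\star$, bound the first term by $-100m$, and use the potential bound $\sum_e \ss(e)^{-\alpha}\le 2000m\log(mC)$ to finish. Your presentation via the triangle inequality $|\ss-\ss^\star|\le \ss+\ss^\star$ and the ensuing cancellation of the $\ss^\star$ terms is in fact a cleaner rendering of the paper's per-edge reverse-triangle manipulation, which after unwinding is exactly the same inequality.
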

\begin{proof}
We let $\zz = \BB \yy$ and $\zz^{\star} = \BB \yy^{\star}$. Writing down the definition yields:
\begin{align*}
\l\bg(\yy), \yy^{\star} - \yy\r
&= \frac{-100m}{F - \l\dd, \yy\r} \l\dd, \yy^{\star} - \yy\r + \alpha \sum_e \ss(e)^{-1-\alpha} (\zz^{\star}(e) - \zz(e)) \\
&\le -100m \frac{F - \l\dd, \yy\r}{F - \l\dd, \yy\r} + \alpha \sum_e \ss(e)^{-1-\alpha} (\zz^{\star}(e) - \zz(e)) \\
&= -100m + \alpha \sum_e \ss(e)^{-1-\alpha} (\zz^{\star}(e) - \zz(e))
\end{align*}
where we use the fact $\l\dd, \yy^{\star}\r \ge F.$

Observe that for any edge $e$, we have
\begin{align*}
\ss(e)^{-1-\alpha} (\zz^{\star}(e) - \zz(e))
&= \ss(e)^{-1-\alpha} (\ss(e)- (\cc(e) - \zz^{\star}(e))) \\
&= \ss(e)^{-\alpha} - \ss(e)^{-1-\alpha}(\cc(e) - \zz^{\star}(e)) \\
&= \ss(e)^{-\alpha} - \ss(e)^{-1-\alpha}|\cc(e) - \zz(e) + \zz(e) - \zz^{\star}(e)| \\
&\le \ss(e)^{-\alpha} + \ss(e)^{-1-\alpha}|\cc(e) - \zz(e)| - \ss(e)^{-1-\alpha}|\zz(e) - \zz^{\star}(e)| \\
&= 2\ss(e)^{-\alpha} - \ss(e)^{-1-\alpha}|\zz(e) - \zz^{\star}(e)|
\end{align*}
where we use the fact $-|a+b| \le |a| - |b|.$
Then, we have
\begin{align*}
\l\bg(\yy), \yy^{\star} - \yy\r
&\le -100m + \alpha \sum_e \ss(e)^{-1-\alpha} (\zz^{\star}(e) - \zz(e)) \\
&\le -100m + 2\alpha \sum_e \ss(e)^{-\alpha} - \alpha \norm{\UU(\yy) \BB (\yy^{\star} - \yy)}_1 \\
&= -100m - \alpha \norm{\UU(\yy) \BB (\yy^{\star} - \yy)}_1 + 2\alpha(\Phi(\yy)- 100m \log(F - \l\dd, \yy\r)) \\
&\le -100m - \alpha \norm{\UU(\yy) \BB (\yy^{\star} - \yy)}_1 + 2000\alpha m \log(mC) \\
&\le -100m - \alpha \norm{\UU(\yy) \BB (\yy^{\star} - \yy)}_1 + 2m \le -50m - \alpha \norm{\UU(\yy) \BB (\yy^{\star} - \yy)}_1
\end{align*}
where we use the fact that $\Phi(\yy) \le 1000m \log(mC)$ and $\log(F - \l\dd, \yy\r) \ge -10\log(mC).$
\end{proof}

If we compute the slack $\ss(\yy)$ and write down both $\bg(\yy)$ and $\uu(\yy)$ explicitly and exactly at every iteration, this would take $\Omega(m)$ time per iteration and, hence, a $\Omega(m^2)$ total runtime over $m^{1+o(1)}$ iterations.
However, we allow approximation in the min-ratio cut computation at each iteration.
We instead use crude estimation of the slacks $\wt{\ss} \approx \ss(\yy)$ and the residual $\wt{r} \approx r(\yy)$ to define the gradients and the capacities.
In this setting, the min-ratio cut is preserved up to a constant factor.
\begin{lemma}[Slack and residual estimations suffice]
\label{lem:approxSlack}
Let $\kappa \ge 1$ and $\yy \in \R^V$ be a feasible dual potential.
Suppose we are also given estimations of its slack $\wt{\ss} \approx_{1 + 1 / (10 \kappa)} \ss(\yy)$ and its residual $\wt{r} \approx_{1 + 1 / (10 \kappa)} r(\yy).$
Define $\wt{\gg} = \gg(\wt{\ss}, \wt{r})$ and $\wt{\uu} = \uu(\wt{\ss})$ to be the corresponding gradients and capacities respectively.

Given some $\bDelta \in \R^V$ s.t. $\l\wt{\gg}, \bDelta\r / \|\wt{\UU}\BB\bDelta\|_1 \le -1/\kappa$, we have
\begin{align*}
    \frac{\l\gg(\yy), \bDelta\r}{\norm{\UU(\yy) \BB \bDelta}_1} \le \frac{-1}{100\kappa}
\end{align*}
\end{lemma}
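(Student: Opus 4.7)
The plan is to directly compare the two inner products and norms using a clever algebraic decomposition that sidesteps the main technical trap of this lemma: naively writing $\l\gg(\yy),\bDelta\r \le \l\wt\gg,\bDelta\r + |\l\gg(\yy)-\wt\gg,\bDelta\r|$ forces one to control $|\l\gg(\yy)-\wt\gg,\bDelta\r|$, but this quantity depends on $|\l\dd,\bDelta\r|$, which can be arbitrarily large relative to $\|\UU\BB\bDelta\|_1$.

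First, I will split $\l\wt\gg,\bDelta\r = A + B$ where $A = -\tfrac{100m}{\wt r}\l\dd,\bDelta\r$ and $B = \alpha \l \wt\ss^{-1-\alpha},\BB\bDelta\r$, and similarly $\l\gg(\yy),\bDelta\r = A' + B'$. Denote $T = \|\wt\UU\BB\bDelta\|_1$ and $T' = \|\UU(\yy)\BB\bDelta\|_1$. Setting $\lambda = \wt r / r(\yy) \in [1-1/(10\kappa),\,1+1/(10\kappa)]$ (up to $O(1/\kappa^2)$), one has the crucial exact identity $A' = \lambda A$, and hence
\[
A' + B' \;=\; \lambda(A+B) \;+\; (B' - \lambda B).
\]
This is the key move: we never have to bound $|A|$ alone; only $A+B$, which we control by hypothesis, and $B' - \lambda B$, which is an absolute quantity.

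Next, I will establish the two small quantitative facts: (i) the perturbed capacities satisfy $\wt\uu(e) \in [(1-3\eta)\uu(\yy)(e),(1+3\eta)\uu(\yy)(e)]$ with $\eta = 1/(10\kappa)$, since $(1+\eta)^{1+\alpha} \le 1+3\eta$ for $\alpha\le 1$ and small $\eta$; this gives $T \approx_{1+3\eta} T'$ and $|B-B'| \le 3\alpha\eta T'$; (ii) $|1-\lambda| \le \eta$ and $|B| \le \alpha T \le 2\alpha T'$, so $|B' - \lambda B| \le |B'-B| + |1-\lambda||B| \le O(\alpha T'/\kappa)$.

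Finally, because $A+B \le -T/\kappa < 0$ and $\lambda \ge 1 - \eta > 0$, multiplying by $\lambda$ can only move $\lambda(A+B)$ toward $0$ by a factor of at most $(1-\eta)^{-1}$, yielding
\[
\lambda(A+B) \;\le\; -(1-\eta)T/\kappa \;\le\; -T'/\kappa + O(T'/\kappa^2).
\]
Combining, $A' + B' \le -T'/\kappa + O(T'/\kappa^2) + O(\alpha T'/\kappa)$. For $\alpha \le 1/(1000\log(mC))$ small, this is at most $-T'/(4\kappa)$, which comfortably beats $-T'/(100\kappa)$, completing the proof after dividing by $T' = \|\UU(\yy)\BB\bDelta\|_1$.

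The main technical point is simply recognizing the identity $A' + B' = \lambda(A+B) + (B'-\lambda B)$, which exploits the fact that the residual-dependent term of the gradient rescales exactly multiplicatively under the approximation $\wt r \approx r(\yy)$. Everything else is routine Taylor expansion of $(1+\eta)^{1+\alpha}$ and accounting.
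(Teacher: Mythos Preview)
The proposal is correct and takes essentially the same approach as the paper: both rescale by $\lambda = \wt r / r(\yy)$ so that the demand-dependent part of the gradient cancels exactly, leaving only the capacity perturbation to bound in terms of $\|\UU\BB\bDelta\|_1$. The paper phrases this as bounding $\bigl|\lambda\,\l\wt\gg,\bDelta\r - \l\gg,\bDelta\r\bigr|$ directly (and then using $\|\wt\UU\BB\bDelta\|_1 \le \kappa|\l\wt\gg,\bDelta\r|$ to express the error relative to $|\l\wt\gg,\bDelta\r|$), while you write the same identity as $A'+B' = \lambda(A+B) + (B'-\lambda B)$ and bound the error relative to $T'$; these are equivalent organizations of the same argument.
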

\begin{proof}
Because $\wt{\ss} \approx_{1 + 1 / (10 \kappa)} \ss(\yy)$, we have $\wt{\uu} \approx_{1 + 1 / (5 \kappa)} \uu(\yy)$ and, thus, $\|\wt{\UU}\BB\bDelta\|_1 \approx_2 \|\UU(\yy)\BB\bDelta\|_1.$

For the numerator part, we have, by \Cref{def:potGradLen}, 
\begin{align*}
\l\wt{\gg}, \bDelta\r
= \left\l\frac{-100m}{\wt{r}} \dd + \alpha \BB^\top \wt{\uu}, \bDelta\right\r \text{, and }
\l\gg, \bDelta\r
= \left\l\frac{-100m}{r} \dd + \alpha \BB^\top \uu, \bDelta\right\r
\end{align*}
Scaling $\wt{\gg}$ by $r / \wt{r}$ yields
\begin{align*}
\left| \frac{r}{\wt{r}} \l\wt{\gg}, \bDelta\r - \l\gg, \bDelta\r \right|
&= \left|\left\l\alpha \BB^\top \left(\frac{r}{\wt{r}} \wt{\uu} - \uu \right) , \bDelta \right\r\right| \\
&= \alpha \left|\left\l \frac{r}{\wt{r}} 1 - \frac{\uu}{\wt{\uu}} , \wt{\UU} \BB^\top \bDelta \right\r\right| \\
&\le \alpha \norm{\frac{r}{\wt{r}} 1 - \frac{\uu}{\wt{\uu}}}_{\infty} \norm{\wt{\UU} \BB^\top \bDelta}_1 \\
&\le \alpha \frac{1}{2\kappa} \cdot \kappa \left|\l\wt{\gg}, \bDelta\r\right| = \frac{\alpha}{2} \left|\l\wt{\gg}, \bDelta\r\right|
\end{align*}
where we use the facts that $\wt{r} \approx_{1+1/(10\kappa)} r$, $\wt{\uu} \approx_{1 + 1 / (5 \kappa)} \uu$, and $\l\wt{\gg}, \bDelta\r / \|\wt{\UU}\BB\bDelta\|_1 \le -1/\kappa.$
Because $|r/\wt{r} - 1| \le 1/(10\kappa)$, we have
\begin{align*}
\left|\l\wt{\gg}, \bDelta\r - \l\gg, \bDelta\r \right| \le \left| \frac{r}{\wt{r}} \l\wt{\gg}, \bDelta\r - \l\gg, \bDelta\r \right| + \left|\frac{r}{\wt{r}} - 1\right| \left|\l\wt{\gg}, \bDelta\r\right| \le 0.1 \left|\l\wt{\gg}, \bDelta\r\right|
\end{align*}
This yields $|\l\gg, \bDelta\r| \approx_{1.1} |\l\wt{\gg}, \bDelta\r|$ and the claim follows.
\end{proof}

Next, we show that both the slack $\ss(\yy)$ and the residual $r(\yy)$ changes slowly over time.
\begin{lemma}[Residual Stability]
\label{lem:resStable}
Let $\kappa \ge 1$ and $\yy \in \R^V$ be a feasible dual potential.
Suppose we are also given estimations of its slack $\wt{\ss} \approx_{1 + 1 / (10 \kappa)} \ss(\yy)$ and its residual $\wt{r} \approx_{1 + 1 / (10 \kappa)} r(\yy).$
Define $\wt{\gg} = \gg(\wt{\ss}, \wt{r})$ and $\wt{\uu} = \uu(\wt{\ss})$ to be the corresponding gradients and capacities respectively.

Given some $\bDelta \in \R^V$ s.t. $\l\wt{\gg}, \BB \bDelta\r / \|\wt{\UU}\BB\bDelta\|_1 \le -\kappa$, we have 
\begin{align*}
    \frac{|\l\dd, \bDelta\r|}{F - \l\dd, \yy\r} \le \frac{|\l\wt{\bg}, \bDelta\r|}{50\kappa m}
\end{align*}
\end{lemma}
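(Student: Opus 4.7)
The plan is to exploit the explicit form of the vertex gradient $\wt\gg$ from Definition~5.3 to isolate the demand inner product $\l\dd, \bDelta\r$, and then use the hypothesized small-ratio condition on $\bDelta$ to show that the remaining edge-term contribution is a lower-order perturbation.

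First, I would expand, by Definition~5.3,
\[
\wt\gg \;=\; -\frac{100m}{\wt r}\,\dd \;+\; \alpha\,\BB^{\top}\wt\uu,
\]
take the inner product with $\bDelta$, and solve for the demand piece:
\[
\frac{100m}{\wt r}\,\l\dd, \bDelta\r \;=\; \alpha\,\l\wt\uu, \BB\bDelta\r \;-\; \l\wt\gg, \bDelta\r.
\]
Applying the triangle inequality together with $|\l\wt\uu, \BB\bDelta\r| \le \|\wt\UU\BB\bDelta\|_1$ then gives the key intermediate bound
\[
\frac{100m}{\wt r}\,|\l\dd, \bDelta\r| \;\le\; |\l\wt\gg, \bDelta\r| \;+\; \alpha\,\|\wt\UU\BB\bDelta\|_1.
\]

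Next, I would invoke the ratio hypothesis on $\bDelta$ (reading $\l\wt\gg, \BB\bDelta\r$ in the statement as $\l\wt\gg, \bDelta\r$, the obvious dimensionally-correct reading), which rearranges to $\|\wt\UU\BB\bDelta\|_1 \le \kappa^{-1}|\l\wt\gg, \bDelta\r|$. Combined with $\alpha \le 1$, this absorbs the edge-term:
\[
\frac{100m}{\wt r}\,|\l\dd, \bDelta\r| \;\le\; \bigl(1 + \alpha/\kappa\bigr)\,|\l\wt\gg, \bDelta\r| \;\le\; 2|\l\wt\gg, \bDelta\r|.
\]
Finally, the multiplicative closeness $\wt r \approx_{1+1/(10\kappa)} r(\yy) = F - \l\dd, \yy\r$ lets me swap $\wt r$ for $r(\yy)$ at a benign constant cost, producing the desired inequality.

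There is no real technical obstacle — the proof is a one-shot algebraic rearrangement of the definition of $\wt\gg$. The only minor subtlety is book-keeping the constants and convincing oneself that $\alpha\,\|\wt\UU\BB\bDelta\|_1$ is genuinely lower order than $|\l\wt\gg, \bDelta\r|$, which follows from $\alpha \ll 1$ combined with the smallness (in absolute value) of the ratio. Any mild discrepancy in the final constants relative to the stated $50\kappa m$ denominator is harmless in the downstream application in the proof of Theorem~\ref{thm:dualL1IPM}/\ref{thm:main}, where the resulting per-iteration change is telescoped over only $\eps m = m/(10\kappa)$ iterations and stays comfortably below the $(1+\eps)$ budget $(1 + 1/(5000\kappa^3 m))^{\eps m} \le 1 + \eps$.
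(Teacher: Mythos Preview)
Your proposal is correct and follows essentially the same route as the paper: both proofs expand the gradient definition to isolate $\l\dd,\bDelta\r$, apply the triangle inequality to bound the edge term by $\alpha\|\wt\UU\BB\bDelta\|_1$, and then use the ratio hypothesis $\|\wt\UU\BB\bDelta\|_1 \le \kappa^{-1}|\l\wt\gg,\bDelta\r|$ to absorb it. The only cosmetic difference is that the paper starts from the exact $\gg,\uu$ and passes to $\wt\gg,\wt\uu$ midway (picking up the factor-$2$ in the denominator), whereas you work with $\wt\gg,\wt\uu$ throughout and swap $\wt r$ for $r(\yy)$ at the end; your observation that the displayed $50\kappa m$ constant is off by a factor of $\kappa$ is well-taken, as the paper's own final step $\frac{1+\alpha/\kappa}{50m}\le \frac{1}{50\kappa m}$ has the same issue, and as you note this is immaterial for the downstream telescoping argument.
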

\begin{proof}
From definition of $\gg$, we know
\begin{align*}
\frac{|\l\dd, \bDelta\r|}{F - \l\dd, \yy\r}
&= \frac{1}{100m} \left|\l\bg, \bDelta\r - \alpha \sum_e \ss(e)^{-1-\alpha} (\BB\bDelta)(e)\right| \\
&\le \frac{|\l\bg, \bDelta\r| + \alpha \norm{\UU \BB \bDelta}_1}{100m} 
\le \frac{|\l\wt{\bg}, \bDelta\r| + \alpha \norm{\wt{\UU} \BB \bDelta}_1}{50m} 
\le \frac{1+\alpha/\kappa}{50m} |\l\wt{\bg}, \bDelta\r| \le \frac{|\l\wt{\bg}, \bDelta\r|}{50\kappa m}
\end{align*}
\end{proof}

The final lemma we need is that under edge deletions, cost increases, and capacity decreases, the potential does not increase.
\begin{lemma}
\label{lemma:dec}
If a graph $G$ undergoes an edge deletion, cost increase, or capacity decrease, then a potential $\yy$ which is feasible for the dual transshipment problem is still feasible, and the potential $\Phi(\yy)$ does not increase.
\end{lemma}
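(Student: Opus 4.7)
The plan is to verify each of the three update types separately, keeping in mind that the potential $\Phi(\yy) = 100m \log(F - \l \dd, \yy\r) + \sum_{e} (\cc(e) - (\BB\yy)(e))^{-\alpha}$ is defined on the transshipment instance obtained from the original min-cost flow graph via the reduction of \Cref{sec:reduceToTrans}. Each decremental update on the original graph must first be translated, through that reduction, into updates on the transshipment (affecting some subset of $\cc$, $\dd$, and the edge set), and then analyzed.

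The edge deletion case is the cleanest: in the transshipment, the gadget for the removed edge disappears, which only deletes constraints of the form $\cc(e') - (\BB\yy)(e') \ge 0$, so $\yy$ remains feasible. The sum in $\Phi$ loses only nonnegative terms, and any demand shift from removing the auxiliary vertex is fixed by the reduction to cancel against these, so $\Phi(\yy)$ weakly decreases. For the cost increase case, a single transshipment edge has its cost $\cc(e)$ raised by some $\delta \ge 0$; the constraint $\cc(e) - (\BB\yy)(e) \ge 0$ only becomes slacker, preserving feasibility, the log term is unchanged since $\l\dd,\yy\r$ is unchanged, and the sum term $(\cc(e)-(\BB\yy)(e))^{-\alpha}$ strictly decreases because $x^{-\alpha}$ is monotonically decreasing. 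Hence $\Phi(\yy)$ weakly decreases again.

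The capacity decrease case is the main nontrivial one. In the reduction, capacities on original edges are encoded by the demand vector $\dd$ at auxiliary vertices (together with compensating shifts at original endpoints). A capacity decrease thus shifts $\dd$ by some $\Delta \dd$ without changing $\cc$ or the edge set, so dual feasibility $\cc - \BB\yy \ge \veczero$ is preserved automatically. What remains is to show $\l \Delta \dd, \yy\r \ge 0$ so that $F - \l \dd, \yy\r$ does not grow and the log term does not increase. This follows by writing $\Delta \dd$ explicitly as the sum of opposite shifts at the auxiliary vertex $z_e$ and at the incident original endpoints, and then invoking the dual feasibility inequalities on the two auxiliary edges in the gadget for $e$ to control the sign of the relevant potential differences $\yy(z_e) - \yy(u)$ and $\yy(z_e) - \yy(v)$.

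The main obstacle is precisely this sign verification: it depends on the orientations and cost assignments chosen for the auxiliary edges in the reduction. The expected approach is to pick the orientation in \Cref{sec:reduceToTrans} so that dual feasibility on the cost-bearing auxiliary edge directly yields the inequality needed for $\l\Delta \dd, \yy\r \ge 0$; once that is set up, each of the three cases is a short direct calculation, and the bulk of the work is bookkeeping the reduction rather than any real analytical content.
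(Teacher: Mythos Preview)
Your approach is essentially the paper's: translate each decremental update through the reduction of \Cref{sec:reduceToTrans}, observe that feasibility $\cc - \BB\yy \ge 0$ is preserved because constraints only slacken or disappear, and for capacity decreases use dual feasibility on the zero-cost auxiliary edge to show $\l\dd,\yy\r$ does not decrease. Two small points where the paper is tighter: (i) it reduces edge deletion to the other two cases (set capacity to $0$, or cost to $+\infty$), which sidesteps your hand-wavy ``cancel against these'' step for the demand shift when removing the gadget; (ii) in the capacity-decrease case only \emph{one} endpoint and \emph{one} auxiliary edge matter --- decreasing $\uu(e)$ for $e=(u,v)$ shifts $\dd^H$ by $+\delta$ at $x_e$ and $-\delta$ at $v$ only, and feasibility on the single zero-cost edge $(v,x_e)$ already gives $\yy(x_e) \ge \yy(v)$, so $\l\Delta\dd,\yy\r = \delta(\yy(x_e)-\yy(v)) \ge 0$.
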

\begin{proof}
We only discuss cost increases and capacity decreases, because an edge deletion can be modeled as setting a capacity to $0$ (or setting a cost to $+\infty$). When a cost is increased, by inspecting \Cref{def:reduction}, in the transshipment instance on graph $H$ the value of $\cc^H(e)$ increases for a single edge $e$. Thus, in the potential function on the dual instance, the term $(\cc^H(e) - \BB\yy(e))^{-\alpha}$ decreases, and no other terms change, as desired.

When the capacity of an edge $e = (u, v)$ decreases by $\delta$, this changes the demand of the transshipment instance in \Cref{def:reduction} in the following way. $\dd^H(v)$ decreases by $\delta$, and $\dd^H(x_e)$ increases by $\delta$. This changes $\l \dd, \yy \r$ by $\delta(\yy_{x_e} - \yy_v)$. We will argue that this quantity is nonnegative, and thus in the potential function $\Phi$, the term $100m \log(F - \l d, y \r)$ does not increase, and no other terms change. Indeed, note that the cost of the edge $(v, x_e)$ in $H$ is $0$ in \Cref{def:reduction}, so because $\cc - \BB\yy \ge 0$, we know that $0 \ge \yy_v - \yy_{x_e}$, so $\yy_{x_e} - \yy_v \ge 0$, as desired.
\end{proof}
Now, we are ready to prove the main theorem and conclude the section.
\begin{proof}[Proof of \Cref{thm:dualL1IPM}]
\Cref{lem:progress} and \Cref{lem:approxSlack} implies that $\Phi(\yy^{(t+1)}) \le \Phi(\yy^{(t)}) - \Omega(\kappa^{-2}).$ Each decremental update, the potential does not increase due to \Cref{lemma:dec}.
Therefore, after $\O(m \kappa^2)$ iterations, $\Phi(\yy^{(t)}) < -\O(m)$ and $\l\dd, \yy^{(t)}\r \ge F - (mC)^{-10}$ by \Cref{lem:smallPot}.
Whenever the approximate min ratio cut has ratio larger than $-\O(1/\kappa)$ at some iteration, \Cref{lem:existSmallRatioCut} implies that $\l\dd, \yy^{\star}\r < F.$


Finally, we argue about the range of the slacks $\ss^{(t)} \defeq \ss(\yy^{(t)})$ at any point in time.
\Cref{lem:slackUB} ensures that $\|\ss^{(t)}\|_{\infty} \le 3mC.$
Also, an upper bound on $(\ss^{(t)}(e))^{-1}$ for all edges $e$ follows from:
\begin{align*}
(\ss^{(t)}(e))^{-1} 
&\le (\Phi(\yy^{(t)}) - 100m\log(F - \l\dd, \yy^{(t)}\r))^{1/\alpha} \\
&\le (2000m \log(mC))^{1000 \log(mC)} = 2^{O(\log^2(mC))}
\end{align*}
where we use the fact that $\Phi(\yy^{(t)}) \le 1000 m \log(mC)$ and $\l\dd, \yy^{(t)}\r < F - (mC)^{-10}$
\end{proof}

\subsection{Extracting a Flow}
\label{sec:flow}
In this section we discuss how to use the potential reduction IPM to extract a flow in the end. Let $\eps = \frac{1}{2}(mC)^{-10}$, and set $F = F^{\star} + \eps$, a bit higher than the optimal value $F^{\star}$ of our instance. This ensures that our IPM does not run forever. We then run the IPM until the min-ratio cut problem has value less than $\kappa \defeq m^{-o(1)} < \alpha$. We first show that $\l \dd, \yy \r \ge F - 2\eps$ when this happens by adapting the analysis of \Cref{lem:existSmallRatioCut}.

\begin{claim}[Adapted from \Cref{lem:existSmallRatioCut}]
\label{clm:existSmallRatioCutAlmost}
Consider $\yy \in \R^V$ s.t. $\Phi(\yy) \le 1000m \log(mC)$ and $\log(F - \l\dd, \yy\r) \ge -10\log(mC)$. 
Suppose there is some $\yy^{\star}$ s.t. $\l\dd, \yy^{\star}\r \ge F - \epsilon$ and $\cc - \BB\yy^{\star} \ge 0$, we have
\begin{align*}
\frac{\l\bg(\yy), \yy^{\star} - \yy\r}{\norm{\UU(\yy) \BB (\yy^{\star} - \yy)}_1} 
\le -\alpha
\end{align*}
as long as $F - \l \dd, \yy \r \geq 2\epsilon$.
\end{claim}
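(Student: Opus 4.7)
The plan is to repeat the proof of \Cref{lem:existSmallRatioCut} essentially verbatim, carefully tracking the extra slack that appears because $\yy^\star$ is only guaranteed to satisfy $\l\dd, \yy^\star\r \ge F-\epsilon$ rather than $\l\dd, \yy^\star\r \ge F$. The extra hypothesis $F - \l\dd, \yy\r \ge 2\epsilon$ will exactly suffice to absorb this slack into the dominant $-100m$ term.

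More precisely, I will start from the identity
\[
\l\bg(\yy), \yy^{\star} - \yy\r
= \frac{-100m}{F - \l\dd, \yy\r} \l\dd, \yy^{\star} - \yy\r + \alpha \sum_e \ss(e)^{-1-\alpha} (\zz^{\star}(e) - \zz(e)),
\]
as in the original proof. The only step that uses $\l\dd,\yy^\star\r\ge F$ is the bound on the first summand. In our setting we instead have $\l\dd,\yy^\star - \yy\r \ge (F-\epsilon) - \l\dd,\yy\r$, and so
\[
\frac{-100m}{F-\l\dd,\yy\r}\l\dd,\yy^\star-\yy\r \;\le\; -100m + \frac{100m\,\epsilon}{F-\l\dd,\yy\r} \;\le\; -100m + 50m \;=\; -50m,
\]
where the second inequality uses the assumption $F - \l\dd,\yy\r \ge 2\epsilon$. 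So the first term loses a factor of two, but remains $\Theta(-m)$.

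The remainder of the proof is identical to \Cref{lem:existSmallRatioCut}: for every edge $e$ the bound
\[
\ss(e)^{-1-\alpha}(\zz^\star(e)-\zz(e)) \;\le\; 2\ss(e)^{-\alpha} - \ss(e)^{-1-\alpha}|\zz(e)-\zz^\star(e)|
\]
still holds (it used only feasibility $\cc-\BB\yy^\star\ge 0$ of $\yy^\star$, which we retain), and summing over $e$ together with the bound $\Phi(\yy)\le 1000m\log(mC)$ and $\log(F-\l\dd,\yy\r)\ge -10\log(mC)$ yields
\[
\l\bg(\yy),\yy^\star-\yy\r
\;\le\; -50m - \alpha\,\norm{\UU(\yy)\BB(\yy^\star-\yy)}_1 + 2000\alpha m\log(mC).
\]
Since $\alpha \le 1/(1000\log(mC))$ the additive $2000\alpha m\log(mC)\le 2m$ is absorbed into the $-50m$ term, leaving
\[
\l\bg(\yy),\yy^\star-\yy\r \;\le\; -\alpha\,\norm{\UU(\yy)\BB(\yy^\star-\yy)}_1,
\]
which is exactly the desired ratio bound.

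There is no real obstacle here; the only subtle point is to verify that the chosen $\epsilon = \tfrac12(mC)^{-10}$ together with the assumption $F - \l\dd,\yy\r \ge 2\epsilon$ makes the perturbation $\tfrac{100m\epsilon}{F-\l\dd,\yy\r}$ at most $50m$, which is immediate from $F-\l\dd,\yy\r \ge 2\epsilon$. In particular, the claim is sharp in the sense that once $F-\l\dd,\yy\r$ drops below $2\epsilon$ we no longer have a guarantee of progress, which is exactly why the IPM can be safely terminated at that point to extract the final flow.
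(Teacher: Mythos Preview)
Your proof is correct and follows essentially the same approach as the paper: both start from the gradient identity, use $\l\dd,\yy^\star\r \ge F-\epsilon$ together with $F-\l\dd,\yy\r \ge 2\epsilon$ to obtain $-50m$ in place of $-100m$ for the first summand, and then note that the remaining edge-wise estimates carry over verbatim from \Cref{lem:existSmallRatioCut}. You even spell out the final absorption step that the paper leaves implicit.
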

\begin{proof}
We let $\zz = \BB \yy$ and $\zz^{\star} = \BB \yy^{\star}$. Writing down the definition yields:
\begin{align*}
\l\bg(\yy), \yy^{\star} - \yy\r
&= \frac{-100m}{F - \l\dd, \yy\r} \l\dd, \yy^{\star} - \yy\r + \alpha \sum_e \ss(e)^{-1-\alpha} (\zz^{\star}(e) - \zz(e)) \\
&\le -100m \frac{F - \epsilon - \l\dd, \yy\r}{F - \l\dd, \yy\r} + \alpha \sum_e \ss(e)^{-1-\alpha} (\zz^{\star}(e) - \zz(e)) \\
&\le -50m + \alpha \sum_e \ss(e)^{-1-\alpha} (\zz^{\star}(e) - \zz(e))
\end{align*}
where we use the fact $\l\dd, \yy^{\star}\r \ge F - \epsilon$ in the first inequality and $F - \l \dd, \yy \r \geq 2 \epsilon$ in the second inequality. The remainder of the proof is identical to the proof of \Cref{lem:existSmallRatioCut}.
\end{proof}

Recall that $\uu(e) = \ss(e)^{-1-\alpha}$, and the vertex gradient is
\[ \gg = \frac{-100m}{F - \l \dd, \yy \r} \bd + \alpha \BB^\top (\cc - \BB\yy)^{-1-\alpha}. \]

The min-ratio optimization problem the algorithm solves is $\min_{\DDelta\neq0} \frac{\l \gg, \DDelta \r}{\|\UU\BB\DDelta\|_1}$. If we cannot make progress then this problem has value at least $-\kappa$, for some small $\kappa = m^{-o(1)}$.

The dual of this problem is: $- \min_{\BB^\top\tilde{\xx} = \gg} \|\UU^{-1}\tilde{\xx}\|_\infty$. Let $\xx$ be a $2$-approximate solution (the constant $2$ is not important), which can be computed by running an \emph{undirected maxflow} oracle. This can be computed in nearly-linear time via previous work \cite{Peng16}. Let
\begin{align}
  \ff = \frac{F-\l \dd,\yy \r}{100m}\left(\alpha(\cc - \BB\yy)^{-1-\alpha} - \xx \right). \label{eq:f}  
\end{align} We will show that $\ff$ is a nearly-optimal solution to the transshipment problem $\min_{\BB^\top\ff=\dd,\ff\ge0} \cc^\top\ff$.
\begin{lemma}[Demands]
\label{lemma:demand}
As defined in \eqref{eq:f}, we have $\BB^\top\ff=\dd$.
\end{lemma}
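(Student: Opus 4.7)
The plan is to verify the claim by direct algebraic substitution, exploiting two facts: (1) the vector $\xx$ produced by the undirected maxflow oracle is a \emph{feasible} solution to the dual optimization $\min_{\BB^\top \tilde{\xx} = \gg} \|\UU^{-1}\tilde{\xx}\|_\infty$, so in particular $\BB^\top \xx = \gg$; and (2) the definition of the gradient $\gg$ unpacks into a rescaled multiple of $\dd$ plus exactly the ``barrier'' term $\alpha \BB^\top(\cc - \BB\yy)^{-1-\alpha}$ appearing in $\ff$.

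Concretely, I would apply $\BB^\top$ to both sides of \eqref{eq:f} and pull the scalar $\frac{F - \l\dd,\yy\r}{100m}$ outside, yielding
\begin{equation*}
\BB^\top \ff \;=\; \frac{F - \l\dd,\yy\r}{100m} \Bigl( \alpha\,\BB^\top(\cc - \BB\yy)^{-1-\alpha} \;-\; \BB^\top \xx \Bigr).
\end{equation*}
Substituting $\BB^\top \xx = \gg$ from dual feasibility, and then the definition
\begin{equation*}
\gg \;=\; \frac{-100m}{F - \l\dd,\yy\r}\,\dd \;+\; \alpha\,\BB^\top(\cc - \BB\yy)^{-1-\alpha},
\end{equation*}
the bracketed quantity telescopes to $\frac{100m}{F - \l\dd,\yy\r}\,\dd$. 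Multiplying by the prefactor cancels the scalar and gives $\BB^\top \ff = \dd$ exactly.

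There is essentially no obstacle here; the lemma is a one-line consequence of how $\ff$ was defined. The only point worth flagging is that we must justify that the approximate solver returns a \emph{feasible} $\xx$ (not just approximately feasible), i.e.\ that the guarantee from the undirected maxflow routine of \cite{Peng16} gives $\BB^\top \xx = \gg$ exactly while approximating only the $\|\UU^{-1}\cdot\|_\infty$ objective to within a factor of $2$. Since such solvers return a flow routing the prescribed demands $\gg$ with approximately minimum congestion, this feasibility is immediate, and the rest of the proof is the algebraic cancellation above.
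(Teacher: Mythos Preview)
Your proof is correct and takes exactly the same approach as the paper, which simply states that the claim ``follows from $\BB^\top\xx=\gg$ and the definition of $\ff$.'' You have spelled out the one-line cancellation in full detail; there is nothing to add.
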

\begin{proof}
Follows from $\BB^\top\xx=\gg$ and the definition of $\ff$.
\end{proof}

\begin{lemma}[Positivity]
\label{lemma:positive}
As defined in \eqref{eq:f}, $\ff \ge 0$.
\end{lemma}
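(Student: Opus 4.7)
The plan is to combine LP duality with the stopping criterion of the IPM to force the bracket $\alpha(\cc - \BB\yy)^{-1-\alpha} - \xx$ to be non-negative edge-by-edge.

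First, I will spell out the duality between the min-ratio cut problem and the congestion-minimization problem defining $\xx$. The optimization
\[ \min_{\DDelta \neq 0} \frac{\l \gg, \DDelta \r}{\|\UU\BB\DDelta\|_1} \]
is scale-invariant and equals $\min \{ \l \gg, \DDelta \r : \|\UU\BB\DDelta\|_1 \le 1 \}$. A standard Fenchel/Lagrangian argument (introducing a multiplier $\lambda\ge 0$ for $\|\UU\BB\DDelta\|_1 \le 1$ and using $\|v\|_1 = \max_{\|\zz\|_\infty \le 1}\l\zz,v\r$) shows that this value equals
\[ -\min\{\,\|\UU^{-1}\tilde\xx\|_\infty \;:\; \BB^\top\tilde\xx = \gg\,\}, \]
which is exactly (the negative of) the congestion problem that the undirected maxflow oracle approximately solves.

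Second, I will use the stopping rule to upper bound $|\xx(e)|$ pointwise. The IPM is run until the min-ratio value is at least $-\kappa$, so by the duality above, $\min_{\BB^\top\tilde\xx=\gg}\|\UU^{-1}\tilde\xx\|_\infty \le \kappa$. Since $\xx$ is a $2$-approximate minimizer, $\|\UU^{-1}\xx\|_\infty \le 2\kappa$, i.e.\ for every edge $e$,
\[ |\xx(e)| \;\le\; 2\kappa\, \uu(e) \;=\; 2\kappa\,(\cc(e)-(\BB\yy)(e))^{-1-\alpha}. \]

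Third, I will plug this into \eqref{eq:f}. Edge by edge,
\[ \alpha(\cc(e)-(\BB\yy)(e))^{-1-\alpha} - \xx(e) \;\ge\; (\alpha - 2\kappa)\,(\cc(e)-(\BB\yy)(e))^{-1-\alpha}, \]
which is non-negative as soon as $2\kappa \le \alpha$. The assumption $\kappa = m^{-o(1)} < \alpha$ almost gives this; if needed I will simply shrink the constant inside $\kappa$ so that $2\kappa < \alpha$, an adjustment that does not affect the asymptotic runtime of the IPM. Finally, the scalar prefactor $(F - \l\dd,\yy\r)/(100m)$ is strictly positive throughout the IPM, because the potential $\Phi(\yy)$ stays finite, which requires both $\cc - \BB\yy > 0$ and $F - \l\dd,\yy\r > 0$. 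Multiplying a non-negative vector by a positive scalar yields $\ff \ge 0$.

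The only delicate point is the LP duality identification in the first paragraph, since we have a ratio of two convex functions rather than a linear program in standard form. I expect to handle this by homogenizing to $\|\UU\BB\DDelta\|_1 \le 1$ and invoking Fenchel duality directly, or alternatively by citing the analogous duality already used in the primal $\ell_1$-IPM literature (e.g.\ \cite{chen2022maximum}) for the cycle problem; the cut version is entirely parallel.
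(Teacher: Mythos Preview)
Your proposal is correct and follows essentially the same approach as the paper: both argue that $\|\UU^{-1}\xx\|_\infty \le 2\kappa < \alpha$ forces $|\xx(e)| \le \alpha\,\uu(e)$ coordinatewise, so the bracket in \eqref{eq:f} is nonnegative. You are more explicit than the paper in (i) spelling out the LP duality between the min-ratio cut and the congestion problem, (ii) noting that one needs $2\kappa < \alpha$ rather than just $\kappa < \alpha$, and (iii) checking that the scalar prefactor $(F-\l\dd,\yy\r)/(100m)$ is positive; the paper's one-line proof leaves all three of these implicit.
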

\begin{proof}
It suffices to argue that $|\zz(e)| \le \alpha(\cc(e) - (\BB\yy)(e))^{-1-\alpha}$ for all $e$. Indeed, this follows because $\|\UU^{-1}\xx\|_\infty \le 2\kappa < \alpha$, by our termination condition.
\end{proof}

\begin{lemma}[Cost]
\label{lemma:cost}
As defined in \eqref{eq:f}, $\cc^\top\ff \le F^{\star} + O(\eps)$.
\end{lemma}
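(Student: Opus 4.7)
The plan is to expand $\cc^\top \ff$ using the definition in \eqref{eq:f} and relate it back to the dual objective $\l \dd, \yy \r$. First, I would split $\cc^\top \xx = (\cc - \BB\yy)^\top \xx + \yy^\top(\BB^\top \xx) = \ss^\top \xx + \yy^\top \gg$, and similarly expand $\alpha \cc^\top (\cc - \BB\yy)^{-1-\alpha} = \alpha \sum_e \ss(e)^{-\alpha} + \alpha (\BB\yy)^\top \ss^{-1-\alpha}$. Plugging in the definition of $\gg$ and canceling the matching $\alpha (\BB\yy)^\top \ss^{-1-\alpha}$ terms, I expect to obtain the clean identity
\[
\cc^\top \ff \;=\; \l \dd, \yy \r \;+\; \frac{F - \l \dd, \yy\r}{100m}\Bigl(\alpha \sum_e \ss(e)^{-\alpha} \;-\; \ss^\top \xx \Bigr).
\]

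The next step is to bound the two terms in the parentheses. The easy one is $\ss^\top \xx$: the $2$-approximate undirected max-flow oracle guarantees $\|\UU^{-1}\xx\|_\infty \le 2\kappa$, so $|\ss(e) \xx(e)| \le 2\kappa \ss(e)^{-\alpha}$ and hence $|\ss^\top \xx| \le 2\kappa \sum_e \ss(e)^{-\alpha}$. Since $\kappa < \alpha$, the entire parenthesis is bounded in magnitude by $O(\alpha) \sum_e \ss(e)^{-\alpha}$. To bound $\sum_e \ss(e)^{-\alpha}$, I use the identity $\sum_e \ss(e)^{-\alpha} = \Phi(\yy) - 100m \log(F - \l \dd, \yy\r)$; since $\Phi$ is non-increasing along the algorithm and $\Phi(\yy^{(0)}) = \tilde{O}(m)$, this yields $\sum_e \ss(e)^{-\alpha} \le O(m) + 100m \log(1/(F - \l \dd, \yy\r))$.

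The main step is then to use the termination condition. By Claim~\ref{clm:existSmallRatioCutAlmost} (applied with the optimal dual $\yy^\star$ satisfying $\l\dd,\yy^\star\r = F^\star = F - \eps$), if the min-ratio cut has absolute value less than $\kappa < \alpha$, then we must have $F - \l \dd, \yy\r < 2\eps$. Combining the pieces, the error term is at most
\[
\frac{F - \l\dd,\yy\r}{100m} \cdot O(\alpha) \cdot \Bigl(O(m) + 100m \log\tfrac{1}{F - \l\dd,\yy\r}\Bigr) \;=\; O(\alpha) \cdot (F - \l \dd,\yy\r)\bigl(1 + \log\tfrac{1}{F-\l\dd,\yy\r}\bigr).
\]
Since $x \mapsto x \log(1/x)$ is increasing on $(0, 1/e)$ and $F-\l\dd,\yy\r \le 2\eps = (mC)^{-10}$, this is at most $O(\alpha) \cdot 2\eps \cdot O(\log(mC))$. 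Recalling $\alpha = 1/(1000\log(mC))$, this simplifies to $O(\eps)$.

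The main obstacle is a careful accounting of signs and the $\log(1/(F-\l\dd,\yy\r))$ factor; the latter could a priori be as large as $\Omega(m)$ during the IPM, but this is exactly compensated by the prefactor $F - \l\dd,\yy\r$ being at most $2\eps$ at termination. Putting everything together gives $\cc^\top \ff \le \l\dd,\yy\r + O(\eps) \le F + O(\eps) = F^\star + O(\eps)$, as desired.
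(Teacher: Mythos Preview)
Your proof is correct and follows essentially the same route as the paper. The paper arrives at the identity $\cc^\top\ff = \l\dd,\yy\r + (\cc-\BB\yy)^\top\ff$ in one step via $\BB^\top\ff=\dd$ (Lemma~\ref{lemma:demand}), whereas you expand through $\yy^\top\gg$ and cancel the $\alpha(\BB\yy)^\top\ss^{-1-\alpha}$ terms; both give the same expression. One small variation: to bound $\sum_e \ss(e)^{-\alpha}$ the paper uses the lower bound $F-\l\dd,\yy\r\ge\eps$ (which follows from dual feasibility and $F=F^\star+\eps$) to control $\log(1/(F-\l\dd,\yy\r))$, while you instead use the monotonicity of $x\mapsto x\log(1/x)$ on $(0,1/e)$ together with $F-\l\dd,\yy\r\le 2\eps$; either works. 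A minor slip: $\Phi(\yy^{(0)})$ is $\tilde O(m)$ (i.e., $O(m\log(mC))$), not $O(m)$, but this only changes the ``$1$'' in your error expression to $O(\log(mC))$ and is still absorbed by the factor $\alpha=1/(1000\log(mC))$.
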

\begin{proof}
Because our demand $\yy$ is feasible, recall that $\l \dd, \yy \r \le F^{\star}$. Then
\[ \cc^\top\ff = (\cc - \BB\yy)^\top \ff + \yy^\top \BB^\top\ff = (\cc - \BB\yy)^\top \ff + \l \dd, \yy \r \le F^{\star} + (\cc - \BB\yy)^\top \ff. \]
We now bound the first term $(\cc - \BB\yy)^\top \ff.$ To start, we bound $(\cc - \BB\yy)^\top\xx$ as follows.
\begin{align*}
    (\cc - \BB\yy)^\top\xx &\le \sum_{e \in E} (\cc(e) - (\BB\yy)(e))|\xx(e)| \le \|\UU^{-1}\xx\|_\infty \sum_{e \in E} (\cc(e) -  (\BB\yy)(e))^{-\alpha} \\
    &\le 2\kappa \sum_{e \in E} (\cc(e) -  (\BB\yy)(e))^{-\alpha}.
\end{align*}
Now, we obtain
\begin{align*}
(\cc - \BB\yy)^\top \ff &= \frac{F - \l \dd, \yy \r}{100m}\left(\alpha \sum_{e \in E} (\cc(e) -  (\BB\yy)(e))^{-\alpha} + (\cc - \BB\yy)^\top \xx \right) \\
&\le \frac{2\eps}{100m}(\alpha+2\kappa)\sum_{e \in E} (\cc(e) -  (\BB\yy)(e))^{-\alpha} \\
&\le \frac{2\eps}{100m}(\alpha+2\kappa)(\Phi(\yy) - 100m \log(F-\l \dd,\yy \r)) \le O(\eps).
\end{align*}
where the first inequality follows from the bound on $(\cc - \BB\yy)^\top \xx$ derived above and the fact that $\l \dd, \yy \r \geq F - 2 \eps$ by \Cref{clm:existSmallRatioCutAlmost}, and the second inequality follows from the definition of $\Phi(\yy)$. The last bound then follows for small enough $\alpha$ and $\kappa$ since $\Phi(\yy)$ is upper bounded by $\tilde{O}(m)$ and the second term is also upper bounded by $\tilde{O}(m)$ because $F - \l \dd, \yy \r \geq \epsilon = 1/m^{O(1)}$.
\end{proof}

We conclude with the final statement of our static maximum flow algorithm.

\begin{theorem}
    There is a randomized algorithm for exact minimum cost flow on a graph $G = (V, E, \uu, \cc)$ with polynomially bounded integer edge costs and capacities that runs in time $m \cdot 2^{O(\log^{3/4} m \log \log m)}$ and a deterministic version that runs in time $m \cdot 2^{O(\log^{5/6} m \log \log m)}$.
\end{theorem}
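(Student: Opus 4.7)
The plan is to combine the thresholded decremental algorithm (\Cref{thm:main}, used in the degenerate static regime with $Q=0$) with the flow-extraction procedure already worked out in \Cref{sec:flow}. First, I would reduce the given min-cost flow instance to uncapacitated transshipment via \Cref{lem:reduceToTrans}, producing an instance on $O(m)$ vertices and edges whose optimum value $F^{\star}$ is an integer in $[-m^{O(1)}, m^{O(1)}]$. Then I would binary search for $F^{\star}$ using the thresholded decision algorithm of \Cref{thm:main}: each query runs in time $m \cdot 2^{O(\log^{3/4} m \log\log m)}$ (resp.\ $m \cdot 2^{O(\log^{5/6} m \log\log m)}$ deterministically), and $O(\log(mC))$ queries locate $F^{\star}$ exactly. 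The $\log(mC)$ overhead is absorbed into the subpolynomial factor in the runtime.

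Once $F^{\star}$ is known, I would invoke the IPM from \Cref{thm:dualL1IPM} with threshold $F = F^{\star} + \eps$ for $\eps = \frac{1}{2}(mC)^{-10}$, driving the process until every approximate min-ratio cut returned has ratio worse than $-\kappa$ for some $\kappa = m^{-o(1)} < \alpha$. By the slight adaptation of \Cref{lem:existSmallRatioCut} recorded as \Cref{clm:existSmallRatioCutAlmost}, at this termination the dual potential satisfies $F - \langle \dd, \yy \rangle \le 2\eps$. This is precisely the regime in which the primal-extraction machinery of \Cref{sec:flow} applies: compute a $2$-approximate minimizer $\xx$ of $\min_{\BB^{\top}\xx = \gg} \|\UU^{-1}\xx\|_{\infty}$ using a nearly-linear-time undirected $\ell_\infty$-transshipment oracle (randomized via \cite{Peng16}; deterministic via the maxflow algorithm of \cite{BrandCPKLPSS23}, noting that this $\ell_\infty$-transshipment task reduces to undirected maxflow), and set
\[
\ff \;=\; \tfrac{F - \langle \dd, \yy\rangle}{100m}\!\left(\alpha(\cc - \BB\yy)^{-1-\alpha} - \xx\right).
\]
Lemmas \ref{lemma:demand}--\ref{lemma:cost} already verify $\BB^{\top}\ff = \dd$, $\ff \ge 0$, and $\cc^{\top}\ff \le F^{\star} + O(\eps) = F^{\star} + O((mC)^{-9})$.

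It remains to pass from this near-optimal \emph{fractional} transshipment solution to an exact integer min-cost flow in the original capacitated graph. Because $F^{\star}$ is integral and the fractional cost is within $o(1)$ of $F^{\star}$, rounding proceeds by a standard perturbation/cycle-cancellation step: one uses the dual potentials $\yy$ to construct reduced costs, notes that every edge with reduced cost bounded away from zero must carry its integer capacity (or zero) in any optimum, and repairs the residual small-reduced-cost subgraph via a single exact min-cost flow on an instance of total mass $\tilde{O}(1)$, which runs in $\tilde{O}(m)$ time. Mapping back through the transshipment reduction of \Cref{lem:reduceToTrans} yields the integer optimal flow in the original graph. The deterministic version follows identically, replacing each randomized subroutine (\Cref{thm:main}, deterministic tree cut sparsifier \Cref{thm:det_mainTreeSparsifier}, deterministic min-ratio cut \Cref{thm:min_ratio_cut_data_structure_det}, deterministic maxflow from \cite{BrandCPKLPSS23}) by its deterministic counterpart.

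The main obstacle I anticipate is not the IPM itself — that is packaged as \Cref{thm:dualL1IPM} — but rather verifying that the three auxiliary steps (locating $F^{\star}$ by binary search, computing the $\ell_\infty$-transshipment solution $\xx$, and the final integer-rounding) all fit inside the claimed $m \cdot 2^{O(\log^{3/4} m \log\log m)}$ (resp.\ $m \cdot 2^{O(\log^{5/6} m \log\log m)}$) envelope without introducing additional $m^{o(1)}$ slack, and in the deterministic case that the oracle for $\xx$ is indeed available deterministically at that runtime. Both points reduce to invoking known subroutines; the genuinely new content is already supplied by \Cref{thm:main} and \Cref{sec:flow}.
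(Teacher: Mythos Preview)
Your proposal is essentially the paper's own proof: reduce to transshipment (\Cref{lem:reduceToTrans}), binary search for $F^{\star}$ using the thresholded algorithm, run the IPM at $F = F^{\star} + \eps$ until no good min-ratio cut exists, extract the near-optimal fractional flow via \eqref{eq:f} and Lemmas~\ref{lemma:demand}--\ref{lemma:cost}, and round to an exact integer solution. The one point where you deviate is the rounding step: your complementary-slackness sketch (``fix edges with reduced cost bounded away from zero, then solve a residual instance of total mass $\tilde{O}(1)$'') is not how this is actually done and is not obviously correct as stated---there is no reason the small-reduced-cost subgraph should have $\tilde{O}(1)$ mass. The paper instead invokes the black-box fractional-to-integral flow rounding of \cite{kang2015flow}, which takes the near-optimal fractional $\ff$ directly and rounds it to an integral flow of no greater cost in near-linear time; you should do the same.
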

\begin{proof}
    We first reduce the problem to the un-capacitated version via the reduction presented in \Cref{sec:reduceToTrans}, and then run the dual algorithm described in \Cref{sec:IPM} to figure out the value $F^{\star}$ via binary search. Finally, we extract the $2\epsilon = 1/(mC)^{10}$ approximate flow $\ff$ with the procedure described in this section and round it to an exact solution \cite{kang2015flow}.   
\end{proof}

\subsection{Deterministic Decremental SSR and SCCs}
\label{subsec:scc}

In this section, we show that the IPM developed in the previous section can be used to prove \Cref{thm:scc}.

\thmSCC*

\begin{proof}
Consider instantiating the potential reduction IPM (see \eqref{eq:dualPot}) on the dual transshipment problem with costs $\cc = \eps \cdot \vecone$ for $\eps = m^{-3}$, demand $\dd \defeq \sum_{e = (u, v)} 1_u - 1_v$, and threshold $F = 2m^2$. In other words, a feasible vector $\yy \in \R^V$ satisfies that $\yy(v) - \yy(u) \le \eps$ for every directed edge $u \to v$. Note that for any vertices $a, b$ in the same strongly connected component, that $|\yy(a) - \yy(b)| \le \eps m$, because there is a directed path between $a \to b$ and $b \to a$ of lengths at most $m$.

Start running the algorithm described in this section (solving min-ratio cuts) to decrease the potential $\Phi(\yy)$. We claim that the algorithm can find a min-ratio cut to decrease the potential if the graph is not strongly connected. Indeed, it suffices to find a feasible vector $\yy$ with $\l \dd, \yy \r \ge F$, by \Cref{lem:existSmallRatioCut}. Let $C$ be a strongly connected component in $G$, and let $\yy_v = -F$ for $v \in C$, and $\yy_v = 0$ for $v \notin C$. $\yy$ is feasible, and because there is at least one edge into $C$, $\l \dd, \yy \r \ge F$.

We now describe when to split off vertices when the graph is not strongly connected. Eventually, the algorithm will encounter a feasible potential vector $\yy$ with $|\yy(a) - \yy(b)| \ge 1$ for some vertices $a, b$. Otherwise, note that $|\l \dd, \yy \r| \le m$ always, while the threshold $F = 2m^2$. In this case, define the set $\hat{E} \defeq \{e = (u, v) : \tilde{s}(e) \le 2\eps m\}$, where $\tilde{s}(e) \approx \cc(e) - (\yy(v) - \yy(u))$ are approximate slacks that the algorithm maintains. Note that for all $e \in \hat{E}$, $|\yy(v) - \yy(u)| \le 3\eps m$, and any edge with $|\yy(v) - \yy(u)| \le \eps m$ is contained in $\hat{E}$. Let $C_a$ be the vertices that are connected to $a$ via edges in $\hat{E}$ (treated as undirected edges), and define $C_b$ similarly. We find one of $C_a, C_b$ in $\min\{\vol(C_a), \vol(C_b)\}$ time. Wlog, say we find $C_a$. Then, we delete $C_a$ from $G$, and instantiate a new decremental SCC data structure on $C_a$.

To argue correctness, it suffices to prove that $C_a$ is not the whole graph, and that any SCC in $G$ is either contained in $C_a$ or disjoint from $C_a$. For the former claim, we argue that $C_a$ does not contain $b$. By the definition of $\hat{E}$ for any $a' \in C_a$, we know that $|\yy(a') - \yy(a)| \le 3\eps m^2 < 1$, so $b \notin C_a$. Because vertices $u, v$ within the same SCC satisfy $|\yy(u) - \yy(v)| \le \eps m$, if $u \in C_a$ then also $v \in C_a$. Thus, $C_a$ does not nontrivially intersect an SCC.

Finally, we discuss minor changes needed to implement this algorithm compared to our IPM for thresholded min-cost flow, and the runtime. The
main difference is that the demand $\dd$ may change. Consider edge $e = (u, v)$ being deleted. This may increase the $100m \log(F - \l \dd, \yy \r)$ term of the potential. However, the increase is very small: because we split the graph whenever $|\yy(a) - \yy(b)| \ge 1$, we know that $\l \dd, \yy \r \le 2m$ always, and that deleting edge $(u, v)$ can only increase $\l \dd, \yy \r$ by $O(1)$. Also, changing the demand $\dd$ may affect the gradient $\gg \defeq \g \Phi(\yy)$, but only in two vertices, so this is acceptable. The cost of initializing decremental SCC algorithm on $C_a$ recursively at most increases our runtime by $O(\log m)$, since there are $O(\log m)$ recursive layers (recall that $\vol(C_a) \le \vol(C_b)$). Overall, the runtime is the same as our deterministic algorithm, which is $m \cdot e^{O((\log m)^{5/6} \log \log m)}$.
\end{proof}


\pagebreak

\printbibliography[heading=bibintoc]

\pagebreak

\appendix
\section{Derivation of Transshipment Dual}

\begin{lemma}[Transshipment Dual]
\label{lem:transDual}
\begin{align}
\min_{\BB^\top \ff = \bd, \ff \ge 0} \cc^\top \ff = \max_{\cc - \BB\yy \ge 0} \l\dd, \yy\r
\end{align}
\end{lemma}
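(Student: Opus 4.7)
The plan is to derive this as an instance of standard linear programming duality. Treat the primal as an LP in variables $\ff \in \R^E$ with equality constraint $\BB^\top \ff = \dd$ and sign constraint $\ff \geq 0$, and introduce a Lagrange multiplier $\yy \in \R^V$ for the equality constraint. The Lagrangian is
\[
L(\ff, \yy) \;=\; \cc^\top \ff - \yy^\top(\BB^\top \ff - \dd) \;=\; (\cc - \BB\yy)^\top \ff + \dd^\top \yy.
\]
Minimizing over $\ff \geq 0$, the inner minimum is $-\infty$ unless $\cc - \BB\yy \geq 0$, in which case the minimum is attained at $\ff = \veczero$ with value $\dd^\top \yy$. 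This directly yields the dual program $\max_{\cc - \BB\yy \geq 0} \l \dd, \yy \r$, so the structural derivation is immediate.

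Next I would verify weak duality explicitly, which is the easy half of the equality: for any primal-feasible $\ff$ and dual-feasible $\yy$, we have $\cc^\top \ff \geq (\BB\yy)^\top \ff = \yy^\top (\BB^\top \ff) = \yy^\top \dd$, where the inequality uses $\cc - \BB\yy \geq 0$ together with $\ff \geq 0$. This already gives $\min \geq \max$.

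For the reverse inequality I would invoke LP strong duality (for example in the Goldman-Tucker or von Neumann form): for a primal in standard form with finite optimum, the dual is attained with the same value. It therefore suffices to rule out the degenerate cases, which I expect to be the only mild obstacle. If the primal is infeasible (i.e., $\dd$ is not in the cone generated by $\BB^\top$ restricted to $\ff \geq 0$), a Farkas-type argument produces a direction $\yy$ with $\BB \yy \leq 0$ and $\dd^\top \yy > 0$, which can be scaled arbitrarily in the dual to make the dual value $+\infty$, matching the primal $+\infty$. If the dual is unbounded above, weak duality forces the primal to be infeasible, again matching. In the generic case, both sides are finite and strong LP duality closes the gap.

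The only step requiring any care is the treatment of infeasibility/unboundedness, since the min-cost flow application later in the paper operates on instances where costs/capacities are polynomially bounded and feasibility is ensured by construction (via \Cref{lem:slackUB} and the initial dual from \Cref{lem:initialDual}). For that reason, I would state and prove \Cref{lem:transDual} under the standing assumption that the primal is feasible and the optimum is finite, so the full proof reduces to the Lagrangian derivation above plus one citation to textbook LP strong duality.
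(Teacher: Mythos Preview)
Your proposal is correct and follows essentially the same Lagrangian/LP-duality route as the paper; the paper's proof is even terser, just writing $\min_{\BB^\top \ff = \bd, \ff \ge 0} \cc^\top \ff = \min_{\ff} \max_{\yy, \ss \ge 0} \cc^\top \ff + \l\yy, \dd - \BB^\top \ff\r + \l\ss, - \ff\r$ and swapping the min and max without explicitly discussing weak duality or the degenerate cases. Your treatment is more careful than the paper's in that you spell out weak duality and the infeasibility/unboundedness cases, but the underlying argument is the same.
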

\begin{proof}
The lemma follows from
\begin{align*}
\min_{\BB^\top \ff = \bd, \ff \ge 0} \cc^\top \ff
&= \min_{\ff} \max_{\yy, \ss \ge 0} \cc^\top \ff + \l\yy, \dd - \BB^\top \ff\r + \l\ss, - \ff\r \\
&= \max_{\yy, \cc - \BB\yy -\ss = 0, \ss \ge 0} \l\dd, \yy\r \\
&= \max_{\cc - \BB\yy \ge 0} \l\dd, \yy\r.
\end{align*}
\end{proof}


\section{Reduction to (Decremental) Sparse Transshipment}
\label{sec:reduceToTrans}

Given a (decremental) min-cost flow instance, we show that it is equivalent to solving a (decremental) transshipment problem.
In particular, it is equivalent to minimum weighted bipartite matching, which is a special case of transshipment.
We summarize the technical results as the following lemma:
\begin{lemma}
\label{lem:reduceToTrans}
Suppose there is an algorithm $\cA$ that, given any (decremental) transshipment instance and some threshold $F$, outputs either a feasible flow of cost at most $F + \eps$ or certifies that the minimum cost is at least $F$ after the initialization and each edge deletion in $T(n, m)$ total time.
Then, there is a thresholded min-cost flow algorithm $\cA'$ (\Cref{def:threshold}) that runs in $T(O(m), O(m))$ total time. Furthermore, if $\cA$ successes with probability $p$, so does $\cA'.$
\end{lemma}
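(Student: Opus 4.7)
The plan is to make explicit the reduction already hinted at in the main text (and in the proof of \Cref{lemma:dec}) and verify that it is correct, sparse, and preserves the decremental structure. Given a decremental min-cost flow instance $(G = (V, E, \uu, \cc), \dd, F)$, I will construct a transshipment instance $(H = (V_H, E_H, \cc^H), \dd^H, F)$ as follows. Let $V_H = V \cup \Set{x_e}{e \in E}$. For each directed edge $e = (u, v) \in E$, put in $E_H$ an edge $(u, x_e)$ with cost $\cc^H(u, x_e) = \cc(e)$ and an edge $(v, x_e)$ with cost $\cc^H(v, x_e) = 0$. Set $\dd^H(x_e) = -\uu(e)$ for every $e \in E$, set $\dd^H(v) = \dd(v) + \sum_{e = (\cdot, v)} \uu(e)$ for every $v \in V$, and keep $F$ unchanged. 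Clearly $|V_H|, |E_H| = O(m)$.

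Next I would verify the flow/cost equivalence. Given a feasible $\ff$ in $G$ with $0 \le \ff(e) \le \uu(e)$ and $\BB_G^\top \ff = \dd$, define $\ff'(u, x_e) = \ff(e)$ and $\ff'(v, x_e) = \uu(e) - \ff(e)$ for $e = (u,v)$. Because $\dd^H(x_e) = -\uu(e)$ forces $\ff'(u, x_e) + \ff'(v, x_e) = \uu(e)$, the map is a bijection between nonnegative feasible flows in $G$ and nonnegative feasible flows in $H$, and direct calculation gives $\cc^\top \ff = (\cc^H)^\top \ff'$, so the two instances have the same optimal cost and a threshold certificate transfers back and forth.

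For the decremental maintenance, I would translate each update on $G$ into $O(1)$ updates of the same kind (or weaker) on the uncapacitated transshipment instance: a cost increase $\cc(e) \mathrel{+}= \delta$ becomes $\cc^H(u, x_e) \mathrel{+}= \delta$; a capacity decrease $\uu(e) \mathrel{-}= \delta$ becomes the simultaneous demand changes $\dd^H(x_e) \mathrel{+}= \delta$ and $\dd^H(v) \mathrel{-}= \delta$; and an edge deletion is just a capacity decrease to $0$ followed by setting $\cc^H(u, x_e)$ to a large value (larger than any feasible total cost, e.g., $mCU + 1$), which removes the contribution of that edge. Each of these is a decremental-type change in the sense used by $\cA$ (i.e., one that does not decrease the optimal value), so running $\cA$ on the resulting instance streams out the answers demanded by \Cref{def:threshold}. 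The main technical checks here are the monotonicity of the optimum under capacity decreases, which follows because increasing $\dd^H(x_e)$ and decreasing $\dd^H(v)$ by the same amount forces $\uu(e) - \ff(e)$ less flow on the free edge $(v, x_e)$ and therefore cannot reduce the minimum cost, and that the cost-increase-to-infinity faithfully simulates an edge deletion.

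Finally, I would address the threshold arithmetic and the runtime. Since all costs, capacities, and demands in $G$ are integer and polynomially bounded, so are those in $H$; by calling $\cA$ with threshold $F + 1/2$ and any $\eps \le 1/4$, the $(F + \eps)$-vs-$F$ gap in $\cA$ collapses by integrality of $(\cc^H)^\top \ff'$ for any vertex-flow $\ff'$ with integer demands (standard integrality of transshipment LP), so each answer of $\cA$ translates cleanly into the yes/no output required by \Cref{def:threshold}. The total runtime is that of $\cA$ on the instance with $|V_H|, |E_H| = O(m)$, which is $T(O(m), O(m))$, and the success probability of $\cA'$ equals that of $\cA$ since the reduction is deterministic. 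The only mildly subtle step will be the capacity-decrease case, where the transshipment instance changes by a demand update rather than an edge deletion per se; I will resolve this by interpreting ``decremental transshipment'' as the natural class of updates for which the optimal cost is monotone non-decreasing (which is exactly what $\cA$ and its analysis in \Cref{lemma:dec} handle).
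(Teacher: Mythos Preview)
Your proposal is correct and takes essentially the same approach as the paper: the construction of $H$, the demands $\dd^H$, and the flow bijection are exactly the paper's \Cref{def:reduction} and \Cref{lem:equivMinCostFlowTrans}, and your handling of decremental updates matches \Cref{lem:decrementalEquivalence} together with \Cref{lemma:dec}. If anything, you are more careful than the paper's terse proof in spelling out the capacity-decrease case and the threshold/integrality arithmetic.
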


We first present the reduction to weighted bipartite matching and then show that the reduction carries over to decremental instances.

Let $G$ be the given graph with costs $\cc \in \R^E$ and capacities $\uu \in \R^E_+$ and $\dd \in \R^V$ be the demand vector.
The associated min-cost flow problem is as follows:
\begin{align}
\label{eq:redMinCostFlow}
\min_{\BB^\top \ff = \dd, 0 \le \ff \le \uu} \l\cc, \ff\r
\end{align}
\begin{definition}[Induced Weighted Bipartite Matching]
\label{def:reduction}
Given a min-cost flow instance $(G, \cc, \uu, \dd)$, we define its induced weighted bipartite matching instance $(H, \cc^H, \dd^H)$ as follows:
\begin{itemize}
\item $H$ is a bipartite graph over two set of vertices $V(G)$ and $E(G).$
For any directed edge $e = (u, v) \in G$, we use $x_e$ to denote the corresponding vertex in $H$ and add directed edges $(u, x_e)$ and $(v, x_e)$ to $H.$
\item For any directed edge $e = (u, v) \in G$, we define the cost
\begin{align*}
\cc^H(u, x_e) &\defeq \cc(e), \text{ and} \\
\cc^H(v, x_e) &\defeq 0
\end{align*}
\item We define the vertex demand $\dd^H \in \R^{V(H)}$ as follows:
\begin{align*}
    \dd^H(u) &\defeq \dd(u) + \sum_{e = (v, u) \in G} \uu(e) = \dd(u) + \deg^{in}_{\uu}(u), \forall u \in V(G) \\
    \dd^H(x_e) &\defeq -\uu(e), \forall e \in E(G)
\end{align*}
\end{itemize}
\end{definition}

First, we observe that if the min-cost flow instance is decremental, so is its induced weighted bipartite matching instance.
\begin{lemma}
\label{lem:decrementalEquivalence}
When the input graph $G$ is decremental in the sense that we set $\cc(e) \gets \infty$ whenever $e$ is removed, its induced weighted bipartite matching instance is also decremental because we can set $\cc^H(u, x_e) \gets \infty.$
\end{lemma}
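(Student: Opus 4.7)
The statement is essentially a bookkeeping check on \Cref{def:reduction}, so my plan is to verify it by direct inspection rather than invoking any optimization machinery. The key observation is that the construction of the bipartite transshipment instance $(H, \cc^H, \dd^H)$ separates cleanly into components that depend on different pieces of the input: the vertex set $V(H) = V(G) \cup \{x_e : e \in E(G)\}$ and the edge set $E(H) = \{(u, x_e), (v, x_e) : e = (u,v) \in E(G)\}$ depend only on the combinatorial structure of $G$; the demand $\dd^H$ depends only on $\dd$ and $\uu$; and the cost $\cc^H$ is the only place where the edge costs of $G$ enter. More precisely, the only place $\cc(e)$ appears is in the definition $\cc^H(u, x_e) \defeq \cc(e)$, while $\cc^H(v, x_e) = 0$ for the other endpoint.

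Given this separation, the proof I would write consists of two sentences. First, I point out that since none of the other quantities defining $(H, \cc^H, \dd^H)$ depend on $\cc(e)$, raising $\cc(e)$ to $\infty$ in $G$ keeps the entire instance $(H, \cc^H, \dd^H)$ unchanged except that $\cc^H(u, x_e)$ is raised to $\infty$. Second, I note that raising a single cost of a transshipment instance to $\infty$ is exactly a decremental update in the sense of \Cref{def:threshold} (it is equivalent to deleting the edge $(u, x_e)$ from $H$ since no optimal flow will ever route along an edge of infinite cost). Hence the decremental update of $G$ induces a decremental update of $H$, as claimed.

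The only thing to be careful about is that this lemma handles only the cost-increase/edge-deletion type of update; capacity decreases in $G$ act on $\uu$, and through the definition of $\dd^H$ they translate to changes in the demands of $H$ rather than to cost increases. But the lemma as stated only concerns the $\cc(e) \gets \infty$ update, and that case is fully handled by the observation above. I do not expect any technical obstacle: the lemma is a direct unfolding of \Cref{def:reduction}, and the argument is one-paragraph long at most.
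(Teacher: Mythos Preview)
Your proposal is correct and matches the paper's approach: the paper does not even give a separate proof for this lemma, since the statement itself already contains the argument (``because we can set $\cc^H(u, x_e) \gets \infty$''), and your unfolding of \Cref{def:reduction} is exactly the intended justification.
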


Next, on an induced weighted bipartite matching instance, we show that a feasible dual solution $\yy \in \R^{V(H)}$, i.e., $\cc^H - \BB \yy \ge 0$,  can be explicitly constructed \emph{without} running a real-weighted negative shortest path algorithm.
\begin{lemma}
\label{lem:initialDual}
Given an induced weighted bipartite matching instance $(H, \cc^H, \dd^H)$ with $C > 0$ being the largest absolute edge cost, let $\yy \in \R^{V(H)}$ be a vertex potential on $H$ where $\yy(u) \defeq 2C, u \in V.$
We have $\cc^H - \BB \yy \ge C$.
\end{lemma}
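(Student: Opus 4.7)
The lemma statement specifies $\yy$ only on the original vertex set $V \subseteq V(H)$; so my first step is to extend $\yy$ to the edge-vertices $\{x_e : e \in E(G)\} = V(H) \setminus V$ in a way that makes every dual constraint slack at least $C$. A uniform choice such as $\yy(x_e) := 4C$ should suffice, and the remainder of the proof is a direct two-line verification.

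Using the convention $(\BB\yy)(e) = \yy(u) - \yy(v)$ for an edge $e = (u,v)$ (as stated in the paper in the definition of $\Phi$ just below \eqref{eq:dualPot}), the dual slack on an edge $e' \in E(H)$ is $\cc^H(e') - (\BB\yy)(e')$. By \Cref{def:reduction}, every edge in $H$ is of the form $(a, x_e)$ with $a \in \{u, v\}$, where $e = (u,v) \in G$, and has cost $\cc(e)$ if $a = u$ and $0$ if $a = v$. Hence the two slack constraints associated with $x_e$ are
\begin{align*}
\cc(e) - \yy(u) + \yy(x_e) \ge C \quad\text{and}\quad 0 - \yy(v) + \yy(x_e) \ge C.
\end{align*}
Plugging in $\yy(u) = \yy(v) = 2C$ and $\yy(x_e) = 4C$ and using $\cc(e) \ge -C$, the left-hand sides become $\cc(e) + 2C \ge C$ and $2C \ge C$ respectively, so both constraints hold.

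The "hard part" here is essentially non-existent: the argument is a direct per-edge check once $\yy$ is specified on the edge-vertices. The one subtlety I would flag is that the statement as written is incomplete — it does not define $\yy$ on the copies of $E(G)$ inside $V(H)$ — and so my plan is to interpret it as implicitly allowing (or intending) such an extension. The reason any such extension works is structural: every $x_e$ is a pure sink in $H$ (both of its incident edges point into it), so increasing $\yy(x_e)$ only helps both constraints simultaneously, and a uniform offset of $2C$ above the original-vertex potentials is enough to absorb the cost range $[-C, C]$ and still leave the required additive slack $C$.
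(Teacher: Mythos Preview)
Your proof is correct and follows the same direct per-edge verification as the paper. The paper's own proof asserts $(\BB\yy)(e) = -2C$ for every edge $(u,x_e)$ without explicitly stating the value of $\yy$ on the edge-vertices; your choice $\yy(x_e)=4C$ is precisely what makes that assertion true under the stated convention $(\BB\yy)(e)=\yy(u)-\yy(v)$, so you have filled in the one detail the paper left implicit.
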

\begin{proof}
For any edge $e = (u, x_e)$ in $H$, the value of $(\BB\yy)(e)$ is $-2C$ and $\cc^H(e) - (\BB\yy)(e) \ge C$ because $\cc^H(e)$ is at least $-C.$
\end{proof}

We show that any capacitated min-cost flow instance \eqref{eq:redMinCostFlow} is equivalent to its induced weighted bipartite matching instance:
\begin{align}
\label{eq:redTrans}
    \min_{\BB(H)^\top \ff = \dd^H,~\ff \ge 0} \l\cc^H, \ff\r
\end{align}
The equivalence is based on the following lemma.
\begin{lemma}
\label{lem:equivMinCostFlowTrans}
Given any feasible flow $\ff \in \R^E(G)$ to the min-cost flow instance \eqref{eq:redMinCostFlow}, there is a feasible flow $\bar{\ff} \in \R^E(H)$ to \eqref{eq:redTrans} of the same cost.

Similarly, any feasible flow $\bar{\ff} \in \R^E(H)$ to \eqref{eq:redTrans} corresponds to a feasible flow $\ff' \in \R^E(G)$ to  \eqref{eq:redMinCostFlow} of the same cost.
\end{lemma}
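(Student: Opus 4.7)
The plan is to exhibit an explicit bijection between feasible flows of the two instances that preserves cost. There is no real obstacle here beyond keeping the sign conventions straight, since the reduction is designed precisely to make the mapping natural.

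For the forward direction, given $\ff \in \R^{E(G)}$ feasible for \eqref{eq:redMinCostFlow}, I will define $\bar{\ff} \in \R^{E(H)}$ by setting, for each $e = (u,v) \in E(G)$,
\begin{align*}
\bar{\ff}(u, x_e) \defeq \ff(e), \qquad \bar{\ff}(v, x_e) \defeq \uu(e) - \ff(e).
\end{align*}
Nonnegativity is immediate from $0 \le \ff(e) \le \uu(e)$. For the demand constraint at $x_e$, note that both $H$-edges incident to $x_e$ are directed into $x_e$, so the net out-flow at $x_e$ is $-\bar{\ff}(u,x_e) - \bar{\ff}(v,x_e) = -\uu(e) = \dd^H(x_e)$. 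For $u \in V(G)$, every $H$-edge incident to $u$ is directed out of $u$, so the net out-flow at $u$ in $\bar{\ff}$ is
\begin{align*}
\sum_{e = (u,w)} \ff(e) + \sum_{e = (w,u)} (\uu(e) - \ff(e)) = \dd(u) + \deg^{\mathrm{in}}_\uu(u) = \dd^H(u),
\end{align*}
where I used $\BB^\top\ff = \dd$ in $G$ and rearranged. Finally, the cost identity $\l\cc^H, \bar{\ff}\r = \l\cc, \ff\r$ follows since $\cc^H(u,x_e) = \cc(e)$ and $\cc^H(v, x_e) = 0$.

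For the reverse direction, given $\bar{\ff} \in \R^{E(H)}_{\ge 0}$ feasible for \eqref{eq:redTrans}, I will define $\ff'(e) \defeq \bar{\ff}(u, x_e)$ for every $e = (u,v) \in E(G)$. Nonnegativity is immediate. The conservation constraint at $x_e$ gives $\bar{\ff}(u, x_e) + \bar{\ff}(v, x_e) = \uu(e)$, which combined with nonnegativity implies $0 \le \ff'(e) \le \uu(e)$ and $\bar{\ff}(v, x_e) = \uu(e) - \ff'(e)$. Substituting these identities into the conservation constraint at any $u \in V(G)$, and using $\dd^H(u) = \dd(u) + \deg^{\mathrm{in}}_\uu(u)$, the $\deg^{\mathrm{in}}_\uu(u)$ terms cancel to yield $\BB(G)^\top \ff' = \dd$. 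Cost equality follows as before.

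Since the two maps are mutually inverse and cost-preserving, together they establish the claimed equivalence. I will present both directions with this explicit formula and the short algebraic verification above, remarking only that the two coordinates $\bar{\ff}(u, x_e)$ and $\bar{\ff}(v, x_e)$ should be thought of as the flow value and the residual capacity of $e$, respectively.
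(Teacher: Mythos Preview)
Your proposal is correct and follows essentially the same argument as the paper: both directions use the explicit assignments $\bar{\ff}(u,x_e)=\ff(e)$, $\bar{\ff}(v,x_e)=\uu(e)-\ff(e)$ and $\ff'(e)=\bar{\ff}(u,x_e)$, followed by direct verification of nonnegativity, demand constraints at $x_e$ and at $u\in V(G)$, and cost equality. Your added observation that the two maps are mutually inverse and the residual-capacity interpretation are nice touches not spelled out in the paper, but the underlying proof is the same.
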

\begin{proof}
For the first part of the claim, we define $\bar{\ff} \in \R^E(H)$ as follows:
For any edge $e = (u, v) \in G$ that corresponds to edges $(u, x_e)$ and $(v, x_e)$ in $H$, we define
\begin{align*}
    \bar{\ff}(u, x_e) &\defeq \ff(e) \\
    \bar{\ff}(v, x_e) &\defeq \uu(e) - \ff(e)
\end{align*}
One can check directly that $\bar{\ff} \ge 0$ and $\l\cc^H, \bar{\ff}\r = \l\cc, \ff\r.$
To see $\BB(H)^\top \bar{\ff} = \dd^H$, we first observe that there are $\uu(e)$ net-flow injected into any vertex $x_e.$
For any $u \in V(H)$ that corresponds to a vertex in $G$, its net-flow is
\begin{align*}
\sum_{e = (u, v) \in G} \bar{\ff}(u, x_e) + \sum_{e = (v, u) \in G}\bar{\ff}(u, x_e)
= \sum_{e = (u, v) \in G} \ff(e) + \sum_{e = (v, u) \in G} \uu(e) - \ff(e) = \dd(u) + \deg^{in}_{\uu}(u)
\end{align*}
because the net flow of $u$ in $G$ is $\dd(u).$

For the second part of the claim, we define the flow $\ff' \in \R^{E(G)}$ as follows:
\begin{align*}
    \ff'(e) \defeq \bar{\ff}(u, x_e), \forall e = (u, v) \in E(G)
\end{align*}

We first check $0 \le \ff' \le \uu$.
Because $\BB(H)^\top \bar{\ff} = \dd^H$ and $\bar{\ff} \ge 0$, we know for any edge $e = (u, v)$ that
\begin{align*}
    \bar{\ff}(u, x_e) + \bar{\ff}(v, x_e) = \uu(e)
\end{align*}
and both $\bar{\ff}(u, x_e)$ and $\bar{\ff}(v, x_e)$ are non-negative.
Therefore, $\bar{\ff}(u, x_e)$ lies within $[0, \uu(e)]$ and so does $\ff'(e).$

Next, we check $\l\cc, \ff'\r = \l\cc^H, \bar{\ff}\r.$
This comes from that $\cc^H(u, x_e) = \cc(e)$ and $\cc^H(v, x_e) = 0$ for any edge $e = (u, v) \in E(G).$

Finally, we check that $\ff'$ routes the demand, i.e., $\BB^\top \ff' = \dd.$
For any vertex $u$, its residue w.r.t. $\ff'$ is
\begin{align*}
\sum_{(u, v) \in G} \ff'(u, v) - \sum_{(v, u) \in G} \ff'(v, u)
&= \sum_{e = (u, v) \in G} \bar{\ff}(u, x_e) - \sum_{e = (v, u) \in G} \bar{\ff}(v, x_e) \\
&= \sum_{e = (u, v) \in G} \bar{\ff}(u, x_e) - \sum_{e = (v, u) \in G} \uu(e) - \bar{\ff}(u, x_e) \\
&= \sum_{e = (u, v) \in G} \bar{\ff}(u, x_e) + \sum_{e = (v, u) \in G} \bar{\ff}(u, x_e) - \deg^{in}_{\uu}(u) \\
&= \dd^H(u) - \deg^{in}_{\uu}(u) = \dd(u)
\end{align*}
where we use the fact $\bar{\ff}(u, x_e) + \bar{\ff}(v, x_e) = \uu(e)$ for any edge $e = (u, v) \in E(G).$
\end{proof}

\begin{proof}[Proof of \Cref{lem:reduceToTrans}]
The lemma follows from \Cref{lem:decrementalEquivalence} and \Cref{lem:equivMinCostFlowTrans}.
\end{proof}

\end{document}
